\documentclass[11pt]{article}

\usepackage[margin=1in]{geometry}
\usepackage[T1]{fontenc}
\usepackage{amsmath}
\usepackage{amssymb}
\usepackage{amsthm}
\usepackage{graphicx}
\usepackage{float}
\usepackage{booktabs}
\usepackage{array}
\usepackage{tabularx}
\usepackage{algorithm}
\usepackage{algpseudocode}
\usepackage{tikz}
\usetikzlibrary{arrows.meta,positioning,calc,fit,shapes.geometric}
\usepackage[font={small,sf},labelfont={bf,sf},labelsep=period,margin=8pt]{caption}
\usepackage[hidelinks]{hyperref}
\usepackage{url}
\usepackage{xurl}
\usepackage{siunitx}
\usepackage{microtype}
\usepackage[numbers,sort&compress]{natbib}

\newtheorem{theorem}{Theorem}
\newtheorem{proposition}{Proposition}
\newtheorem{corollary}[theorem]{Corollary}
\newtheorem*{fact}{Fact}
\newtheorem{remark}{Remark}[section]

\newcommand{\kasparhft}{\textsc{Kaspar-HFT}}
\newcommand{\spub}{s_{\mathrm{pub}}}
\newcommand{\findings}[1]{\par\medskip\noindent\fbox{\begin{minipage}{\dimexpr\linewidth-2\fboxsep-2\fboxrule\relax}
\textbf{Findings.}\begin{itemize}\setlength{\itemsep}{1pt}#1\end{itemize}\end{minipage}}\par\medskip}

\title{Packets, Transactions and Queues:\\
Design Principles for HFT Systems from a Measurement Study of CME Market Data}
\author{Vincent Maciejewski\thanks{M2 Tech, Montreal.
\texttt{mayeski@gmail.com}. \kasparhft{} is open source at
\url{https://github.com/vincent212/kaspar-hft}.}}
\date{September 2026}

\begin{document}
\maketitle

\begin{abstract}
This paper is a measurement study of a real exchange feed and of what its structure
implies for the design of the HFT trading system that consumes it. Over more than a year of CME
market data for the NQ front-month contract, we follow every packet and every
matching-engine transaction through the two exchange timestamps the feed carries, from
the matching engine, through the exchange's market-data publisher, to a receiver, and we
check the characterisation against a live production receiver on other instruments.

We find that packets arrive in near-critical self-exciting clusters, and
the clustering belongs to the matching engine's transactions, not to how the exchange
packs them: almost every transaction fits in one packet. The matching engine often
processes consecutive transactions within a fraction of a microsecond of each other. The
market-data publisher behaves as a queue that sends at most one packet per
\emph{publisher period}, about $7.5\,\mu\mathrm{s}$; its delay grows about tenfold when
the engine bursts, and it drains a backlog with more packets rather than larger ones. A burst at the engine therefore reaches a receiver as a train of packets one
publisher period apart, most of them one message long. The short gaps that make a
receiver queue are gaps between different transactions, and what sits on either side of
them is ordinary quote traffic, with the signature of a race after trades. Why orders
reach the engine so close together is a question about the market and is not addressed here.

For system design, a receiver that handles every packet within one publisher period
does not queue on packet arrivals, however bursty the market; this is the design target,
and the production receiver studied here meets it. Above the
publisher period a queueing tail appears and grows much faster than the service time. It
is produced by the timing of transactions, not by the packet rate, whose effect is
negligible at HFT service times, and not by packet size or packetisation. In this regime
the prevailing single-thread rule reverses: cutting the servicing chain into stages on separate threads
removes most of the tail for one hop on the median, only the slowest stage matters, and
for a stateless stage a pool of cores that each take whole packets does at least as
well. At a production receiver, just under the publisher period, the tail that remains
comes from packets of several messages, decoded one after another, and from service
times that vary from packet to packet; the levers there are the cost per message and the
spread of service, not the thread count. We give the analytic framework, a burst-limit
throughput identity and an exact pathwise reduction of the deterministic-service tandem
to a single bottleneck server, and test it on the corpus. Code is in the \kasparhft{}
repository; the packet captures are not redistributable.
\end{abstract}

\clearpage
\tableofcontents
\clearpage


\section{Introduction}
\label{sec:intro}

\subsection{A measurement study of a real feed, and what it implies for HFT system design}
\label{sec:intro-study}

The object of study is a real exchange feed as it arrives on the wire: CME MDP3 market
data for the NQ front-month contract, over a year of trading sessions, several billion
packets and as many matching-engine transactions. Three systems produce and consume that
stream. The exchange's matching engine processes orders into transactions; the exchange's
market-data publisher packs the resulting messages into packets and sends them; a
receiver decodes the packets and applies them to its order book. The feed carries a
timestamp from each of the first two, and the receiver adds its own, so each transaction
can be followed through all three. The paper measures each stage: how the transactions
are timed at the engine, how the publisher paces and packs them, how big the packets are
and how often they come, and what a receiver of a given speed then pays in queueing.
Part~II reports the measurements; Section~\ref{sec:exchange} is the exchange side of
them.

The best practices for HFT system design that Part~III states are derived from these
measured characteristics, not from a model of the market. The oldest of those practices,
the single-threaded hot path, is the design question the measurements settle first, and
the next subsection sets it out. It is one consequence of the study, not its subject.

\subsection{The single-thread event loop against the multi-actor tandem}
\label{sec:intro-debate}

The design of the ingress-to-order pipeline in a high-frequency trading system is a
choice between two archetypes. The first is the single-threaded event loop, which
processes every step of every message on one core: SBE decode, book application, strategy
evaluation and order construction, all under one call chain, taking a total service time
we denote $T$. The second is the multi-actor pipeline, a \emph{tandem} of $N$ servers in
series. There the steps are split across $N$ actors, each on its own thread with its own
mailbox; each does its part of the work and hands the message to the next across one of
$N-1$ explicit hops, whose per-hop wall-clock cost we denote $h$. Throughout, \emph{HFT service times} means a few microseconds to a few tens of
microseconds per packet, the range a hot path occupies. The low-latency HFT community is
predominantly in the single-thread camp, and single-threaded is generally
regarded as the safe choice. The argument cites the cost of a mailbox hop: a mailbox enqueue, a cache-line transfer
between the sender's core and the receiver's core, and the cost of scheduling or spinning
the receiver. It holds that $(N-1) h$ adds to every message's end-to-end latency, so that
each additional stage is a fixed cost that no throughput or tail benefit recovers.

That argument holds when the single-server throughput $1/T$ exceeds the peak arrival
rate. On CME MDP3 the arrival process is strongly self-exciting: packets arrive in
Hawkes-clustered bursts \citep{filimonovsornette2012,hardimanbercotbouchaud2013,bacry2015review}.
The exchange publishes each matching-engine event, a transaction, as almost always one
packet, so the packet stream relays the timing of transactions without adding clustering
of its own (Section~\ref{sec:results-transactions}). The peak arrival rate has a hard
ceiling, however. The exchange's market-data publisher sends at most one packet per
\emph{publisher period}, about $7.5\,\mu\mathrm{s}$ on this feed
(Section~\ref{sec:exchange-publisher}); during a burst it sends packets exactly that far
apart. A receiver that handles every packet in less than one publisher period therefore
never queues on packet arrivals, however bursty the market: it is done with each packet
before the next can arrive. That is the design goal, and the paper's own production
receiver meets it, at about $7\,\mu\mathrm{s}$ for a one-message packet
(Section~\ref{sec:crossval-floor}). Below the publisher period the single-thread argument is
simply right, and splitting the chain adds hop cost for nothing.

A receiver that is slower than the publisher period, whether because its chain carries a
strategy stage or an order path, because it decodes a packet of several messages, or
because its service time occasionally spikes, falls behind on every packet of a burst.
There cluster-induced queueing produces a tail one to two orders of magnitude larger than
$h$. The paper's central empirical observation, on over a year of NQ front-month packet
arrivals, is that this cluster tail dominates end-to-end latency once the service time
is above the publisher period. The $99$th-percentile latency ($p_{99}$) of a
constant-service queue driven by real packet arrivals is about twice the service time for
a $16\,\mu\mathrm{s}$ task and between four and five times for a $32\,\mu\mathrm{s}$ task,
and more in busy windows. A Poisson stream with the same number of arrivals has no tail
at $16\,\mu\mathrm{s}$ and a small one at $32\,\mu\mathrm{s}$; over this band the tail is
almost entirely a clustering effect.

The question is then, for a chain that cannot be brought under the publisher period,
whether splitting it into $N > 1$ stages, each an independent actor on its own core with
per-stage service $T/N$, reduces the tail, and whether the additional median cost
$(N-1)h$ is small relative to the tail reduction. Neither architecture is better
unconditionally. Both conditions hold above a service time set by the feed, the
publisher period, and fail below it; the paper measures that threshold on one feed, and
its recommendation is that designers measure it on theirs. For a receiver already under
the period, the paper's second finding matters instead: the tail it still carries comes
from packets of several messages and from variable service, and the levers there are the
cost per message and the spread of service, not the thread count
(Sections~\ref{sec:crossval-spanrun} and~\ref{sec:fungibility-measured}).

\subsection{Prior recommendations for single-thread hot paths}
\label{sec:intro-singlethread-priors}

The single-thread rule has a lineage in the low-latency-systems community, summarised
here in reverse chronological order. It has three layers: the HFT-specific codification of the
last decade, the general low-latency-systems and event-driven-server culture from which
HFT inherited it, and one prominent published counter-argument that the field discarded.

\paragraph{Modern HFT-community codification (2010--present).} The founding statement of
the current HFT-community formulation is Thompson's Single Writer Principle
\citep{thompson2011singlewriter} --- ``a single-writer design consistently outperforms a
`properly' lock-free or wait-free multi-writer design on most realistic workloads'' ---
motivated by and applied in the LMAX Exchange architecture
\citep{fowler2011lmax,lmaxdisruptor}. Rigtorp's widely cited low-latency tuning guide
\citep{rigtorp_lowlatency} formalises the operational rule (``run a single thread in
\texttt{SCHED\_OTHER} per core and use busy waiting/polling''), and Chronicle Software's
public engineering materials \citep{chronicle_fasttrading} adopt the same architecture for
market-data connectors (``within a single event loop \ldots busy-spinning and affinity
locks''), pinning Chronicle Queue's design to a single writer per queue. The Aeron
messaging library \citep{aeron} inherits the same lineage. The academic side is thinner,
but \citet{kwan2023hftpatterns} codify the pattern as the standard low-latency-HFT design
pattern, and recent practitioner posts \citep{hu2024sequencer} restate it explicitly
(``all critical trading logic runs in a single-threaded event loop, eliminating context
switches and lock contention entirely''). Inside HFT specifically, the same recommendation
has been reinforced by two decades of private prop-trading arms-race practice among the
Chicago desks (Getco, Jump, Optiver, DRW; late 2000s), reported second-hand in industry
retrospectives but without an open-literature reference.

\paragraph{Older event-driven-server lineage (1995--2010).} HFT did not invent this rule;
it inherited it from the event-driven-server culture. The single-threaded event loop as an
architectural pattern dates at least to Schmidt's Reactor pattern
\citep{schmidt1995reactor} in the mid-1990s. It was sharpened by Kegel's C10K essay
\citep{kegel1999c10k}, which named the OS thread-per-connection model as the bottleneck of
scalable network servers and argued for event-driven designs. The pattern went mainstream
in production with Sysoev's nginx web server \citep{sysoev2004nginx}, whose
single-threaded worker process is explicitly justified as context-switch avoidance, and
later with Dahl's Node.js runtime \citep{dahl2009nodejs}, which pushed the single-threaded
event loop into ordinary application code.

\paragraph{Pro-pipelining minority --- outside HFT.} A legitimate multi-decade
pro-pipelining tradition exists in the datacenter and networking-systems literature; it
has been the field's non-single-thread minority throughout the period covered by the
previous two paragraphs. The earliest prominent published argument is the Staged
Event-Driven Architecture (SEDA) of \citet{welsh2001seda}, which advocated exactly the
multi-stage pipeline of independently scheduled event handlers that the present paper's
tandem design revives. Welsh's 2010 retrospective \citep{welsh2010retrospective} is often cited as a walk-back
to single-thread, but it is a hedged pro-pipelining recommendation. It advocates
decoupling stages from thread pools, grouping several stages within one thread-pool
domain where latency is critical, and putting a separate thread pool and queue in front
of any stage with long or nondeterministic runtime. That is the present paper's design
rule in different vocabulary. Beyond SEDA the
pro-pipelining tradition includes the Click Modular Router \citep{kohler2000click} for
staged packet processing; Naiad \citep{murray2013naiad} for low-latency dataflow; Snap
\citep{marty2019snap} for Google's multi-engine network stack; and the datacenter
microsecond-tail-latency scheduler line --- Shinjuku \citep{kaffes2019shinjuku}, Shenango
\citep{ousterhout2019shenango}, Caladan \citep{fried2020caladan}, Perséphone
\citep{delimitrou2021persephone} --- which argues for centralised multi-core dispatch
specifically to control the microsecond-scale tail. Perséphone in particular is targeted
at ``wide service-time distributions,'' the closest existing framing to our arrival-law
claim; but the datacenter line frames the problem as heavy-tailed \emph{service}, not
clustered \emph{arrivals}. DPDK explicitly supports both run-to-completion and pipeline
modes and its documentation recommends pipeline mode ``if some set of packets require
longer to process'' \citep{dpdk}. Inside HFT specifically the pro-pipelining position is
rare in the open literature; a Chinese patent \citep{cn105654383a} describing a
pipelined FAST market-data decoder is one commercial instance.

None of the pro-pipelining sources above ties the design recommendation to Hawkes /
self-exciting / near-critical arrival dynamics. The closest theoretical hooks are the
MMPP-input tandem work of \citet{kimchydzinski2002mmpp}, discussed further in
Section~\ref{sec:intro-priors}, and the batch-arrival tandem phase-transition analysis of
\citet{tandembatch2014}. The present paper's mechanism, that over a band of service times the recommendation
reverses when arrivals are near-critically self-exciting, is one we have not found in
prior work.

The present author's earlier work belongs on this list with a qualification.\footnote{The tail compression reported here had been observed in production before its mechanism was understood. Splitting a long task into shorter asynchronously connected stages was known to reduce the tail on a per-strategy basis. The working assumption at the time was that single-thread implementations were simply faster in the inner loop, not that the recommendation depended on the arrival law. The present paper is the explanation of that
observation.} An earlier paper by the present author
\citep{mayeski_fastsend} did not argue for the single-thread principle in its own right.
That paper addresses a different objection: the actor model had been considered
unsuitable for HFT because of asynchronous-dispatch overhead, and a synchronous send is
presented there as the mechanism that removes that overhead for co-located actors. That
send, \texttt{fast\_send} in the \kasparhft{} framework (Section~\ref{sec:kaspar}), runs
the receiving actor's handler inline on the sending thread instead of queueing a message
to another thread; the paper writes ``synchronous send'' for it outside Part~III. It makes
the actor model usable inside a servicing chain that the prevailing design placed on one
thread. Its reference architecture (reader to buffer by asynchronous \texttt{send};
decode, book and signal chained by synchronous sends on one thread; signal to order
manager by asynchronous \texttt{send}) takes the single-thread servicing chain as given
and does not compare it against more asynchronous topologies. The present paper revises
that design context, not the synchronous send: on Hawkes-clustered feeds the
servicing chain should be split into asynchronous stages once its service time exceeds
the threshold established in Part~II. The synchronous send remains the mechanism for
chaining actors placed on one thread; what changes is where the thread boundaries inside
the chain should be. The design rule is stated in Section~\ref{sec:conclusion}.

\subsection{Plan of the paper}
\label{sec:intro-plan}

Readers not familiar with point processes --- the Poisson, renewal and Hawkes processes,
the branching ratio, the Fano factor and the Hurst exponent --- should read
Appendix~\ref{app:pp-background} before Section~\ref{sec:setup-hawkes}; it defines every
term the paper uses for arrival streams from first principles.

The paper asks three groups of questions and answers each in a named place.

\begin{enumerate}
    \item \emph{What does the feed look like, on the wire and behind it?} Packet arrivals
are self-exciting clusters (Section~\ref{sec:setup-hawkes}); the clustering belongs to the
matching engine's transactions, almost every one of which fits in one packet
(Section~\ref{sec:results-transactions}); the engine processes consecutive transactions
within a fraction of a microsecond while the publisher, a queue with a period of about
$7.5\,\mu\mathrm{s}$, delivers them one period apart and packs few of them together
(Section~\ref{sec:exchange}); a live receiver on other instruments sees the same
statistics (Sections~\ref{sec:crossval-agree} and~\ref{sec:crossval-span}).
    \item \emph{What produces the queueing tail behind a single-threaded receiver?} Three
candidates are tested. The packet rate: no, at HFT service times
(Section~\ref{sec:results-conditioning}). The clustering of arrivals: yes
(Section~\ref{sec:results-nulls}), acting through consecutive transactions sent back to
back at the publisher period at short service times and through runs of transactions,
the self-exciting component, at long ones. Why the matching engine processes
transactions within a microsecond of each other, which is what backs up the publisher,
is a question about the market, outside the scope of this paper;
Section~\ref{sec:concl-cannot} states what can and cannot be claimed about it. The number
of messages per packet: at the service time a production receiver runs at, about
$7\,\mu\mathrm{s}$, it carries most of the tail, through the decode of a long packet
message by message and the queue that packet opens behind it; from $16\,\mu\mathrm{s}$ up
it adds a few percent (Sections~\ref{sec:crossval-spanrun}
and~\ref{sec:fungibility-measured}). Service times that vary from packet to packet, which
the simulation does not model, add a tail of their own on the live receiver.
    \item \emph{How should the trading system be designed?} Stay under the publisher period if
the chain allows it. Above it, are two queues better than one? Yes above a service-time
threshold set by the feed and no below it (Sections~\ref{sec:results-tail}
and~\ref{sec:results-design}), and only the slowest stage matters
(Theorem~\ref{thm:reduction}). For a stateless stage, with equal cores and every cost
charged, a dispatch pool is level with a two-stage cut and better than a cut into four
or more; the stateful order-book stage cannot be dispatched
(Section~\ref{sec:results-equal-core}). At the production operating point the answer is
no, and the levers are the cost per message and the spread of service
(Section~\ref{sec:kaspar}). Part~III turns these into a design procedure.
\end{enumerate}

The paper proceeds in five parts.

\begin{itemize}
    \item \textbf{Measurement of the feed.} The corpus is over a year of NQ front-month
packet captures with the exchange's engine and publisher timestamps
(Section~\ref{sec:setup-data}). Its clustering is characterised model-free and with a
Hawkes fit (Section~\ref{sec:setup-hawkes}); every message is grouped into its
matching-engine transaction (Section~\ref{sec:results-transactions}); and the same
transactions are followed through the engine's clock and the publisher's, giving the
publisher period, the publisher's delay under load, packet size under backlog, and what
sits on either side of a short gap (Section~\ref{sec:exchange}).
    \item \textbf{Theory.} Two closed-form results (Section~\ref{sec:theory}) predict the
pipeline behaviour. Proposition~\ref{thm:burst} shows that under any arrival process an
$N$-stage tandem has an $N$-fold throughput speedup on isolated bursts.
Theorem~\ref{thm:reduction} shows that a deterministic-service tandem reduces exactly to a
single server with the largest stage's service, and that the end-to-end wait is bounded
pathwise by $1/N$ of the single-stage wait plus $(N-1)h$, at every quantile and under
every arrival sequence. Corollary~\ref{cor:poisson} shows that under a rate-matched
Poisson null the single-stage wait at the $q$-quantile is already zero whenever the
utilisation satisfies $\rho < 1 - q$ --- which at this feed's rate covers $p_{99}$ for
every task up to $16\,\mu\mathrm{s}$ --- so over that band there is nothing to compress
and splitting costs exactly $(N-1)h$. The Hawkes--Oakes cluster representation explains why the single-stage wait on a real
feed is one to two orders of magnitude above $h$: a burst of clustered arrivals closer
together than the service time queues behind its first member (Section~\ref{sec:theory-cluster}).
This is the wait the tandem reduces.
    \item \textbf{Simulation.} We build an $N$-stage tandem Lindley recursion on real CME
MDP3 packet-arrival streams at a measured hop cost $h$ (Section~\ref{sec:setup-sweep}) and a service
time swept across $T \in \{2, 4, 8, 16, 32, 64, 128\}\,\mu\mathrm{s}$, from tasks below
the hop cost to tasks far above it. We sweep $N \in \{1, 2, 4, 8\}$ on every thirty-minute window of the corpus. Each window is also simulated under a uniform
Poisson null $P$ that redraws the same number of arrivals uniformly over the same support
(Section~\ref{sec:setup-sweep}); the null isolates the contribution of clustering, and
is not a model of what any designer assumes. Section~\ref{sec:setup-nulls} adds seven
further streams that remove the ingredients of the clustering one at a time, at the
packet level and at the level of matching-engine transactions.
    \item \textbf{Design equation for practitioners.} From the measurements and the
simulation we extract design rules for a latency-critical servicing chain: stay under the
publisher period where the chain allows it; above it, cut the chain at the boundary that
shortens its slowest stage, and settle each boundary by end-to-end measurement
(Sections~\ref{sec:kaspar} and~\ref{sec:implementation}; the full rule is the
design-rule box of Section~\ref{sec:concl-summary}).
    \item \textbf{Live cross-validation.} The corpus statistics are compared with an
independent measurement on a production receiver, on other instruments
(Section~\ref{sec:crossval}). The clustering statistics agree, and the measured decode
floor coincides with the service time at which the simulated tail appears. The sweep is
then rerun with the measured per-message decode cost
(Section~\ref{sec:crossval-spanrun}). The live ZN stream's far tail is traced to the
message count of a few large packets (Section~\ref{sec:fungibility-measured}).
\end{itemize}

\subsection{Contribution and prior-art positioning}
\label{sec:intro-priors}

The paper makes three kinds of contribution: a measurement of an exchange feed at the
packet and transaction level, a theory of what a receiver pays for it, and design rules
that follow.

\paragraph{Measurement.} Four findings about CME's NQ feed that, to our knowledge, have
not been published for any exchange.
\begin{enumerate}
    \item \emph{The publisher is a queue, and its period is the receiver's threshold.} The
exchange's market-data publisher sends at most one packet every $7.5\,\mu\mathrm{s}$; its
delay from engine to packet grows about tenfold when the engine bursts; and it drains a
backlog with more packets, not bigger ones (Section~\ref{sec:exchange-publisher},
Section~\ref{sec:exchange-drain}).
    \item \emph{Two clocks.} On the matching engine's clock one consecutive transaction in
six follows the previous one within the publisher period, often within a fraction of a
microsecond; on the publisher's clock almost none do. Most of the short gaps a receiver
sees are engine events processed almost together and delivered one period apart
(Section~\ref{sec:exchange-engine}).
    \item \emph{The transaction structure of the packet stream.} Almost every transaction
fits in one packet, a long packet is several transactions packed together, and the
packet stream inherits the clustering of the transaction stream, with the same branching
ratio (Section~\ref{sec:results-transactions}, Section~\ref{sec:crossval-span}).
    \item \emph{What sits on either side of a short engine gap.} Ordinary quote traffic,
in the same mix as across long gaps, with the signature of a race after trades
(Section~\ref{sec:exchange-bursts}).
\end{enumerate}

\paragraph{Theory and identification.} Over part of the range of HFT service times, the
choice between a single-thread receiver and an $N$-stage tandem \emph{reverses with the
arrival law}. Between the publisher period and the service time at which a Poisson stream
at the same rate begins to queue, about $10$ to $20\,\mu\mathrm{s}$ on this feed, Poisson
arrivals make the single thread the better choice, since there is no wait to compress,
while clustered arrivals make the tandem better on the tail, because the cluster wait it
removes exceeds the hop cost it adds. Outside that band there is no reversal: below it the
single thread is better under both arrival laws, and above it the tandem is better under
both, by a margin that the clustering multiplies. Two closed-form results
(Section~\ref{sec:theory}) and a tandem sweep driven by the real packet timestamps
(Section~\ref{sec:results}) establish it. Seven null streams, each removing one
ingredient of the clustering, identify what the receiver queues on at service times above the publisher period: the
timing of transactions, not their size, their packetisation, the packet rate or
second-scale load (Section~\ref{sec:setup-nulls}). At the production operating point, just
under the period, long packets carry most of the tail instead
(Section~\ref{sec:crossval-spanrun}). We have found no prior statement of the reversal and no
prior test of whether an exchange's packetisation or the market's clustering is what a
receiver queues on.

\paragraph{Design.} The measurements give a system designer a threshold to measure once per feed, the publisher period; a rule for where to cut a chain above it and how to spend
cores on stateless and stateful stages; and, at the production operating point just
under the period, the finding that the remaining tail is in long packets and variable
service, not in the thread count (Part~III).

\paragraph{Prior work.} A literature search covering foundational tandem theory
\citep{jackson1957,kelly1979,burke1956,kleinrock1976,whitt1983}, tandem theory under
bursty arrivals
\citep{fosskorshunov2000,baccellifosslelarge2005,heindl2001mmpp,kimchydzinski2002mmpp,debickimandjes2015},
Hawkes single-queue and adjacent work
\citep{dawpender2018queues,chenblanchet2021,koopsboxmamandjes2018,sheldon2023,gaozhu2018jumps,gaozhu2024statedep},
empirical and applied HFT and RPC-pipeline work
\citep{byrd2020abides,frey2023jaxlob,aquilina2022arms,deanbarroso2013tailatscale,zhao2023parsimon},
the datacenter tail-latency literature, and the empirical market-microstructure work on
exchange timing found no characterisation of an exchange's outbound packet stream at
this level. The nearest neighbours measure adjacent things. Deutsche B\"orse
characterises its own T7 system from the inside \citep{deutscheboerse2026t7}: up to six
timestamps per message, a micro-burst in which inbound throughput is millions of messages per second at the gateway against a few hundred thousand at matching, and three
network timestamps per market-data packet through a timestamp file; it is the inbound
side, one example burst, and self-reported, and it does not characterise the outbound
packet spacing, packet size under backlog or transactions per packet over time.
\citet{noble2026realitygap}, on Nasdaq equities, find a sharp mode in inter-event times
at the exchange round trip, about $29\,\mu\mathrm{s}$, and about a quarter of events too
fast to be reactions to the previous one, which they read as correlated responses to a
shared signal; that is the same kind of observation as the sub-microsecond engine gaps of
Section~\ref{sec:exchange-engine}, made on event times alone, without the publisher or
the packets. \citet{bartlettmccrary2019} and \citet{dinghannahendershott2014} measure the
dissemination latency of the US equity consolidated feeds against direct feeds from
exchange timestamps, at microsecond to millisecond scale; \citet{fishehaynesonur2016}
measure message latency on CME Treasury and e-mini futures from account-level audit-trail
data; \citet{hasbrouck2013lowlatency} document millisecond runs of linked messages on
Nasdaq. None measures a direct feed packet by packet. Industry monitoring reports message
and packet rates of US feeds at one-millisecond resolution,\footnote{Exegy and the
Financial Information Forum, \emph{MarketDataPeaks},
\url{https://www.exegy.com/resources/marketdatapeaks/}.} which is rates, not spacing. On the exchange side, \citet{yoon2026matching} reports for his own matching engine that tail latency under load is set by queueing at the serialised match loop during bursts; the present paper studies the other end of the feed, with the arrival process measured rather than assumed. On the arrival-process side, the Hawkes literature on order flow
\citep{filimonovsornette2012,filimonovsornette2015,hardimanbercotbouchaud2013,achab2018branching,bacry2015review}
models order-book events for microstructure and forecasting and does not connect them to
a receiver's service queue or to the packet process that carries them. We have found no
prior study that combines a real exchange-calibrated tandem sweep with a Poisson
counterfactual and a measured framework hop cost. The full reference list appears in
Section~\ref{sec:refs}.

\subsection{Note on the use of AI in this paper}
\label{sec:intro-ai-use}

This paper was produced with Anthropic's Claude (via Claude Code) under the author's
direction for code generation (the analysis pipeline in the \kasparhft{} repository),
exploratory analysis, mathematical drafting, prior-art search, prose editing, and figure
preparation. The analytical results of Part~I, including the proof of
Theorem~\ref{thm:reduction}, were drafted with the model and then independently checked by
the author. The author independently reviewed all mathematical
arguments, experimental methodology, numerical results, and citations, and takes sole
responsibility for the contents of the paper.

\subsection{Notation}
\label{sec:notation}

The following symbols are used throughout.

\begin{center}
\begin{tabular}{@{}lp{0.78\textwidth}@{}}
\toprule
Symbol & Meaning \\
\midrule
$T$ & total service time per message on a single-stage pipeline, in \si{\micro\second} \\
$N$ & number of stages in the tandem pipeline (one actor per stage) \\
$T/N$ & per-stage deterministic service time in the $N$-stage tandem \\
$h$ & per-hop wall-clock cost of the async mailbox handoff between adjacent stages \\
$k$ & size of an arrival burst (number of packets in one cluster) \\
$\tau_k^{(N)}$ & exit time of the last message of a $k$-burst under $N$ stages \\
$s_i$, $s_{\max}$ & per-stage services of an unequal tandem, $\sum_i s_i = T$, and their maximum \\
$w_j(s)$ & Lindley waiting time of message $j$ at a single deterministic server with service $s$ \\
$W_j^{(N)}$, $W_q^{(N)}$ & end-to-end wait of message $j$ (latency minus $T$, hops included) under $N$ stages, and its $q$-quantile \\
$p_q$ & $q$-th percentile of the end-to-end latency distribution ($p_{50}$, $p_{99}$, etc.) \\
$\Delta(N)$ & clustering tail excess $p_{99}(N) - p_{50}(N)$ (\S\ref{sec:results-design}) \\
$\Delta_{\mathrm{single}}(s)$ & single-server clustering tail excess at deterministic service $s$; $\Delta(N) = \Delta_{\mathrm{single}}(T/N)$ (\S\ref{sec:results-design}) \\
$\gamma$ & fitted exponent in $\Delta(N) \approx \Delta(1) N^{-\gamma}$ \\
$\kappa$ & operator's tail-versus-median weight in the design equation \\
$N^\star$ & stage count minimising the design objective $(N-1)h + \kappa\,\Delta(N)$ (\S\ref{sec:results-design}) \\
$\mu, \alpha, \beta$ & Hawkes background rate, jump size, and decay rate (\S\ref{sec:setup-hawkes}) \\
$n = \alpha/\beta$ & Hawkes branching ratio \\
$K$ & Borel-distributed offspring cluster size (\S\ref{sec:theory-cluster}) \\
$\bar\lambda_{\mathrm{pkt}}$ & mean packet-arrival intensity per 30-min window, packets per second \\
$n_{\mathrm{pkt}}$ & branching ratio of the packet-arrival Hawkes fit per window \\
$n_{\mathrm{tx}}$ & branching ratio of the transaction-start Hawkes fit per window (\S\ref{sec:results-transactions}) \\
$n_{\mathrm{pkt}}^{\#}$ & per-window packet count \\
$\rho$ & utilisation, $\rho = \bar\lambda_{\mathrm{pkt}} \cdot T$ \\
$\tau$ & bin width of the Fano factor, in seconds (distinct from the service time $T$) \\
$F(\tau)$ & Fano factor at bin width $\tau$: $\mathrm{Var}[C_\tau] / \mathrm{E}[C_\tau]$ \\
$H$ & Hurst exponent, from log-log slope of $F(\tau)$ \\
$H$, $P$, $G$, $B$ & as stream labels: the real stream and the three packet-level null streams (\S\ref{sec:setup-nulls}); the stream label $H$ is distinct from the Hurst exponent \\
$\sigma_i$, $\bar\sigma$ & span of packet $i$ (SBE messages per datagram) and its window mean (\S\ref{sec:crossval-spanrun}) \\
$S_i$ & span-dependent service time of packet $i$ (\S\ref{sec:crossval-spanrun}) \\
\midrule
$\spub$ & publisher period: the minimum period between consecutive packets from the exchange's market-data publisher, i.e.\ the publisher's service time per packet, the inverse of its maximum send rate; about $7.5\,\mu\mathrm{s}$ on NQ, measured as the $1$st-percentile gap between packets, $7.43$--$7.53\,\mu\mathrm{s}$ (\S\ref{sec:exchange-publisher}). A receiver stage slower than $\spub$ queues on a burst; a faster one does not. \\
no-queue latency & the latency of a message that finds every queue empty, $T + (N-1)h$ \\
decode floor & the production decoder's time for a one-message packet with nothing queued ahead of it, $6.8$--$7.2\,\mu\mathrm{s}$ (\S\ref{sec:crossval-floor}) \\
\bottomrule
\end{tabular}
\end{center}

The per-window
packet count and the packet-Hawkes branching ratio would both naturally be written
$n_{\mathrm{pkt}}$; the paper reserves $n_{\mathrm{pkt}}$ for the branching ratio and
writes $n_{\mathrm{pkt}}^{\#}$ for the count.



\clearpage
\part{Theory}

\section{Analytical framework}
\label{sec:theory}

Readers not interested in the formal presentation may skip to
Section~\ref{sec:theory-summary}, which restates the results in practical terms.

Throughout, the \emph{tandem} is the paper's model of the multi-actor pipeline: the total
per-message service time $T$ is split into $N$ stages placed in series, each stage an
independent deterministic-service server running on its own thread that performs its share
of the work and passes the message to the next stage across a mailbox hop of cost $h$. The
single-thread event loop is the $N = 1$ case, in which one server performs all of $T$;
adding stages shortens each server's service but inserts a hop between neighbours.

A proposition, a theorem and a corollary underlie the simulation. Proposition~\ref{thm:burst}
is the isolated-burst identity. A message's end-to-end \emph{latency} is its
\emph{service time} $T$ plus its \emph{queueing wait}, the time it spends behind
predecessors before service begins. Running the $N$ stages concurrently raises throughput
from $1/T$ to $N/T$, so an $N$-stage tandem clears a back-to-back burst $N$ times faster:
the service time is unchanged at $T$, and the last message's queueing wait falls by a
factor of $N$. This is the classical pipeline speedup.

For example, split a service-$T$ server into two equal $T/2$ stages and feed it a burst of
$k$ back-to-back messages. All of the queueing now happens at the first stage: its last
message waits behind $k-1$ predecessors that each occupy it for only $T/2$, a wait of
$(k-1)\,T/2$ against $(k-1)\,T$ on the single server. The second stage never queues: stage-one departures arrive spaced $T/2$ apart, equal to
its service time, so each message passes through in $T/2 + h$. The wait is not spread
across two queues; it remains in one bottleneck stage that is now twice as fast. The
halving requires an equal split: cutting $T = 10$ as $7 + 3$ leaves a bottleneck of $7$,
so the wait falls only to $7/10$. In general the tail scales with the largest stage
$s_{\max}$ rather than with the number of stages; Theorem~\ref{thm:reduction} makes this
exact.
Theorem~\ref{thm:reduction} is the general statement: the tandem's end-to-end wait equals
the wait at a single server with the largest stage's service plus $(N-1)h$, exactly and on
every sample path, and that single-server wait is at most $1/N$ of the wait at service
$T$. Both are elementary and independent of the arrival law; Theorem~\ref{thm:reduction}
rests on a reduction theorem of \citet{friedman1965} and on the monotonicity of the
Lindley recursion. Corollary~\ref{cor:poisson} treats the Poisson null: the Poisson wait
at quantile $q$ is zero until $\rho = 1 - q$, which at the utilisation of a CME packet
feed covers $p_{99}$ for every task up to $16\,\mu\mathrm{s}$; over that band splitting
costs $(N-1)h$ and removes nothing. The arrival law enters only through the size of the
single-stage wait to which the factor $1/N$ is applied, and the Hawkes--Oakes cluster
representation explains why that wait is large on a real feed. Part~II measures the
slack in the bound on real arrivals.

\subsection{Burst-limit throughput theorem}
\label{sec:theory-burst}

\begin{proposition}[Burst-limit $N$-factor speedup]
\label{thm:burst}
Consider an isolated cluster of $k$ arrivals whose $j$-th member arrives no later than
$(j-1)\,T/N$ after the first, so that no server idles during the burst (an instantaneous
batch is the canonical case). Let $\tau_k^{(1)}$ denote the exit time of the last arrival
under a single-stage deterministic-service queue with service $T$, and let $\tau_k^{(N)}$
denote the exit time under an $N$-stage tandem with per-stage deterministic service $T/N$
and per-stage hop delay $h$. Then
\begin{equation}
\tau_k^{(1)} = k T,
\qquad
\tau_k^{(N)} = T + (k - 1)\,\frac{T}{N} + (N - 1)\, h,
\label{eq:burst}
\end{equation}
and in particular
\begin{equation}
\lim_{k \to \infty} \frac{\tau_k^{(1)}}{\tau_k^{(N)}} = N.
\label{eq:burst-limit}
\end{equation}
\end{proposition}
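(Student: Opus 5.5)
The plan is to compute both exit times explicitly from the standard departure-time recursion for a FIFO tandem of deterministic servers, and then take the limit. Time is measured from the arrival of the first member of the burst, $a_1 = 0$. I read ``isolated'' to mean that the system is empty at time $0$ and that nothing else arrives until the burst has left. The hop cost enters as a pure delay: a message departing stage $i$ at time $D_{j,i}$ becomes available to stage $i+1$ at $D_{j,i}+h$. Under these conventions the recursion is
\begin{equation*}
D_{j,i} = \max\bigl(A_{j,i},\, D_{j-1,i}\bigr) + \frac{T}{N},
\qquad A_{j,1} = a_j,\quad A_{j,i} = D_{j,i-1} + h \ (i \ge 2),
\end{equation*}
with $D_{0,i} = -\infty$. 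The single-stage case is the same recursion with $N=1$ and service $T$.

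\emph{Single stage.} By induction on $j$ I will show $D_j = jT$. The base case is $D_1 = T$. For the step, the hypothesis gives $a_j \le (j-1)T/N \le (j-1)T = D_{j-1}$, so the maximum is always attained by $D_{j-1}$, the server never idles, and $D_j = D_{j-1} + T$. Hence $\tau_k^{(1)} = kT$.

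\emph{Tandem.} Here the target is the closed form
\begin{equation*}
D_{j,i} = \frac{iT}{N} + (i-1)h + \frac{(j-1)T}{N},
\end{equation*}
proved by induction on $i$ and, inside that, on $j$.

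For stage $1$, the hypothesis $a_j \le (j-1)T/N = D_{j-1,1}$ makes the maximum equal to $D_{j-1,1}$. This gives $D_{j,1} = jT/N$, which matches the formula.

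For stage $i \ge 2$ with $j \ge 2$, substituting the inductive formulas shows that the two arguments of the maximum coincide:
\begin{equation*}
D_{j,i-1} + h = \frac{(i+j-2)T}{N} + (i-1)h = D_{j-1,i}.
\end{equation*}
Adding $T/N$ then gives the formula. The case $j=1$ has an empty queue and is immediate.

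Setting $i=N$ and $j=k$ yields $\tau_k^{(N)} = T + (k-1)T/N + (N-1)h$. The tie in the maximum is the content of the worked example preceding the proposition: downstream stages receive messages spaced exactly $T/N$ apart, so they neither idle nor queue. The limit \eqref{eq:burst-limit} then follows by dividing numerator and denominator by $k$:
\begin{equation*}
\frac{kT}{T + (k-1)T/N + (N-1)h} \to N \qquad (k \to \infty).
\end{equation*}

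I expect no real mathematical obstacle. The main care point is fixing the conventions so that \eqref{eq:burst} holds as stated: the time origin at the first arrival, the hop modelled as a delay added between stages rather than as service, and an initially empty system. The one substantive check is that the arrival-spacing hypothesis enters only at stage $1$. From stage $2$ onward the spacing is generated by stage $1$'s departures, so a burst admissible for the tandem is automatically admissible for the single server, since $(j-1)T/N \le (j-1)T$.
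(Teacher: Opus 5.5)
Your proof is correct and follows essentially the paper's own reasoning: the paper states the proposition without a separate proof, as the classical pipeline result, and recovers it as the isolated-burst case of Theorem~\ref{thm:reduction}, whose equal-stage argument is the same as your tie in the maximum (stage~1 never idles during the burst, and every later stage is fed at spacing exactly $T/N$, so it neither queues nor idles and adds exactly $T/N + h$). Your double induction on the departure recursion writes that argument's departure times out in closed form, and your observation that the spacing hypothesis is used only at stage~1, where it also implies the weaker no-idling condition $a_j \le (j-1)T$ needed by the single server, is accurate.
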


\begin{remark}
Proposition~\ref{thm:burst} is the classical pipelined-tandem throughput result
\citep{kleinrock1976,kelly1979}; we state it without proof and use it only to fix the
baseline against which the arrival-dependent tail reduction of Theorem~\ref{thm:reduction}
is measured. What is paper-specific is the numerical calibration of $T$ and $h$ on a
production HFT actor framework and the observation that under real CME MDP3 arrivals the
ratio $\tau_k^{(1)}/\tau_k^{(N)}$ is what governs the observed tail-latency compression.
\end{remark}

\paragraph{Reading Equation~\eqref{eq:burst}.} Take a typical HFT task of
$T = 10\,\mu\mathrm{s}$ on a single-thread event loop, and let $k$ be the size of a burst
of packets that hits the pipeline back to back. On a single stage the last message of the
burst leaves at
\[
\tau_k^{(1)} = k\,T ,
\]
because it waits behind $k-1$ predecessors that each spend the full $T$ on the one server;
a ten-packet burst on a $10\,\mu\mathrm{s}$ server therefore has a last-message latency of
$100\,\mu\mathrm{s}$. On an $N$-stage tandem the last message leaves at
\[
\tau_k^{(N)} = T \;+\; (k-1)\,\frac{T}{N} \;+\; (N-1)\,h ,
\]
which has three parts:
\begin{itemize}
    \item $T$, the total work every message must traverse, whatever $N$ is;
    \item $(k-1)\,T/N$, the wait behind the $k-1$ predecessors, now at throughput $N/T$
instead of $1/T$;
    \item $(N-1)\,h$, the hop cost paid once per mailbox boundary on the way through.
\end{itemize}
A single message ($k = 1$) pays only the first and the last part. Under clustered arrivals
the middle part is large on a single stage and falls by a factor of $N$ under a tandem,
while the last part is a fixed additive cost.

\paragraph{Reading Equation~\eqref{eq:burst-limit}.} The ratio $\tau_k^{(1)}/\tau_k^{(N)}$
tends to $N$ as the cluster size $k$ grows, meaning that in the burst-saturation limit a
tandem of $N$ actors clears the burst $N$ times faster than an event loop. This is the
classical pipeline throughput speedup restated for HFT: a $4$-actor pipeline drains a
large burst in one-quarter of the wall-clock time that a single-thread event loop would
take, and a $12$-actor pipeline in one-twelfth. The speedup is obtained with the same
total service work per message: the pipeline does not reduce the work per event, it
overlaps the work across cores so that the queue behind a cluster clears in parallel. The arrival law does not enter
Equation~\eqref{eq:burst-limit}; what the arrival law controls is \emph{how often} the
burst regime is entered, and Theorem~\ref{thm:reduction} shows the same factor governs
every message on every sample path.

\paragraph{On what $T$ is, and on the natural stage boundaries.} The paper adopts $T =
10\,\mu\mathrm{s}$ as a ballpark typical HFT single-thread event loop. Any real HFT pipeline of that duration is already a sequential chain of internally
atomic sub-tasks: SBE decode, order-book application, strategy touch and order
construction (Section~\ref{sec:implementation}). The $N$ stages of the tandem are placed at
the existing boundaries between those sub-tasks. The paper does not subdivide any single
sub-task, and in particular not the SBE decode, which is a linear pass over the wire bytes
of one message and cannot be split. Ten microseconds is chosen for exposition: it is a round value just above both the
production decode floor and the publisher period, and none of the paper's qualitative
results depend on it. The
relevant knob for the design equation is the ratio $T/h$: doubling $T$ at fixed $h$
doubles the cluster-wait terms of Equation~\eqref{eq:burst} and leaves the hop tax
unchanged. The service time below which splitting cannot pay
(Section~\ref{sec:results-design}) turns out to be set by the feed rather than by $h$: it
is the publisher period, and on this corpus it binds well above $2h$. The paper writes ``per-stage
service $T/N$'' as a simplifying modelling convention that assumes equal-sized stages; in
reality the per-stage services $s_i$ sum to $T$ but are unequal, and the achievable
pipeline throughput is $1/\max_i s_i$ rather than $N/T$. Theorem~\ref{thm:reduction}
handles the unequal case exactly: the factor becomes $s_{\max}/T$ in place of $1/N$, the
tail still contracts under Hawkes, and the median still grows linearly at slope $h$ per
hop.

The mechanism applies to any deterministic-service tandem regardless of the arrival
law; the arrival law determines how often large-$k$ clusters occur. The next subsection
makes the per-burst factor a pathwise bound.

\subsection{Exact reduction of the tandem to one server, and the scaling bound}
\label{sec:theory-cluster}

Proposition~\ref{thm:burst} treats one isolated burst. The result that carries the paper's
design claim is that the same $N$-factor holds pathwise for every message under every
arrival sequence, at the level of the end-to-end wait. Fix an arrival sequence $a_1 \le
a_2 \le \cdots$ (any point process, clustered or otherwise, with batches allowed as
coincident times) and write $\delta_j = a_{j+1} - a_j \ge 0$. For a single FIFO server
with deterministic service $s$ fed by this sequence, the waiting time of message $j$ obeys
the Lindley recursion \citep{lindley1952}
\begin{equation}
w_1(s) = 0, \qquad w_{j+1}(s) = \big(w_j(s) + s - \delta_j\big)^+ .
\label{eq:lindley}
\end{equation}
Define the end-to-end wait of message $j$ in the $N$-stage tandem as $W_j^{(N)} =
(\text{exit time}) - a_j - T$, i.e.\ latency in excess of the total service work; under $N
= 1$ this is $W_j^{(1)} = w_j(T)$.

\begin{theorem}[Reduction and scaling bound]
\label{thm:reduction}
Consider an $N$-stage tandem of FIFO servers with deterministic per-stage services $s_1,
\ldots, s_N$ summing to $T$, deterministic hop delay $h$ between consecutive stages,
unbounded buffers, and an arbitrary arrival sequence. Let $s_{\max} = \max_i s_i$. Then
for every message $j$ and every sample path,
\begin{align}
W_j^{(N)} &= w_j(s_{\max}) + (N-1)\,h, \label{eq:reduction} \\
w_j(s') &\le \frac{s'}{s}\; w_j(s) \qquad \text{for all } 0 < s' \le s, \label{eq:scaling}
\end{align}
and consequently
\begin{equation}
W_j^{(N)} \;\le\; \frac{s_{\max}}{T}\, W_j^{(1)} + (N-1)\,h ,
\label{eq:bound}
\end{equation}
which under equal stages, $s_i = T/N$, reads $W_j^{(N)} \le W_j^{(1)}/N + (N-1)\,h$.
Since Equation~\eqref{eq:bound} holds pathwise, it holds for every quantile of the
stationary distribution whenever one exists: $W_q^{(N)} \le W_q^{(1)}/N + (N-1)h$ for all
$q$. The same inequality holds for the empirical (sample) quantiles computed over any
finite trace, directly from the pathwise bound and without invoking stationarity; the
sample-quantile form is what Part~II measures, and it should be read as a statement about
the observed windows rather than about a stationary-distribution quantile or a
finite-sample maximum.
\end{theorem}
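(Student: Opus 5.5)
We prove the three displays in order: first the pathwise reduction \eqref{eq:reduction}, then the scaling inequality \eqref{eq:scaling}, and then combine them.

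\emph{Reduction.} Write $D_j^{(i)}$ for the departure time of message $j$ from stage $i$, with $D_j^{(0)} = a_j$. Since buffers are unbounded and service is FIFO and deterministic,
\[
D_j^{(i)} = \max\bigl(D_j^{(i-1)} + h\,\mathbf{1}_{i>1},\; D_{j-1}^{(i)}\bigr) + s_i .
\]
Unrolling this max-plus recursion gives the classical last-passage formula
\[
D_j^{(N)} = \max_{1\le j_1\le \cdots\le j_N = j}\; a_{j_1} + \sum_{i=1}^N \bigl(j_{i+1}-j_i\bigr)s_i + T + (N-1)h,
\]
with $j_{N+1}:=j$ and the indices arranged so that stage $i$ serves messages $j_i,\dots,j_{i+1}$. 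Equivalently, for each $m\le j$ the path charges $T$ once and charges the $j-m$ extra services to stages of one's choice. The maximum over paths therefore puts all $j-m$ extra services on the largest stage, which yields
\[
D_j^{(N)} = \max_{m\le j}\bigl(a_m + (j-m)s_{\max}\bigr) + T + (N-1)h .
\]
This is Friedman's interchangeability/reduction result, which the paper already cites, and we would invoke it while sketching the max-plus argument above. The single-server Lindley solution is $w_j(s) = \max_{m\le j}(a_m-a_j+(j-m)s)$, so subtracting $a_j+T$ gives \eqref{eq:reduction}. The main obstacle is writing the last-passage expansion cleanly, in particular handling the constant hop terms and checking that the max really concentrates on a single stage. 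To do this we would prove by induction on $N$ that $D_j^{(N)}-(N-1)h$ equals the max-plus product of the single-stage operators. Since these operators commute (Friedman), we can move the $s_{\max}$ stage first; behind it, departures are spaced at least $s_{\max}\ge s_i$ apart, so every later stage finds its queue empty.

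\emph{Scaling.} From the closed form, $w_j(s)=\max_{m\le j} f_m(s)$ with $f_m(s)=(a_m-a_j)+(j-m)s$. Each $f_m$ is affine in $s$ with nonpositive intercept $a_m-a_j\le 0$, and the $m=j$ term makes $w_j\ge 0$. A maximum of such functions is convex on $[0,\infty)$ with $w_j(0)=0$. Convexity together with $w_j(0)=0$ gives $w_j(s')\le (s'/s)\,w_j(s)+(1-s'/s)\,w_j(0)$ for $0<s'\le s$, which is \eqref{eq:scaling}.

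\emph{Consequences.} Apply \eqref{eq:scaling} with $s'=s_{\max}\le T=s$ and use $W_j^{(1)}=w_j(T)$ to obtain \eqref{eq:bound}; equal stages give the $1/N$ form. The bound holds message by message, and the map $x\mapsto x/N+(N-1)h$ is increasing. Every order statistic of $\{W_j^{(N)}\}$ over a finite trace is therefore bounded by the same map applied to the corresponding order statistic of $\{W_j^{(1)}\}$, which gives the sample-quantile statement. Passing to limits of empirical distributions gives the stationary-quantile statement whenever a stationary distribution exists.
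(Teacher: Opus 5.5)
Your proof is correct apart from one indexing slip, and it reaches both key facts by routes different from the paper's. For \eqref{eq:reduction} the paper argues on the time axis. A deterministic server emits departures at least $s$ apart, and a server fed at that spacing never queues, which settles equal stages. For unequal stages it cites Friedman's order-invariance theorem (a constant hop shift commutes with the Lindley map), moves the $s_{\max}$ stage to the front and reuses the no-queue argument. Your last-passage expansion makes the citation unnecessary. Every monotone path crosses $N-1$ stage boundaries and visits every stage, so it collects exactly $T+(N-1)h$ plus $\sum_i n_i s_i$ with $n_i\ge 0$ and $\sum_i n_i=j-m$. Every such allocation is realised by some path, so the maximum is $(j-m)s_{\max}$. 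The resulting closed form is symmetric in the stage order, so it proves Friedman's theorem in this deterministic setting rather than using it, and you can drop your fallback to ``commute, reorder, no queue''. Do fix the index constraint, though. With both $j_N=j$ and $j_{N+1}:=j$, the last stage can never absorb an extra service. For $N=2$, $s_1=1$, $s_2=10$ and two simultaneous arrivals, your display gives $12+h$, while the true exit time is $21+h$. The constraint should read $j_1\le\cdots\le j_N\le j_{N+1}=j$, as your verbal description intends.

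For \eqref{eq:scaling} the paper inducts on $j$, using the monotonicity of $x\mapsto(x+s'-\delta_j)^+$ and $\delta_j\ge(s'/s)\delta_j$. Your convexity argument on the closed-form Lindley solution is shorter and says more, since each term satisfies $(s'/s)f_m(s)-f_m(s')=(1-s'/s)(a_j-a_m)\ge 0$. The bound is therefore tight precisely when $w_j(s)$ is attained at some $m$ with $a_m=a_j$, as in an instantaneous batch (Proposition~\ref{thm:burst}). The loss is $(1-s'/s)$ times the elapsed time $a_j-a_m$, which is the summed form of the per-gap slack $(1-1/N)\delta_j$ described in the remark after the theorem. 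What the paper's route keeps in exchange is the physical picture of where the queue sits, which it reuses in Remark~\ref{rem:dissipation}. One small point on the stationary quantiles: passing through empirical distributions quietly assumes ergodicity. It is enough that $\{(x,y):x\le y/N+(N-1)h\}$ is closed. The pathwise inequality then survives any weak limit of the pair $(W_j^{(N)},W_j^{(1)})$ and yields the stochastic ordering, and hence the quantile inequality, directly.
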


\begin{proof}
Equation~\eqref{eq:reduction}, equal stages. Departures from a FIFO server with
deterministic service $s$ are spaced at least $s$ apart, because each departure requires
$s$ of uninterrupted service after the previous one. A FIFO server with deterministic
service $s$ whose arrivals are spaced at least $s$ apart never queues: by induction,
message $j$ departs at its arrival time plus $s$, which is at or before the arrival of
message $j+1$. Stage~1 is a single server with service $T/N$ fed by the original sequence,
so message $j$ leaves it at $a_j + w_j(T/N) + T/N$. The arrival sequence at stage~2 is the
stage-1 departure sequence shifted by $h$, spaced at least $T/N$ apart, so stage~2 adds
exactly $h + T/N$ and its departures are again spaced at least $T/N$ apart; the same holds
at each subsequent stage. Summing, the exit time is $a_j + w_j(T/N) + N \cdot T/N +
(N-1)h$, and subtracting $a_j + T$ gives Equation~\eqref{eq:reduction} with $s_{\max} =
T/N$.

Equation~\eqref{eq:reduction}, unequal stages. By the order-invariance theorem of
\citet{friedman1965}, for a tandem of FIFO stages with constant service times, unlimited
queues, and any arrival sequence, each customer's time in system is independent of the
order of the stages. Deterministic hop delays preserve this: a hop is a constant shift of
one stage's departure sequence into the next stage's arrival sequence, and constant shifts
commute with the Lindley recursion, so reordering stages together with their hops leaves
every exit time unchanged. Reorder so that the stage with service $s_{\max}$ comes first.
Its departures are spaced at least $s_{\max}$ apart, and every subsequent stage has
service $s_i \le s_{\max}$, so by the no-queueing argument above each of them adds exactly
$s_i + h$ to every message. The exit time is $a_j + w_j(s_{\max}) + \sum_i s_i + (N-1)h$,
and subtracting $a_j + T$ gives Equation~\eqref{eq:reduction}.

Equation~\eqref{eq:scaling}. By induction on $j$ with $w_1(s') = 0 = (s'/s) w_1(s)$. The
map $x \mapsto (x + s' - \delta_j)^+$ is nondecreasing, so
\[
w_{j+1}(s') = \big(w_j(s') + s' - \delta_j\big)^+ \le \Big(\tfrac{s'}{s} w_j(s) + s' -
\delta_j\Big)^+ \le \Big(\tfrac{s'}{s} w_j(s) + s' - \tfrac{s'}{s}\delta_j\Big)^+ =
\tfrac{s'}{s}\, w_{j+1}(s),
\]
where the second inequality uses $\delta_j \ge (s'/s)\,\delta_j$, which holds because
$\delta_j \ge 0$ and $s' \le s$. Combining Equations~\eqref{eq:reduction}
and~\eqref{eq:scaling} with $s' = s_{\max}$, $s = T$ gives Equation~\eqref{eq:bound}.
\end{proof}

\begin{remark}[What the theorem says and what it does not]
Equation~\eqref{eq:reduction} says all queueing in an equal-stage deterministic tandem
happens at stage~1; stages $2, \ldots, N$ contribute exactly $T/N + h$ each to every
message. Proposition~\ref{thm:burst} is the isolated-burst special case, where
Equation~\eqref{eq:bound} holds with equality. Equation~\eqref{eq:bound} is a bound rather
than an identity because the tandem runs its bottleneck stage at utilisation $\rho/N$
while the single stage runs at $\rho$: the induction step in the proof of
Equation~\eqref{eq:scaling} gives away $(1 - 1/N)\,\delta_j$ at every message $j$ that the
single server is still serving when message $j+1$ arrives, so the tandem clears each
inter-arrival gap $N$ times more effectively than the scaled bound credits it for. The
paper's empirical $\gamma$ (Section~\ref{sec:results-design}), defined by $W_q^{(N)} -
(N-1)h \approx N^{-\gamma}\, W_q^{(1)}$, therefore satisfies $\gamma \ge 1$ in the model,
and the excess $\gamma - 1$ measures that slack. Nothing in the theorem depends on the
arrival law; what the arrival law determines is the size of $W_q^{(1)}$ that the factor
$1/N$ is applied to, which is the subject of the next remark. The hop cost $h$ enters
every quantile as exactly $(N-1)h$, in the median and in the tail alike.
\end{remark}

\begin{remark}[Unequal stages: only the largest stage matters]
\label{rem:unequal}
Real cut points do not give equal stages. Equation~\eqref{eq:reduction} states the
consequence: the end-to-end wait of any deterministic tandem equals the wait at a single
server with service $s_{\max}$, plus $h$ per hop. Three consequences follow.
\begin{enumerate}
    \item The tail factor is $s_{\max}/T$, and the stage count enters only through the hop
tax. Splitting a $10\,\mu\mathrm{s}$ task as $7 + 3$ compresses the tail excess to at most
$70\%$ of single-stage, against $50\%$ for $5 + 5$, and both pay one $h$.
    \item The order of the stages is irrelevant to end-to-end latency: $7 + 3$ and $3 + 7$
give identical latencies on every sample path. Order determines only where the queue
physically sits (below).
    \item A cut that does not lower $s_{\max}$ adds a hop and nothing else. Refining
$7 + 3$ into $7 + 2 + 1$ leaves every quantile of the wait unchanged and raises
end-to-end latency by $h$. Conversely, sub-tasks can be merged onto one thread (by
synchronous sends) without any tail cost as long as the merged service does not exceed
$s_{\max}$. The design problem is therefore to choose the partition with the smallest
$s_{\max}$ reachable at the natural cut points, and then the fewest hops that realise it.
\end{enumerate}
Appendix~\ref{app:unequal} checks all three numerically.
\end{remark}

\begin{remark}[Where the burst goes: the bottleneck absorbs the clustering]
\label{rem:dissipation}
The proof of Equation~\eqref{eq:reduction} rests on one fact: departures from a
deterministic server with service $s$ are spaced at least $s$ apart. Applied to a Hawkes
input, that fact says what the bottleneck stage does to the clustering. Upstream of the
bottleneck the stream is the raw feed, with intra-cluster inter-arrival times far below
$s_{\max}$ and instantaneous intensity far above $1/s_{\max}$. Downstream of the
bottleneck, every cluster still leaves as the same group of $K$ messages, so the
count-level branching structure is unchanged (the Borel cluster sizes, and the Fano factor
at time scales beyond $K s_{\max}$, are exactly those of the input), but the intra-cluster
spacing has been regularised to $s_{\max}$ and the instantaneous intensity is capped at
$1/s_{\max}$. The clustering is not removed; it is regularised to a spacing that any consumer at
least as fast as the bottleneck can absorb, which is why stages with service $s_i \le
s_{\max}$ never queue. If the bottleneck is stage~1, all of the waiting in the pipeline sits in its mailbox
and every downstream mailbox holds at most one message. If the bottleneck is a later
stage $b$, the stages before it queue partially, since the raw burst is still bursty
relative to their smaller services, and stage $b$ absorbs the remainder; the total is the
same by Equation~\eqref{eq:reduction}.

Four practical consequences follow. First, place the largest stage first, at ingress. End-to-end latency is unchanged by
the order, but this confines all queueing to one mailbox and keeps the downstream
mailboxes at depth one. It also places the burst-absorbing queue where its receiver is
continuously busy during a cluster, the regime in which the load-dependent hop cost of
Section~\ref{sec:results-h} is smallest. Second, any burst-aware logic, such as load shedding, coalescing of superseded
book updates, or a cancel-on-burst risk rule, can only observe the raw burst upstream of
the bottleneck; downstream it sees a rate-$1/s_{\max}$ stream and cannot tell a cluster
from steady flow except by counting. Third, a packet-Hawkes fit run on the bottleneck's
departure stream would report a lower intensity peak and a longer effective kernel while
recovering the same branching ratio in counts; operators instrumenting a pipeline should
fit on the ingress stream. Fourth, deterministic hop delay is what keeps downstream stages
queue-free; hop jitter of amplitude $\epsilon$ lets a downstream stage queue by at most
$\epsilon$ per message, so the reduction holds up to that jitter in practice.
\end{remark}

\begin{fact}[Hawkes--Oakes 1974 cluster representation \citep{hawkesoakes1974}]
A stationary Hawkes process with background rate $\mu > 0$ and branching ratio $n \in [0,
1)$ is stochastically equivalent to a Poisson-cluster process: a rate-$\mu$ Poisson
process of \emph{immigrant} arrivals, each of which independently spawns an offspring
cluster whose total size $K$ (including the immigrant) follows the Borel distribution
\begin{equation}
\Pr(K = k) = \frac{(n k)^{k-1}\, e^{-n k}}{k!}, \qquad k = 1, 2, 3, \ldots
\label{eq:borel}
\end{equation}
with mean $\mathrm{E}[K] = 1/(1-n)$ and variance $\mathrm{Var}[K] = n/(1-n)^3$
\citep{tanner1961}.
\end{fact}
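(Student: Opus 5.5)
The plan is to prove the Fact in three stages. First, establish the immigrant--offspring (branching) construction of the Hawkes process. Second, identify the cluster size as the total progeny of a Galton--Watson tree with Poisson offspring. Third, read off the Borel law and its moments by a hitting-time (Lagrange inversion) argument. Throughout I take the exponential kernel of Section~\ref{sec:setup-hawkes}, $\phi(t) = \alpha e^{-\beta t}$, but the argument only uses $n = \int_0^\infty \phi(t)\,dt = \alpha/\beta < 1$.

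\emph{Branching construction.} I would build a point process as follows. Immigrants form a rate-$\mu$ Poisson process on $\mathbb{R}$. Independently, each point $t_i$ (immigrant or not) spawns children as an inhomogeneous Poisson process with intensity $\phi(t - t_i)$ on $(t_i,\infty)$, and the construction iterates over generations. Conditional on the past $\mathcal{F}_{t^-}$, the superposition of the immigrant stream and the children streams of all earlier points has intensity $\mu + \sum_{t_i < t} \phi(t - t_i)$. This follows from the independent-increments property of the Poisson streams together with the fact that a child's future is determined by its parent's time. This is exactly the Hawkes conditional intensity.

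To conclude that this process \emph{is} the stationary Hawkes process, I would invoke uniqueness of the law given the conditional intensity, for instance by Ogata-type thinning from a common Poisson random measure, or via the compensator characterisation. Stationarity follows from the stationarity of the immigrant stream and the time-invariance of the offspring mechanism. Finiteness of every cluster is the subcritical case $n<1$, treated next. I expect this identification step to be the main obstacle. I would handle it by constructing both processes on the same Poisson random measure on $\mathbb{R}\times\mathbb{R}_+$ and showing, inductively over generations, that their points coincide.

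\emph{Galton--Watson reduction and the Borel law.} Each point's number of children is Poisson with mean $\int_0^\infty \phi = n$, independently of everything else. Hence the size $K$ of a cluster, counting the immigrant, is the total progeny of a Galton--Watson process with $\mathrm{Poisson}(n)$ offspring. This is almost surely finite because $n<1$.

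By the hitting-time theorem (Otter--Dwass), or equivalently Lagrange inversion of the progeny generating function $G(z) = z\,e^{n(G(z)-1)}$, we have $\Pr(K=k) = \tfrac1k \Pr(S_k = k-1)$, where $S_k$ is a sum of $k$ i.i.d.\ $\mathrm{Poisson}(n)$ variables. Since $S_k \sim \mathrm{Poisson}(nk)$, this gives
\[
\Pr(K=k) = \frac{1}{k}\, e^{-nk}\frac{(nk)^{k-1}}{(k-1)!},
\]
which is Equation~\eqref{eq:borel}.

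\emph{Moments.} I would use the recursive decomposition $K = 1 + \sum_{i=1}^{X} K_i$, with $X \sim \mathrm{Poisson}(n)$ and $K_i$ i.i.d.\ copies of $K$. Taking expectations gives $\mathrm{E}K = 1 + n\,\mathrm{E}K$, so $\mathrm{E}K = 1/(1-n)$. The law of total variance gives $\mathrm{Var}\,K = n\,\mathrm{Var}\,K + n(\mathrm{E}K)^2$, hence $\mathrm{Var}\,K = n/(1-n)^3$. Finiteness of both moments is justified a priori by truncating the tree at generation $m$ and letting $m\to\infty$ by monotone convergence.
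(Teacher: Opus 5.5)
Your proposal is correct, but there is no proof in the paper to match it against. The paper imports the Fact from \citet{hawkesoakes1974} for the cluster representation and from \citet{tanner1961} for the Borel law and its moments. Its only supporting argument is the informal geometric series $1+n+n^2+\cdots = 1/(1-n)$ for the mean cluster size in Appendix~\ref{app:pp-branching}, and Appendix~\ref{app:borel} takes the pmf as given and derives only its Stirling asymptotics.

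Your derivation has four steps: the branching construction, the reduction of one cluster to a Galton--Watson tree with $\mathrm{Poisson}(n)$ offspring, Otter--Dwass or Lagrange inversion for the pmf, and the recursion $K = 1 + \sum_{i=1}^{X} K_i$ for the moments. It shows something the citation hides. The Borel law uses only that each point's offspring count is $\mathrm{Poisson}(\int\phi)$, so it holds for any kernel with $\int\phi = n$, exponential or power-law. It is also the same law as the number of customers served in an M/D/1 busy period at utilisation $n$, which is how Borel first obtained it.

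The citation buys brevity and puts the one delicate step on established ground. That step is your identification of the branching construction with the stationary Hawkes law, and two details there should be stated.

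\begin{itemize}
\item The intensity $\mu + \sum_{t_i<t}\phi(t-t_i)$ you compute is relative to a history that records the genealogy. It is also the intensity relative to the internal history of the superposed process, but only because it depends on the point times alone; say so.
\item Induction over generations on a common Poisson random measure shows only that any solution driven by that measure contains every generation of the cluster process. That makes the cluster process the minimal solution, not the unique one. Equality needs $n<1$ once more: the mean intensity $m(t)$ of the excess points is bounded and satisfies $m(t) = \int_0^\infty \phi(u)\, m(t-u)\,du$. Hence $\sup m \le n \sup m$, so $m \equiv 0$.
\end{itemize}
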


\begin{remark}[Where Hawkes enters: the size of the single-stage wait]
Under Poisson arrivals at the utilisation of a CME packet feed ($\rho =
\bar\lambda_{\mathrm{pkt}} T \ll 1$), the single-stage wait is zero for almost every
message, so $W_q^{(1)} = 0$ for every quantile $q$ with $\rho < 1 - q$ and
Equation~\eqref{eq:bound} has nothing to compress; that is Corollary~\ref{cor:poisson},
and at this corpus's rate it holds at $p_{99}$ for all $T \le 16\,\mu\mathrm{s}$.
Under Hawkes arrivals the cluster representation puts $K$ near-coincident messages behind
one immigrant, and in the instantaneous-batch picture the last of them waits $(K-1)T$ on a
single stage. Stirling's formula applied to Equation~\eqref{eq:borel} gives, for large
$k$,
\begin{equation}
\Pr(K = k) \;\approx\; \frac{1}{n\sqrt{2\pi}}\; k^{-3/2}\, \big(n e^{1-n}\big)^k ,
\label{eq:borel-asymp}
\end{equation}
and $n e^{1-n} < 1$ for every $n < 1$, with $-\log(n e^{1-n}) \approx (1-n)^2/2$ near
criticality. The Borel batch size is therefore light-tailed at every fixed $n < 1$: its survival
function follows the critical $k^{-1/2}$ power law only in the window $k \ll k^\star =
(1-n)^{-2}$ and decays geometrically beyond it (Appendix~\ref{app:borel}). Three
consequences follow for the paper.
\begin{enumerate}
    \item The largest clusters, and the single-stage wait they produce, have a
characteristic scale of $T\,k^\star$. At the corpus's branching ratio of about $0.8$ that
is some twenty-five service times, and at $0.9$ it would be a hundred: one to two orders
of magnitude above the service time and the hop cost.
    \item Heavy-tail results built on subexponential batch sizes, such as the
single-big-batch principle for $M^{[K]}/G/1$ \citep{asmussen2003} and the
monotone-separable tail asymptotics of \citet{baccellifosslelarge2005}, do not apply to
this batch distribution at any fixed $n < 1$, so no closed-form power-law exponent for the
tail probability should be expected.
    \item The design claim does not depend on any of this, because
Theorem~\ref{thm:reduction} applies the factor $1/N$ to whatever single-stage wait the
feed produces. Part~II measures that wait on real arrivals.
\end{enumerate}
\end{remark}

\begin{corollary}[Poisson null]
\label{cor:poisson}
Under homogeneous Poisson arrivals at rate $\bar\lambda$ with $\rho = \bar\lambda T < 1 -
q$, the $q$-quantile of the single-stage wait is $W_q^{(1)} = 0$, because $\Pr(W^{(1)} >
0) = \rho$ for $M/D/1$ by the PASTA property (Poisson arrivals see time averages). By Theorem~\ref{thm:reduction}, $W_q^{(N)} \le (N-1)h$,
and by Equation~\eqref{eq:reduction} $W_q^{(N)} \ge (N-1)h$, so
\begin{equation}
W_q^{(N)} = (N-1)\,h \qquad \text{for all } N \text{ and all } q \le 1 - \rho .
\label{eq:poisson-cor}
\end{equation}
A CME packet feed runs at well under one percent utilisation at HFT service times, so on
a Poisson stream this covers the $p_{99}$ at every service time up to
$16\,\mu\mathrm{s}$ in the paper's sweep: the $p_{99}$ equals the median, both at the
no-queue latency, and splitting adds hops with nothing to compress. Where the per-stage
utilisation does exceed $1 - q$, the bound of Equation~\eqref{eq:bound} still holds and
the tandem reduces the Poisson wait too, for the ordinary utilisation reason rather than
for clustering. The sweep contains six such $p_{99}$ cells, all at the long-service end,
and Section~\ref{sec:results-tail} finds the wait in those cells and in no others.
\end{corollary}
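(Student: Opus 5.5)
The plan is to reduce the Poisson statement to a fact about the single $M/D/1$ server and then transfer it to the tandem using Theorem~\ref{thm:reduction}. I would work with equal stages, $s_i = T/N$, so that $s_{\max} = T/N$ and the stage-one server runs at utilisation $\rho/N \le \rho$.

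The first step is the single-stage statement. In the stationary $M/D/1$ queue with $\rho = \bar\lambda T < 1$, the long-run fraction of time the server is busy equals $\rho$; this follows from Little's law applied to the server alone. By PASTA, Poisson arrivals see time averages, so $\Pr(W^{(1)} > 0) = \rho$ and $\Pr(W^{(1)} = 0) = 1-\rho$. Hence the $q$-quantile, $\inf\{x : \Pr(W^{(1)} \le x) \ge q\}$, equals $0$ whenever $q \le 1-\rho$. The statement's strict hypothesis $\rho < 1-q$ is contained in this. The same argument at service $s_{\max}$ gives $\Pr(w(s_{\max}) > 0) = \bar\lambda s_{\max} \le \rho$. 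Alternatively, Equation~\eqref{eq:scaling} shows pathwise that $w_j(s_{\max}) = 0$ whenever $w_j(T) = 0$. Either way the $q$-quantile of $w(s_{\max})$ is also $0$ for $q \le 1-\rho$.

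The second step transfers this to the tandem. Equation~\eqref{eq:reduction} gives $W_j^{(N)} = w_j(s_{\max}) + (N-1)h$ on every path. Quantiles commute with adding a constant, so $W_q^{(N)} = (N-1)h + (\text{$q$-quantile of } w(s_{\max})) = (N-1)h$. This is the identity directly. It can also be written as the sandwich the statement uses: the upper bound comes from Equation~\eqref{eq:bound} with $W_q^{(1)} = 0$, and the lower bound from Equation~\eqref{eq:reduction} with $w \ge 0$.

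The only step I expect to need care is the first: justifying the stationary regime and PASTA, and the quantile convention at the atom at zero. The value $q = 1-\rho$ is the boundary case where the left-continuous quantile still returns $0$. I would state the quantile convention explicitly and note that stationarity holds because $\rho < 1$. For the empirical-quantile reading I would not rely on PASTA but only on the pathwise identity, which holds for any trace.
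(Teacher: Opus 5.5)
Your proposal is correct and follows the paper's argument. You get the atom $\Pr(W^{(1)}=0)=1-\rho$ from the busy fraction and PASTA, then carry it to the tandem through Equation~\eqref{eq:reduction}; reading that identity directly as a constant shift has the same content as the paper's sandwich, Equation~\eqref{eq:bound} above and Equation~\eqref{eq:reduction} below, and your explicit quantile convention rightly reconciles the strict hypothesis $\rho < 1-q$ with the boundary case $q = 1-\rho$ in Equation~\eqref{eq:poisson-cor}.
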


\subsection{Informal statement of the two results}
\label{sec:theory-summary}

This subsection restates Proposition~\ref{thm:burst}, Theorem~\ref{thm:reduction} and
Corollary~\ref{cor:poisson} informally. The formal statements above are the reference.

\paragraph{Components of latency.} A message's end-to-end latency is the sum of three
terms: the work done on it ($T$ in total, however many stages that work is
spread over), the hop tax ($h$ for every mailbox boundary it crosses, so $(N-1)h$ in an
$N$-stage pipeline), and the time it spends sitting in queues waiting for a busy actor.
The third term is the wait, $W$. The work is fixed by the code and the hop cost by the
framework; the wait is the only term the architecture changes, and on a bursty feed it
dominates the tail.

\paragraph{Proposition~\ref{thm:burst}: one burst.} Suppose $k$ messages land at once. On
a single thread the last of them waits for the $k-1$ ahead of it, each costing $T$: it
waits $(k-1)T$. On an $N$-stage pipeline the first stage takes only $T/N$ per message, so
the last message waits $(k-1)T/N$ at stage~1 and then flows through the remaining stages
without waiting again, because each of them is fed at exactly the rate it can serve. The pipeline clears a burst $N$ times faster for the same total work, because $N$ cores
process $N$ different messages at the same instant.

\paragraph{Theorem~\ref{thm:reduction}: every message, any arrival sequence.} Two
facts.
\begin{enumerate}
    \item In a pipeline where each stage does a fixed amount of work per message, only the
slowest stage ever builds a queue. Every other stage receives messages no faster than it
can process them, because the slowest stage upstream (or, if the slowest stage is
downstream, the stages before it) meters them out. The pipeline therefore behaves as one thread whose per-message work is $s_{\max}$,
plus the hop cost. With equal stages, $s_{\max} = T/N$.
    \item Take any recording of arrival times, real or synthetic, and replay it through
two single-threaded servers: one doing $T$ of work per message and one doing $T/N$. The
second server's queue wait for message number $j$ is never more than $1/N$ of the first
server's wait for the same message. The reason is that in the faster server every backlog
is $N$ times smaller and every idle gap between arrivals clears at least as much of it.
\end{enumerate}
Combined: for every message in every replay, the pipeline's wait is at most the
single-thread wait divided by $N$, plus $(N-1)h$. Because this holds message by message,
it holds for any summary statistic: median, $p_{99}$, $p_{99.9}$, or maximum. The bound
is met with equality on an isolated instantaneous burst and is strict whenever arrivals
are spread out, because the pipeline's first stage is idle more often than the single
thread and drains more of its backlog in each idle gap. The measured compression is
written as a power of the stage count, $N^{-\gamma}$; the theorem guarantees an exponent
of at least one, and Section~\ref{sec:results} measures about three at the longest service
times. At shorter service times the tandem reduces the excess to zero and no exponent is
defined.

\paragraph{Role of the arrival law.} The arrival law does not enter either
theorem. It enters only through the size of the single-thread wait to which the factor
$1/N$ is applied. Under Poisson traffic at the load of a CME packet feed, the single thread is
idle when almost every message arrives, so for tasks up to $16\,\mu\mathrm{s}$ its
$p_{99}$ wait is zero; there is nothing to divide by $N$, and the pipeline is simply
$(N-1)h$ slower (Corollary~\ref{cor:poisson}). Under the real feed, transactions arrive
in self-exciting clusters and the packets relay them, the single thread's $p_{99}$ is two
to about forty times $T$ for tasks of $16$ to $128\,\mu\mathrm{s}$, and the pipeline divides
the excess by $N$ or better.
The Hawkes cluster representation gives a model-implied cluster size: Borel-distributed,
with large clusters of order $(1-n)^{-2}$ transactions, about $25$ at the corpus's
$n \approx 0.8$. What the receiver waits behind depends on the service time: at
$8$--$16\,\mu\mathrm{s}$ the wait is usually behind one predecessor sent
$7.5$--$16\,\mu\mathrm{s}$ earlier,
and only from about $64\,\mu\mathrm{s}$ does a run of cluster members fall within one
service time (Sections~\ref{sec:results-nulls} and~\ref{sec:evaluation}).

\begin{center}
\fbox{\begin{minipage}{0.92\linewidth}
\textbf{Design rules (from Proposition~\ref{thm:burst} and Theorem~\ref{thm:reduction}).}
\begin{itemize}
    \item The tail improvement is set by the slowest stage, $s_{\max}/T$, and by nothing
else about the decomposition. Balanced stages are the best case for a given number of
hops.
    \item A cut that does not reduce the slowest stage costs $h$ at every quantile and
improves nothing. Sub-tasks can be merged onto one thread until their combined work
reaches $s_{\max}$ at no cost to the tail.
    \item The order of the stages does not change any message's latency. It changes only
where the queue physically sits; putting the slowest stage first keeps all queueing in one
mailbox.
    \item Every hop costs $h$ on every message, at the median and in the tail.
    \item For a stream that must be processed in order, whether to split reduces to one
comparison: the single-thread wait at the quantile of interest, times $(1 - s_{\max}/T)$,
against $h$ per hop added. On a Poisson feed at HFT service times the left side is zero;
on a clustered feed at the same rate it is tens to thousands of microseconds. With the
same cores, a stateless stage on a stream that tolerates reordering can instead dispatch
whole packets to $N$ servers; Section~\ref{sec:results-equal-core} compares the two, first without and then with
dispatch's ingress, egress and resequencing costs.
\end{itemize}
\end{minipage}}
\end{center}

The guarantees above are exact within a specific model. Its assumptions follow.

\paragraph{Fixed work per message.} Each stage is assumed to spend the same amount of time
on every message. That is what makes only the slowest stage queue: a stage that sometimes
takes longer than its upstream neighbour would occasionally fall behind and build its own
queue, and then the pipeline's wait would exceed the single-server wait at $s_{\max}$. The simulator of Part~II uses constant per-packet service, so the theorems describe it
exactly; Section~\ref{sec:crossval-spanrun} relaxes this for the message count per
packet. Real actors also have work that varies with book state, and for them the theorems
are a model to be checked by measurement (Section~\ref{sec:fungibility}).

\paragraph{No drops, no bounds, in-order.} Mailboxes are unbounded, messages are never
discarded or coalesced, and each stage processes in arrival order. A framework that drops
or merges messages under load changes the arithmetic (in the pipeline's favour, at the
cost of lost messages); a priority queue changes which message pays the wait but not the
total.

\paragraph{Fixed hop delay.} The hop is modelled as a constant $h$. A jittery hop lets a
downstream stage queue by at most the jitter amplitude per message, so the reduction holds
up to that amplitude. The paper's calibration uses the low-load, wake-from-sleep value of
$h$, which Section~\ref{sec:results-h} argues is an upper bound inside a burst.

\paragraph{Equal stages, no cross-core cost beyond $h$.} The $N$ stages are taken to
split the work evenly and to exchange nothing but the message: moving book state or cache
lines between the cores that run adjacent stages is charged nothing beyond $h$. Unequal
stages are handled by Remark~\ref{rem:unequal}; a cross-core cost would add to $h$ and
is a matter for measurement (Section~\ref{sec:results-h}).

\paragraph{Proposition~\ref{thm:burst} needs a dense burst.} The one-burst identity
assumes the $k$ messages arrive faster than the first stage can drain them, i.e.\ faster
than one per $T/N$. For a burst that trickles in, the ratio in
Equation~\eqref{eq:burst-limit} is $kN / (k + N - 1 + N(N-1)h/T)$, which climbs toward $N$
as the burst grows. Theorem~\ref{thm:reduction} does not need this assumption; it covers
every arrival pattern.

\paragraph{Limits of the theory.} It does not say how large the single-thread
wait is on a given feed. That number depends on the feed's clustering and utilisation and,
as the remark after the Borel fact explains, has no clean closed form at a fixed branching
ratio: the Borel cluster size is light-tailed for every $n < 1$, so the heavy-tail
formulas that would give a power-law exponent do not apply. The paper measures the
single-thread wait on real CME arrivals instead (Part~II). Nor does the theory predict how
far below the $1/N$ bound the pipeline lands; that slack is also measured. What the theory guarantees is that the pipeline's wait is at most $1/N$ of the
single-thread wait plus the hop cost, on any arrival sequence.



\clearpage
\part{Measurement of the feed and simulation of the receiver}

\section{Experiment setup}
\label{sec:experiment-setup}

This section describes the empirical experiment that tests the analytical predictions of
Section~\ref{sec:theory}: the data corpus, the confirmation that packet arrivals are
Hawkes-clustered (the reason the single-stage wait that Theorem~\ref{thm:reduction}
divides by $N$ is large at all), the tandem-Lindley simulator, and the sweep design that
produces the numbers reported in Section~\ref{sec:results}. Figure~\ref{fig:experiment}
shows the end-to-end data flow of a single 30-minute window; Algorithm~\ref{alg:sweep}
states the full per-window procedure in pseudocode.

\begin{figure}[H]
\centering
\begin{tikzpicture}[
    node distance=6mm and 8mm,
    box/.style={draw, rounded corners=1pt, align=center, minimum height=8mm, inner sep=3pt, font=\small},
    tape/.style={box, fill=black!5, minimum width=32mm},
    arm/.style={box, fill=blue!5, minimum width=32mm},
    sim/.style={box, fill=orange!10, minimum width=42mm},
    agg/.style={box, fill=green!10, minimum width=42mm},
    arrow/.style={-{Latex[length=2mm]}},
]
\node[tape] (tape) {CME MDP3 tape\\ (one session)};
\node[box, right=of tape, minimum width=32mm] (window) {30-min window\\ $[T_0, T_1]$};
\node[box, right=of window, minimum width=38mm] (packets) {group by \texttt{packet\_seq}\\ $\Rightarrow$ packet arrivals};

\node[arm, below=8mm of packets, xshift=-30mm] (hawkes) {real arrivals};
\node[arm, below=8mm of packets, xshift=+30mm] (poisson) {Poisson-null arm\\ uniform shuffle};

\node[sim, below=8mm of hawkes] (simH) {Tandem Lindley\\ $N \in \{1,2,4,8\}$};
\node[sim, below=8mm of poisson] (simP) {Tandem Lindley\\ $N \in \{1,2,4,8\}$};

\node[agg, below=10mm of $(simH)!0.5!(simP)$] (agg) {Per-window quantiles\\ $p_{50}, p_{95}, p_{99}, p_{99.9}, \max$};
\node[box, below=6mm of agg, minimum width=42mm, fill=gray!10] (corpus) {corpus-median across 3512 windows};

\draw[arrow] (tape) -- (window);
\draw[arrow] (window) -- (packets);
\draw[arrow] (packets.south) -- ++(0,-3mm) -| (hawkes.north);
\draw[arrow] (packets.south) -- ++(0,-3mm) -| (poisson.north);
\draw[arrow] (hawkes) -- (simH);
\draw[arrow] (poisson) -- (simP);
\draw[arrow] (simH.south) -- ++(0,-3mm) -| (agg.north);
\draw[arrow] (simP.south) -- ++(0,-3mm) -| (agg.north);
\draw[arrow] (agg) -- (corpus);
\end{tikzpicture}
\caption{Per-window data flow of the empirical experiment. A single session tape is
chopped into 30-minute RTH windows. Each window's messages are grouped by
\texttt{packet\_seq} into a real packet-arrival stream (Hawkes arm). A rate-matched
Poisson-null arm is generated by uniformly shuffling the same arrival count over the same
window. Both arms are run through the $N$-stage tandem Lindley simulator at $N \in \{1, 2,
4, 8\}$. Per-window quantiles are recorded and aggregated by corpus median across all 3512
windows in the 276-session corpus.}
\label{fig:experiment}
\end{figure}

\begin{algorithm}[H]
\caption{Per-window procedure of the empirical experiment.}
\label{alg:sweep}
\begin{algorithmic}[1]
\State \textbf{input:} session tape $S$, service $T$, hop $h$, stage counts $\mathcal{N}$
\For{each 30-min RTH window $W$ in $S$ with $[T_0, T_1]$}
    \State $\{m_i\} \gets$ messages in $W$
    \State $\{(\text{seq}, t)\} \gets \{(\text{packet\_seq}(m_i), \text{sendingTime}(m_i))\}$
    \State $\text{arr}_H \gets \text{sort}\big(\{\min_i t_i : \text{seq}_i = s\}_s\big)$ \Comment{Hawkes arm (real arrivals)}
    \State $n_{\mathrm{pkt}}^{\#} \gets |\text{arr}_H|$
    \State $\text{arr}_P \gets \text{sort}\big(\text{Uniform}(\min \text{arr}_H, \max \text{arr}_H)^{n_{\mathrm{pkt}}^{\#}}\big)$ \Comment{Poisson-null arm (seeded per window)}
    \For{regime $\in \{H, P\}$, $N \in \mathcal{N}$}
        \State $\text{lat} \gets \textsc{TandemLindley}(\text{arr}_{\text{regime}}, N, T, h)$
        \State record $\big(S, W, \text{regime}, N, \{p_{50}, p_{95}, p_{99}, p_{99.9}, \max\}(\text{lat})\big)$
    \EndFor
\EndFor
\State \Return corpus-median of each recorded quantile across all $(S, W)$ pairs
\Function{TandemLindley}{$\text{arr}, N, T, h$}
    \State $\text{dep} \gets \text{arr}$
    \For{stage $k = 1, \ldots, N$}
        \If{$k > 1$} $\text{dep} \gets \text{dep} + h$ \EndIf
        \State $\text{dep} \gets \text{dep} + \textsc{Lindley}(\text{dep}, T/N)$
    \EndFor
    \State \Return $\text{dep} - \text{arr}$
\EndFunction
\end{algorithmic}
\end{algorithm}

\subsection{Data corpus}
\label{sec:setup-data}

The corpus is 281 sessions of CME MDP3 NQ front-month pcap-derived tapes
covering 2025-01 through 2026-02, RTH only, holidays and half-days excluded. Each MBO
record carries a \texttt{packet\_seq} (the UDP packet sequence number) and two
nanosecond timestamps: \texttt{transactTime}, the CME matching-engine event time carried
in the MDP3 message body, and \texttt{sendingTime}, the publisher's per-packet header
timestamp. Messages are grouped into packets by \texttt{packet\_seq}, and a packet's
arrival time is its \texttt{sendingTime}.

\texttt{transactTime} is not a suitable arrival clock. It is the matching engine's event
time, not the time at which anything reaches the receiver: the $1.7\%$ of transactions
that the publisher splits across two or more packets
(Section~\ref{sec:results-transactions}) carry one \texttt{transactTime} on packets that
leave about $15\,\mu\mathrm{s}$ apart, so keying arrivals on it collapses distinct
packets onto one instant. \texttt{sendingTime} is emitted once per packet at nanosecond
resolution and is the receiving pipeline's input clock; ordered by \texttt{packet\_seq}
it produces no ties and no sequence inversions on the corpus. All results below are keyed
on \texttt{sendingTime}. Section~\ref{sec:results-transactions} shows that merging each
transaction into one packet leaves the tail within $\pm 15\%$, so the choice of clock
does not affect the results.

Each session tape is chopped into
non-overlapping 30-minute RTH windows; for each window we compute the packet-arrival
stream and the per-window quantities listed in Section~\ref{sec:notation}. The corpus
yields 3512 per-window records from the 276 sessions that contain at least one complete
RTH window, roughly 13 windows per session; the remaining five of the 281 tapes are
holiday or half-day sessions with none.


\subsection{Hawkes clustering of packet arrivals (empirical prerequisite)}
\label{sec:setup-hawkes}

Readers not familiar with point processes will find the Poisson, renewal and Hawkes
processes, the branching ratio, the Fano factor and the Hurst exponent defined in
Appendix~\ref{app:pp-background}.

The paper's simulator input is the packet-arrival stream, so the first empirical question
is whether that stream is clustered or Poisson. Three model-free diagnostics and one
model-based fit, on every window of the corpus, reject Poisson; Table~\ref{tab:hawkes-diag}
gives the values. Nearly three quarters of the gaps between packets are shorter than a
tenth of the mean gap, where a Poisson stream of any rate has under a tenth, and no window
comes close to the Poisson value. The Fano factor of packet counts,
$F(\tau) = \mathrm{Var}[C_\tau] / \mathrm{E}[C_\tau]$ in bins of width $\tau$, is $1$ at
every bin width for Poisson; here it grows from about $50$ at a tenth of a second to about
a thousand at a minute, and the Hurst exponent from its log-log slope is about $0.75$
against $0.5$. These are within-window values; pooling a whole session instead raises the
Fano level without changing the shuffle ratio, for reasons taken up in
Section~\ref{sec:crossval-disagree}. The same estimators applied to the paper's own
Poisson null return the Poisson values at every bin width, so the figures are not an
artefact of binning or of window length.

\begin{table}[H]
\centering
\small
\begin{tabular}{@{}lrrr@{}}
\toprule
statistic & real stream & Poisson & real stream, gaps shuffled \\
\midrule
gaps shorter than a tenth of the mean gap & $73\%$ & $9.5\%$ & $73\%$ \\
Fano factor, $0.1\,\mathrm{s}$ bins & $52$ & $1$ & $16$--$18$ \\
Fano factor, $1\,\mathrm{s}$ bins & $108$ & $1$ & $17.5$ \\
Fano factor, $5\,\mathrm{s}$ bins & $260$ & $1$ & $17.6$ \\
Fano factor, $60\,\mathrm{s}$ bins & $1027$ & $1$ & $16$--$18$ \\
Hurst exponent & $0.74$ & $0.50$ & $0.50$ \\
Hawkes branching ratio $n$ & $0.798$ & $0$ & --- \\
Hawkes kernel decay time $1/\beta$ & $167\,\mu\mathrm{s}$ & --- & --- \\
\bottomrule
\end{tabular}
\caption{Clustering diagnostics of the packet stream, corpus median over windows, against
the Poisson values and against the same stream with its gaps shuffled within each window.
The kernel decay time is from 254 windows in 20 sessions. What the table shows: every
diagnostic is far from Poisson. Shuffling the gaps keeps the share of short gaps, by
construction, and removes all the growth of the Fano factor with bin width, leaving a flat
value of about $17$. What follows: the long-range dependence lives in the order of the
gaps, which is what distinguishes a self-exciting process from a renewal one, while the
flat residual comes from the gap lengths themselves, which are heavy-tailed.}
\label{tab:hawkes-diag}
\end{table}

Fitting the standard exponential-kernel Hawkes process \citep{hawkes1971spectra}
$\lambda(t) = \mu + \sum_{t_i < t} \alpha\, e^{-\beta(t - t_i)}$ to each window by maximum
likelihood \citep{ozaki1979} (Ogata log-likelihood with the standard recursive intensity,
closed-form compensator, Nelder--Mead over $(\mu, \alpha, \beta)$) gives a branching ratio
$n = \alpha/\beta$ of about $0.8$, varying little between windows, with every fit
converged; a reparameterised L-BFGS-B fit finds the same optimum on spot-checked windows.
The kernel decays over about $160\,\mu\mathrm{s}$, so the fitted process triggers at lags of
order a hundred microseconds and places only a few percent of its triggering within the
publisher period. In plain terms, each transaction makes further transactions more
likely, as traders and algorithms react to it, and that effect fades like an echo: about a
third of it remains after $160\,\mu\mathrm{s}$ and almost none after half a millisecond. A
branching ratio of $0.8$ means that each transaction triggers on average $0.8$ further
transactions directly. The reactions are therefore spread over the next few hundred
microseconds; only about $4\%$ of them arrive within the $7.5\,\mu\mathrm{s}$ publisher
period and about $9\%$ within $16\,\mu\mathrm{s}$. The self-excitation is what builds runs
of transactions over hundreds of microseconds, not what places two transactions a few
microseconds apart. This is the near-critical regime that \citet{filimonovsornette2012} and
\citet{hardimanbercotbouchaud2013} document on ES, at a somewhat lower level than their
$n \approx 0.9$; they fit trade arrivals rather than packet arrivals, and on this corpus
the fit on matching-engine transactions gives the same value as on packets
(Section~\ref{sec:results-transactions}). \citet{filimonovsornette2015} show that an exponential-kernel fit with a constant
background over a fixed window overstates $n$ when the background intensity varies within
the window, and that ES estimates fall well below criticality once that variation is
modelled. The $n$ reported here is descriptive and subject to the same caveat, and none of
the paper's conclusions rests on its value. The
model-free diagnostics hold whatever the fitted $n$, and the nonparametric estimates of
\citet{achab2018branching} on EUREX order books give the same near-critical picture
without a kernel assumption.

A shuffle of the gaps within each window keeps every gap and destroys only their order.
It removes most but not all of the excess over Poisson: all of the growth of the Fano
factor with bin width goes, leaving a flat over-dispersion of about $17$ and a Hurst
exponent of $0.5$ (Table~\ref{tab:hawkes-diag}). All of the long-range dependence therefore
lives in the ordering, which is what distinguishes a self-exciting process from a renewal
process, and the flat residual is contributed by the gap lengths, which are heavy-tailed.
The corpus is neither a renewal process nor a pure Hawkes process with light-tailed gaps,
but a self-exciting process with heavy-tailed gaps. This bounds what the Hawkes fit
describes: the fitted $n$ characterises the ordering component, and the simulator is
driven by the recorded arrivals, not by the fitted process. The simulator input is
therefore an arrival process far from Poisson, and the Poisson null at the same rate is a
strict counterfactual. The tandem reduction of Section~\ref{sec:theory} does not depend on
the Hawkes assumption, since any arrival process with the same clustering drives the same
single-server queue. The model-free diagnostics, not the fit, establish the empirical
premise.

\findings{
    \item \emph{Packet arrivals are far from Poisson.} Nearly three quarters of gaps are shorter than a tenth of the mean gap, against under a tenth for Poisson.
    \item \emph{Clustering grows with time scale}: the Fano factor rises from about $50$ at a tenth of a second to about a thousand at a minute (Poisson: $1$).
    \item \emph{A Hawkes fit describes it as near-critical}, with branching ratio about $0.8$ and a kernel acting over about $160\,\mu\mathrm{s}$, mostly beyond the publisher period.
    \item \emph{Two ingredients}: shuffling the gaps removes the growth with time scale (the ordering) and leaves a flat over-dispersion from heavy-tailed gaps.
}


\subsection{Simulator}
\label{sec:setup-simulator}
\label{sec:sim-model}

Packet arrivals $t_1 < t_2 < \cdots$ enter stage 1. Each stage $k$ is a single-server FIFO
queue with deterministic service $T/N$ (own actor, own thread), and departures from stage
$k$ are delayed by $h$ before arriving at stage $k+1$. Standard Lindley recursion per
stage:
\begin{equation}
W^{(k)}_{i+1} = \max\!\left(0,\; W^{(k)}_i + \frac{T}{N} - (a^{(k)}_{i+1} - a^{(k)}_i)\right),
\end{equation}
with $a^{(k+1)}_i = a^{(k)}_i + W^{(k)}_i + T/N + h$.

Service is the same for every packet. Section~\ref{sec:crossval-spanrun} makes it grow
with the number of messages in the packet; nothing in the paper makes it random. A real
receiver's service time varies from packet to packet for reasons other than message
count (Section~\ref{sec:fungibility-measured}), and for a single server, more variable
service at the same mean lengthens the mean wait (the Pollaczek--Khinchine formula for
Poisson arrivals). The simulated tails are therefore what the arrival timing and the
message count produce on their own, not everything a real receiver pays.

\subsection{How the simulator models concurrent stages}
\label{sec:sim-pipeline}

Each stage's actor runs on its own thread, so stage 2 can be processing message $k-1$
while stage 1 is processing message $k$. The simulator captures this because Lindley is
applied separately per stage, each with its own busy-idle timeline. What the split buys
is not the overlap as such but the shorter queue in front of each stage: a burst that one
server of service $T$ holds in a queue of length $k$ is held by the first stage at $T/N$
and drained $N$ times faster.

Worked example. Three packets arriving at $t = 0, 1, 2\,\mu\mathrm{s}$; single-stage total
service $T = 8\,\mu\mathrm{s}$ (a service time in the sweep of
Table~\ref{tab:main-corpus}); two-stage tandem with per-stage service $T/2 =
4\,\mu\mathrm{s}$ and hop $h = 1.7\,\mu\mathrm{s}$:

\begin{center}
\begin{tabular}{lrrr}
\toprule
 & msg 1 & msg 2 & msg 3 \\
\midrule
single-stage arrival & 0 & 1 & 2 \\
single-stage latency ($\mu$s) & 8.0 & 15.0 & \textbf{22.0} \\
\midrule
tandem arrive stage 1 & 0.0 & 1.0 & 2.0 \\
tandem wait stage 1 & 0.0 & 3.0 & 6.0 \\
tandem leave stage 1 & 4.0 & 8.0 & 12.0 \\
tandem arrive stage 2 ($+h$) & 5.7 & 9.7 & 13.7 \\
tandem wait stage 2 & 0.0 & 0.0 & 0.0 \\
tandem leave stage 2 & 9.7 & 13.7 & 17.7 \\
\textbf{tandem latency ($\mu$s)} & \textbf{9.7} & \textbf{12.7} & \textbf{15.7} \\
\bottomrule
\end{tabular}
\end{center}

At wall-clock $t \approx 7$, stage 1 is processing message 2 while stage 2 is
processing message 1. Message 1, which never queued, pays the hop and is slower by exactly $h$; message 3, at
the end of the burst, is faster by a third. The median rises and the tail falls, at the
same message count, because the two stages together clear the burst twice as fast. This is the mechanism of Proposition~\ref{thm:burst}.

A chain of synchronous sends on one thread shortens no queue and pays
$(N-1)h$; only stages on their own threads do (Section~\ref{sec:implementation}).

\paragraph{Calibration and cost.} The paper reports numerical results at $h =
1.7\,\mu\mathrm{s}$ --- the one-way async mailbox cost obtained by halving the round-trip
figure in Table~4 of \citet{mayeski_fastsend} --- swept across $T \in
\{2, 4, 8, 16, 32, 64, 128\}\,\mu\mathrm{s}$ to cover the sub-hop regime through the slow-servicing
regime. The simulator is deterministic given the tape and $(T, h, N)$; one $O(n)$ pass per
stage per window makes its wall-clock cost negligible next to the tape read.


\subsection{Sweep design}
\label{sec:setup-sweep}

The service times are a power-of-two grid; $T = 8\,\mu\mathrm{s}$ is the sweep point just
above the publisher period $\spub \approx 7.5\,\mu\mathrm{s}$ (Section~\ref{sec:exchange-publisher}),
and ``the sweep point just above the publisher period'' means that cell throughout.
The design asks whether an $N$-stage tandem reduces $p_{99}$ latency under real Hawkes
packet arrivals and whether the same architecture achieves the same compression under a
rate-matched Poisson null. For each 30-minute window we run the tandem Lindley of Section~\ref{sec:setup-simulator}
at $N \in \{1, 2, 4, 8\}$ under two arrival regimes: the real packet-arrival timestamps
from the tape (stream $H$), and a uniform redraw of the same number of arrivals over the
same support (the uniform Poisson null $P$). The further null streams that separate the
ingredients of the clustering are defined in Section~\ref{sec:setup-nulls}. The
support is the observed packet span $[\min \text{arr}_H, \max \text{arr}_H]$ rather than
the nominal window $[T_0, T_1]$, so the null is rate-matched to the arrivals it replaces
rather than to wall-clock; the two coincide on busy windows and the former is the stricter
counterfactual on sparse ones. The null is therefore a \emph{conditional
homogeneous-Poisson} process: a homogeneous Poisson process at rate $n_{\mathrm{pkt}}^{\#}
/ (\max \text{arr}_H - \min \text{arr}_H)$ conditioned on the observed count over the
observed support. Per (window, $N$, regime) we record $p_{50}, p_{95}, p_{99}, p_{99.9},
\max$ of end-to-end latency. Aggregation is corpus-median across all windows; the two
headline curves per scenario are $p_{99}(N)$ and $p_{50}(N)$ under Hawkes and Poisson
overlaid.

The calibration is the $h = 1.7\,\mu\mathrm{s}$ hop and the service-time grid of
Section~\ref{sec:sim-model}. The theory of Section~\ref{sec:theory} makes three
predictions that the sweep can falsify:
\begin{enumerate}
    \item Theorem~\ref{thm:reduction}: on every window, the tandem's tail excess is at most
the single-thread excess divided by the stage count,
\[
p_{99}(N) - (N-1)h - T \;\le\; \frac{p_{99}(1) - T}{N} ,
\]
so a fitted decay exponent $\gamma$ in $\Delta(N) \approx \Delta(1)\,N^{-\gamma}$ must be
at least $1$. A window that violates the bound falsifies the theorem.
    \item Corollary~\ref{cor:poisson}: on the Poisson stream, every quantile equals the
no-queue latency $T + (N-1)h$ in every cell whose per-stage utilisation $\rho/N$ is
below $1\%$, and a wait appears in every cell where it is not. The prediction is two-sided:
a Poisson $p_{99}$ at the no-queue latency above the boundary falsifies it as much as one
above the no-queue latency below it.
    \item Proposition~\ref{thm:burst}: the median grows by exactly one hop per added stage.
A median growing much faster than $h$ per stage falsifies the paper's model of the
simulator.
\end{enumerate}

\subsection{Null streams and the ingredient each removes}
\label{sec:setup-nulls}

The claim under test is causal: the clustering of arrivals produces the queueing tail.
A tail in the Lindley recursion can be built by four distinct properties of an arrival
stream, and the real stream has all four at once:
\begin{enumerate}
    \item the \emph{rate}, the number of packets per second;
    \item the \emph{gap distribution}: how often two consecutive packets land closer together than the service time $T$
(a \emph{tight} gap), regardless of the order in which those gaps occur;
    \item the \emph{ordering}: whether the tight gaps come in runs (a burst of ten packets one publisher period apart)
or lie singly among long gaps;
    \item \emph{slow rate variation}: busy and quiet stretches at the scale of seconds
within a window.
\end{enumerate}
A null stream is a synthetic stream that keeps some of these and destroys the rest.
Running the identical simulator on the real stream and on the null, and comparing the
tails, measures the share of the tail that the destroyed property was carrying. Each
null is a controlled subtraction of one ingredient, and the design is factorial in the
sense that the four properties are removed in different combinations by different nulls.
The reference quantity throughout is the single-stage $p_{99}$ excess, $p_{99} - T$, and
the statistic reported is the fraction of the real stream's excess that survives on the
null, corpus-median over the 3512 windows. All nulls are built per window from that
window's own real stream, with seeds derived from the session and window identifiers so
that every run is reproducible. The 3512 windows are nested in 276 sessions, so the
corpus medians reported here are point estimates and no interval is attached to them.

\paragraph{Packet-level nulls.} The uniform Poisson null $P$ of
Section~\ref{sec:setup-sweep} redraws every packet time uniformly over the window: it
keeps (1) and destroys (2), (3) and (4) together, so a tail that vanishes under it is
attributable to clustering in some form but not to any particular form. The \emph{gap
shuffle} $G$ takes the window's sequence of interarrival gaps, permutes it at random
(Fisher--Yates, the renewal null already used for the Fano diagnostic of
Section~\ref{sec:setup-hawkes}), and re-accumulates arrival times: the same packet count,
the same multiset of gaps and therefore the same tight-gap fraction and the same
shortest gap, $\spub$, with the ordering destroyed. $G$ keeps (1) and (2) and removes
(3) and (4). A tail that survives $G$ is set by how many tight gaps the stream contains; a
tail that $G$ removes needs the runs. The \emph{1\,s-binned Poisson null} $B$ cuts the window
into one-second bins, keeps each bin's real packet count, and redraws the arrivals
uniformly inside the bin: it keeps (1) and (4) and removes (2) and (3). A tail that
survives $B$ is a busy-second load effect, not a microsecond one. Section~\ref{sec:results-nulls}
reports all three.

\paragraph{Transaction-level nulls.} The packet nulls cannot distinguish two accounts
of the tight-gap surplus: that the exchange splits one matching-engine event across
back-to-back packets, or that the events themselves arrive clustered and the packets
relay them. Section~\ref{sec:results-transactions} therefore groups every message by its
\texttt{transactTime}, the identifier of one matching-engine event (a
\emph{transaction}), and builds five further streams in which each transaction is a
block --- its packets, their message counts and their internal spacing --- and exactly
one ingredient is removed. Service in these runs is charged per packet at its measured
cost, $S_i = T\,(1 + r(\sigma_i - 1))$ with $\sigma_i$ the message count and $r$ the live decoder's ratio of per-message cost to its decode floor (its time for a
one-message packet, Section~\ref{sec:crossval-floor}), so that a large packet adds work
rather than redistributing it; if the size of a transaction matters to the tail, these runs can show
it. Table~\ref{tab:null-design} lists all nine streams.

\begin{table}[tbp]
\centering
\small
\begin{tabular}{@{}l >{\raggedright\arraybackslash}p{5.2cm} >{\raggedright\arraybackslash}p{4.3cm} >{\raggedright\arraybackslash}p{4.6cm}@{}}
\toprule
stream & kept & destroyed & question answered \\
\midrule
H  & the real stream & --- & reference \\
\midrule
P  & packet count, support & gap distribution, ordering, slow rate variation & is the tail due to clustering in any form? \\
G  & count; the multiset of gaps, hence the tight-gap fraction and the shortest gap & ordering of gaps into runs; slow rate variation & is the tail set by how many tight gaps there are, or by their arrangement? \\
B  & count per one-second bin & microsecond gap structure and ordering & is the tail busy-second load? \\
\midrule
TG & every transaction intact, in its original order & the idle gaps between transactions, permuted & does the clustering of transactions into runs build the tail? \\
TP & every transaction intact & transaction start times, redrawn uniformly & does transaction timing matter at all? \\
TS & real start times & which transaction shape sits at which start (shapes permuted across starts) & do large transactions arriving at particular moments matter? \\
TM & real start times & the split of a transaction into packets (each transaction becomes one packet) & does the exchange's packetisation matter? \\
TW & real start times, the split into packets & the tight spacing inside a split transaction (packets spread by the median idle gap) & does back-to-back spacing inside a transaction matter? \\
\bottomrule
\end{tabular}
\caption{The nine arrival streams. $P$, $G$ and $B$ act on packet times with constant
service and are reported in Table~\ref{tab:nulls}; \textsf{TG}--\textsf{TW} act on transaction blocks
with span-dependent service and are reported in Table~\ref{tab:tx-arms}. Reading the
results together: $P \approx 0$ and $\textsf{TP} \approx 0$ say the tail is produced by the timing of the
transaction process; $\textsf{TS}$, $\textsf{TM}$, $\textsf{TW} \approx 1$ say sizes, packetisation and
intra-transaction spacing are not the mechanism; $G$ and \textsf{TG}, which remove the same
thing at two levels of description, give the same profile across $T$ and split the
clustering into the two channels named in Section~\ref{sec:results-transactions}.}
\label{tab:null-design}
\end{table}

The logic of the two families is the same and they are read together. If $P$ and \textsf{TP}
remove the tail, the tail is produced by the timing of the transaction process. If \textsf{TS}, \textsf{TM} and \textsf{TW} leave it, then the size of transactions,
their division into packets and the spacing of those packets are not the mechanism: the
exchange is relaying events, not creating bursts. $G$ and \textsf{TG} remove the same ingredient --- the arrangement of events
into runs --- at the packet and at the transaction level respectively; agreement between
them identifies the packet-level ordering channel with the clustering of transactions.
What survives $G$ is the surplus of tight gaps over a rate-matched Poisson
stream, which Table~\ref{tab:tx-gaps} measures directly. That surplus shows the gap
distribution is far from exponential; it does not show what produces it, and $G$ itself is
a stream with the surplus and no self-excitation. What $G$ removes, the runs, is the
self-exciting component.


\section{Empirical results}
\label{sec:results}

\subsection{Single-thread and tandem tails on the real stream and on the Poisson null}
\label{sec:results-tail}

This section presents the main result. It runs the simulated pipeline over the whole
corpus and reports, for each task length and each stage count, how long a typical message
takes and how long the slowest one in a hundred takes. Each case is run twice: once on the
real packet arrivals, which come in bursts, and once on a control in which the same
number of packets is scattered uniformly at random over the window. The comparison
isolates the effect of bursts. The findings
are these. On randomly scattered arrivals there is almost no slow tail at the task lengths a
trading hot path uses, so splitting a task across threads gains nothing and costs a hop per
stage. On the real, bursty arrivals a slow tail appears once a task is longer than about
seven microseconds, and it grows quickly with task length, until the slowest messages
take tens of times longer than the task itself. There, splitting the task into two or
more stages cuts the tail sharply for a small, fixed cost on the typical message, and the
longer the task, the more stages pay off. In short, it is the bursts, not the average
load, that create the tail, and that is why splitting helps on real data and not on the
random control.

Table~\ref{tab:main-corpus} gives the corpus-median latency for every service time $T$
and stage count $N$ at $h = 1.7\,\mu\mathrm{s}$, and Figure~\ref{fig:curves} plots it. The
Poisson columns are the uniform null $P$ of Section~\ref{sec:setup-sweep}, with the same
packet count per window.

\begin{table}[tbp]
\centering
\small
\begin{tabular}{@{}rr rrr rrr@{}}
\toprule
 & & \multicolumn{3}{c}{real arrivals} & \multicolumn{3}{c}{Poisson null} \\
\cmidrule(lr){3-5} \cmidrule(lr){6-8}
$T$ & $N$ & $p_{50}$ & $p_{99}$ & $p_{99.9}$ & $p_{50}$ & $p_{99}$ & $p_{99.9}$ \\
\midrule
 2 & 1 &   2.00 &   2.00 &   2.00 &   2.00 &   2.00 &   2.00 \\
 2 & 2 &   3.70 &   3.70 &   3.70 &   3.70 &   3.70 &   3.70 \\
 2 & 4 &   7.10 &   7.10 &   7.10 &   7.10 &   7.10 &   7.10 \\
 2 & 8 &  13.90 &  13.90 &  13.90 &  13.90 &  13.90 &  13.90 \\
\midrule
 4 & 1 &   4.00 &   4.00 &   4.00 &   4.00 &   4.00 &   5.78 \\
 4 & 2 &   5.70 &   5.70 &   5.70 &   5.70 &   5.70 &   5.70 \\
 4 & 4 &   9.10 &   9.10 &   9.10 &   9.10 &   9.10 &   9.10 \\
 4 & 8 &  15.90 &  15.90 &  15.90 &  15.90 &  15.90 &  15.90 \\
\midrule
 8 & 1 &   8.00 &   8.61 &  10.14 &   8.00 &   8.00 &  13.79 \\
 8 & 2 &   9.70 &   9.70 &   9.70 &   9.70 &   9.70 &  11.48 \\
 8 & 4 &  13.10 &  13.10 &  13.10 &  13.10 &  13.10 &  13.10 \\
 8 & 8 &  19.90 &  19.90 &  19.90 &  19.90 &  19.90 &  19.90 \\
\midrule
16 & 1 &  16.00 &  34.59 &  53.97 &  16.00 &  16.00 &  29.84 \\
16 & 2 &  17.70 &  18.31 &  19.84 &  17.70 &  17.70 &  23.49 \\
16 & 4 &  21.10 &  21.10 &  21.10 &  21.10 &  21.10 &  22.88 \\
16 & 8 &  27.90 &  27.90 &  27.90 &  27.90 &  27.90 &  27.90 \\
\midrule
32 & 1 &  32.00 & 146.14 & 259.11 &  32.00 &  42.02 &  62.02 \\
32 & 2 &  33.70 &  52.29 &  71.67 &  33.70 &  33.70 &  47.54 \\
32 & 4 &  37.10 &  37.71 &  39.24 &  37.10 &  37.10 &  42.89 \\
32 & 8 &  43.90 &  43.90 &  43.90 &  43.90 &  43.90 &  45.68 \\
\midrule
64 & 1 &  98.88 & 820.70 & 1548.68 &  64.00 & 106.73 & 126.72 \\
64 & 2 &  65.70 & 179.84 & 292.81 &  65.70 &  75.72 &  95.72 \\
64 & 4 &  69.10 &  87.69 & 107.07 &  69.10 &  69.10 &  82.94 \\
64 & 8 &  75.90 &  76.51 &  78.04 &  75.90 &  75.90 &  81.69 \\
\midrule
128 & 1 & 364.45 & 5316.85 & 12487.72 & 128.00 & 237.65 & 289.07 \\
128 & 2 & 164.58 & 886.40 & 1614.38 & 129.70 & 172.43 & 192.42 \\
128 & 4 & 133.10 & 247.24 & 360.21 & 133.10 & 143.12 & 163.12 \\
128 & 8 & 139.90 & 158.49 & 177.87 & 139.90 & 139.90 & 153.74 \\
\bottomrule
\end{tabular}
\caption{Corpus-median latency quantiles in \si{\micro\second} for each service time $T$ and
stage count $N$ at $h = 1.7\,\mu\mathrm{s}$, on the real NQ arrivals and on a Poisson
stream with the same packet count per window. What the table shows: the median is
$T + (N-1)h$ in almost every cell, so each stage costs one hop. On the real arrivals the
single-thread $p_{99}$ equals $T$ up to $4\,\mu\mathrm{s}$ and then grows much faster than
$T$, to about forty times $T$ at $128\,\mu\mathrm{s}$; on the Poisson stream it stays at
$T$ up to $16\,\mu\mathrm{s}$. One cut removes most of the real stream's tail for one hop,
and further cuts keep helping until each stage is shorter than the publisher period.
What follows: the tail comes from how the real packets are bunched, not from load. A
server queues only when consecutive gaps are shorter than its service time, and the real
stream has many times the Poisson share of such gaps (Table~\ref{tab:tx-gaps}) and
arranges them in runs.}
\label{tab:main-corpus}
\end{table}

\begin{figure}[H]
\centering
\includegraphics[width=\textwidth]{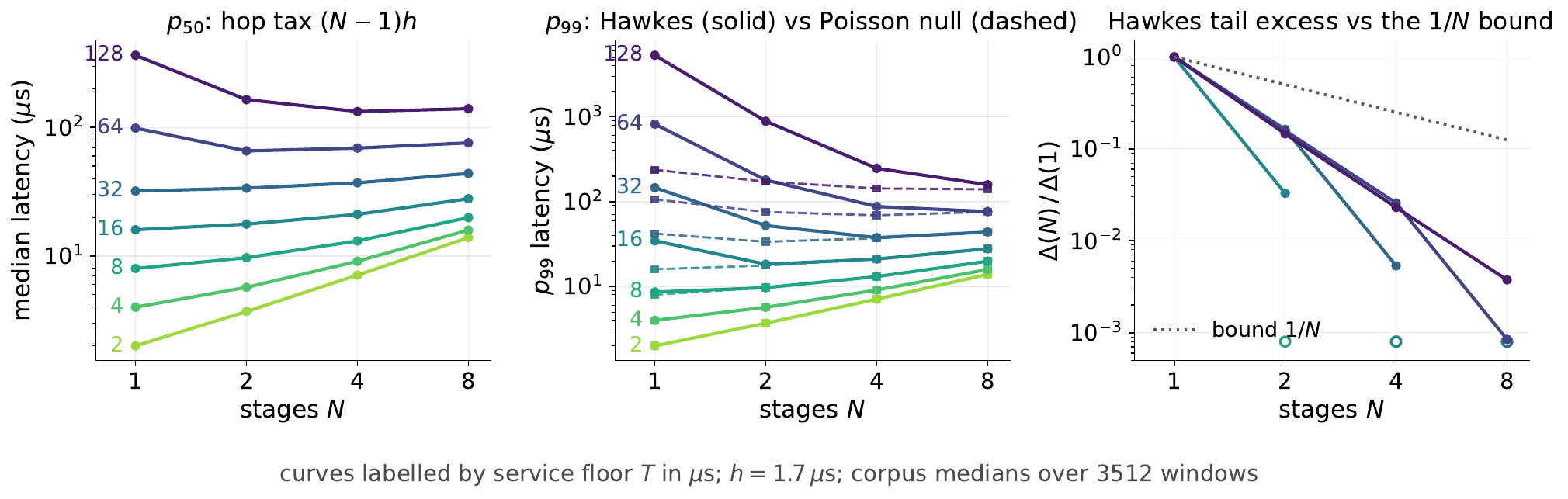}
\caption{Corpus-median latency versus stage count $N$ for each service time $T$ (one
shade per $T$, each curve labelled at its $N = 1$ end). Left: median. Centre: $p_{99}$,
Hawkes solid and Poisson null dashed. Right: the Hawkes tail excess ratio
$\Delta(N)/\Delta(1)$ against the Theorem~\ref{thm:reduction} bound $1/N$ (dotted); open
markers are cells where the tandem drives $\Delta(N)$ to exactly zero, drawn at the bottom of
the panel because zero cannot be placed on a log axis. The median rises by exactly $h$ per hop
wherever the queue is empty at the median, which is every cell except the three
longest-service Hawkes cells. The Hawkes $p_{99}$ falls with $N$ only at $T \ge
16\,\mu\mathrm{s}$; at $T \le 8\,\mu\mathrm{s}$ there is no tail to compress and it rises
with the hop count, tracking the median. The Poisson $p_{99}$ likewise tracks the median
up to $T = 16\,\mu\mathrm{s}$ and separates from it at $T \ge 32\,\mu\mathrm{s}$, where
utilisation queueing begins.}
\label{fig:curves}
\end{figure}

On the Poisson stream almost every message finds the queue empty, so every quantile
equals the no-queue latency up to $T = 16\,\mu\mathrm{s}$. A wait appears first at
$32\,\mu\mathrm{s}$, where utilisation passes the $1\%$ at which
Corollary~\ref{cor:poisson} predicts it at $p_{99}$; from there on the Poisson stream
behaves as an ordinary queue, and splitting helps it too. The $p_{99.9}$ column shows the
same pattern at utilisations ten times lower (Section~\ref{sec:evaluation}). Under Poisson
arrivals, then, splitting is not useless in general; it is useless over the service times
an HFT hot path occupies.

On the real stream a tail appears only above a threshold. At $T = 2$ and
$4\,\mu\mathrm{s}$ there is none, and the real stream cannot be told apart from the
Poisson one. Between $4$ and $8\,\mu\mathrm{s}$ a tail appears, and it then grows much
faster than the service time. Splitting into two stages removes most of it at every
service time from $16\,\mu\mathrm{s}$ up, for one hop on the median, and wins on every
window there. The best stage count rises with $T$, from one at $8\,\mu\mathrm{s}$ and below
to eight at $64\,\mu\mathrm{s}$ and above (Table~\ref{tab:main-corpus}).
Section~\ref{sec:results-threshold} locates the onset more precisely and
Section~\ref{sec:results-conditioning} explains what sets it.

\findings{
    \item \emph{No tail below the publisher period.} At $T \le 4\,\mu\mathrm{s}$ the $p_{99}$ equals $T$ on real and Poisson arrivals alike; only the $p_{99.9}$ of the Poisson stream shows the small utilisation wait Corollary~\ref{cor:poisson} predicts.
    \item \emph{Above it, the real stream's tail grows much faster than $T$}, to about forty times $T$ at $128\,\mu\mathrm{s}$.
    \item \emph{The Poisson stream at the same rate has no tail up to $16\,\mu\mathrm{s}$}, and above that only a much smaller utilisation tail.
    \item \emph{One cut removes most of the tail for one hop} and wins on every window from $16\,\mu\mathrm{s}$.
    \item \emph{The best stage count rises with $T$}, from one at $8\,\mu\mathrm{s}$ to eight at $64\,\mu\mathrm{s}$ and above.
}

\subsection{Gap-shuffle and 1\,s-binned nulls}
\label{sec:results-nulls}

The Poisson null of Table~\ref{tab:main-corpus} redraws every arrival uniformly over the
whole window. That removes three things at once: the ordering of gaps into runs, the slower rate
variation across the window, and the shape of the gap distribution, including its
shortest gap, the publisher period. The tail difference in Table~\ref{tab:main-corpus}
could come from any of the three. The mechanism of Section~\ref{sec:results-conditioning}
runs on the third: a queue forms when
consecutive gaps are shorter than the service time, and that is a property of the gap
distribution, which a reordering of the same gaps leaves untouched. Two further nulls
separate the three. The \emph{gap shuffle} permutes each window's interarrival gaps at
random (the renewal null of Section~\ref{sec:setup-hawkes}): same count, same gap
distribution, same publisher period, no ordering. The \emph{1\,s-binned Poisson} keeps
each one-second bin's own packet count and redraws the arrivals uniformly inside the bin:
same second-scale rate variation, no microsecond clustering.

\begin{table}[tbp]
\centering
\small
\begin{tabular}{@{}rr rrrr@{}}
\toprule
$T$ & $N$ & real & uniform & gap shuffle & 1\,s-binned \\
\midrule
 2 & 1 &    2.00 &    2.00 &    2.00 &    2.00 \\
 2 & 2 &    3.70 &    3.70 &    3.70 &    3.70 \\
 2 & 4 &    7.10 &    7.10 &    7.10 &    7.10 \\
 2 & 8 &   13.90 &   13.90 &   13.90 &   13.90 \\
\midrule
 4 & 1 &    4.00 &    4.00 &    4.00 &    4.00 \\
 4 & 2 &    5.70 &    5.70 &    5.70 &    5.70 \\
 4 & 4 &    9.10 &    9.10 &    9.10 &    9.10 \\
 4 & 8 &   15.90 &   15.90 &   15.90 &   15.90 \\
\midrule
 8 & 1 &    8.61 &    8.00 &    8.54 &    8.00 \\
 8 & 2 &    9.70 &    9.70 &    9.70 &    9.70 \\
 8 & 4 &   13.10 &   13.10 &   13.10 &   13.10 \\
 8 & 8 &   19.90 &   19.90 &   19.90 &   19.90 \\
\midrule
16 & 1 &   34.59 &   16.00 &   30.75 &   16.00 \\
16 & 2 &   18.31 &   17.70 &   18.24 &   17.70 \\
16 & 4 &   21.10 &   21.10 &   21.10 &   21.10 \\
16 & 8 &   27.90 &   27.90 &   27.90 &   27.90 \\
\midrule
32 & 1 &  146.14 &   42.02 &   99.61 &   47.36 \\
32 & 2 &   52.29 &   33.70 &   48.45 &   33.70 \\
32 & 4 &   37.71 &   37.10 &   37.64 &   37.10 \\
32 & 8 &   43.90 &   43.90 &   43.90 &   43.90 \\
\midrule
64 & 1 &  820.70 &  106.73 &  340.62 &  112.62 \\
64 & 2 &  179.84 &   75.72 &  133.31 &   81.06 \\
64 & 4 &   87.69 &   69.10 &   83.85 &   69.10 \\
64 & 8 &   76.51 &   75.90 &   76.44 &   75.90 \\
\midrule
128 & 1 & 5316.85 &  237.65 & 1306.63 &  246.03 \\
128 & 2 &  886.40 &  172.43 &  406.32 &  178.32 \\
128 & 4 &  247.24 &  143.12 &  200.71 &  148.46 \\
128 & 8 &  158.49 &  139.90 &  154.65 &  139.90 \\
\bottomrule
\end{tabular}
\caption{Corpus-median $p_{99}$ in \si{\micro\second} under the real arrivals and the three
packet-level null streams of Section~\ref{sec:setup-nulls}, for every service time $T$
and stage count $N$. Uniform is the Poisson null; the gap shuffle keeps every gap and
destroys their order; the 1\,s-binned stream keeps each second's packet count and destroys
the microsecond structure. What the table shows: at $T \le 4\,\mu\mathrm{s}$ every stream's $p_{99}$ sits at the no-queue latency, because no gap between packets is that short. For a single thread above
that, the uniform and binned streams lose almost all of the real stream's tail, while the
gap shuffle keeps most of it at short service times and less as $T$ grows
(Figure~\ref{fig:bars-nulls}). Once a cut takes each stage below the publisher period,
all four streams meet at the no-queue latency. What follows: the tail is not busy-second
load. At short service times it is set by how many gaps are shorter than $T$, since a
single short gap costs the next packet one wait whether or not its neighbours are short.
At long service times it needs runs of short gaps, and runs are what the shuffle
destroys. Session-bootstrap $95\%$ intervals on the shares are within about two points of
the reported values.}
\label{tab:nulls}
\end{table}

\begin{figure}[H]
\centering
\includegraphics[width=0.8\textwidth]{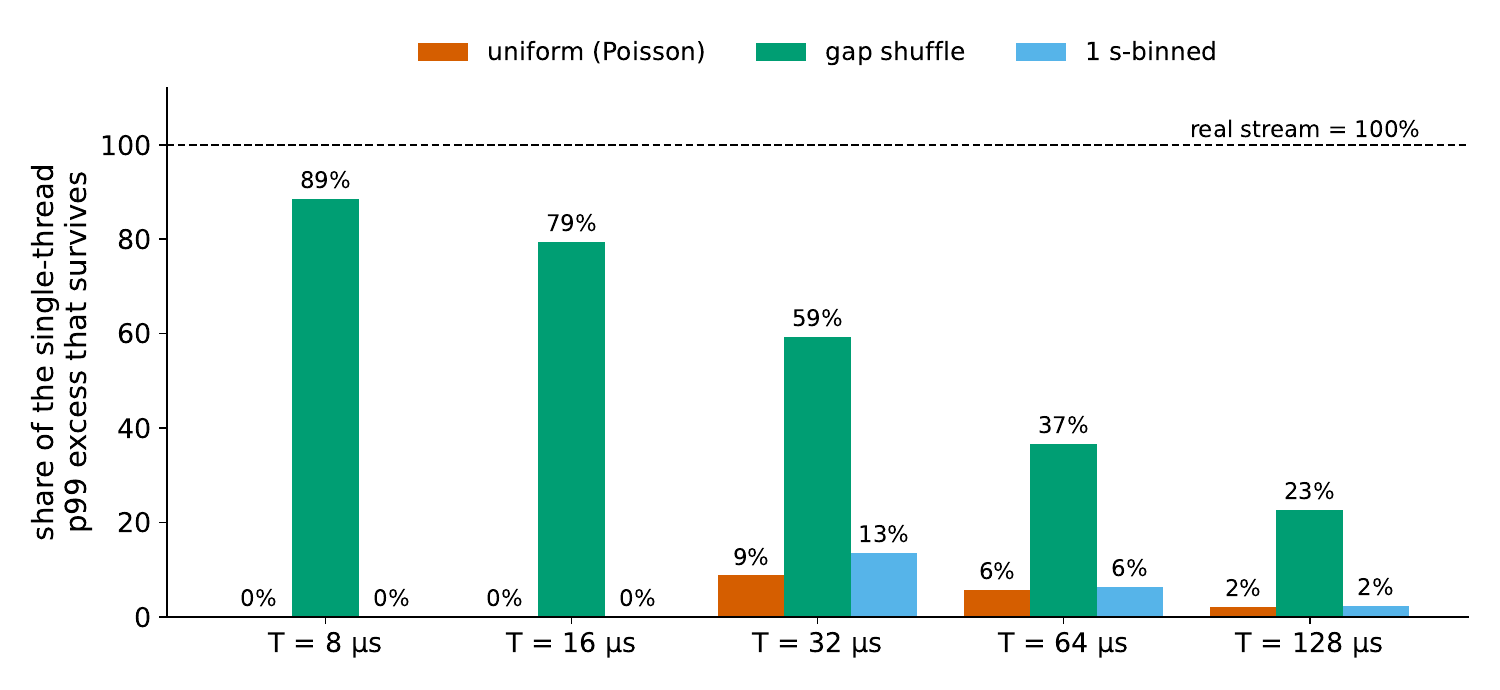}
\caption{Table~\ref{tab:nulls} as bars: the share of the single-thread $p_{99}$ excess
($p_{99} - T$) that survives each null stream, the real stream being $100\%$. The gap
shuffle (green), which keeps every gap and destroys their order, keeps most of the tail at
$8$--$16\,\mu\mathrm{s}$ and less as $T$ grows: at short service times the tight gaps carry
the tail, at long ones their arrangement into runs. The uniform and 1\,s-binned streams
keep almost none at any $T$.}
\label{fig:bars-nulls}
\end{figure}

Table~\ref{tab:nulls} and Figure~\ref{fig:bars-nulls} give the result. The 1\,s-binned
stream keeps none of the tail at $8$ and $16\,\mu\mathrm{s}$ and little above, so the tail
is not busy-second load. The gap shuffle keeps about $80\%$ of it at $16\,\mu\mathrm{s}$
and less as the service time grows, about a quarter at $128\,\mu\mathrm{s}$. At the
service times of an HFT hot path, then, the tail is set mainly by how many gaps are shorter
than the service time, and the arrangement of those gaps into runs matters only from
about $64\,\mu\mathrm{s}$. Section~\ref{sec:results-transactions} traces both to the timing
of matching-engine transactions: the short gaps are consecutive transactions sent back to
back at the publisher period (Table~\ref{tab:two-clocks}), and the runs are the
self-exciting component. Which process places two transactions within microseconds of
each other at the engine is not identified by any stream run here
(Section~\ref{sec:concl-cannot}).

\findings{
    \item \emph{The tail is not busy-second load.} Keeping every second's packet count but placing packets at random within it leaves almost no tail.
    \item \emph{At short service times the number of short gaps sets the tail.} Keeping every gap but shuffling their order keeps about $80\%$ of it at $16\,\mu\mathrm{s}$.
    \item \emph{At long service times the runs of short gaps set it.} The same shuffle keeps only about a quarter at $128\,\mu\mathrm{s}$.
}

\subsection{Grouping packets into matching-engine transactions}
\label{sec:results-transactions}

The nulls of Section~\ref{sec:results-nulls} act on packet times. A packet is the
exchange's unit, not the market's. CME MDP3 publishes each matching-engine event --- a
\emph{transaction}, identified by the common \texttt{transactTime} carried by every message
it generates --- as one or more UDP datagrams, and a datagram may carry the end of one
transaction and the start of the next. Two accounts of the tight-gap surplus are therefore
open. Either the exchange splits transactions across back-to-back packets, in which case
the clustering is an artefact of packetisation and a receiver that reassembled transactions
would see none of it; or the transactions themselves arrive clustered and the packets relay
them. The corpus decides between the two. Every message in the corpus was grouped by
\texttt{transactTime} and every packet labelled with the transactions it carries, about
$3 \times 10^9$ of each. About $3\%$ of packets carry parts of two transactions: the
exchange's publisher packs transactions that reach it while it is busy into one packet
(Section~\ref{sec:exchange-drain}).

\paragraph{Consecutive transactions and the gap between them.} A \emph{transaction} is one
matching-engine event: all messages that carry the same \texttt{transactTime}. One incoming
order that trades against three resting orders is one transaction, however many messages
it produces. Within a session the NQ front-month transactions are put in the order the
engine processed them, $A, B, C, D, \ldots$; two transactions are \emph{consecutive} when
one directly follows the other in that order, with no other NQ front-month transaction
between them ($A$ and $B$, $B$ and $C$, and so on). The \emph{gap between consecutive
transactions}, their inter-arrival time, is the time from one to the next. It is measured
on one of two clocks, always named: the matching engine's (the difference of the two
\texttt{transactTime}s) or the publisher's (the difference of the \texttt{sendingTime}s of
the packets that carry them, zero if one packet carries both). A session with $m$
transactions has $m - 1$ such gaps. Every statement in the paper about how close
transactions are is a statement about these gaps; a \emph{tight} gap is one below
$16\,\mu\mathrm{s}$.

\begin{table}[H]
\centering
\small
\begin{tabular}{@{}l rrrrrrrr@{}}
\toprule
 & 1 & 2 & 3 & 4 & 5 & 6--10 & 11--20 & 21+ \\
\midrule
messages per packet      & 95.897 & 2.439 & 0.869 & 0.401 & 0.188 & 0.171 & 0.027 & 0.009 \\
packets per transaction  & 98.295 & 1.689 & 0.011 & 0.002 & 0.001 & 0.001 & 0.000 & 0.000 \\
messages per transaction & 95.362 & 2.729 & 0.822 & 0.558 & 0.195 & 0.280 & 0.038 & 0.016 \\
\bottomrule
\end{tabular}
\caption{Size distributions, percent of the pooled corpus (about $3 \times 10^9$ packets and as
many transactions). What the table shows: almost every packet carries one message and
almost every transaction fits in one packet; a transaction split across three or more
packets is very rare. The message count per transaction is the message count per packet
with the split transactions folded back in.}
\label{tab:tx-counts}
\end{table}

\begin{figure}[H]
\centering
\includegraphics[width=\textwidth]{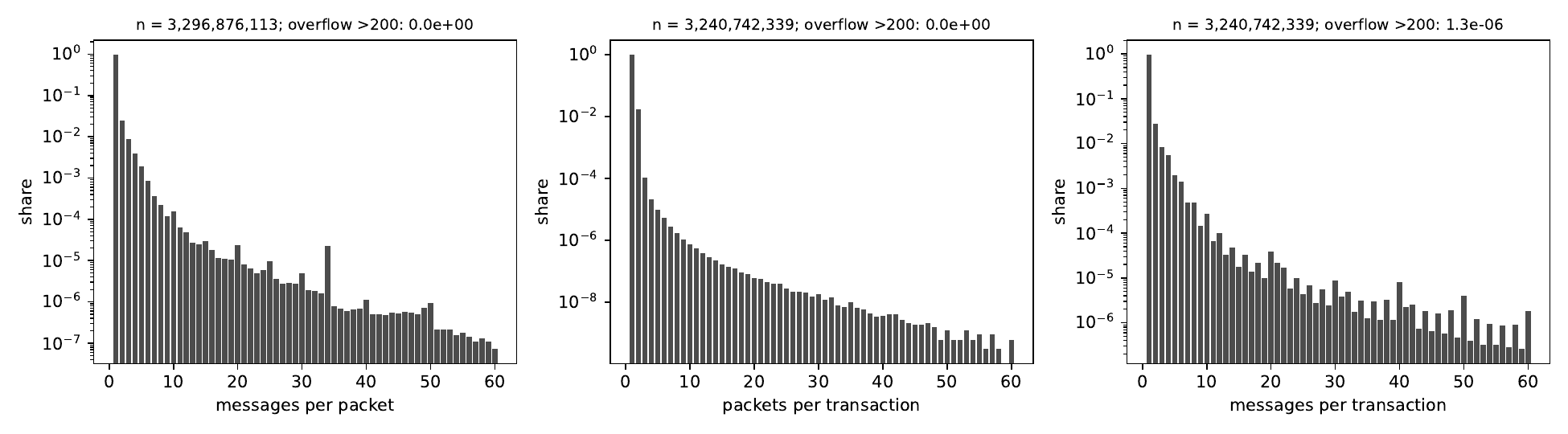}
\caption{The three size distributions of Table~\ref{tab:tx-counts} on a logarithmic
share axis, counts $1$--$60$. Messages per packet and per transaction have the same
geometric decay; packets per transaction has almost none beyond two.}
\label{fig:tx-counts}
\end{figure}

\paragraph{Packetisation and the tight gaps.} Table~\ref{tab:tx-counts} gives the three
size distributions, and Figure~\ref{fig:tx-counts} plots them. A split transaction is
rare, and its packets are not sent back to back: they arrive about $15\,\mu\mathrm{s}$
apart, above the publisher period, and almost never closer (Figure~\ref{fig:tx-gaps},
left, and Figure~\ref{fig:tx-starts}, right). The tight gaps lie between different
transactions (Table~\ref{tab:tx-gaps}).

\begin{table}[H]
\centering
\small
\begin{tabular}{@{}r rrr@{}}
\toprule
gap shorter than & share of all packet gaps & of which inside one transaction & Poisson at the same rate \\
\midrule
$7.9\,\mu\mathrm{s}$ (period) & $2.15\%$ & $0.1\%$ & $0.56\%$ \\
$17.8\,\mu\mathrm{s}$  & $20.6\%$ & $5.0\%$ & $1.24\%$ \\
$35.5\,\mu\mathrm{s}$  & $34.1\%$ & $4.6\%$ & $2.46\%$ \\
\bottomrule
\end{tabular}
\caption{Short packet gaps, pooled over the corpus. The rows are the bin edges of the
logarithmic grid of Figure~\ref{fig:tx-gaps} nearest to the publisher period and to the $16$ and $32\,\mu\mathrm{s}$ service times. What the table shows: the
real stream has several times the Poisson share of gaps below each of these service
times, about one gap in five below $16\,\mu\mathrm{s}$ against about one in a hundred, and
almost none of these short gaps lie inside a single transaction. What follows: the short
gaps that make a server queue are gaps between different matching-engine events, not the
exchange's division of one event into several packets, so a receiver that reassembled
transactions before decoding would remove almost none of them. Section~\ref{sec:exchange}
shows where they come from: the engine processes many transactions within a microsecond
of each other, and the publisher sends them one publisher period apart
(Table~\ref{tab:two-clocks}). Figure~\ref{fig:bars-tx-gaps} shows the same numbers as bars.}
\label{tab:tx-gaps}
\end{table}

\begin{figure}[H]
\centering
\includegraphics[width=0.72\textwidth]{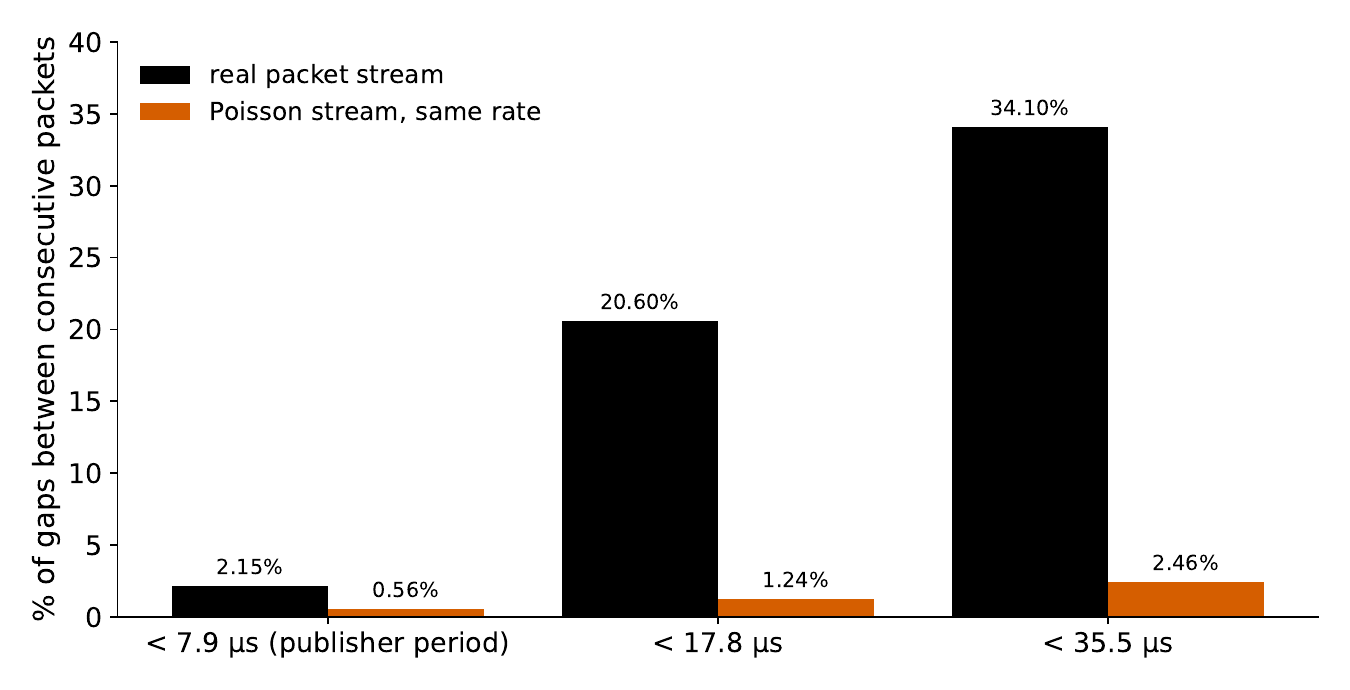}
\caption{Table~\ref{tab:tx-gaps} as bars: the share of gaps between consecutive packets
shorter than the publisher period, than about $16\,\mu\mathrm{s}$ and than about
$32\,\mu\mathrm{s}$, on the real stream (black) and on a Poisson stream with the same
packet count (vermillion). Below $16\,\mu\mathrm{s}$ the real stream has about one gap in
five against about one in a hundred. A server with a service time in this range queues on
each of these gaps, which is why the real stream has a tail and the Poisson stream does
not.}
\label{fig:bars-tx-gaps}
\end{figure}

\begin{figure}[H]
\centering
\includegraphics[width=\textwidth]{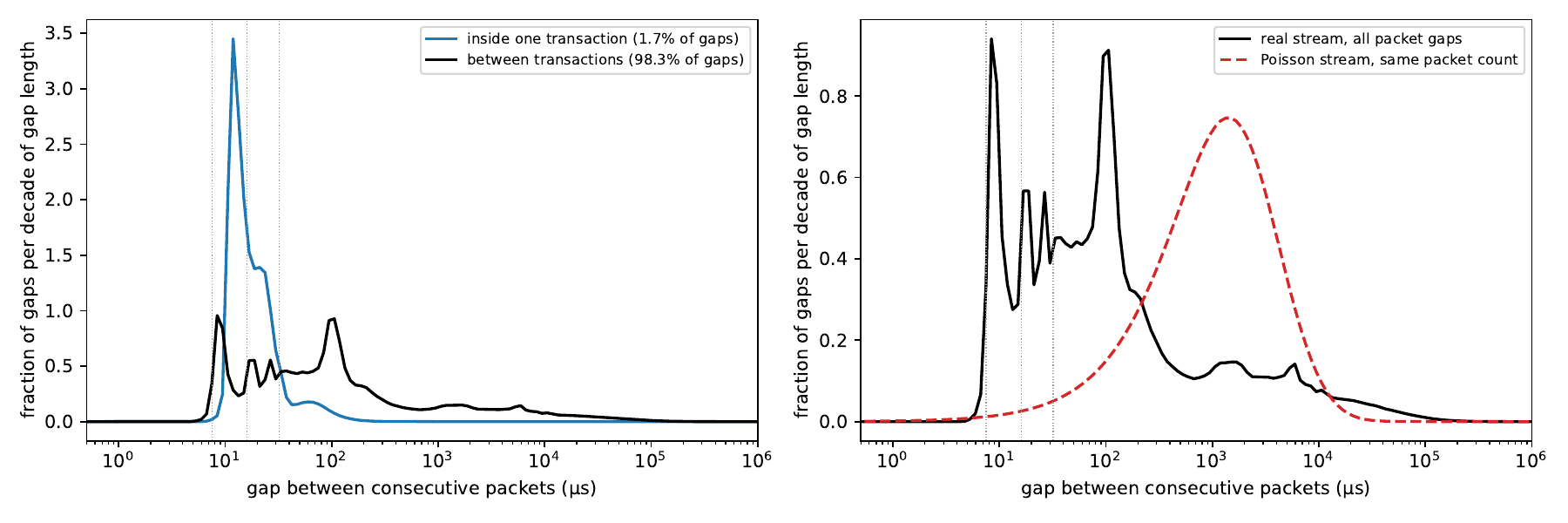}
\caption{Packet gaps, pooled over the corpus. Left: gaps between the packets of one split
transaction (blue) against gaps between packets of different transactions (black). The
pieces of a split transaction arrive about $15\,\mu\mathrm{s}$ apart and almost never
closer than the publisher period; the region from the publisher period to about
$12\,\mu\mathrm{s}$ is filled by the black curve alone. So the gaps that make a receiver
queue are between packets of different events. Right: all packet gaps of the real stream
(black) against a Poisson stream with the same packet count (red, dashed), the null of
Table~\ref{tab:main-corpus} drawn in gap space. The Poisson density is one hump near
$1\,\mathrm{ms}$; the real density puts about a third of its gaps below
$32\,\mu\mathrm{s}$, with an edge at the publisher period and a mode near
$100\,\mu\mathrm{s}$ that the null lacks. That region of short gaps is why the real stream
has a tail at HFT service times and the Poisson stream does not. Axes: logarithmic gap
axis; vertical axis the fraction of gaps per decade of gap length, so the area under a
curve over a range is the fraction of its gaps in that range. Dotted lines at $7.5$, $16$
and $32\,\mu\mathrm{s}$.}
\label{fig:tx-gaps}
\end{figure}

\paragraph{Self-excitation of the transaction stream.} Treat each transaction as one
event at the time of its first packet. The transaction stream then carries the same
clustering as the packet stream (Table~\ref{tab:tx-fano}): the variability of transaction
counts grows with the counting window just as that of packet counts does, and shuffling
the idle gaps between transactions removes most of the growth. The Hawkes fit gives the
same branching ratio, about $0.8$, on both. The gap density of transaction starts
(Figure~\ref{fig:tx-starts}, left) is the same as that of packets and unlike the Poisson
curve at every scale. The exchange publishes each matching event as it occurs, and the
packet stream inherits the clustering of the events. Why the events cluster is a question
about the market, outside the scope of this paper (Section~\ref{sec:concl-cannot}).

\begin{table}[H]
\centering
\small
\begin{tabular}{@{}r rrr@{}}
\toprule
count window $\tau$ & transactions & transactions, idle gaps shuffled & packets \\
\midrule
$1\,\mathrm{ms}$   &   7.6 &  5.5 &   7.8 \\
$10\,\mathrm{ms}$  &  21.6 & 10.8 &  22.2 \\
$100\,\mathrm{ms}$ &  50.5 & 15.8 &  51.8 \\
$1\,\mathrm{s}$    & 105.4 & 17.1 & 108.4 \\
\midrule
branching ratio $n$ & 0.795 & --- & 0.798 \\
\bottomrule
\end{tabular}
\caption{Fano factor $\mathrm{Var}[C_\tau]/\mathrm{E}[C_\tau]$ of event counts, corpus
median over 3512 windows, for transaction starts, for the same starts with the idle gaps
between transactions randomly permuted (transactions kept intact), and for packets; Poisson
gives $1$ at every $\tau$. Last row: exponential-kernel Hawkes branching ratio, corpus
median, all fits converged. The within-window Fano values agree with those of
Section~\ref{sec:setup-hawkes} at the same bin width. Figure~\ref{fig:bars-tx-fano} shows the same numbers as bars.}
\label{tab:tx-fano}
\end{table}

\begin{figure}[H]
\centering
\includegraphics[width=0.8\textwidth]{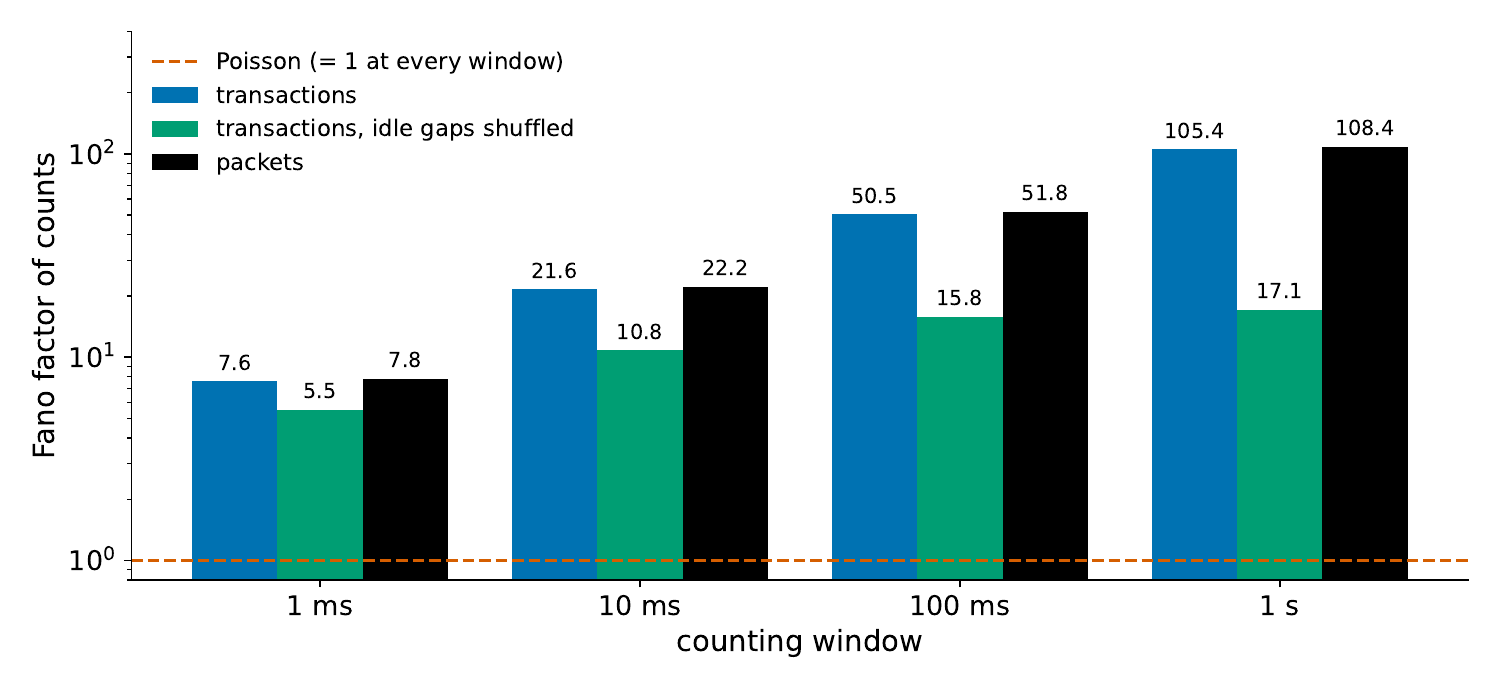}
\caption{Table~\ref{tab:tx-fano} as bars, log scale: how much more variable the counts are than
Poisson (Fano factor, $1$ for Poisson, dashed) in counting windows from $1\,\mathrm{ms}$ to
$1\,\mathrm{s}$. Transactions (blue) and packets (black) are almost identical and grow from
about $8$ to over $100$: the packet stream inherits its clustering from the transactions.
Shuffling the idle gaps between transactions (green) removes most of the growth: what is
left is the over-dispersion of the gap lengths themselves.}
\label{fig:bars-tx-fano}
\end{figure}

\begin{figure}[H]
\centering
\includegraphics[width=\textwidth]{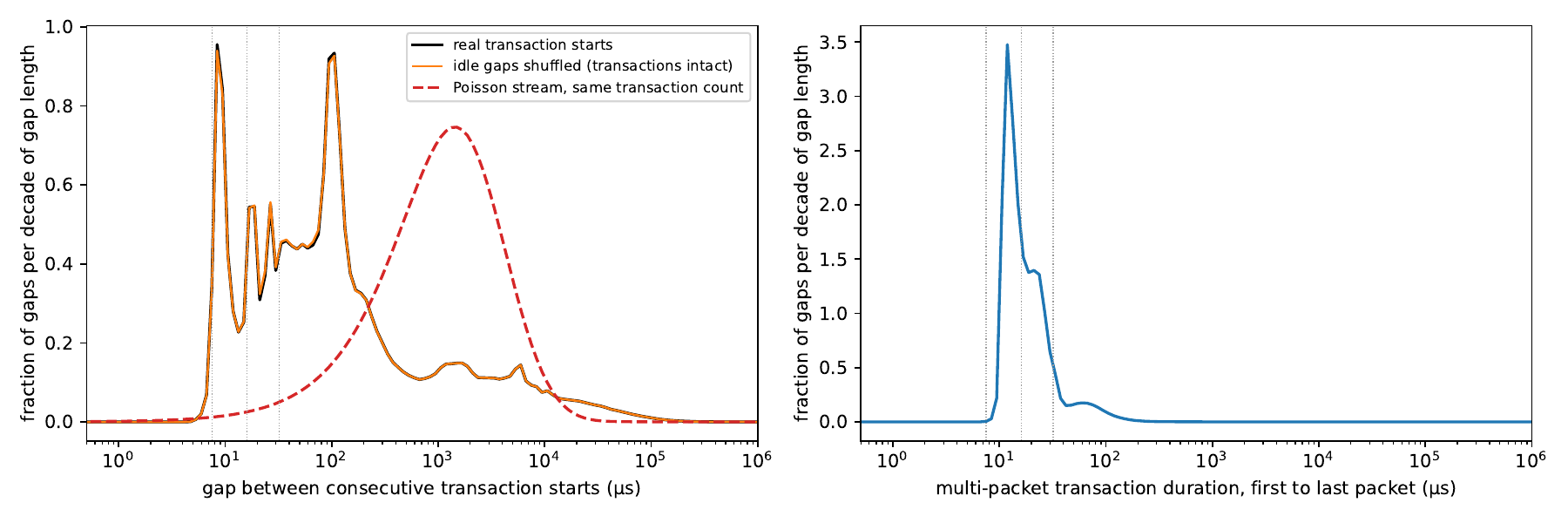}
\caption{Gaps between consecutive \emph{transaction starts}, to be read against
Figure~\ref{fig:tx-gaps}, whose gaps are between consecutive \emph{packets}. Left: each
transaction collapsed to one event at the time of its first packet, for the real stream
(black), for the same events with the idle gaps between transactions shuffled (orange),
and for a Poisson stream with the same transaction count (red, dashed). The black curve is
the black curve of Figure~\ref{fig:tx-gaps}: the same edge at the publisher period and
the same mode near $100\,\mu\mathrm{s}$, because almost every transaction is one packet.
The orange curve lies on the black one by construction, since shuffling keeps the gaps.
The Poisson curve has a different shape. What follows: the clustering seen at the packet
level is present unchanged at the level of matching-engine events, which is what
Table~\ref{tab:tx-fano} shows numerically and Table~\ref{tab:tx-arms} shows in the
simulator. Right: time from the first to the last packet of a split transaction, about
$15\,\mu\mathrm{s}$ at the median. Axes as in Figure~\ref{fig:tx-gaps}.}
\label{fig:tx-starts}
\end{figure}

\paragraph{The tail under the transaction-level nulls.} To separate the parts of the
transaction process that build the queue, each window is re-simulated at $N = 1$ with
service charged per packet at its measured absolute cost, $S_i = T\,(1 + r(\sigma_i - 1))$
with $\sigma_i$ the packet's message count and $r$ the live decoder's ratio of
per-message cost to decode floor (Section~\ref{sec:crossval-floor}), so that a large
packet adds work rather than redistributing it. Five counterfactual streams
are built from the real one (H), each removing one ingredient. TG keeps every transaction
intact --- its packets, their sizes and their spacing --- and permutes the idle gaps
between transactions, removing the clustering of transactions. TP keeps
the transactions intact and redraws their start times uniformly over the window. TS keeps
the real start times and permutes the transactions' shapes among them, decoupling size
from timing. TM merges each transaction into a single packet at its start. TW keeps the
starts and spreads a split transaction's packets by the median idle gap instead of their
real spacing. Table~\ref{tab:tx-arms} reports the corpus-median $p_{99}$ of each stream
and the share of the real stream's excess over $T$ that it retains.

\begin{table}[H]
\centering
\small
\begin{tabular}{@{}r r rrrrr@{}}
\toprule
$T$ & H & TG & TP & TS & TM & TW \\
\midrule
$16$  &   34.8 &   31.0 (80) &   20.7 (25) &   37.4 (114) &   34.9 (101) &   35.6 (104) \\
$32$  &  146.8 &  100.5 (60) &   52.9 (18) &  155.9 (108) &  146.3 (100) &  148.1 (101) \\
$64$  &  832   &  345 (37)   &  118 (7)    &  839 (101)   &  795 (95)    &  833 (100) \\
$128$ & 5460   & 1319 (22)   &  251 (2)    & 5218 (95)    & 5035 (92)    & 5459 (100) \\
\bottomrule
\end{tabular}
\caption{Which part of the transaction process builds the queue. Each row is one service time
$T$; each column is the single-thread $p_{99}$ in \si{\micro\second} (corpus median,
service charged per packet at NQ's cost per message) when the simulator is driven by the
real stream H or by one of five rebuilt streams, each identical to the real one except
for one removed ingredient. The number in parentheses is the share, in percent, of the
real stream's tail ($p_{99} - T$) that the rebuilt stream still produces: $100$ means the
ingredient did not matter, $0$ that it was the whole tail. The columns ask: TG, if the
transactions are kept intact but no longer come in bunches, does the tail survive? TP, if
each transaction is put at a random moment? TS, if big and small transactions swap
places? TM, if every transaction is sent as one packet? TW, if the pieces of a split
transaction are spread apart? What the table shows: TS, TM and TW keep about all of the
tail in every row, so the size of transactions and how the exchange packs them do not
build the queue. TP removes most of it at every service time, so the timing of
transactions does. TG removes little at $16\,\mu\mathrm{s}$ and most at
$128\,\mu\mathrm{s}$: a short server is hurt by each short gap on its own, a long server
only by sustained runs, and runs are what TG breaks up. This is the profile the gap
shuffle gave at the packet level (Table~\ref{tab:nulls}). At $T \le 8\,\mu\mathrm{s}$ the
real tail is under $1\,\mu\mathrm{s}$ and the shares are not meaningful. Using ZN's cost per
message instead of NQ's moves every cell by at most a few points, and session-bootstrap
intervals are within about two points. Figure~\ref{fig:bars-tx-arms} shows the same numbers as bars.}
\label{tab:tx-arms}
\end{table}

\begin{figure}[H]
\centering
\includegraphics[width=0.95\textwidth]{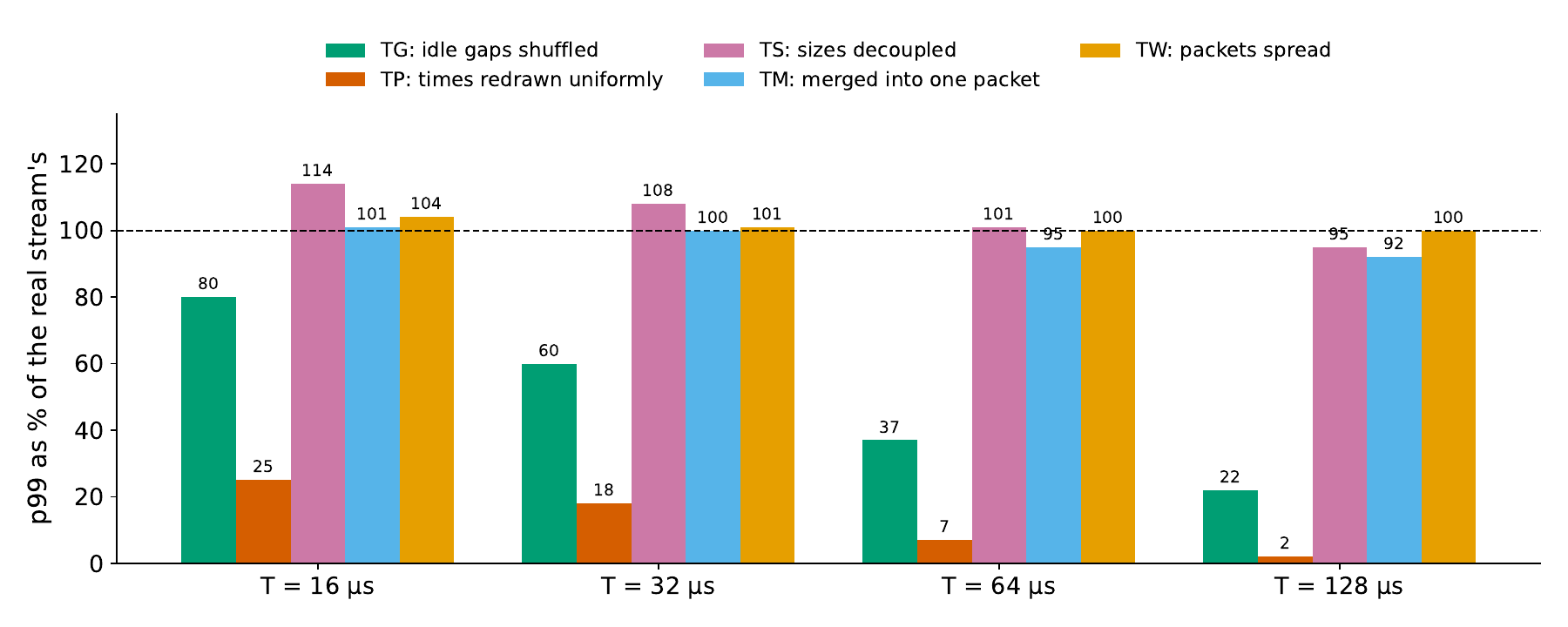}
\caption{Table~\ref{tab:tx-arms} as bars: each transaction-level arm's $p_{99}$ as a percentage of
the real stream's. Redrawing transaction times uniformly (TP, vermillion) removes the tail;
shuffling idle gaps (TG, green) removes more of it the longer the service time. Decoupling
sizes (TS), merging each transaction into one packet (TM) or spreading its packets (TW)
leaves it near $100\%$ from $16\,\mu\mathrm{s}$ up.}
\label{fig:bars-tx-arms}
\end{figure}

\paragraph{Summary in plain terms.} Each rebuilt stream changes one thing about the real
transactions and keeps everything else. What removes the tail, and at which service time
$T$:
\begin{itemize}
    \item \emph{Putting each transaction at a random moment (TP) removes the tail at every
$T$}: about three quarters of it at $16\,\mu\mathrm{s}$ and almost all of it from
$64\,\mu\mathrm{s}$. When transactions happen is what builds the queue.
    \item \emph{Keeping every gap between transactions but shuffling their order (TG)
removes little at short $T$ and most at long $T$}: about $80\%$ of the tail remains at
$16\,\mu\mathrm{s}$, about a fifth at $128\,\mu\mathrm{s}$. A fast server is hurt by each
short gap on its own, and the shuffle keeps every short gap. A slow server is hurt by long
runs of transactions close together, and the shuffle breaks the runs up.
    \item \emph{Nothing about what the transactions contain or how the exchange packs them
removes the tail from $16\,\mu\mathrm{s}$ up.} Swapping big and small transactions around
(TS), sending each transaction as one packet (TM), or spreading a transaction's packets
apart (TW) leaves the tail about as it was.
    \item \emph{At $T \le 8\,\mu\mathrm{s}$, the sweep points at or just above the publisher period, these rebuilt streams say nothing.} The real
stream's tail there is too small to split into shares. At the decode floor of a
production receiver the tail comes mostly from decoding long packets one message after
another (Section~\ref{sec:crossval-spanrun}), which these streams were not built to test.
\end{itemize}

Three things follow. The tail is the transaction process: putting transactions at random
moments removes most of it. From $16\,\mu\mathrm{s}$ up, sizes and packetisation do not
enter, at ZN's cost per message as at NQ's. And the ordering of transactions into runs,
which TG destroys, carries a small part of the tail at $16\,\mu\mathrm{s}$ and most of it
at $128\,\mu\mathrm{s}$, the same profile the packet-level gap shuffle gave. The rest,
dominant at HFT service times, is the surplus of short gaps between transactions over a
Poisson stream at the same rate (Table~\ref{tab:tx-gaps}), which on the publisher's clock
sit at its period (Table~\ref{tab:two-clocks}). That surplus says the gap distribution is
far from exponential; it does not say what put the mass there, and a stream with the same
gaps and no self-excitation (the gap shuffle) reproduces it and most of the tail at
$16\,\mu\mathrm{s}$. The attribution is therefore the following. The queueing tail behind a
single-threaded receiver on this feed is produced by the timing of matching-engine
transactions; from $16\,\mu\mathrm{s}$ up it is not produced by packetisation, by the
message count per packet, or by the packet rate (at the decode floor of a production
receiver the message count carries most of it, Section~\ref{sec:crossval-spanrun}). It
acts on a short service time through how often two transactions land within one service
time of each other, and on a long one through how long the runs of them last, which is the
self-excitation. Why transactions reach the engine within microseconds of each other is a
question about the market and is left open (Section~\ref{sec:concl-cannot}).

\findings{
    \item \emph{The exchange does not split transactions into bursts.} Almost every transaction fits in one packet; the pieces of a split one arrive about $15\,\mu\mathrm{s}$ apart; the short gaps lie between different transactions.
    \item \emph{The transaction stream carries the clustering}: transactions and packets have the same count variability and the same branching ratio, about $0.8$.
    \item \emph{When transactions happen builds the tail.} Placing them at random moments removes most of it.
    \item \emph{From $16\,\mu\mathrm{s}$ up, what they contain and how they are packed does not}: swapping sizes, merging into one packet or spreading packets leaves the tail about as it was.
}

\subsection{Tandem, sharding and dispatch on the same $N$ cores}
\label{sec:results-equal-core}

The tandem of Table~\ref{tab:main-corpus} runs on $N$ cores and the single thread on
one, so part of what it buys is capacity: its throughput is $N/T$ against $1/T$, and more
capacity shortens queues under any arrival law. The design question is whether a chain
of stages is the right way to spend the same $N$ cores, and the two alternatives are the
ones HFT systems use. \emph{Sharding} runs $N$ independent single-thread loops,
each owning a disjoint slice of the stream, by instrument or by channel: full service $T$
per message and no hops, so it has the lower median, and its tail depends on whether
a burst lands on one shard or spreads over several. \emph{Dispatch} puts one queue in front of $N$ identical servers, each taking a
whole packet and doing the full work $T$ on it (an M/D/$N$ queue). It has the same
capacity as the tandem and no hops, so a burst drains in about the same time, while each
packet's own latency stays at $T$ instead of paying $(N-1)$ hops.

Sharding is not available on this corpus: a CME burst is one instrument's order flow, so
sharding by instrument sends the whole burst to one core at service $T$ and relieves
nothing, and the corpus is a single instrument. Dispatch is available for any stage that
holds no state from one packet to the next --- SBE decode is such a stage, order-book
application is not --- provided its output is put back in order. Two comparisons follow.
Table~\ref{tab:equal-core} puts dispatch beside the tandem with no ingress hop, no egress
hop and no resequencing charged, which is the lowest latency dispatch could reach.
Table~\ref{tab:dispatch-costed} charges all three.

\begin{table}[H]
\centering
\small
\begin{tabular}{@{}rr rr rr@{}}
\toprule
 & & \multicolumn{2}{c}{tandem} & \multicolumn{2}{c}{M/D/$N$ dispatch} \\
\cmidrule(lr){3-4} \cmidrule(lr){5-6}
$T$ & $N$ & $p_{50}$ & $p_{99}$ & $p_{50}$ & $p_{99}$ \\
\midrule
 2 & 1 &    2.00 &    2.00 &    2.00 &    2.00 \\
 2 & 2 &    3.70 &    3.70 &    2.00 &    2.00 \\
 2 & 4 &    7.10 &    7.10 &    2.00 &    2.00 \\
 2 & 8 &   13.90 &   13.90 &    2.00 &    2.00 \\
\midrule
 4 & 1 &    4.00 &    4.00 &    4.00 &    4.00 \\
 4 & 2 &    5.70 &    5.70 &    4.00 &    4.00 \\
 4 & 4 &    9.10 &    9.10 &    4.00 &    4.00 \\
 4 & 8 &   15.90 &   15.90 &    4.00 &    4.00 \\
\midrule
 8 & 1 &    8.00 &    8.61 &    8.00 &    8.61 \\
 8 & 2 &    9.70 &    9.70 &    8.00 &    8.00 \\
 8 & 4 &   13.10 &   13.10 &    8.00 &    8.00 \\
 8 & 8 &   19.90 &   19.90 &    8.00 &    8.00 \\
\midrule
16 & 1 &   16.00 &   34.59 &   16.00 &   34.59 \\
16 & 2 &   17.70 &   18.31 &   16.00 &   16.00 \\
16 & 4 &   21.10 &   21.10 &   16.00 &   16.00 \\
16 & 8 &   27.90 &   27.90 &   16.00 &   16.00 \\
\midrule
32 & 1 &   32.00 &  146.14 &   32.00 &  146.14 \\
32 & 2 &   33.70 &   52.29 &   32.00 &   48.04 \\
32 & 4 &   37.10 &   37.71 &   32.00 &   32.00 \\
32 & 8 &   43.90 &   43.90 &   32.00 &   32.00 \\
\midrule
64 & 1 &   98.88 &  820.70 &   98.88 &  820.70 \\
64 & 2 &   65.70 &  179.84 &   64.00 &  171.38 \\
64 & 4 &   69.10 &   87.69 &   64.00 &   72.50 \\
64 & 8 &   75.90 &   76.51 &   64.00 &   64.00 \\
\midrule
128 & 1 &  364.45 & 5316.85 &  364.45 & 5316.85 \\
128 & 2 &  164.58 &  886.40 &  128.00 &  868.98 \\
128 & 4 &  133.10 &  247.24 &  128.00 &  223.06 \\
128 & 8 &  139.90 &  158.49 &  128.00 &  128.00 \\
\bottomrule
\end{tabular}
\caption{Real arrivals, corpus medians in \si{\micro\second}: $N$ cores as an $N$-stage tandem
against the same $N$ cores as one queue feeding $N$ whole-packet servers, with no hop and
no resequencing charged to the latter. What the table shows: uncharged, the pool is at or
below the tandem everywhere, because it adds no hops. This is the most dispatch could
achieve; Table~\ref{tab:dispatch-costed} charges its costs. Figure~\ref{fig:bars-equal-core} shows the same numbers as bars.}
\label{tab:equal-core}
\end{table}

\begin{figure}[H]
\centering
\includegraphics[width=1.0\textwidth]{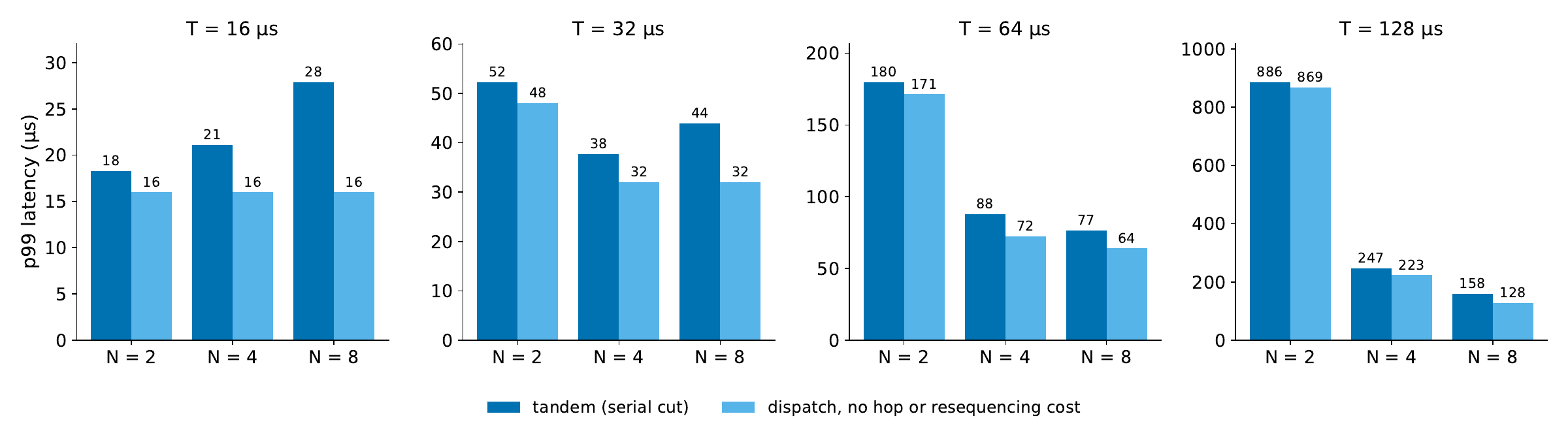}
\caption{Table~\ref{tab:equal-core} as bars: $p_{99}$ of a serial cut into $N$ stages (dark blue)
against a pool of $N$ cores that each take a whole packet, with no hop and no resequencing
charged (light blue). Uncharged, the pool is at or below the tandem everywhere, and at
$16\,\mu\mathrm{s}$ it has no tail at all. This is the most dispatch could achieve;
Figure~\ref{fig:bars-dispatch} charges its costs.}
\label{fig:bars-equal-core}
\end{figure}

Without its costs, dispatch's $p_{99}$ is at or below the tandem's in every cell from
$16\,\mu\mathrm{s}$ (Table~\ref{tab:equal-core}), and its median is $T$ against the
tandem's $T + (N-1)h$.

\begin{table}[tbp]
\centering
\small
\begin{tabular}{@{}rr rr rr r@{}}
\toprule
 & & \multicolumn{2}{c}{tandem} & \multicolumn{2}{c}{dispatch, costs charged} & reordered \\
\cmidrule(lr){3-4} \cmidrule(lr){5-6}
$T$ & $N$ & $p_{50}$ & $p_{99}$ & $p_{50}$ & $p_{99}$ & packets \\
\midrule
 16 & 2 &  17.70 &  19.59 &  19.40 &  20.78 & $0.00\%$ \\
 16 & 4 &  21.10 &  22.48 &  19.40 &  20.78 & $0.00\%$ \\
 16 & 8 &  27.90 &  29.28 &  19.40 &  20.78 & $0.00\%$ \\
\midrule
 32 & 2 &  33.70 &  52.84 &  35.40 &  51.61 & $0.00\%$ \\
 32 & 4 &  37.10 &  39.86 &  35.40 &  38.16 & $0.01\%$ \\
 32 & 8 &  43.90 &  46.66 &  35.40 &  38.16 & $0.01\%$ \\
\midrule
 64 & 2 &  65.70 & 181.08 &  67.40 & 175.74 & $0.01\%$ \\
 64 & 4 &  69.10 &  89.10 &  67.40 &  79.99 & $0.03\%$ \\
 64 & 8 &  75.90 &  81.42 &  67.40 &  72.92 & $0.03\%$ \\
\midrule
128 & 2 & 166.37 & 902.82 & 131.40 & 883.83 & $0.08\%$ \\
128 & 4 & 133.10 & 250.40 & 131.40 & 227.94 & $0.14\%$ \\
128 & 8 & 139.90 & 162.28 & 131.40 & 142.45 & $0.14\%$ \\
\bottomrule
\end{tabular}
\caption{The same $N$ cores as a tandem and as a dispatch pool with every cost charged: a hop
into the servers, a hop out of them, and an in-order resequencer that holds each packet
until its predecessors have left. Corpus medians in \si{\micro\second} at
$h = 1.7\,\mu\mathrm{s}$, service growing with message count at NQ's cost per message;
the last column is the share of packets that wait in the resequencer. What the table
shows: dispatch's median is $T + 2h$ at every $N$, against the tandem's $T + (N-1)h$, so
dispatch is one hop slower at $N = 2$, level at three stages and faster from four. At
$N = 2$ the two have about the same $p_{99}$; from $N = 4$ dispatch is lower on both
quantiles in every cell. Resequencing costs almost nothing. Constant service gives the
same picture. What follows: for a stage that carries no state, a pool of four or more
cores is at least as good as cutting it into as many stages (three cores were not
simulated); at two cores the two are level. For a stage that carries state, the order
book above all, dispatch is not available and cutting the chain is how extra cores are
spent.}
\label{tab:dispatch-costed}
\end{table}

\begin{figure}[H]
\centering
\includegraphics[width=1.0\textwidth]{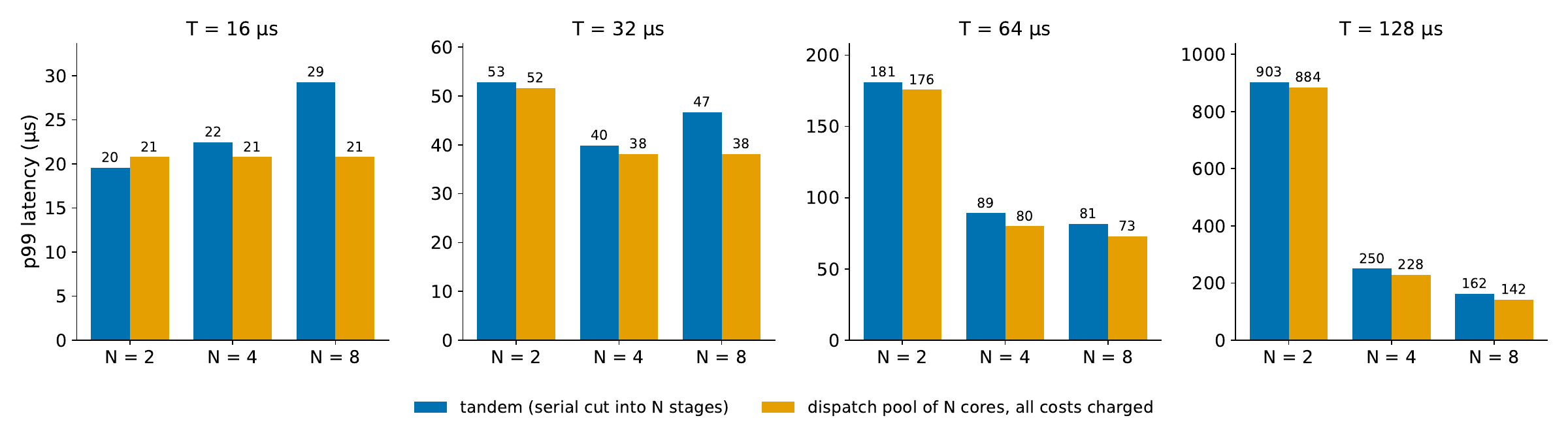}
\caption{Table~\ref{tab:dispatch-costed} as bars: $p_{99}$ of a serial cut into $N$ stages (blue)
against a pool of $N$ cores with the ingress hop, egress hop and resequencing charged
(orange). At $N = 2$ the two are level; from $N = 4$ the pool is lower at every service
time, and at $16\,\mu\mathrm{s}$ its $p_{99}$ does not grow with $N$ while the tandem's
does.}
\label{fig:bars-dispatch}
\end{figure}

With every cost charged (Table~\ref{tab:dispatch-costed}), dispatch pays two hops on
every packet, in and out, and nothing else measurable: the resequencer holds almost no
packets, because a packet that waited behind a burst in the shared queue leaves in the
order it arrived. Its median is therefore $T + 2h$ whatever $N$ is: one hop more than a
two-stage tandem, the same as three stages, and less than four or more. Its tail is lower
than the tandem's from $N = 4$ in every cell. The design consequence is a division by
state. A stateless stage above the publisher period, decode being the natural candidate,
is best served by a dispatch pool of four or more cores (three were not simulated); at two
cores a cut and a pool are level. A stateful stage above the publisher period is served
by cutting the chain, which is the case the rest of the paper analyses.

\findings{
    \item \emph{Sharding by instrument does not help here}: a burst is one instrument's order flow and lands on one shard.
    \item \emph{Dispatch with every cost charged} (hop in, hop out, resequencing) has median $T + 2h$ at any $N$, and almost no packet waits to be put back in order.
    \item \emph{Against a serial cut on the same cores} the pool is level at two cores and has the lower $p_{99}$ from four; three was not simulated.
    \item \emph{Only a stage without state can be dispatched.} Decode can; the order book cannot.
}

\subsection{Median cost of an $N$-stage tandem}
\label{sec:results-median}

This section looks at the typical message rather than the slow one. Its point is that
splitting a task into more stages makes the typical latency worse by a fixed, predictable
amount, exactly one mailbox hop per added stage, and nothing more. The reason is that the
typical message arrives when the queue in front of it is empty, so it pays only the
service time plus the hops. The simulation confirms this to the reported precision in
almost every cell of the sweep. The few exceptions are at the longest service times, where
the queue never empties and even the typical message waits; those sit outside the range
the paper recommends, and splitting cures them anyway.

The median columns of Table~\ref{tab:main-corpus} match the single-message formula of
Proposition~\ref{thm:burst}
\[
p_{50}(N) = T + (N-1)\,h \qquad \text{with } h = 1.7\,\mu\mathrm{s},
\]
in every cell of the Poisson stream and in all but three cells of the real stream. Under
Poisson the identity holds because the queue is empty; on the real stream it holds because
the median message arrives in a quiet stretch between clusters, where the queue is also
empty. Each boundary therefore costs exactly one hop on the median, with no hidden
component, as Theorem~\ref{thm:reduction} says it must at every quantile.

The three exceptions are the longest service times: one stage at $64$ and
$128\,\mu\mathrm{s}$, and two stages at $128\,\mu\mathrm{s}$. There clusters merge into a
persistent backlog and even the typical message waits. This is the regime boundary showing
in the median rather than in the tail, and it lies outside the range the paper's
recommendations address. Splitting restores the identity once each stage is short enough;
at $128\,\mu\mathrm{s}$ the median is back at the no-queue latency by four stages.

\findings{
    \item \emph{Each stage costs exactly one hop at the median}: $p_{50} = T + (N-1)h$ in almost every cell, on real and Poisson arrivals.
    \item \emph{The three exceptions} are the longest service times, where even the typical message waits; splitting restores the identity.
}

\subsection{The design equation: hop cost against tail excess}
\label{sec:results-design}

This section turns the measurements above into a rule for choosing how many stages to
split a task into. The idea is simple. Splitting a task across more threads has a fixed
price: every added stage costs one mailbox hop, and that hop is paid on every message,
fast or slow. It also has a benefit: when packets arrive in bursts, messages queue behind
one another, and shorter stages drain those queues faster, so the slow tail shrinks. The
question is when the benefit outweighs the price.

The section does four things. It writes the median and the tail latency as two short
formulas, a hop-cost term plus a tail term. It tabulates the tail term from the corpus and
shows that it depends only on how long the slowest stage is, and that it disappears
entirely once every stage is short enough. It combines cost and benefit into one
expression the operator can minimise, with a table of where the optimum falls. And it
draws the practical conclusion: the goal is not many threads but a short slowest stage,
and a task already shorter than the publisher period, about seven and a half
microseconds here, should not be split at all.

From the corpus:
\begin{align}
p_{99}^{\mathrm{Hawkes}}(N) &\;=\; T + (N-1)h + \Delta(N), \label{eq:design99} \\
p_{50}^{\mathrm{Hawkes}}(N) &\;=\; T + (N-1)h, \label{eq:design50}
\end{align}
where $\Delta(N) \equiv p_{99}^{\mathrm{Hawkes}}(N) - p_{50}^{\mathrm{Hawkes}}(N)$ is the
``clustering tail excess''. Table~\ref{tab:delta} lists it, together with the ratio
$\Delta(N)/\Delta(1)$ that Theorem~\ref{thm:reduction} bounds by $1/N$.

\begin{table}[H]
\centering
\begin{tabular}{@{}r rrrr r rrr r@{}}
\toprule
 & \multicolumn{4}{c}{$\Delta(N)$ (\si{\micro\second})} & & \multicolumn{3}{c}{$\Delta(N)/\Delta(1)$} & \\
\cmidrule(lr){2-5} \cmidrule(lr){7-9}
$T$ & $N{=}1$ & $N{=}2$ & $N{=}4$ & $N{=}8$ & & $N{=}2$ & $N{=}4$ & $N{=}8$ & $\gamma$ \\
\midrule
 2 &    0.00 &    0.00 &    0.00 &    0.00 & & --- & --- & --- & --- \\
 4 &    0.00 &    0.00 &    0.00 &    0.00 & & --- & --- & --- & --- \\
 8 &    0.61 &    0.00 &    0.00 &    0.00 & & 0.000 & 0.000 & 0.000 & --- \\
16 &   18.59 &    0.61 &    0.00 &    0.00 & & 0.033 & 0.000 & 0.000 & --- \\
32 &  114.14 &   18.59 &    0.61 &    0.00 & & 0.163 & 0.005 & 0.000 & --- \\
64 &  721.82 &  114.14 &   18.59 &    0.61 & & 0.158 & 0.026 & 0.001 & 3.32 \\
128 & 4952.41 &  721.82 &  114.14 &   18.59 & & 0.146 & 0.023 & 0.004 & 2.68 \\
\midrule
bound & & & & & & 0.500 & 0.250 & 0.125 & $\ge 1$ \\
\bottomrule
\end{tabular}
\caption{Corpus-median tail excess $\Delta(N) = p_{99}(N) - p_{50}(N)$ on the real arrivals at
$h = 1.7\,\mu\mathrm{s}$, its ratio to the single-stage value, and the fitted exponent
$\gamma$ in $\Delta(N) \approx \Delta(1) N^{-\gamma}$; the last row is the
Theorem~\ref{thm:reduction} bound. What the table shows: each extra halving of the stage
length moves the excess to the single-thread value at half the service time, so the
values repeat along the diagonals; once each stage is below the publisher period the
excess is zero. Every ratio is far below the bound. Figure~\ref{fig:bars-delta} shows the same numbers as bars.}
\label{tab:delta}
\end{table}

\begin{figure}[H]
\centering
\includegraphics[width=0.85\textwidth]{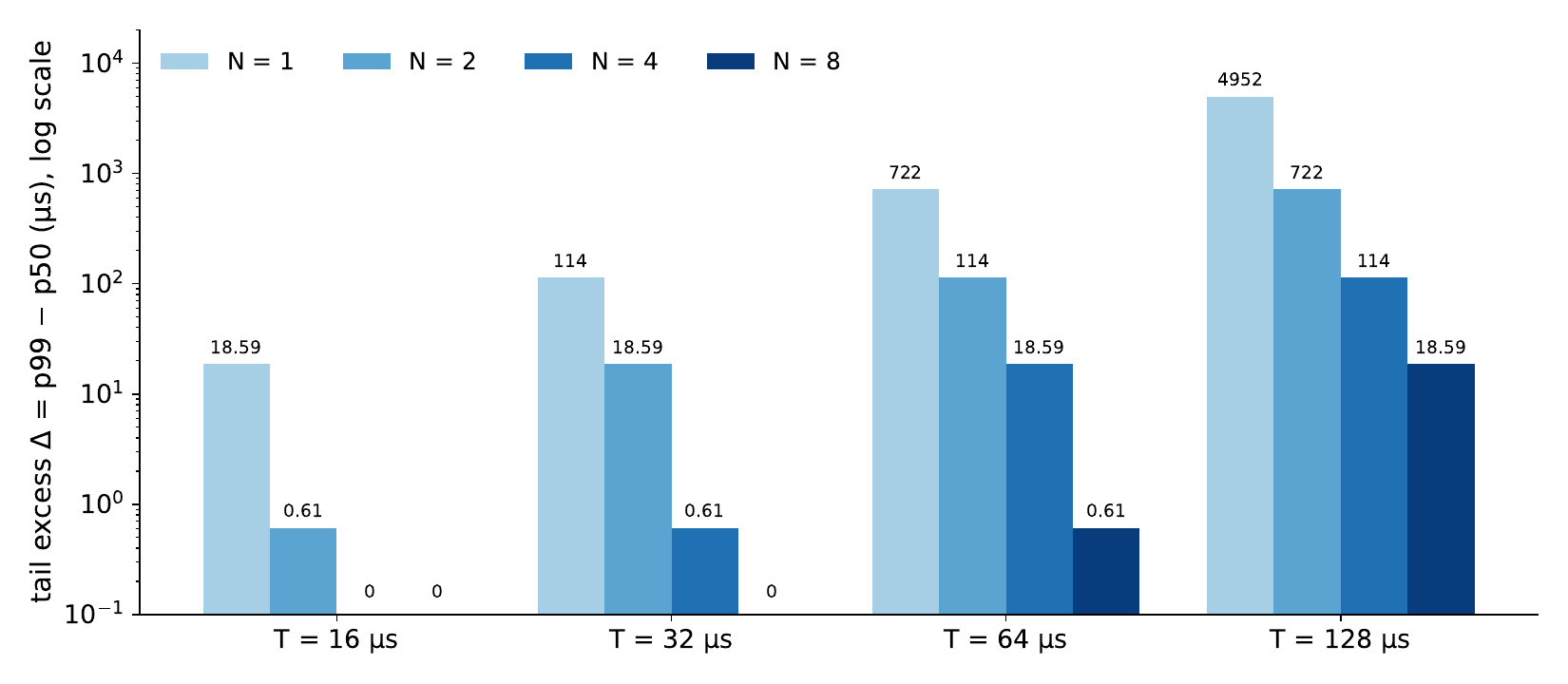}
\caption{Table~\ref{tab:delta} as bars, log scale: the tail excess $\Delta = p_{99} - p_{50}$
for one, two, four and eight stages at each service time. Each extra halving of the stage
length moves the excess down to the single-thread value at half the service time, because
only the largest stage matters. Once each stage is below the publisher period the excess
is zero (bars marked $0$).}
\label{fig:bars-delta}
\end{figure}

Two features of Table~\ref{tab:delta} follow from the theorems. First, $\Delta$ depends on
$(T, N)$ only through the per-stage service $T/N$: every diagonal of constant $T/N$
carries one value, to the reported precision, without fitting. This is the reduction
identity, Equation~\eqref{eq:reduction}, on real arrivals: the tandem's tail is the
single-server tail at $s_{\max}$, and the diagonals of the table are
$\Delta_{\mathrm{single}}(s)$ on a power-of-two grid of $s$.

Second, every ratio is at or below the bound $1/N$, and the pathwise check of
Section~\ref{sec:evaluation} finds no message on any window above it. The margin is large,
because of the threshold. Since $\Delta_{\mathrm{single}}(s)$ is zero for $s$ below the
publisher period (Section~\ref{sec:results-threshold}), a split that pushes the per-stage
service below the publisher period does not merely divide the tail, it removes it. At
$32\,\mu\mathrm{s}$, for example, four stages leave about half a percent of the
single-thread excess, where the bound allows a quarter, and eight leave none. Where a
power law $\Delta(N) \approx \Delta(1) N^{-\gamma}$ can be fitted, its exponent is about
three, far steeper than the $\gamma = 1$ the bound alone would permit. The decay of the
excess is therefore not a $1/N$ division on this corpus; it ends at zero once each stage is
below the publisher period. The design consequence is that the objective is not a large
$N$ but an $s_{\max}$ below the publisher period. The utilisation slack described after
Theorem~\ref{thm:reduction} contributes as well, and Appendix~\ref{app:check} reproduces it
on synthetic clusters.

The operator's stage-count optimum is
\[
N^\star \;=\; \arg\min_N \Big[ (N-1) h \;+\; \kappa \, \Delta(N) \Big]
\]
for the operator's tail weight $\kappa$ (the number of microseconds of median the operator
will spend to remove one microsecond of $p_{99}$ excess). Over the simulated grid $N \in
\{1, 2, 4, 8\}$ at $h = 1.7\,\mu\mathrm{s}$ the breakpoints are:
\begin{center}
\begin{tabular}{@{}r llll@{}}
\toprule
$T$ (\si{\micro\second}) & $N^\star = 1$ & $N^\star = 2$ & $N^\star = 4$ & $N^\star = 8$ \\
\midrule
  2 & always             & ---                        & ---                        & ---                 \\
  4 & always             & ---                        & ---                        & ---                 \\
  8 & $\kappa < 2.8$     & $\kappa \ge 2.8$           & ---                        & ---                 \\
 16 & $\kappa < 0.095$   & $0.095 \le \kappa < 5.6$   & $\kappa \ge 5.6$           & ---                 \\
 32 & $\kappa < 0.018$   & $0.018 \le \kappa < 0.19$  & $0.19 \le \kappa < 11$     & $\kappa \ge 11$     \\
 64 & $\kappa < 0.0028$  & $0.0028 \le \kappa < 0.036$& $0.036 \le \kappa < 0.38$  & $\kappa \ge 0.38$   \\
128 & $\kappa < 0.0004$  & $0.0004 \le \kappa < 0.0056$& $0.0056 \le \kappa < 0.071$& $\kappa \ge 0.071$ \\
\bottomrule
\end{tabular}
\end{center}
A dash means that stage count is never optimal at any $\kappa$: at $T \le 8\,\mu\mathrm{s}$
the tail excess is already zero by $N = 2$, so further splitting only adds hop cost. An
operator who weights tail and median equally ($\kappa = 1$) keeps one stage up to
$8\,\mu\mathrm{s}$ and uses more stages as $T$ grows, which is also the ordering of the raw
$p_{99}$ minima in Table~\ref{tab:main-corpus}. Over the service times an HFT hot path
occupies, an operator must weight the tail several times more than the median before
splitting pays; above the threshold the required weight is small.

At real cut points the stages are unequal, and by Remark~\ref{rem:unequal} the tail excess
depends on the partition $P$ of the servicing chain only through its largest stage,
$\Delta(P) = \Delta_{\mathrm{single}}(s_{\max}(P))$, i.e.\ the single-stage tail excess
measured at service $s_{\max}$. The optimum over partitions is then
\[
P^\star \;=\; \arg\min_P \Big[ \mathrm{hops}(P)\, h \;+\; \kappa\,
\Delta_{\mathrm{single}}\big(s_{\max}(P)\big) \Big],
\]
which Table~\ref{tab:delta} lets the operator evaluate directly, since its diagonals are
$\Delta_{\mathrm{single}}(s)$ on a power-of-two grid of $s$. Two rules follow.
A cut is worth its hop only if it lowers $s_{\max}$ by enough that
$\kappa\,[\Delta_{\mathrm{single}}(s_{\max}) - \Delta_{\mathrm{single}}(s_{\max}')] > h$;
and among partitions with the same $s_{\max}$ the one with the fewest hops is preferred,
so sub-tasks downstream of the bottleneck should be merged onto one thread until their sum
reaches $s_{\max}$.

\begin{center}
\fbox{\begin{minipage}{0.92\linewidth}
\textbf{Design principle (only $s_{\max}$ matters).} For a servicing chain cut at fixed
boundaries, the tail-latency question is not ``how many threads?'' but ``what is the
smallest achievable largest stage $s_{\max}$?'' By Theorem~\ref{thm:reduction} the
clustering tail excess of any partition is $\Delta_{\mathrm{single}}(s_{\max})$,
independent of the stage count; a cut that does not lower $s_{\max}$ compresses no tail
and still costs one hop $h$. Choose the partition with the smallest reachable $s_{\max}$,
then the fewest hops that realise it.
\end{minipage}}
\end{center}

\paragraph{Minimum service time for splitting.}
Equations~\eqref{eq:design99}--\eqref{eq:design50} imply a lower bound on the tasks the
recommendation applies to. Splitting a task of total service $T$ into $N$ stages spends
$(N-1)h$ at every quantile to compress a tail excess $\Delta(1)$ that is zero at
$4\,\mu\mathrm{s}$ and below and grows fast above (Table~\ref{tab:delta}). When $T$ is
below the publisher period there is nothing to remove, and the design equation
prescribes one stage; a hop there adds cost only. The sweep confirms this: at $2$ and
$4\,\mu\mathrm{s}$ a split removes nothing and raises every quantile by the full hop, and at
$16\,\mu\mathrm{s}$ it removes almost the entire excess for one hop, on every window.

Two separate necessary conditions apply. The first is a property of the
\emph{framework}: a split cannot pay unless the tail excess it
removes exceeds the hop cost it adds, which on this framework's shipping BQueue path ($h
\approx 1.7\,\mu\mathrm{s}$) sets a minimum at roughly $T \gtrsim 2h \approx
3.4\,\mu\mathrm{s}$. The second is a property of the \emph{feed}: there is no tail excess
to remove at all until $T$ exceeds the publisher period, about
$\spub$ on this corpus (Section~\ref{sec:results-conditioning}). The binding
constraint is whichever is larger, and on NQ with a BQueue mailbox it is
the feed's, by roughly a factor of two; this is why $T = 4\,\mu\mathrm{s}$, although
above $2h$, has $\Delta(1) = 0$ and should not be split. A framework with a
much more expensive mailbox would find its own minimum binding instead. On CME the binding minimum is
the publisher period, and the rest of the paper uses that term. Sub-microsecond tasks, for
instance a single SBE integer decode or a small handler that fires from a strategy fast
path, belong on a synchronous send rather than on a cross-thread mailbox, regardless of
feed statistics.

\findings{
    \item \emph{Only the largest stage matters.} The tail excess depends on $(T, N)$ only through $T/N$, as Theorem~\ref{thm:reduction} predicts.
    \item \emph{A cut that takes every stage below the publisher period removes the tail rather than dividing it}; the decay is far steeper than the $1/N$ bound.
    \item \emph{Weighting tail and median equally} gives one stage up to $8\,\mu\mathrm{s}$ and more as $T$ grows, up to eight at $64\,\mu\mathrm{s}$ and above.
    \item \emph{The binding minimum is the publisher period $\spub$}, not the framework's $2h$.
}

\subsection{Per-window results}
\label{sec:results-windows}

The results so far are for a typical half-hour window. This section asks whether they
hold on every window, including the busiest ones, or only on average. The answer matters
because a busy window has a far worse single-thread tail than the typical one, so an
operator cannot size a pipeline from the average alone. The finding is that the split wins
on essentially every window from about ten microseconds up, and on no window at eight
microseconds or below, where there is no tail to remove and the split only adds hop
cost, with a sharp crossover between. The
recommendation is therefore not an artefact of averaging.

The corpus medians understate the tail on the busiest windows. Table~\ref{tab:windows}
gives the median and the $95$th percentile over windows of the single-thread $p_{99}/T$.
At short service times the busiest windows are close to the median one; at long service
times one window in twenty has a tail up to twice the median window's. The tandem's win is
not confined to the median window: two stages and four stages both beat one on every
window from $16\,\mu\mathrm{s}$, and on no window at $8\,\mu\mathrm{s}$ or below, where there
is no tail to compress and the split pays only hop cost. The crossover is sharp and lies
between $8$ and $10\,\mu\mathrm{s}$ (Section~\ref{sec:results-threshold}). The fitted
branching ratio is about $0.8$ in every window (Section~\ref{sec:setup-hawkes}); how the
tail depends on it, and on the window's packet rate, is examined in
Section~\ref{sec:results-conditioning}. The $p_{95}$ is recorded per window and omitted
from the tables only for brevity; it lies between $p_{50}$ and $p_{99}$ by construction.
The windows are nested in 276 sessions and are not independent, so uncertainty on the
aggregates is clustered at the session level.

\begin{table}[H]
\centering
\small
\begin{tabular}{@{}lrrrrrrr@{}}
\toprule
$T$ (\si{\micro\second}) & 2 & 4 & 8 & 16 & 32 & 64 & 128 \\
\midrule
$p_{99}/T$, median window        & 1.0 & 1.0 & 1.1 & 2.2 & 4.6 & 12.8 & 41.5 \\
$p_{99}/T$, 95th-percentile window & 1.0 & 1.0 & 1.1 & 2.5 & 6.1 & 19.9 & 81.7 \\
\bottomrule
\end{tabular}
\caption{Single-thread $p_{99}$ relative to the service time, in the median window and in
the window at the $95$th percentile, over the corpus. What the table shows: below the
publisher period no window has a tail; above it the busiest windows are worse than the
median one, by up to a factor of two at long service times. What follows: a pipeline sized
from the median window underestimates the tail it will meet in a busy one.}
\label{tab:windows}
\end{table}

\findings{
    \item \emph{The split wins on every window from $16\,\mu\mathrm{s}$ and on none at $8\,\mu\mathrm{s}$ or below.}
    \item \emph{The crossover is sharp}, between $8$ and $10\,\mu\mathrm{s}$.
    \item \emph{Busy windows are worse than the median one}, by up to a factor of two at long service times (Table~\ref{tab:windows}).
}

\subsection{Location of the tail onset}
\label{sec:results-threshold}

The coarse sweep brackets the onset of the clustering tail between $T = 8$ and $T =
16\,\mu\mathrm{s}$, too wide for a design rule. We reran the single- and two-stage arms
on the same 3512 windows at $T \in \{5, 6, 7, 8, 9, 10, 12, 14, 16\}\,\mu\mathrm{s}$:

\begin{center}
\begin{tabular}{rrrrr}
\toprule
$T$ ($\mu$s) & $p_{99}$ Hawkes & excess over $T$ & windows with a tail & $p_{99}(2) < p_{99}(1)$ \\
\midrule
 5 &  5.00 &  0.00 &   0.0\% &   0.0\% \\
 6 &  6.00 &  0.00 &   0.0\% &   0.0\% \\
 7 &  7.00 &  0.00 &   4.5\% &   0.0\% \\
 8 &  8.61 &  0.61 &  98.6\% &   0.0\% \\
 9 & 10.85 &  1.85 & 100.0\% &  67.0\% \\
10 & 13.52 &  3.52 & 100.0\% &  99.7\% \\
12 & 19.83 &  7.83 & 100.0\% & 100.0\% \\
14 & 26.68 & 12.68 & 100.0\% & 100.0\% \\
16 & 34.59 & 18.59 & 100.0\% & 100.0\% \\
\bottomrule
\end{tabular}
\end{center}

\begin{figure}[H]
\centering
\includegraphics[width=1.0\textwidth]{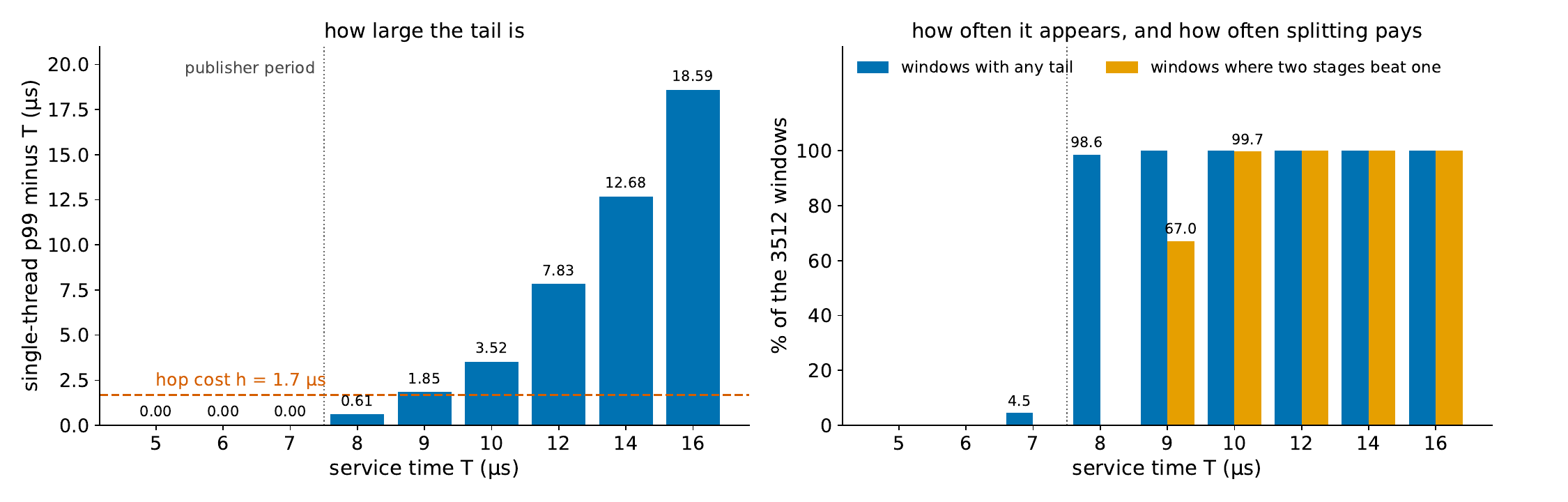}
\caption{The table above as bars; the dotted line marks the publisher period $\spub$. Left: how far the single-thread $p_{99}$ sits above the service time.
Below the publisher period there is no tail at all; above it the tail grows fast and passes the
hop cost of one split (dashed) between $8$ and $9\,\mu\mathrm{s}$. Right: the share of
windows that have any tail (blue) jumps from almost none to almost all between $7$ and
$8\,\mu\mathrm{s}$; the share where two stages beat one (orange) follows about
$1.5\,\mu\mathrm{s}$ later, once the tail is larger than the hop cost.}
\label{fig:bars-onset}
\end{figure}

Two distinct thresholds are visible (Figure~\ref{fig:bars-onset}). The first is the \emph{tail} threshold: between $T =
7$ and $8\,\mu\mathrm{s}$ the share of windows with any tail jumps from almost none to
almost all. The second is the \emph{split} threshold, about $1.5\,\mu\mathrm{s}$ later,
between $8$ and $10\,\mu\mathrm{s}$: a tail must exist \emph{and} exceed the hop cost before
a split repays it. At $8\,\mu\mathrm{s}$ nearly every window has a tail, but it is well
under the $1.7\,\mu\mathrm{s}$ hop, and splitting loses on every window. By
$10\,\mu\mathrm{s}$ the tail is about twice the hop, and splitting wins almost
everywhere.

The first threshold is set by the feed's tight-gap structure rather than by its mean
rate; the next section establishes this.

\findings{
    \item \emph{The tail appears between $7$ and $8\,\mu\mathrm{s}$}: the share of windows with any tail jumps from almost none to almost all.
    \item \emph{Splitting starts to pay about $1.5\,\mu\mathrm{s}$ later}, once the tail exceeds the hop cost, and pays on almost every window from $10\,\mu\mathrm{s}$.
}


\subsection{Dependence of the tail on packet rate and branching ratio}
\label{sec:results-conditioning}

One would expect the slow tail, and with it the case for splitting, to depend on how
busy the market is and on how strongly one transaction triggers the next. If it did, the rule
for splitting would have to change with market conditions. This section tests that
expectation and finds that, over the task lengths a trading hot path uses, it does not
hold: a window with four to five times the packet rate has the same tail, relative to
the task length, as a quiet one. The reason is that a busy market produces more bursts,
not tighter bursts. The tightest spacing between two packets on this feed is the publisher period, the
spacing at which the exchange's publisher sends packets
(Section~\ref{sec:exchange-publisher}). The two packets on either side of such a gap are
almost always two different transactions (Section~\ref{sec:results-transactions}). Neither
the publisher period nor the spacing of transactions inside a burst changes with the
market state. Since it is the
tightest spacing that starts a queue, the point at which a tail appears is a constant of
the feed. Market load
only matters for very long tasks, where the ordinary effect of a busy server takes over.
The practical consequence is that the rule for splitting needs no live estimate of market
conditions; the section closes by confirming the result on five-minute windows, and by
noting that a rule based on the average gap between packets, rather than the tightest,
would be wrong by two orders of magnitude.

The results above are conditioned on the service time $T$ alone. A Hawkes tail depends
on two further quantities that the corpus median averages over: the window's mean packet
rate $\bar\lambda$, and its branching ratio $n = \alpha/\beta$.

We sort the 3512 windows into quintiles, first by $\bar\lambda$ and then by the fitted
$n$, and report the corpus median of the normalised single-stage tail $p_{99}/T$ within
each quintile. Normalising by $T$ removes the service time, so a value of $1.0$ means no
tail at all and the quintile-to-quintile ratio isolates the conditioning effect.

\begin{center}
\begin{tabular}{lrrrrrr}
\toprule
& & \multicolumn{5}{c}{$p_{99}/T$, corpus median} \\
\cmidrule(l){3-7}
quintile & median & $T{=}8$ & $T{=}16$ & $T{=}32$ & $T{=}64$ & $T{=}128$ \\
\midrule
\multicolumn{7}{l}{\emph{by packet rate $\bar\lambda$ (pkt/s)}} \\
Q1 & 202  & 1.08 & 2.29 & 4.74 & 12.25 & 31.76 \\
Q3 & 451  & 1.07 & 2.12 & 4.48 & 12.68 & 42.81 \\
Q5 & 942  & 1.08 & 2.11 & 4.51 & 13.92 & 55.67 \\
\addlinespace
\multicolumn{2}{l}{\textbf{Q5/Q1}} & \textbf{1.00} & \textbf{0.92} & \textbf{0.95} & \textbf{1.14} & \textbf{1.75} \\
\midrule
\multicolumn{7}{l}{\emph{by branching ratio $n$}} \\
Q1 & 0.769 & 1.08 & 2.18 & 4.51 & 11.61 & 30.73 \\
Q3 & 0.798 & 1.08 & 2.16 & 4.61 & 13.17 & 42.83 \\
Q5 & 0.825 & 1.07 & 2.15 & 4.61 & 14.65 & 58.73 \\
\addlinespace
\multicolumn{2}{l}{\textbf{Q5/Q1}} & \textbf{1.00} & \textbf{0.98} & \textbf{1.02} & \textbf{1.26} & \textbf{1.91} \\
\bottomrule
\end{tabular}
\end{center}

\begin{figure}[H]
\centering
\includegraphics[width=1.0\textwidth]{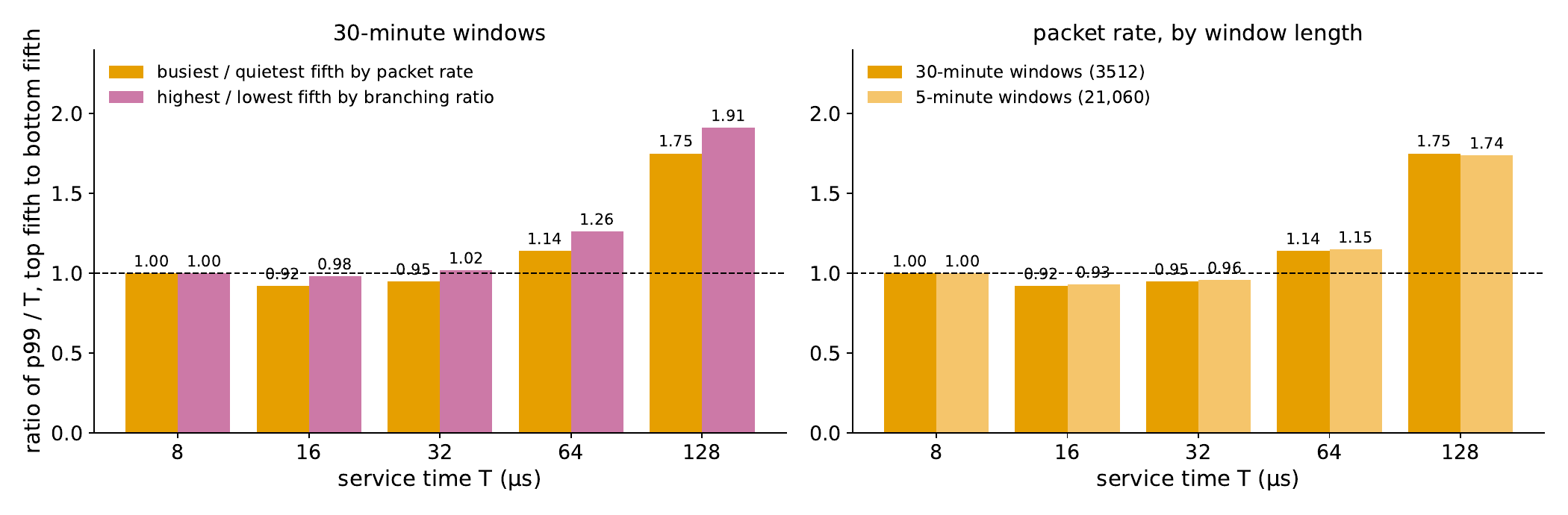}
\caption{The quintile table above and the five-minute table below as bars: how much larger the
normalised tail $p_{99}/T$ is in the top fifth of windows than in the bottom fifth. Left: by
packet rate (orange) and by branching ratio (pink). Up to $32\,\mu\mathrm{s}$ every bar is
within $8\%$ of $1$ (dashed): a busier or more self-exciting market does not change the tail
relative to the service time. From $64\,\mu\mathrm{s}$ both matter, as ordinary
utilisation queueing sets in. Right: the packet-rate ratio on 30-minute and on five-minute
windows; the two agree to $0.01$.}
\label{fig:bars-conditioning}
\end{figure}

The result divides between $32$ and $64\,\mu\mathrm{s}$ (Figure~\ref{fig:bars-conditioning}).
Up to $32\,\mu\mathrm{s}$ the normalised tail is flat in both variables: a more than
fourfold range of packet rate moves it by less than a tenth, and at $16$ and
$32\,\mu\mathrm{s}$ the busiest fifth of windows even has the slightly smaller tail; the
range of branching ratios moves it by at most a few percent. From $64\,\mu\mathrm{s}$ both
variables matter in the expected direction, and at $128\,\mu\mathrm{s}$ the busiest and the
most self-exciting fifths have roughly twice the tail of the quietest. This is the
boundary between the clustering regime and the utilisation regime found elsewhere in the
paper: from $64\,\mu\mathrm{s}$ the Poisson stream has a tail too
(Section~\ref{sec:results-tail}), utilisation $\rho = \bar\lambda T$ is the governing
quantity, and $\rho$ is proportional to the packet rate. The rate matters where
utilisation queueing applies and not below it.

\paragraph{Simulator check of the invariance.} The Poisson stream, run on the same windows
through the same simulator, does show the dependence on rate that utilisation queueing
requires: at $16$ and $32\,\mu\mathrm{s}$ its normalised tail is clearly larger in the
busiest fifth of windows than in the quietest, because utilisation crosses the $1\%$
switch-on of Corollary~\ref{cor:poisson} within the range. The real stream on the same
windows shows none. The simulator therefore resolves the rate; the clustered tail does not
depend on it at these service times. The result is the same without binning into fifths,
by rank correlation over all windows, and between the top and bottom $2\%$ of windows by
rate, a range of more than twentyfold. The mechanism is that at low utilisation clusters do
not overlap. Raising the rate raises the number of clusters per window, each with the same
internal spacing and, at a fixed branching ratio, the same size distribution, so where the
$p_{99}$ packet sits within its cluster, and hence $p_{99}/T$, is unchanged. The rate
enters once clusters begin to overlap, and the branching ratio once $T$ is long enough for
many members of one cluster to fall within one service time; on this corpus both happen
from $64\,\mu\mathrm{s}$.

\paragraph{Packet rate and the tight gaps below $64\,\mu\mathrm{s}$.} The null result has a
mechanical explanation in the arrival stream. Taking the same $\bar\lambda$ quintiles and reporting the corpus median of
the per-window interarrival-gap quantiles:

\begin{center}
\begin{tabular}{lrrrrrr}
\toprule
quintile & $\bar\lambda$ (pkt/s) & $p_1$ gap & $p_5$ gap & $p_{25}$ gap & $p_{50}$ gap & mean gap \\
\midrule
Q1 & 202 & 7.45 & 8.85 & 21.30 & 87.10 & 4944 \\
Q2 & 320 & 7.52 & 8.75 & 21.87 & 84.52 & 3128 \\
Q3 & 451 & 7.53 & 8.70 & 22.70 & 82.16 & 2215 \\
Q4 & 620 & 7.51 & 8.66 & 23.36 & 81.98 & 1612 \\
Q5 & 942 & 7.43 & 8.58 & 22.55 & 76.52 & 1062 \\
\addlinespace
\textbf{Q5/Q1} & \textbf{4.66}$\times$ & \textbf{0.997}$\times$ & \textbf{0.969}$\times$ & \textbf{1.059}$\times$ & \textbf{0.878}$\times$ & \textbf{0.215}$\times$ \\
\bottomrule
\end{tabular}
\end{center}

\begin{figure}[H]
\centering
\includegraphics[width=0.9\textwidth]{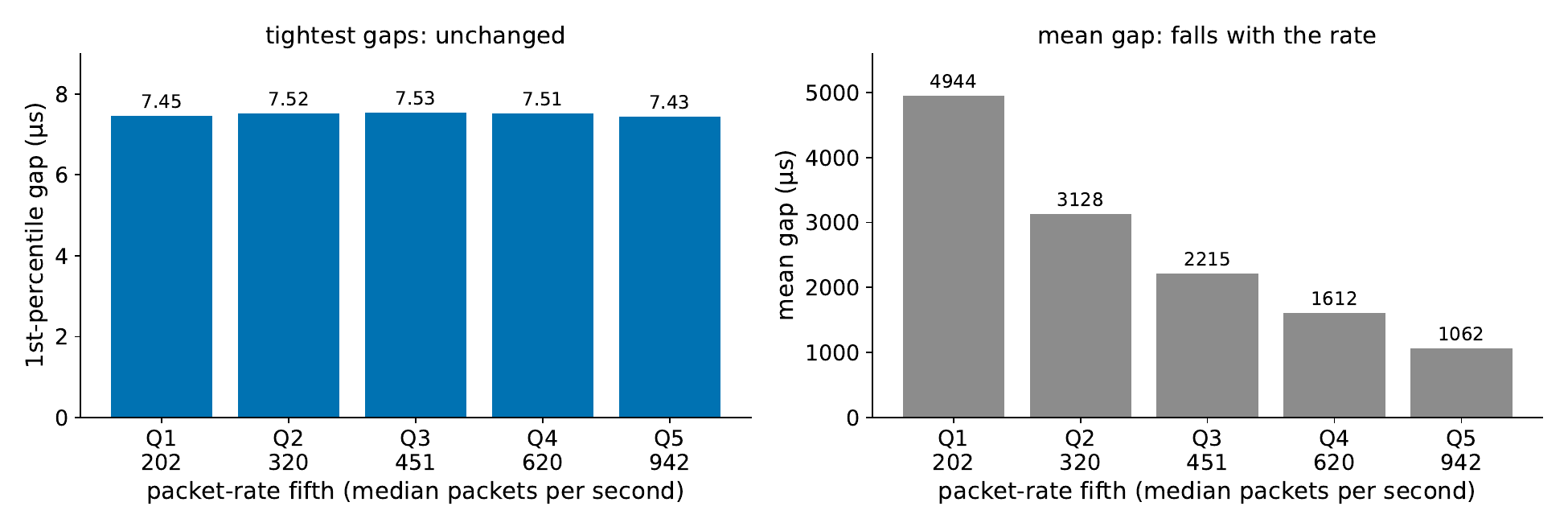}
\caption{The gap-quantile table above as bars, by packet-rate fifth. Left: the tightest gaps
($1$st percentile) stay at $7.43$--$7.53\,\mu\mathrm{s}$ while the rate more than
quadruples. Right: the mean gap falls in proportion to the rate. A busy market has more
bursts, not tighter ones, and the tightest gap is what starts a queue.}
\label{fig:bars-gap-quantiles}
\end{figure}

All gaps in microseconds. The mean gap falls in proportion to the rate, which is
definitional. Every other column is nearly constant, and the tightest gap is the most
constant of all: it stays at the publisher period while the rate more than quadruples
(Figure~\ref{fig:bars-gap-quantiles}).

Raising the intensity of a self-exciting process raises the rate at which clusters are
generated; it does not change the spacing of events inside a cluster. The tight end of that spacing is bounded by the publisher period, a property of the
exchange's publisher rather than of the market state
(Section~\ref{sec:exchange-publisher}). The events on either side of such a gap are
distinct transactions (Section~\ref{sec:results-transactions}, Table~\ref{tab:tx-gaps}),
and how closely successive transactions within a burst follow one another is likewise
invariant to the window's rate. A busy window contains more bursts, not tighter bursts.

The gap statistics on the publisher's clock are censored at the publisher period: two
transactions that the engine processes closer together than $\spub$ reach the receiver one
period apart or, less often, in one packet (Section~\ref{sec:exchange-engine}). A faster self-excitation would therefore show in
the packets rather than the gaps. It does not scale with the branching ratio: from the
lowest to the highest branching-ratio fifth, the share of multi-message packets, the mean
message count per packet and the share of packets carrying more than one transaction do
not rise, and across packet-rate fifths they barely move.

The queueing consequence: a deterministic server of service time $T$
begins to build a queue when it is handed two arrivals separated by less than $T$, so the
service time at which a tail appears is set by the \emph{tight} end of the gap
distribution, not by its mean. The tight end is invariant to $\bar\lambda$; therefore the threshold is invariant to
$\bar\lambda$. The two numbers agree: the tightest gap, $\spub$, the same in every
rate fifth, matches the tail onset of Section~\ref{sec:results-threshold}, between $7$ and
$8\,\mu\mathrm{s}$.

\paragraph{Robustness to window length.} A 30-minute window averages the rate over a long
interval, and the comparison above may have smoothed away the variation that matters. We
therefore reran the whole pipeline, from packet extraction and Hawkes fits to the Poisson
null and the full $(T, N)$ grid, on five-minute windows, six times as many. Shorter windows
do resolve more variation in rate and branching ratio, as expected. The conditioning
result is unchanged to two decimal places:

\begin{center}
\begin{tabular}{lrrrrrr}
\toprule
& $\bar\lambda$ Q5/Q1 & $T{=}8$ & $T{=}16$ & $T{=}32$ & $T{=}64$ & $T{=}128$ \\
\midrule
30-min windows ($n = 3512$)     & $4.66\times$ & 1.00 & 0.92 & 0.95 & 1.14 & 1.75 \\
\phantom{0}5-min windows ($n = 21{,}060$) & $4.98\times$ & 1.00 & 0.93 & 0.96 & 1.15 & 1.74 \\
\bottomrule
\end{tabular}
\end{center}

Entries are the ratio of the normalised single-thread tail $p_{99}/T$ in the top fifth of
windows to that in the bottom fifth. Resolving the rate six times more finely moves no
entry by more than $0.01$. The result is a property of the arrival process, not of the window
length.

\paragraph{Consequence for the design rule.} On NQ the rule can therefore be stated as a
constant of the feed rather than as a function of market state.

\begin{center}
\fbox{\begin{minipage}{0.92\linewidth}
\textbf{Design principle (on NQ, the threshold is a constant of the feed).} Split a stage whose
service time exceeds the publisher period; do not split one below it. Compare the
publisher period with the upper quantiles of the stage's per-packet service under load, not with its
median (Section~\ref{sec:concl-summary}). The publisher period is
the tight end of the feed's interarrival-gap distribution (the $p_1$ gap, about
$\spub$), not its mean, which is two orders of magnitude
larger. It is measured once per feed: on this corpus it is the same in every
packet-rate fifth and every branching-ratio fifth, so the rule requires no runtime estimate of market intensity or
clustering and does not change between a quiet and a volatile session. Both the publisher period and
the spacing of transactions inside a burst were measured on one instrument, NQ front-month on CME; recheck them on any other instrument or feed rather than
assume them.
\end{minipage}}
\end{center}

Appendix~\ref{app:synthetic} constructs a feed on which both dependences appear
at $T = 8\,\mu\mathrm{s}$, and identifies the packed-gap fraction as the quantity that
separates it from CME.

The publisher period is not the mean interarrival gap. On this corpus the mean gap is
about $2\,\mathrm{ms}$ in the median window, some three hundred times the publisher period. A rule
stated on the mean gap would recommend never splitting an HFT stage, and would be wrong
by two orders of magnitude.

\findings{
    \item \emph{Up to $32\,\mu\mathrm{s}$ market conditions do not change the tail relative to $T$}: a more than fourfold range of packet rate moves it by less than a tenth, the range of branching ratios by a few percent.
    \item \emph{From $64\,\mu\mathrm{s}$ both matter}, as utilisation queueing sets in: at $128\,\mu\mathrm{s}$ the busiest windows have about twice the tail of the quietest.
    \item \emph{A busy market has more bursts, not tighter ones}: the tightest gap stays at the publisher period while the rate more than quadruples.
    \item \emph{The same holds on five-minute windows}, and packet size does not rise with the branching ratio.
    \item \emph{A rule based on the mean gap} instead of the tightest would be wrong by two orders of magnitude.
}


\subsection{Sensitivity to the hop cost $h$}
\label{sec:results-h}

The rule depends on the cost of a mailbox hop, and this section asks what changes if that
cost is different. A cheaper hop makes splitting worthwhile over a wider range of task
lengths, but it does not move the point below which there is no tail to remove, since that
point is set by the feed and not by the framework. Any other framework can therefore
measure its own hop cost, put it into the same equation, and read off its own answer; a
framework with a very expensive hop should not split at all. The section also explains
why the hop cost used throughout is a cautious overestimate. It was measured with the
receiving thread asleep, so it includes the cost of waking it. Inside a burst the receiver
never sleeps, because messages arrive faster than it can process them, and the true hop
cost there is a few hundred nanoseconds rather than nearly two microseconds. The benefit
of splitting during a burst is therefore, if anything, understated.

The design equation is monotone in the hop cost: every breakpoint in the table above
scales in proportion to $h$, so a cheaper hop can only raise the best stage count. At a hypothetical $0.5\,\mu\mathrm{s}$ busy-poll mailbox the best stage
count changes at one service time only, $8\,\mu\mathrm{s}$, where it rises from one to
two; below the publisher period no reduction in $h$ can make a split worth taking when
there is no tail to remove. A cheaper hop widens the band of service times over which
splitting pays; it does not move the threshold, which is set by the feed. For any other
framework the procedure is the same: measure the one-way asynchronous hop cost $h$ on its
own dispatch path, put it into Equation~\eqref{eq:design99} with the corpus $\Delta$ of
Table~\ref{tab:delta}, and read off $N^\star$; a framework whose $h$ exceeds the tail excess
a split could remove gets $N^\star = 1$, and the recommendation reverses.

\findings{
    \item \emph{A cheaper hop widens the range where splitting pays but does not move the threshold}: at a third of the hop cost the best stage count changes only at $8\,\mu\mathrm{s}$.
    \item \emph{The measured hop cost is an upper bound}: inside a burst the receiver never sleeps and a hop costs a few hundred nanoseconds.
}



\section{The feed behind the packets: matching engine and publisher}
\label{sec:exchange}

The receiver's arrivals are the output of the exchange. This section looks one step
upstream, at the two CME systems that produce them, through the only window the public feed
opens on them: the timestamps it carries. It treats the exchange as a queueing system
observed from outside and makes no claim about CME's internal design beyond what the MDP3
specification states. All figures are for NQ front-month.

\subsection{Three timestamps}
\label{sec:exchange-clocks}

Every message on the feed carries two exchange timestamps, and the receiver adds a third.
\texttt{transactTime} is the start of event processing at the matching engine: when the
engine began to match, add or cancel the order behind the message. \texttt{sendingTime}
is written into each packet by CME's market-data publisher when it sends the packet.
Both come from CME's synchronised clocks. The receiver's own timestamp, taken at the
socket read, is the third. The path is
\[
\text{order entry} \;\to\; \underbrace{\text{matching engine}}_{\texttt{transactTime}}
\;\to\; \underbrace{\text{publisher}}_{\texttt{sendingTime}} \;\to\; \text{network}
\;\to\; \underbrace{\text{receiver}}_{\text{socket read}}.
\]
The simulator of Part~II uses \texttt{sendingTime} as each packet's arrival time, since the
receiver sees the publisher period plus a network delay that is close to constant. The
feed does not carry the time an order entered the engine, and nothing in it shows the
engine's or the publisher's internal queues directly. The MDP3 end-of-event flag marks the
last message of each matching event.

\subsection{Transactions and events}
\label{sec:exchange-events}

Section~\ref{sec:results-transactions} grouped packets into blocks whose
\texttt{transactTime} ranges overlap. Regrouping by the exact \texttt{transactTime} value,
and checking against the end-of-event flag, on 40 sessions, confirms the grouping. A
\texttt{transactTime} group almost never contains or straddles more than one end-of-event
flag, so one \texttt{transactTime} is one matching event. The flag itself cannot serve as
the grouping key on a single-instrument tape, since most NQ events end on a message for
another security on the channel. Under exact grouping the share of transactions that fit
in one packet is the same to within a tenth of a percent, and there are a few percent more
transactions than blocks, the difference being transactions that share a packet. The
results of Section~\ref{sec:results-transactions} are unchanged at the precision
reported.

\findings{
    \item \emph{One \texttt{transactTime} is one matching event}: groups almost never conflict with the end-of-event flag.
    \item \emph{Regrouping changes nothing reported}: the share of transactions that fit in one packet is unchanged.
}

\subsection{Gaps between transactions on the engine and publisher clocks}
\label{sec:exchange-engine}

\begin{table}[H]
\centering
\small
\begin{tabular}{@{}l rrrr@{}}
\toprule
gap between consecutive transactions & $< 7.5\,\mu\mathrm{s}$ & $< 16\,\mu\mathrm{s}$ & $< 32\,\mu\mathrm{s}$ & $p_1$ gap \\
\midrule
engine clock (\texttt{transactTime})    & $17.1\%$ & $25.4\%$ & $35.4\%$ & $0.18\,\mu\mathrm{s}$ \\
\quad range across sessions             & $14.4$--$22.9$ & $22.3$--$31.8$ & $31.5$--$40.7$ & $0.15$--$0.21$ \\
publisher clock (\texttt{sendingTime})  & $1.04\%$ & $17.8\%$ & $32.1\%$ & $7.46\,\mu\mathrm{s}$ \\
\quad range across sessions             & $0.44$--$1.52$ & $15.5$--$22.2$ & $28.6$--$36.9$ & $7.23$--$7.88$ \\
\bottomrule
\end{tabular}
\caption{The same NQ transactions timed by the matching engine and by the publisher;
median session and full range over 40 sessions. What the table shows: one consecutive
transaction in six leaves the engine less than one publisher period after the previous
one, and the tightest one percent are $0.18\,\mu\mathrm{s}$ apart; almost none leave the
publisher that close, and the publisher's tightest one percent are $7.46\,\mu\mathrm{s}$
apart. The difference holds in every session. What follows: the
shortest gap the receiver sees is the publisher period, not a
property of the market's event process; events the engine produces faster than that are
either held back or packed into one packet. Figure~\ref{fig:bars-engine-clock} shows the same numbers as bars.}
\label{tab:engine-clock}
\end{table}

\begin{figure}[H]
\centering
\includegraphics[width=0.72\textwidth]{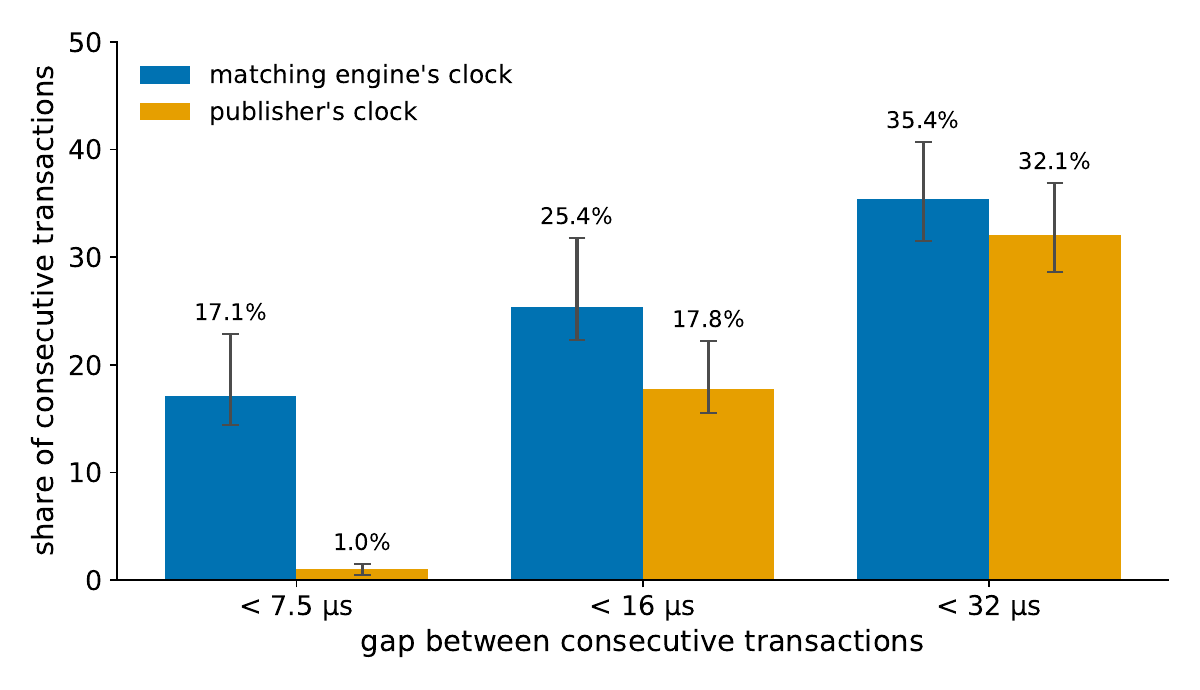}
\caption{Table~\ref{tab:engine-clock} as bars: share of consecutive NQ transactions closer than
$7.5$, $16$ and $32\,\mu\mathrm{s}$, on the engine's clock (blue) and the publisher's
(orange); whiskers span the 40 sessions. Below the publisher period the engine has one
gap in six and the publisher almost none: the publisher does not send faster than its period. By $32\,\mu\mathrm{s}$ the two clocks nearly agree.}
\label{fig:bars-engine-clock}
\end{figure}

\begin{center}
\fbox{\begin{minipage}{0.92\linewidth}
\textbf{Finding.} On NQ, one consecutive transaction in six ($17.1\%$) is processed by the
matching engine less than one publisher period after the previous one, against $1.04\%$
of consecutive packets sent by the publisher. The publisher period of the receiver's
arrival stream is set by the exchange's publisher. The engine's own gaps reach
$0.18\,\mu\mathrm{s}$ at the first percentile, which means the engine went directly from
one event to the next with the next order already waiting: at those moments it was
working through a backlog of orders that reached it within a fraction of a microsecond of
each other.
\end{minipage}}
\end{center}

The receiver does not see these transactions together. Table~\ref{tab:two-clocks} follows
every gap between consecutive NQ transactions from the engine's clock to the
publisher's (the two clocks are defined in Section~\ref{sec:results-transactions}): for the
same two transactions, the table gives the gap at the engine and the gap between the
packets that carry them.

\begin{table}[H]
\centering
\small
\begin{tabular}{@{}l r rrrrrr@{}}
\toprule
 & share of & \multicolumn{6}{c}{gap on the publisher's clock (row \%)} \\
\cmidrule(l){3-8}
gap on the engine's clock & gaps & same packet & $<7.5$ & $7.5$--$10$ & $10$--$16$ & $16$--$100$ & $>100\,\mu\mathrm{s}$ \\
\midrule
$< 1\,\mu\mathrm{s}$            & $5.3\%$  & $15.7$ & $3.7$ & $43.0$ & $16.1$ & $19.9$ & $1.7$ \\
$1$--$7.5\,\mu\mathrm{s}$       & $11.8\%$ & $10.0$ & $3.8$ & $32.1$ & $20.2$ & $29.9$ & $4.1$ \\
$7.5$--$16\,\mu\mathrm{s}$      & $8.5\%$  & $8.8$  & $2.1$ & $21.5$ & $13.6$ & $48.7$ & $5.3$ \\
$16$--$100\,\mu\mathrm{s}$      & $31.4\%$ & $4.9$  & $0.5$ & $4.9$  & $4.0$  & $71.1$ & $14.7$ \\
$> 100\,\mu\mathrm{s}$          & $43.0\%$ & $0.8$  & $0.0$ & $0.1$  & $0.2$  & $15.1$ & $83.7$ \\
\midrule
all gaps                        & $100\%$  & $4.6$  & $1.0$ & $9.5$  & $5.7$  & $37.5$ & $41.6$ \\
\quad of which engine gap $< 7.5\,\mu\mathrm{s}$ & & $43.3$ & $66.2$ & $64.0$ & $56.4$ & $12.2$ & $1.4$ \\
\bottomrule
\end{tabular}
\caption{The gap between consecutive NQ transactions, measured on two clocks for the same two
transactions, over 20 sessions. Rows: the gap on the matching engine's clock. Columns: the
gap between the packets that carry the two transactions, on the publisher's clock, as a
percentage of the row; ``same packet'' means both were sent in one packet. The last row
gives, for each publisher gap, the share of cases in which the engine processed the two
transactions less than $\spub$ apart. What the table shows: when the engine
processes a transaction less than $1\,\mu\mathrm{s}$ after the previous one, the two are
sent in separate packets $84\%$ of the time, most often $7.5$ to $10\,\mu\mathrm{s}$ apart.
Read the other way, most of the short gaps the receiver sees separate transactions that
the engine processed almost together. What follows: transactions do not reach the
receiver together. The publisher spaces them out to its own interval of about
$\spub$ and packs only a few into one packet, so the short gaps that load the
receiver are, for the most part, near-simultaneous engine events delivered back to back at
the publisher period. Figure~\ref{fig:bars-two-clocks} shows the same numbers as bars.}
\label{tab:two-clocks}
\end{table}

\begin{figure}[H]
\centering
\includegraphics[width=0.95\textwidth]{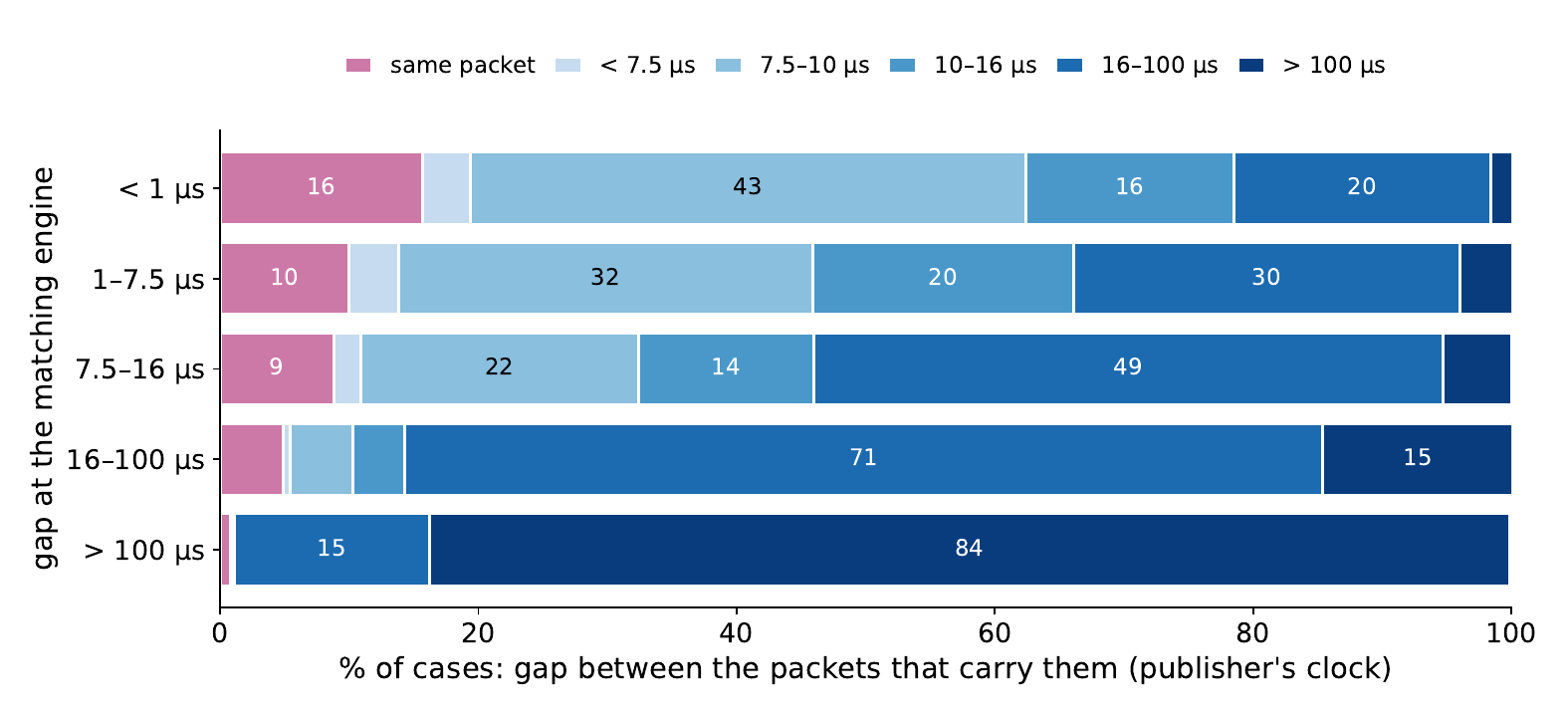}
\caption{Table~\ref{tab:two-clocks} as bars. Each bar is the set of consecutive transactions with
the engine gap on the left, split by the gap between the packets that carry them; the
numbers are percent of the bar. The top bar reads: when the engine processes a
transaction less than $1\,\mu\mathrm{s}$ after the previous one, the two share a packet
$16\%$ of the time (pink) and otherwise arrive in separate packets, most often
$7.5$--$10\,\mu\mathrm{s}$ apart ($43\%$). Near-simultaneous at the engine becomes back to
back at the publisher period at the receiver.}
\label{fig:bars-two-clocks}
\end{figure}

\begin{center}
\fbox{\begin{minipage}{0.92\linewidth}
\textbf{Finding.} Transactions the engine processes a fraction of a microsecond apart do
not reach the receiver together: $84\%$ of them arrive in separate packets, most of them
$7.5$ to $16\,\mu\mathrm{s}$ apart. What the receiver sees is a train of packets at the
publisher period. Its queue grows when its service time exceeds the publisher period, which is
why the threshold of Section~\ref{sec:results-threshold} sits at about
$\spub$. The open market question is why orders reach the engine within a
microsecond of each other; what reaches the receiver is set by the publisher.
\end{minipage}}
\end{center}

\subsection{The publisher as a queue}
\label{sec:exchange-publisher}

\begin{center}
\fbox{\begin{minipage}{0.92\linewidth}
\textbf{Definition (publisher period, $\spub$).} The minimum period between two consecutive
packets from the exchange's market-data publisher: the publisher's service time per
packet, and the inverse of its maximum send rate. On NQ it is about $7.5\,\mu\mathrm{s}$,
measured as the $1$st-percentile gap between packets ($7.43$--$7.53\,\mu\mathrm{s}$,
Section~\ref{sec:results-conditioning}). Two packets never arrive closer together than
$\spub$; during a burst they arrive exactly $\spub$ apart. It is the publisher's period,
not the matching engine's: the engine processes transactions much closer together
(Table~\ref{tab:engine-clock}). The paper refers to it by name throughout.
\end{minipage}}
\end{center}

If the publisher sent events on a fixed schedule, the time from engine processing to
packet send, $\texttt{sendingTime} - \texttt{transactTime}$, would not depend on how busy
the engine is. It does. Its median is about $100\,\mu\mathrm{s}$ and its far tail several
milliseconds, and it rises steeply with the number of transactions the engine processed in
the preceding $50\,\mu\mathrm{s}$ (Table~\ref{tab:publisher}).

\begin{table}[H]
\centering
\small
\begin{tabular}{@{}r r rr r@{}}
\toprule
other NQ transactions in the & share of & \multicolumn{2}{c}{publisher delay} & published sharing a packet \\
\cmidrule(lr){3-4}
previous $50\,\mu\mathrm{s}$ (engine clock) & transactions & $p_{50}$ & $p_{99}$ & with another transaction \\
\midrule
0      & $57.9\%$ & $86\,\mu\mathrm{s}$  & $2.9\,\mathrm{ms}$  & $4.2\%$ \\
1      & $24.1\%$ & $113\,\mu\mathrm{s}$ & $3.5\,\mathrm{ms}$  & $10.0\%$ \\
2--3   & $13.8\%$ & $205\,\mu\mathrm{s}$ & $4.4\,\mathrm{ms}$  & $15.2\%$ \\
4--7   & $3.8\%$  & $406\,\mu\mathrm{s}$ & $6.8\,\mathrm{ms}$  & $20.8\%$ \\
8--15  & $0.4\%$  & $912\,\mu\mathrm{s}$ & $15.0\,\mathrm{ms}$ & $26.6\%$ \\
\bottomrule
\end{tabular}
\caption{Delay from engine processing to packet send, $\texttt{sendingTime} -
\texttt{transactTime}$, by how busy the engine was in the $50\,\mu\mathrm{s}$ before each
transaction, over 20 sessions. What the table shows: the median delay grows about tenfold
from a transaction that follows no other to one that follows eight to fifteen, the
$p_{99}$ grows with it, and the share of transactions sent in a packet with another rises
several times over. What follows: a fixed publishing schedule would keep the delay flat.
Delay and packing growing together is the signature of a queue: the publisher sends about
one packet every $7.5\,\mu\mathrm{s}$, events the engine produces faster than that wait,
and the backlog is packed into fuller packets. The burst measure counts NQ front-month
transactions only, while the publisher serves the whole channel, so the table understates
the load on the publisher. Figure~\ref{fig:bars-publisher} shows the same numbers as bars.}
\label{tab:publisher}
\end{table}

\begin{figure}[H]
\centering
\includegraphics[width=0.9\textwidth]{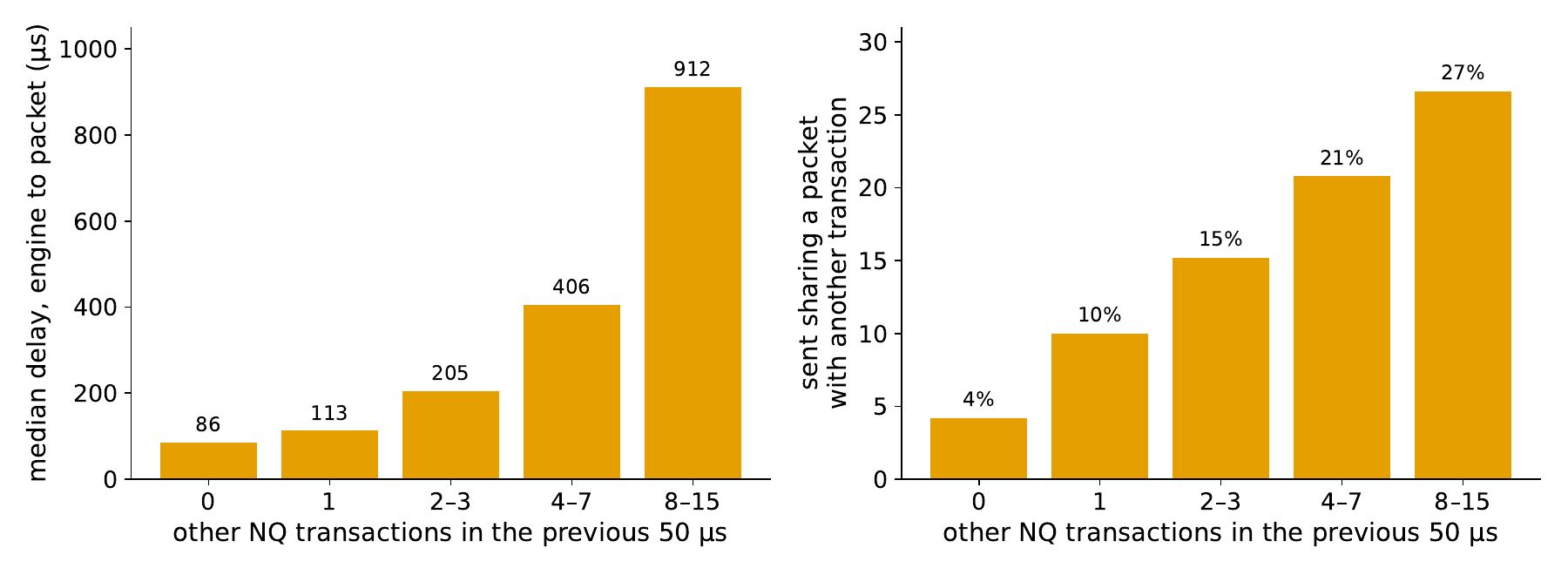}
\caption{Table~\ref{tab:publisher} as bars. Left: median delay from engine to packet grows
tenfold as the engine gets busier. Right: the share of transactions sent in a packet with
another transaction grows from $4\%$ to $27\%$. Both rising together is what a queue at
the publisher looks like.}
\label{fig:bars-publisher}
\end{figure}

A fixed publishing policy would keep the delay flat. Delay and packing growing together
with the engine's load is the signature of a queue with service time $\spub$.

This accounts for the publisher period from first principles. The output of a deterministic server
cannot contain two departures closer together than its service time, and while the server
works through a backlog its departures are spaced by exactly that time. The edge of the
receiver's gap distribution at $\spub$ (Figure~\ref{fig:tx-gaps}) and the
mass of gaps just above it are the publisher draining its backlog.

\findings{
    \item \emph{The delay from engine to packet grows with engine load}, about tenfold from a quiet moment to a burst.
    \item \emph{Packing grows with it}: the share of transactions sent in a shared packet rises several times over.
    \item \emph{The publisher is a queue} with service time $\spub$ per packet; the shortest gap the receiver sees is its period.
}

\subsection{Packet size under publisher backlog}
\label{sec:exchange-drain}

If the publisher's service time is fixed per packet, a backlog can be drained in two ways:
by keeping the period and sending many packets, or by packing more into each packet. The
end-of-event tapes separate the two. On 20 sessions we take four things for each packet: the NQ front-month messages it
carries (its span), the number of distinct transactions in it, the gap to the previous NQ
packet on the publisher's clock, and the publisher delay of its first transaction,
$\texttt{sendingTime} - \texttt{transactTime}$. The last is the measure of how far behind
the publisher is.\footnote{Channel 318 carries other
instruments besides NQ front-month, and the publisher packs their messages into the same
packets. The span here counts NQ front-month messages only, so it is not the packet's
size in bytes, and the publisher's backlog includes work for instruments we do not
decode. Doing this properly means measuring the whole channel and modelling how activity
in one instrument excites activity in the others (cross-excitation, a multivariate
Hawkes process), rather than treating NQ as a stream on its own. On NQ the other
instruments are thin. On ZN they are not: the Treasury futures along the yield curve
trade against each other, a move in one is likely to trigger quotes and trades in the
others within microseconds, and where they share ZN's channel the publisher would carry
those reactions in the same packets. Cross-excitation along the curve is therefore a
candidate cause of the large ZN packets of Section~\ref{sec:crossval-spanrun}. It is
untested here and listed among the open questions of Section~\ref{sec:open-questions}.}

\begin{table}[H]
\centering
\small
\begin{tabular}{@{}l r r r@{}}
\toprule
publisher delay of the & share of & mean span & packets with $\ge 2$ \\
packet's first transaction & packets & (NQ messages) & transactions \\
\midrule
$< 100\,\mu\mathrm{s}$           & $52.3\%$ & $1.000$ & $0.02\%$ \\
$100$--$200\,\mu\mathrm{s}$      & $21.3\%$ & $1.016$ & $1.2\%$ \\
$200$--$500\,\mu\mathrm{s}$      & $14.9\%$ & $1.237$ & $8.9\%$ \\
$0.5$--$1\,\mathrm{ms}$          & $5.9\%$  & $1.278$ & $11.3\%$ \\
$1$--$5\,\mathrm{ms}$            & $5.2\%$  & $1.440$ & $18.2\%$ \\
$> 5\,\mathrm{ms}$               & $0.4\%$  & $1.603$ & $20.6\%$ \\
\bottomrule
\end{tabular}
\caption{Packet size against the publisher's backlog; 20 sessions, $2.43 \times 10^8$
packets. What the table shows: when the publisher is less than $100\,\mu\mathrm{s}$ behind,
packets carry exactly one NQ message; when it is more than $5\,\mathrm{ms}$ behind, they
carry $1.6$ on average and one in five holds two or more transactions. What follows:
packets get bigger when the publisher falls behind. Table~\ref{tab:drain-gap} shows which
packets. Figure~\ref{fig:bars-drain} shows the same numbers as bars.}
\label{tab:drain-delay}
\end{table}

\begin{table}[H]
\centering
\small
\begin{tabular}{@{}l r r r r r@{}}
\toprule
gap to previous & share of & mean span & packets with $\ge 2$ & adjacent on & median other \\
NQ packet & packets & (NQ messages) & transactions & the channel & packets between \\
\midrule
$< 7.5\,\mu\mathrm{s}$             & $1.0\%$  & $1.006$ & $0.08\%$ & $100\%$  & 0 \\
$7.5$--$10\,\mu\mathrm{s}$ (period) & $9.9\%$  & $1.007$ & $0.40\%$ & $100\%$  & 0 \\
$10$--$16\,\mu\mathrm{s}$          & $6.8\%$  & $1.096$ & $2.8\%$  & $95.7\%$ & 0 \\
$16$--$32\,\mu\mathrm{s}$          & $14.2\%$ & $1.141$ & $6.7\%$  & $36.2\%$ & 1 \\
$32$--$100\,\mu\mathrm{s}$         & $25.2\%$ & $1.073$ & $3.4\%$  & $20.6\%$ & 2 \\
$0.1$--$1\,\mathrm{ms}$            & $26.9\%$ & $1.106$ & $4.5\%$  & $12.9\%$ & 4 \\
\bottomrule
\end{tabular}
\caption{Packet size against the gap to the previous NQ packet on the publisher's clock;
same packets as Table~\ref{tab:drain-delay}. The last two columns count the channel's
other packets sent between two consecutive NQ packets, from the packet index of the tape,
which numbers every channel packet that carries order or trade records for any
instrument; packets carrying only other message types are not counted. What the table shows: packets sent at the publisher period carry one NQ message
($1.007$ on average) and in every case follow the previous NQ packet with no other order or
trade packet of the channel between them. Bigger packets appear after gaps of $10$ to $32\,\mu\mathrm{s}$,
which from $16\,\mu\mathrm{s}$ usually contain at least one packet for another instrument.
What follows: the publisher period is the period of the channel's order and
trade packets, not of NQ alone, and the publisher does not answer a backlog with bigger packets sent at the
publisher period.}
\label{tab:drain-gap}
\end{table}

\begin{table}[H]
\centering
\small
\begin{tabular}{@{}l r r@{}}
\toprule
publisher delay (packets at the period only, & mean span & packets with $\ge 2$ \\
gap $7.5$--$10\,\mu\mathrm{s}$) & (NQ messages) & transactions \\
\midrule
$< 100\,\mu\mathrm{s}$          & $1.000$ & $0.04\%$ \\
$100$--$500\,\mu\mathrm{s}$     & $1.006$ & $0.50\%$ \\
$0.5$--$5\,\mathrm{ms}$         & $1.026$ & $1.0\%$ \\
$> 5\,\mathrm{ms}$              & $1.114$ & $2.0\%$ \\
\bottomrule
\end{tabular}
\caption{Packets sent at the publisher period, by the publisher's backlog. What the table shows: even
more than $5\,\mathrm{ms}$ behind, packets sent back-to-back at the publisher period carry $1.1$ NQ
messages on average and $98\%$ of them hold a single transaction. What follows: pacing
holds under load. The publisher keeps its period when it is far behind and does not
switch to fuller packets at the same rate.}
\label{tab:drain-floor}
\end{table}

\begin{figure}[H]
\centering
\includegraphics[width=1.0\textwidth]{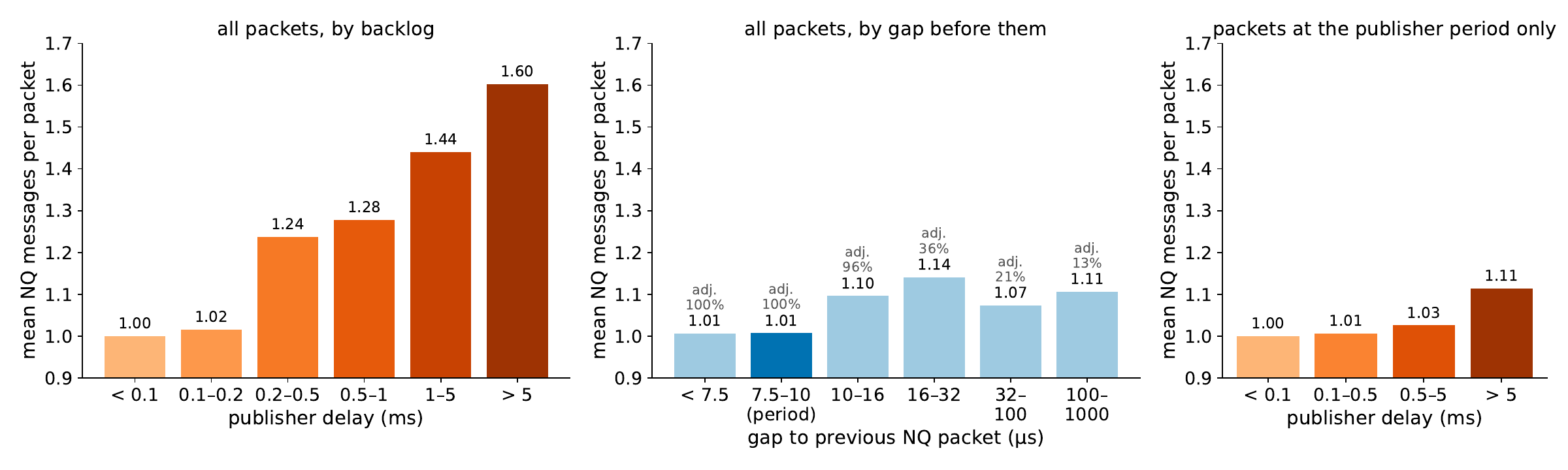}
\caption{Tables~\ref{tab:drain-delay}--\ref{tab:drain-floor} as bars: mean NQ messages per
packet. Left: packets get bigger as the publisher falls further behind. Middle: packets
sent at the publisher period, $7.5$--$10\,\mu\mathrm{s}$ after the previous NQ packet (dark blue),
are single-message packets; the grey ``adj.'' figures are the share with no other order or
trade packet of the channel between the two NQ packets, $100\%$ at the period. Right: packets at the period stay single-message even when the publisher is more than
$5\,\mathrm{ms}$ behind. The publisher drains a backlog with more packets, not bigger
ones.}
\label{fig:bars-drain}
\end{figure}

Three things follow. First, pacing holds under load: even when the publisher is
milliseconds behind, the packets it sends back-to-back are single-message packets at about
publisher period (Table~\ref{tab:drain-floor}), and that period holds across
the channel's order and trade packets (Table~\ref{tab:drain-gap}). Second, coalescing happens elsewhere. Bigger
NQ packets come after gaps of $10$ to $32\,\mu\mathrm{s}$, so when packets are bigger they
also leave more slowly, which fits a service time per packet that grows with its size, the
span-dependent service of Section~\ref{sec:crossval-spanrun}. Third, a receiver sees a
burst at the engine as more packets, not bigger ones: a long run of small packets at the
publisher period, with an occasional larger packet after a longer gap. This is the arrival pattern
that makes the receiver's service time, compared with $\spub$, the quantity
that sets its tail (Section~\ref{sec:exchange-tandem}).

\findings{
    \item \emph{Packets get bigger as the publisher falls behind}, from one NQ message each to about one and a half when it is milliseconds behind.
    \item \emph{Packets sent at the publisher period stay single-message even far behind}: the publisher drains a backlog with more packets, not bigger ones.
    \item \emph{The publisher period holds across the channel's order and trade packets}: packets at the period always follow the previous NQ packet with no other such packet between.
    \item \emph{Only NQ front-month messages are counted}; the full channel's packet size is an open question.
}

\subsection{The receiver as the second stage of a tandem starting at the exchange}
\label{sec:exchange-tandem}

The publisher and the receiver are two servers in series. When a burst at the engine
builds a backlog at the publisher, the publisher emits packets back-to-back at its own
period $\spub$. A receiver whose service time is shorter
than that keeps up and queues nothing; a receiver whose service time is longer falls
behind by the difference on every packet of the drain, and its backlog grows for as long
as the drain lasts. This is the threshold of Section~\ref{sec:results-threshold}, now
explained: the service time at which a receiver's tail appears equals the service time of
the exchange stage upstream of it. It is the tandem of Theorem~\ref{thm:reduction} read in
the other direction: a downstream stage slower than the upstream one is the bottleneck,
and it is the receiver's stage, not the publisher's, that sets the tail the receiver
pays. The two regimes of Section~\ref{sec:results-nulls} follow. At service times just
above the publisher period, a short drain is enough to build a queue, and what matters is how often
the publisher drains, which the gap shuffle preserves. At long service times the queue
needs a sustained drain, which is a run of engine bursts, and that is what the shuffle
destroys.

\findings{
    \item \emph{The receiver's tail threshold is the publisher's service time}: a receiver faster than $\spub$ per packet keeps up with any train the publisher sends; a slower one falls behind on every packet of the train.
    \item \emph{This explains the two regimes of the nulls}: short service times queue on each short drain, long ones on sustained runs of engine bursts.
}

\subsection{Bursts at the matching engine}
\label{sec:exchange-bursts}

Section~\ref{sec:exchange-engine} shows that the engine often processes events a fraction
of a microsecond apart, which means orders were waiting for it; the publisher then
delivers them to the receiver one publisher period apart
(Table~\ref{tab:two-clocks}). Why orders reach the engine within a fraction of a
microsecond of each other is the market question of Section~\ref{sec:concl-cannot}. The $0.18\,\mu\mathrm{s}$
spacing fits many participants sending orders in response to the same public signal, the
\emph{races} that \citet{aquilina2022arms} measured on the London Stock Exchange with
order-entry times the CME public feed does not carry. It is consistent with that
mechanism and does not prove it: a reaction to the previous transaction that is faster than
the engine's backlog would look the same from here. The same observation has been made on
Nasdaq equities from event times alone: \citet{noble2026realitygap} find a sharp mode in
inter-event times at the exchange round trip and about a quarter of events too fast to be
reactions to the previous one, which they read as correlated responses to a shared
signal.

Table~\ref{tab:tight-pairs} tests what transactions separated by a tight gap are. It
takes every gap between consecutive transactions on the engine clock in 40 sessions
($512$ million gaps) and compares the two transactions on either side of a tight gap
(below $16\,\mu\mathrm{s}$) with those on either side of a loose gap ($0.1$ to
$1\,\mathrm{ms}$). A transaction is labelled by the first matching kind among trade,
delete, new and change. Transactions across a tight gap are ordinary book traffic: their
mix is close to that across a loose gap, and only one tight gap in fifty follows a trade.
After a trade, the two differ: across a tight gap the next transaction is another trade
about five times as often as across a loose gap, and a delete about half again as often. A trade followed within microseconds by a further trade
or a cancellation is the pattern a race would leave, and it is more frequent across tight
gaps.
The table cannot measure how much of the engine's activity is racing: the losers of a race
to take liquidity are rejected and never reach the public feed, and a race to cancel
appears only as deletes, which dominate across tight and loose gaps alike. What the feed
shows is that transactions separated by tight gaps are mostly quote placements and
cancellations, and that those following a trade carry the race signature.

\begin{table}[ht]
\centering
\small
\begin{tabular}{lrr}
\hline
Consecutive transactions, engine clock & gap $<16\,\mu\mathrm{s}$ & gap $0.1$--$1\,\mathrm{ms}$ \\
\hline
share of all gaps & 25.8\% & 27.2\% \\
first is a trade & 2.0\% & 3.5\% \\
first is a delete & 41.2\% & 39.8\% \\
first is a new order & 46.0\% & 39.5\% \\
first is a change & 10.8\% & 17.2\% \\
\hline
after a trade: next is a trade & 11.7\% & 2.2\% \\
after a trade: next is a delete & 37.4\% & 23.5\% \\
after a trade: next is a new order & 45.5\% & 68.2\% \\
\hline
same side & 52.5\% & 53.9\% \\
same price & 20.5\% & 17.3\% \\
\hline
\end{tabular}
\caption{The two consecutive transactions on either side of a gap, by gap length on the
engine clock. Forty sessions, $512$ million gaps; the middle bin
($16$--$100\,\mu\mathrm{s}$, $31.5\%$ of gaps) is omitted. ``First'' is the earlier
transaction, ``next'' the later. Transactions across a tight gap are mostly quote
placements and cancellations, in about the same mix as across a loose gap; only $2\%$ of
tight gaps follow a trade. After a trade, a tight follow-up
is five times as likely to be another trade and more likely to be a cancellation, the
signature of a race. Rejected race orders do not appear on the public feed, so the table
does not measure the share of racing. Figure~\ref{fig:bars-tight} shows the same numbers as bars.}
\label{tab:tight-pairs}
\end{table}

\begin{figure}[H]
\centering
\includegraphics[width=1.0\textwidth]{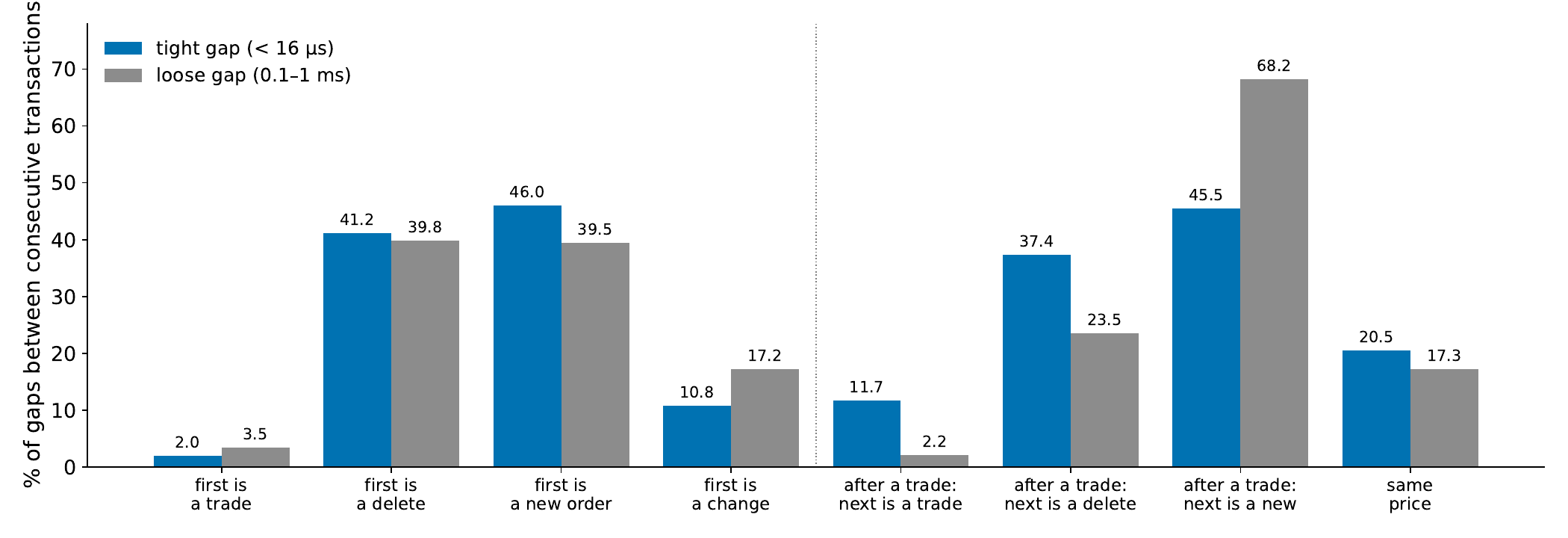}
\caption{Table~\ref{tab:tight-pairs} as bars: the kind of transaction on either side of a tight
gap (blue, under $16\,\mu\mathrm{s}$) and of a loose gap (grey, $0.1$--$1\,\mathrm{ms}$).
Left of the dotted line the two mixes are alike: tight gaps are ordinary quote traffic.
Right of it, after a trade the next transaction is far more often another trade or a
delete across a tight gap, the pattern a race would leave.}
\label{fig:bars-tight}
\end{figure}

The fitted self-excitation is a different phenomenon at a different scale. The Hawkes fit
on the engine clock gives the same branching ratio as on the publisher's clock, about
$0.8$, and a kernel that decays over about $160\,\mu\mathrm{s}$, so it places only a few
percent of its triggering within one publisher period and nearly all of it within a
millisecond. That is the scale of the mode near $100\,\mu\mathrm{s}$ in the gap distribution
(Figure~\ref{fig:tx-gaps}) and of the runs that carry the tail from $64\,\mu\mathrm{s}$. The
self-excitation the fit measures is one transaction making another more likely over the
following hundred microseconds or so; it accounts for the runs and does not account for
transactions a fraction of a microsecond apart, which reach the engine together rather
than one triggering the next.

\findings{
    \item \emph{Transactions across a short engine gap are ordinary quote traffic}: the mix of new orders and deletes matches that across long gaps, and few short gaps follow a trade.
    \item \emph{After a trade, the race pattern shows}: the next transaction is far more often another trade or a delete. The share of racing is not measurable on the public feed.
    \item \emph{The fitted self-excitation acts over about $160\,\mu\mathrm{s}$} and explains the runs, not transactions a fraction of a microsecond apart.
}

\subsection{Conjecture: the publisher's capacity as an information-side speed limit}
\label{sec:exchange-conjecture}

\begin{center}
\fbox{\begin{minipage}{0.92\linewidth}
\textbf{Conjecture.} Because the publisher sends at most one packet per
$\spub$ and queues whatever the engine produces faster than that, it bounds
the rate at which market information reaches participants, and during bursts it delivers that information late: typically by hundreds of
microseconds, and by milliseconds in the tail (Table~\ref{tab:publisher}). Any feedback
loop in which participants react to published data therefore cannot run faster than the
publisher, and a cascade driven by public information unfolds more slowly than the
engine could process it. In this sense the publisher acts as a speed limit on the
information side of the market, whether or not it was designed as one, and protects
downstream receivers from the engine's peak rate.
\end{minipage}}
\end{center}

Three qualifications bound the conjecture. First, delaying feedback does not reduce the
branching ratio of a self-exciting process: each event still triggers the same number of
follow-on events on average, later, so the publisher cannot make a critical cascade
subcritical; it bounds the cascade's speed, not its size. Second, it does not bound
simultaneous reactions to a common signal that reach the engine through order entry, which
are what the engine-clock gaps of Table~\ref{tab:engine-clock} suggest; those do not pass
through the publisher before they are submitted. Third, whether the publisher's capacity
is a design choice or an engineering limit cannot be read from the feed; the queueing
signature of Table~\ref{tab:publisher} points to a limit. The conjecture is related to
proposals that slow markets deliberately to blunt speed races, such as frequent batch
auctions \citep{budish2015fba}; the difference is that here the delay is on the
information leaving the exchange, not on orders entering it.

\subsection{Summary of the findings on the exchange side}
\label{sec:exchange-summary}

The section follows each transaction from the matching engine, through the publisher, to
the receiver. What it finds, in order:

\findings{
    \item \emph{The matching engine processes transactions far closer together than the receiver sees them.} On the engine's clock one consecutive transaction in six is under $\spub$ apart, the tightest a fraction of a microsecond; on the publisher's clock $1\%$ are (Table~\ref{tab:engine-clock}).
    \item \emph{The publisher is a queue with service time $\spub$ per packet.} Its delay grows about tenfold with engine load (Table~\ref{tab:publisher}).
    \item \emph{Near-simultaneous transactions reach the receiver as a train of packets one period apart, not together.} Of transactions processed within $1\,\mu\mathrm{s}$ of the previous one, $84\%$ are sent in a separate packet, most $7.5$--$16\,\mu\mathrm{s}$ later (Table~\ref{tab:two-clocks}).
    \item \emph{Under backlog the publisher sends more packets, not bigger ones.} Packets grow when it is far behind, but those sent at the publisher period stay single-message (Tables~\ref{tab:drain-delay}--\ref{tab:drain-floor}).
    \item \emph{The receiver is the second queue in a tandem that starts at the exchange.} Its tail threshold is the publisher period, which is why the threshold sits at $\spub$ and does not move with market load.
    \item \emph{Transactions across a tight engine gap are ordinary quote traffic; after a trade, the race pattern shows} (Table~\ref{tab:tight-pairs}). Why orders reach the engine within a microsecond of each other is left open.
    \item \emph{The fitted self-excitation acts over about $160\,\mu\mathrm{s}$.} It explains the runs that carry the tail at long service times, not the sub-microsecond engine gaps.
    \item \emph{Conjecture}: the publisher's capacity acts as a speed limit on information leaving the exchange (Section~\ref{sec:exchange-conjecture}).
}


\section{The analytical framework checked against the simulations}
\label{sec:evaluation}

This section checks the closed-form predictions of the analytical framework
(Section~\ref{sec:theory}) against the results of the simulations on the recorded corpus
(Section~\ref{sec:results}): each proposition or theorem is stated with the prediction it
makes and the simulated numbers that test it.

\paragraph{Proposition~\ref{thm:burst} (burst-limit throughput).} The single-message
identity $p_{50}(N) = T + (N-1) h$ is predicted exactly by Proposition~\ref{thm:burst} for
any arrival law when the median message finds the queue empty. Table~\ref{tab:main-corpus}
confirms it in every cell of the Poisson stream and in all but three cells of the real
stream, to the reporting precision; the three exceptions are the longest service times,
where the median message itself waits (Section~\ref{sec:results-median}).

\paragraph{Theorem~\ref{thm:reduction} (reduction and scaling bound).} The prediction is
$\Delta(N) \le \Delta(1)/N$ on every window, and more strongly $W_j^{(N)} \le W_j^{(1)}/N +
(N-1)h$ for every individual message. The corpus-median ratios in Table~\ref{tab:delta}
are at or below $1/N$ in every cell, with a wide margin: the closest any cell comes is about
a third of the bound. The margin is largest where a split drives the per-stage service
below the publisher period and removes the tail outright rather than dividing it, which is
why the fitted decay exponent is about three rather than one. The per-message form was
checked directly by the simulator on every window, every service time, all four
packet-level streams and every stage count: no message violates it. The reduction identity,
Equation~\eqref{eq:reduction}, is visible in Table~\ref{tab:delta} without any fitting:
$\Delta$ is constant along every diagonal of constant $T/N$, so the tandem's tail is the
single-server tail at $s_{\max}$ on real arrivals exactly as on synthetic ones
(Appendices~\ref{app:check} and~\ref{app:unequal}).

\paragraph{Corollary~\ref{cor:poisson} (Poisson null).} The prediction is
$p_{q}^{\mathrm{P}}(N) = T + (N-1)h$ wherever the per-stage utilisation satisfies
$\rho / N < 1 - q$, and a visible wait wherever it does not. Table~\ref{tab:main-corpus}
matches both halves: the Poisson $p_{99}$ exceeds the no-queue latency in exactly the six
cells whose per-stage utilisation is above $1\%$, all at long service times, and in no other
cell. The Poisson $p_{99.9}$ likewise exceeds it exactly where the per-stage utilisation is
above $0.1\%$. The Poisson stream is therefore a working queue whose small wait appears
where theory says it must, and splitting under it costs $(N-1)h$ at every reported
quantile.

\paragraph{The renewal null against the cluster picture.} The gap shuffle of
Section~\ref{sec:results-nulls} is a renewal stream with the real gap distribution, so
under it the chance that a short gap is followed by another equals the share of short gaps
$F(T)$, and runs of short gaps are as long as independent draws make them. The cluster
picture of Section~\ref{sec:theory} predicts that the real stream has longer runs, and more
so at longer service times, where many members of one cluster fall within one service
time. Table~\ref{tab:renewal} compares the two on 254 windows from 20 sessions. The real
runs are longer than the renewal ones at every service time, and the excess grows with $T$
from $16\,\mu\mathrm{s}$: at $16\,\mu\mathrm{s}$ the runs are within a few percent of
renewal and the shuffle keeps most of the tail, while at $128\,\mu\mathrm{s}$ they are a
quarter longer and the shuffle keeps little. At $8\,\mu\mathrm{s}$ short gaps are strongly
bunched, six times the renewal chance, yet the shuffle keeps most of the tail there too,
because the tail at that service time is a fraction of a microsecond and consists of
single waits that need no run.

\begin{table}[H]
\centering
\small
\begin{tabular}{@{}lrrrrr@{}}
\toprule
$T$ (\si{\micro\second}) & 8 & 16 & 32 & 64 & 128 \\
\midrule
chance a short gap is followed by another, real      & 0.120 & 0.209 & 0.363 & 0.534 & 0.725 \\
same, renewal stream (share of short gaps $F(T)$)  & 0.020 & 0.183 & 0.325 & 0.456 & 0.655 \\
mean run of short gaps, real                          & 1.14 & 1.26 & 1.57 & 2.14 & 3.63 \\
mean run of short gaps, renewal                       & 1.02 & 1.22 & 1.48 & 1.84 & 2.90 \\
\bottomrule
\end{tabular}
\caption{Runs of short gaps (gaps shorter than $T$) on the real stream against a renewal
stream with the same gaps in random order; per-window medians over 254 windows from 20
sessions. What the table shows: short gaps follow one another more often than chance at
every service time, and the real runs grow longer than the renewal ones as $T$ grows. What
follows: the runs are what the gap shuffle destroys, which is why it removes more of the
tail at long service times (Table~\ref{tab:nulls}).}
\label{tab:renewal}
\end{table}

\paragraph{The cluster representation against Section~\ref{sec:results-transactions}.}
The Hawkes--Oakes picture of Section~\ref{sec:theory-cluster} is one of immigrant events
each generating a cluster of offspring; Section~\ref{sec:results-transactions} identifies
the events as matching-engine transactions. Two of its findings are what the picture
predicts if the feed relays events one packet each: the branching ratio fitted on
transaction starts equals the one fitted on packets ($0.795$ against $0.798$), and
merging each transaction into one packet or decoupling transaction sizes from their
timing leaves the tail unchanged (Table~\ref{tab:tx-arms}). The corpus also shows that
the exchange does not add clustering of its own, since the pieces of a split transaction
are paced at $15\,\mu\mathrm{s}$, above the publisher period. What the corpus does not test is the
cluster-size law itself: the Borel distribution of Appendix~\ref{app:borel} is a
statement about offspring counts, which the data do not label, and its check in
Appendix~\ref{app:borel} remains indirect.

\paragraph{Summary.} All three analytical statements hold on the full corpus. The
per-message bound of Theorem~\ref{thm:reduction} holds on every message of every window
under all four packet-level streams, and the tail falls far faster than the bound requires:
where a split takes each stage below the publisher period, it removes the tail rather
than dividing it. The renewal stream behaves as the cluster picture says it should,
keeping the one-gap tail at short service times and losing the run-driven tail at long
ones, and the transaction analysis places the clusters in the matching engine's event
stream.

\findings{
    \item \emph{Proposition~\ref{thm:burst} holds}: median $= T + (N-1)h$ in all but three cells, the exceptions at the longest service times.
    \item \emph{Theorem~\ref{thm:reduction} holds on every message of every window}, and the tail falls far faster than its $1/N$ bound.
    \item \emph{Corollary~\ref{cor:poisson} holds}: the Poisson wait appears exactly where utilisation per stage exceeds $1 - q$.
    \item \emph{Runs of short gaps are longer than a renewal stream's}, increasingly so with $T$, which is why shuffling the gaps removes more of the tail at long service times.
}

\clearpage

\section{Cross-validation against a live production receiver}
\label{sec:crossval}

The results of Part~II so far are measurements on recorded data: pcap-derived message
tapes replayed offline. That design gives scale and repeatability at the cost of external
validity. A replayed tape cannot show whether the fitted clustering survives the network
path to a receiver, whether it is specific to the 2025--2026 sample, or whether the
service times the sweep assumes correspond to production decode times.

This section addresses all three with instrumentation of the \kasparhft{} production system. The two
measurements were built for different purposes and share only the exchange they observe.
(The live measurement was taken seven months after the corpus ends, so it also tests
whether the clustering is specific to the period the corpus covers.)
One limit is stated at the outset. The live per-message record carries the receive and
publish timestamps, the ring depth and the position in the datagram, but not the
message's \texttt{transactTime}. The transaction-level identification of
Section~\ref{sec:results-transactions} is therefore a corpus result that the live data
cannot repeat. What the live data cross-validate is the packet process; the transaction
result rests on the corpus alone.

\subsection{The recorded corpus and the live measurement compared}
\label{sec:crossval-setup}

The live measurement instruments the \kasparhft{} market-data path in production: a
timestamp \texttt{t0} when a UDP datagram is handed up from the socket, a timestamp
\texttt{t1} when the resulting book update is published, and one 16-byte record per
admitted message carrying the pair together with the ring-buffer depth and the message's
position inside its datagram. It covers 6.81 million messages on six streams --- ES, NQ
and ZN futures, book and trade --- over 53 minutes of a single afternoon. Its latency
results are reported in \citet{mayeski_fastsend}; we draw on them here, and on the
underlying per-message records, as a control on Part~II.

\begin{table}[H]
\centering
\small
\begin{tabular}{@{}lll@{}}
\toprule
 & Part~II corpus & production measurement \\
\midrule
source        & pcap tapes, replayed offline  & live multicast, production recorder \\
period        & 2025-01 to 2026-02            & 2026-09-16 \\
instruments   & NQ front month                & ES, NQ, ZN $\times$ book, trade \\
arrival clock & CME publisher \texttt{sendingTime} & receiver \texttt{t0} at socket read \\
estimator     & per-window, median of 3512 windows & pooled over one 53-min capture \\
sample        & 3512 windows, 276 sessions    & 6.81 M messages, 6 streams \\
\bottomrule
\end{tabular}
\caption{The two measurements compared in this section: different instruments, a period
that does not overlap, a different arrival clock, and a different estimator.}
\label{tab:crossval-setup}
\end{table}

The arrival-clock difference is the most significant. Part~II keys arrivals on
\texttt{sendingTime}, the timestamp CME's publisher writes into the packet header as the
packet leaves the exchange. The live measurement keys them on \texttt{t0}, taken by our
own process after the packet has crossed the network, the NIC and the kernel. If the
clustering reported in Section~\ref{sec:setup-hawkes} were an artefact of exchange-side
emission scheduling, the intervening path, with its own queueing, coalescing and
interrupt moderation, would be expected to attenuate it. It does not; and
Section~\ref{sec:results-transactions} rules out the exchange's packetisation as the
source from the other side, by showing that the clustering is already present in the
matching engine's transaction stream.

\subsection{Clustering statistics in the corpus and on the live receiver}
\label{sec:crossval-agree}

This section checks that the bursty arrival pattern the paper is built on is real, and
not something produced by replaying recordings. It takes four numbers that each measure
how far packet arrivals are from evenly spread. The first is the share of gaps between
packets that are tiny, under a tenth of the average gap. The second is how much the
variability of packet counts drops when the order of the packets is scrambled, which
removes any bunching. The third is a measure of whether busy stretches persist rather
than fading at once. The fourth is how strongly one event tends to trigger the next.
Each of the four is computed twice, independently: once on the recorded corpus that the
rest of the paper uses, and once on the live production receiver, on different instruments, with a different clock. Three of the four match directly, to
within the spread seen across the live streams. The fourth is measured in two different
ways, one an estimate and one an upper bound, and the estimate falls inside the bound.
The conclusion is that the bursts are a property of the feed as received, not of
replay: they show up at the same strength whether one looks at a recording or at the
live wire. Where they come from is the question Section~\ref{sec:results-transactions}
answers on the corpus: the matching engine's transaction stream.

Table~\ref{tab:crossval-stats} puts the four statistics that both measurements compute
side by side. The NQ column is the like-for-like comparison, since Part~II's corpus is NQ
front month; the third column gives the range across the three live book streams.

\begin{table}[H]
\centering
\small
\begin{tabular}{@{}lrrr@{}}
\toprule
 & Part~II (NQ, 3512 win) & live NQ book & live, three books \\
\midrule
$P(\text{gap} < \text{mean}/10)$      & $73.2\%$ & $63.1\%$ & $63.1$--$84.8\%$ \\
Fano(5\,s) collapse under shuffle     & $14.8\times$ & $12.2\times$ & $12.2$--$18.5\times$ \\
Hurst exponent $H$                    & $0.740$ & $0.702$ & $0.702$--$0.743$ \\
branching ratio $n$                   & $0.798$ (MLE) & \multicolumn{2}{c}{$\le 0.850$--$0.972$ (Fano ceiling)} \\
\bottomrule
\end{tabular}
\caption{The same four Poisson-rejection statistics, computed independently on recorded
and on live data. The Poisson values are $9.52\%$, $1\times$, $0.5$ and $0$ respectively. Figure~\ref{fig:bars-crossval-stats} shows the same numbers as bars.}
\label{tab:crossval-stats}
\end{table}

\begin{figure}[H]
\centering
\includegraphics[width=1.0\textwidth]{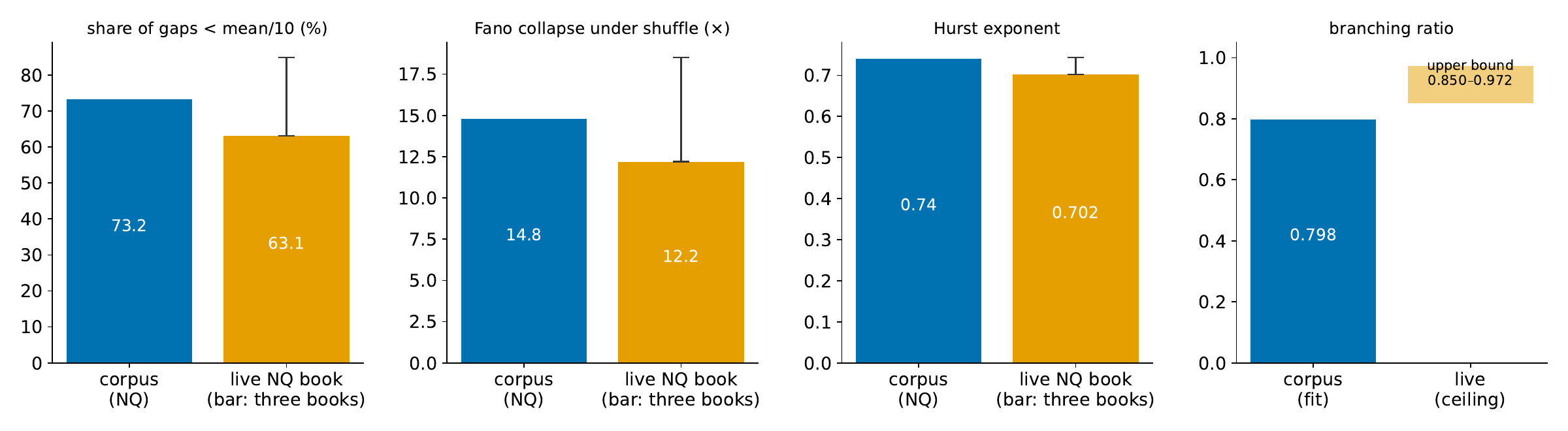}
\caption{Table~\ref{tab:crossval-stats} as bars: the clustering statistics on the recorded corpus
(blue) and on the live receiver's NQ book stream (orange; whiskers span the three live book
streams). The share of tiny gaps, the collapse of the Fano factor under shuffling and the
Hurst exponent agree within the spread of the live streams. The corpus branching ratio
lies below every upper bound the live streams give (shaded).}
\label{fig:bars-crossval-stats}
\end{figure}

Three of the four are direct comparisons and agree to within the spread across live
streams. The fourth compares different estimators. Part~II fits the branching ratio by
maximum likelihood, per window, using the Ogata recursion
(Section~\ref{sec:setup-hawkes}); the live measurement does not fit a Hawkes process, and
bounds $n$ from above by inverting the Fano ceiling, $n \le 1 - 1/\sqrt{F}$. The fitted value of about $0.8$ lies within every bound the live streams imply. The bound is loose, and the agreement is weak evidence on
its own; its value is that it is independent evidence and is consistent.

The arrival process modelled in Part~II is therefore not an artefact of replay, of the
sample period, or of the exchange-side clock; it is present at the same strength in a live receiver on other instruments.

\findings{
    \item \emph{The clustering is not an artefact of replay}: the share of tiny gaps, the collapse under shuffling and the Hurst exponent agree between the corpus and the live receiver, on other instruments and a different clock.
    \item \emph{The branching ratio} fitted on the corpus, about $0.8$, lies inside every upper bound the live streams give.
}

\subsection{The production decode floor and the tail threshold}
\label{sec:crossval-floor}

The live system supplies one number that a recording cannot: how long the production
decoder actually takes to handle a message when nothing is queued in front of it. This
section measures that time three independent ways and finds they agree, at about seven
microseconds. The notable result is where that number falls. Part~II found, from the
recorded corpus, that a tail appears once a stage takes longer than the tightest gap
between packets, about seven and a half microseconds. The production decoder sits just
under that, within a few hundred nanoseconds, by two measurements that share nothing.
The production system therefore operates right at the edge of the tail. The section then
shows what pushes it over: a packet carrying a single message is handled below the
threshold, and a packet carrying two or more at or above it, on every stream measured.
Whether a packet's service crosses the threshold is decided by whether it carries one
message or more.

Isolating messages that arrive to an empty ring and first in their datagram,
so that neither queueing nor in-packet decode contributes, gives the decode floor
directly; the paper calls these the \emph{empty-queue messages}. Three estimators agree (Table~\ref{tab:crossval-floor}):

\begin{table}[H]
\centering
\small
\begin{tabular}{@{}lrrr@{}}
\toprule
method & ES & NQ & ZN \\
\midrule
direct median, empty ring and first in packet & $7.01$ & $7.24$ & $6.89$ \\
fitted intercept of the median position ladder & $6.98$ & $7.23$ & $6.83$ \\
single-message packets during an FOMC release & $7.1$ & $7.5$ & $7.0$ \\
\bottomrule
\end{tabular}
\caption{Three independent estimates of the \kasparhft{} decode floor (socket to book), in
$\mu$s, on the book streams. The three differ in sample, estimator and load regime. The
third was taken while the packet rate rose $4.6$ to $9.8\times$ within a second; the decode floor
rose by at most $0.27\,\mu\mathrm{s}$ against either other estimator. Figure~\ref{fig:bars-crossval-floor} shows the same numbers as bars.}
\label{tab:crossval-floor}
\end{table}

\begin{figure}[H]
\centering
\includegraphics[width=0.8\textwidth]{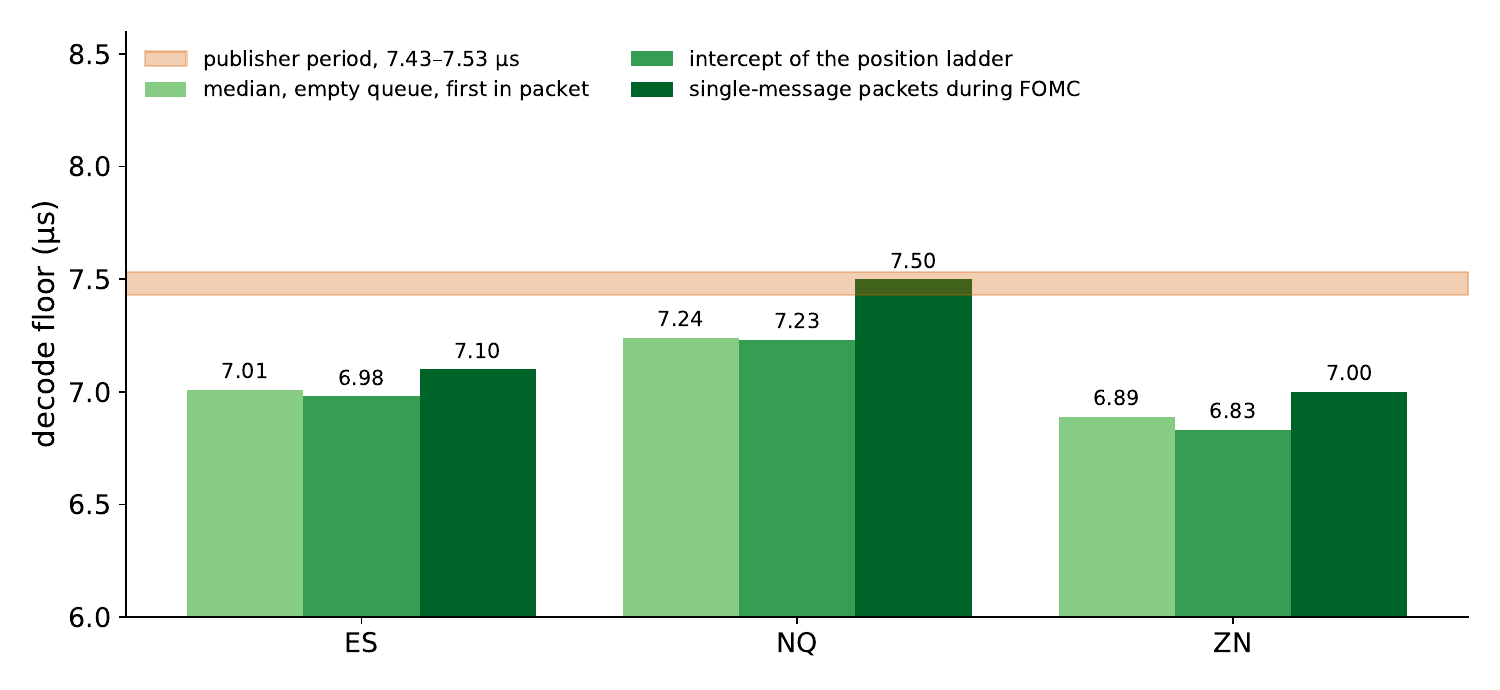}
\caption{Table~\ref{tab:crossval-floor} as bars: the production decode floor by three
independent estimates on each instrument, against the tail threshold measured on the
corpus (shaded band). All nine estimates lie in or just below the band: the production
receiver runs right at the edge of the tail.}
\label{fig:bars-crossval-floor}
\end{figure}

Across all six streams the fitted decode floor lies within half a microsecond of
$7\,\mu\mathrm{s}$, on six multicast channels on one host and one afternoon.

Compare this with Section~\ref{sec:results-conditioning}. The service time at which a
stage on this feed begins to build a tail is the publisher period, the tightest gap
between packets, $\spub$, the same at every level of market activity;
the refined sweep of Section~\ref{sec:results-threshold} puts the switch-on between $7$ and
$8\,\mu\mathrm{s}$. The production decode floor is about $7\,\mu\mathrm{s}$.

The production system operates at the threshold, to within a few hundred nanoseconds,
by two measurements made on different data with different estimators. The agreement is
close to definitional for a $p_{99}$ target: a server that never varies its service time
leaves the no-queue latency at $p_{99}$ once about one packet in a hundred arrives within
one service time of the previous one, and the publisher period is by definition the gap
that one packet in a hundred sits at. For a $p_{99.9}$ target the relevant gap is the one
that one packet in a thousand sits at, which is shorter, and Table~\ref{tab:main-corpus}
already shows a $p_{99.9}$ excess at the sweep point just above the publisher period.

The live measurement locates the crossing. Per-packet service is the decode floor plus one slope for each
message beyond the first (Section~\ref{sec:crossval-span}), so the service time of a
datagram carrying $\text{span}$ messages is $\text{decode floor} + (\text{span}-1) \times
\text{slope}$. Evaluating that on each stream's own fitted pair:

\begin{table}[H]
\centering
\small
\begin{tabular}{@{}lrrr@{}}
\toprule
 & span 1 & span 2 & span 3 \\
\midrule
ES book  & 6.98 & 7.55 & 8.11 \\
NQ book  & 7.23 & 7.54 & 7.85 \\
ZN book  & 6.83 & 7.80 & 8.76 \\
ES trade & 6.81 & 7.52 & 8.24 \\
NQ trade & 7.14 & 7.67 & 8.19 \\
ZN trade & 6.81 & 7.77 & 8.74 \\
\midrule
range    & $6.81$--$7.23$ & $7.52$--$7.80$ & $7.85$--$8.76$ \\
\bottomrule
\end{tabular}
\caption{Per-packet service time in $\mu$s by datagram span, from each stream's measured
decode floor and slope, against the publisher period $\spub$
(Section~\ref{sec:results-conditioning}). Every stream is below the threshold at one message per packet and at or above it at
two; one stream, ES trade, sits just inside the measured band at two. Figure~\ref{fig:bars-crossval-crossing} shows the same numbers as bars.}
\label{tab:crossval-crossing}
\end{table}

\begin{figure}[H]
\centering
\includegraphics[width=1.0\textwidth]{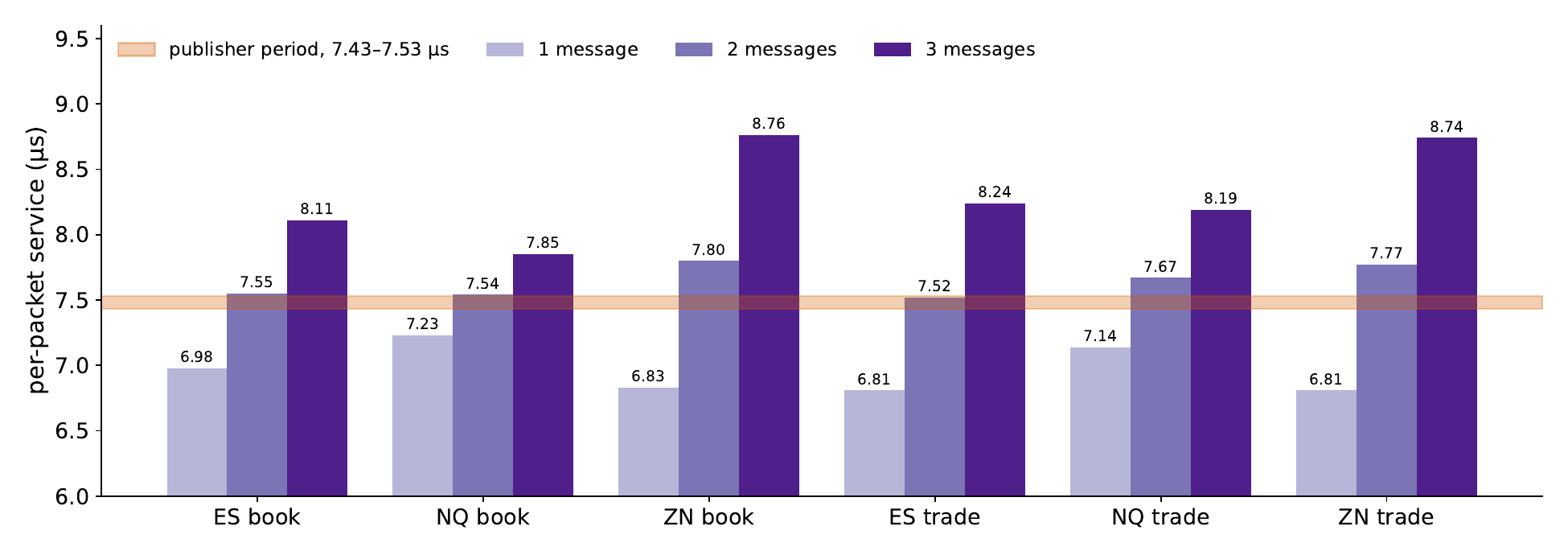}
\caption{Table~\ref{tab:crossval-crossing} as bars: per-packet service for one, two and three
messages on each live stream, against the tail threshold (shaded band). A one-message packet is served below the threshold on every stream; a two-message packet
is at or above it on every stream, one of them just inside the band. Whether a packet can start a queue is decided by whether it carries
more than one message.}
\label{fig:bars-crossval-crossing}
\end{figure}

All six streams are below the threshold when a datagram carries one message and at or
above it when it carries two (Table~\ref{tab:crossval-crossing}). All six crossings fall in the same unit interval of span,
from decode floors and slopes fitted separately on six multicast channels against a threshold
measured on a different corpus. For this system, whether a packet's service is above or
below the threshold is determined by whether it carries one message or more.

\findings{
    \item \emph{The production decode floor is about $7\,\mu\mathrm{s}$} by three independent estimates, just under the publisher period.
    \item \emph{A packet with one message is served below the publisher period, one with two or more at or above it}, on all six live streams.
}

\subsection{Statistics that differ between the corpus and the live receiver}
\label{sec:crossval-disagree}

Two numbers do not match between the recorded and the live measurements, and this
section explains why neither is a cause for concern. In both cases the disagreement is
about how the number was computed, not about the feed, and in both cases one of the two
measurements can diagnose the other. The first is the raw level of burstiness, which
reads about four times higher live. That is because the live figure was taken over one uncut stretch that mixes a quiet
period and a busy one, and a rate that drifts across the stretch inflates the measure
whether or not the arrivals are genuinely bursty. The recorded corpus computes the same
quantity inside each half-hour window, which removes that drift. The change in burstiness when the packets are scrambled, which is the effect
the paper uses, agrees closely between the two. The second is a control value
that should come out at exactly one half after scrambling, and reads a little above it
live. That is a known small-sample bias of the estimator on a short capture; on the much
larger recorded corpus the same control comes out at one half, as it should. Neither
discrepancy touches any quantity the paper's argument rests on, and for the level of
burstiness the recorded corpus gives the more conservative value, which is the one
reported.

\paragraph{The absolute Fano factor and the length of the stretch.}
Two estimators are in play. The \emph{windowed} estimator cuts the data into 30-minute
windows, computes the Fano factor of the bin counts inside each window, and reports the
median across windows; the \emph{pooled} estimator treats one long uncut stretch as a
single series and computes the Fano factor over all of it. (This is a different sense
of ``pooled'' from the histograms of Section~\ref{sec:results-transactions}, which pool
gap \emph{lengths} across windows into one distribution; a distribution of lengths is
indifferent to slow rate drift, a count-variance statistic is not.) Part~II, windowed,
measures a Fano factor at a five-second bin of a few hundred; the live streams, pooled
over the whole capture, give about four times that. The ratio under the gap shuffle agrees closely (Table~\ref{tab:crossval-stats}),
so the discrepancy is in the level, not in the effect. The cause is the estimator row of
Table~\ref{tab:crossval-setup}. The live figure is computed over one uncut 53-minute
capture that spans a quiet stretch and a busy one; a rate that drifts between the two
inflates the count variance and reads as clustering whether or not the process is
self-exciting. Part~II computes the Fano factor \emph{within} each 30-minute window and
then takes the corpus median, which removes between-window rate variation by
construction: a slow drift from a quiet stretch to a busy one adds count variance that
the estimator cannot distinguish from bursts, so the pooled level reads roughly four times
higher. The gap-shuffle ratio cancels the drift, since the drift is present in the
numerator and the denominator alike, which is why the ratio agrees between the two
measurements and why the paper uses the ratio, not the level. The windowed estimate is
the conservative one and is the value reported.

\paragraph{The shuffled Hurst null and finite-sample bias.} A shuffle of a gap sequence
destroys the ordering while keeping every gap, so the shuffled series is a renewal process
and its Hurst exponent must be $0.5$. The live measurement recovers somewhat more,
attributes the excess to finite-sample bias in the Hurst estimator on a short capture, and
states its result as a difference from the unshuffled value rather than a distance from
$0.5$. Part~II confirms the attribution: over the whole corpus the same shuffle, and the
Poisson null, both return $0.50$ (Section~\ref{sec:setup-hawkes}). The bias vanishes with
more, shorter series.

\findings{
    \item \emph{The live Fano level reads four times higher} because it is computed over one long stretch whose rate drifts; the shuffle ratio the paper relies on agrees.
    \item \emph{The live shuffled Hurst exponent sits above $0.5$} from small-sample bias; on the corpus it is $0.50$.
}

\subsection{Per-packet service time, message count and coalesced transactions}
\label{sec:crossval-span}

The simulator in Part~II assumed that every packet takes the same time to process. The
live measurement shows that it does not, and this section sets out what that means. A
packet on this feed carries a variable number of messages, usually one but sometimes
dozens, and they are decoded one after another, so a packet takes longer the more it
carries. Large packets are rare but expensive: on the most fragmented stream, ZN trade,
one packet in nine
carries two thirds of the messages and consumes six sevenths of the decode time. The
task length used throughout the paper is therefore the work for a typical packet, not a
fixed constant of the system. Two points follow, set out at the end of the section. The results survive, because the
cost per message is constant and only the bundling varies. And the number of messages per
packet is a second place where the same clustering is paid: transactions that reach the
publisher while it is busy can be packed into one packet rather than sent separately
(Section~\ref{sec:exchange-drain}). It acts
mainly where bursts across packets do not, on tasks at or just below the tightest gap
between packets, which is where production receivers run, and adds a few percent above
$16\,\mu\mathrm{s}$. The next section reruns the sweep with the message
count included and measures the share of each channel.

The sweep of Section~\ref{sec:setup-sweep} treats each packet as one arrival with a
constant service time $T$. Per-packet service is not constant. A UDP datagram on this feed carries a variable number of SBE messages ---
call it the packet's \emph{span} --- which are decoded in order, so a message at position
$k$ waits for $k$ decodes before its own. Median latency follows
\[
\text{median} \;=\; \text{decode floor} \;+\; \text{slope} \times \text{position},
\]
with the decode floor of Section~\ref{sec:crossval-floor} and a slope of $0.31$ to
$0.97\,\mu\mathrm{s}$ per message depending on the stream. Span is not close to constant
(Tables~\ref{tab:crossval-span} and~\ref{tab:crossval-conc}):

\begin{table}[H]
\centering
\small
\begin{tabular}{@{}lrrrrrr@{}}
\toprule
stream & packets & messages & $p_{50}$ & $p_{90}$ & $p_{99}$ & max \\
\midrule
ES book  & 2{,}123{,}042 & 2{,}415{,}610 & 1 & 1 & 7 & 45 \\
NQ book  & 2{,}506{,}205 & 3{,}010{,}036 & 1 & 2 & 11 & 36 \\
ZN book  & \phantom{0}824{,}536 & \phantom{0,}965{,}334 & 1 & 1 & 6 & 45 \\
ES trade & \phantom{0,0}83{,}140 & \phantom{0,}235{,}601 & 3 & 8 & 17 & 83 \\
NQ trade & \phantom{0,0}50{,}279 & \phantom{0,0}83{,}233 & 2 & 7 & 16 & 85 \\
ZN trade & \phantom{0,0}22{,}864 & \phantom{0,}100{,}659 & 2 & 11 & 71 & 85 \\
\bottomrule
\end{tabular}
\caption{Messages per UDP datagram (span), packet-weighted quantiles. The median book packet
carries one message; one packet in a hundred carries about ten on the book streams and
many times that on ZN trade. Figure~\ref{fig:bars-crossval-span} shows the same numbers as bars.}
\label{tab:crossval-span}
\end{table}

\begin{figure}[H]
\centering
\includegraphics[width=1.0\textwidth]{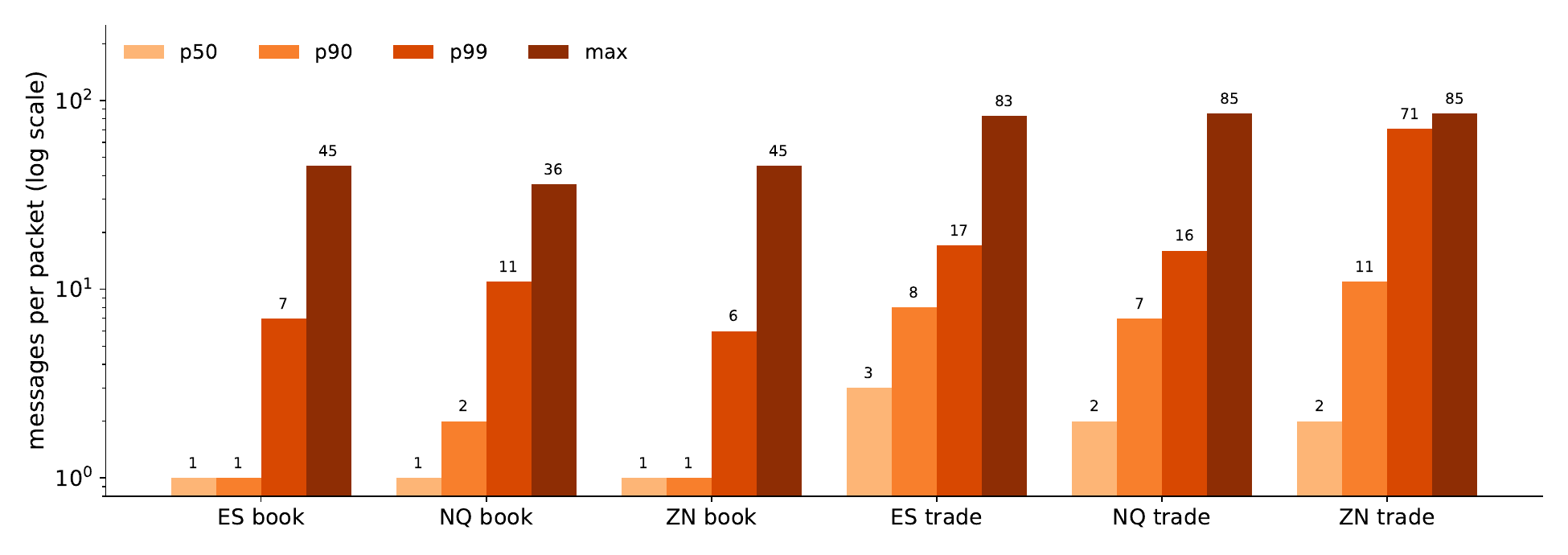}
\caption{Table~\ref{tab:crossval-span} as bars, log scale: messages per packet at the median,
$90$th and $99$th percentile and the maximum, per live stream. Book streams carry one
message in most packets and up to $45$; trade streams carry more, and ZN trade reaches $71$
at $p_{99}$.}
\label{fig:bars-crossval-span}
\end{figure}

Because service is charged per message, large packets account for a disproportionate
share of processing time:

\begin{table}[H]
\centering
\small
\begin{tabular}{@{}lrrr@{}}
\toprule
span $\ge 10$ & \% of packets & \% of messages & \% of decode time \\
\midrule
ES book  & 0.52 & 4.55 & 5.34 \\
NQ book  & 1.68 & 4.87 & 2.57 \\
ZN book  & 0.47 & 7.29 & 7.29 \\
NQ trade & 5.34 & 19.31 & 16.16 \\
ES trade & 6.69 & 26.53 & 40.10 \\
ZN trade & 11.31 & 66.18 & 85.54 \\
\bottomrule
\end{tabular}
\caption{Concentration of work in large datagrams. On ZN trade, $11.3\%$ of packets carry
$66\%$ of the messages and consume $86\%$ of decode time. Figure~\ref{fig:bars-crossval-conc} shows the same numbers as bars.}
\label{tab:crossval-conc}
\end{table}

\begin{figure}[H]
\centering
\includegraphics[width=1.0\textwidth]{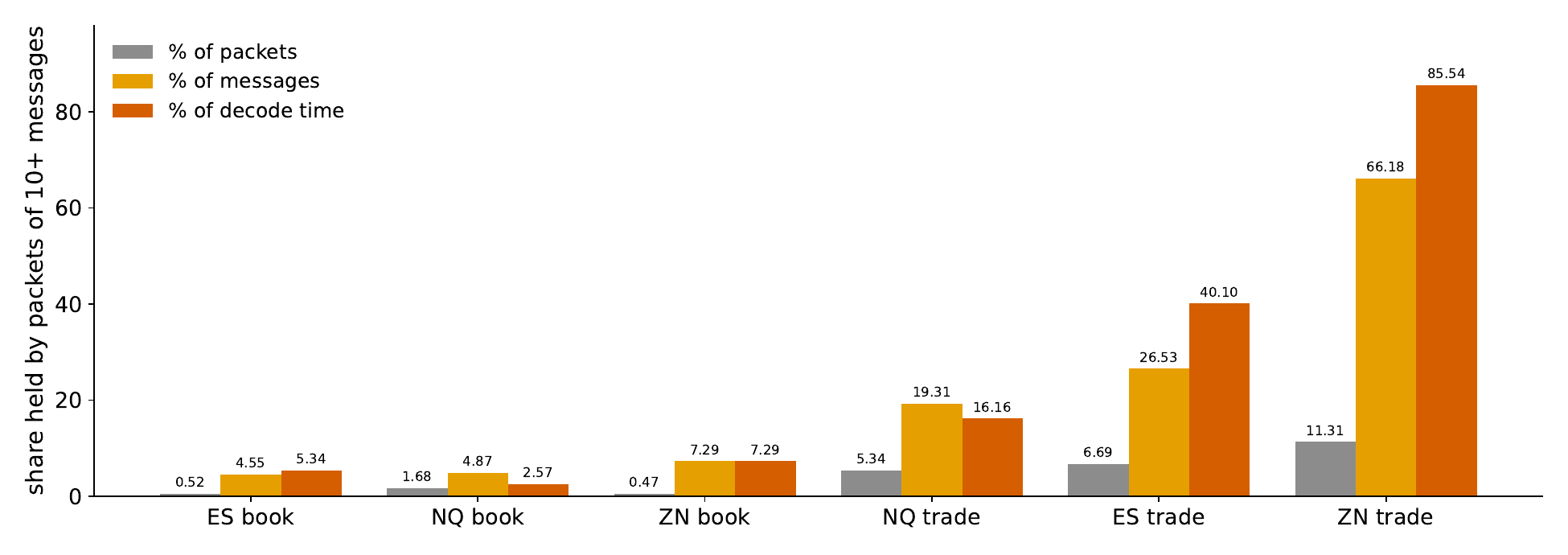}
\caption{Table~\ref{tab:crossval-conc} as bars: the share of packets, of messages and of decode
time held by packets of ten or more messages. On ZN trade $11\%$ of packets hold $66\%$ of
the messages and $86\%$ of the decode time; on the book streams long packets are under
$2\%$ of packets but several times that share of messages.}
\label{fig:bars-crossval-conc}
\end{figure}

The constant-service assumption is therefore violated at the packet level, and Part~II's
$T$ is the per-packet work at a typical span rather than a constant of the system. Two
points follow.

First, at the message level the assumption approximately holds: the slope is fixed
within a stream to the precision of the fit, and it is the composition of packets, not
randomness in per-message work, that makes per-packet service vary. Theorem~\ref{thm:reduction}
is pathwise and arrival-law independent, and Remark~\ref{rem:unequal} already handles
unequal stages through $s_{\max}$; what the live data adds is a measured magnitude for the
variability that Section~\ref{sec:theory-summary} lists as an assumption.

Second, span is a second channel for the same clustering on the same stream, active
where the queueing channel is not. Section~\ref{sec:crossval-spanrun} reruns the sweep
with span-dependent service. The in-packet decode supplies the tail below the
publisher period, where queueing across packets supplies none, and adds a few percent
above $16\,\mu\mathrm{s}$, where queueing across packets supplies nearly all of it.

\paragraph{Spans and transactions.} On the corpus, where every message carries its
\texttt{transactTime}, a multi-message packet is usually not one large transaction but
several transactions delivered together. About $4\%$ of packets carry two or more
messages, and about three quarters of those carry more than one transaction. The publisher,
that is, packs transactions that reach it while it is busy into one packet
(Section~\ref{sec:exchange-drain}). This is consistent with the publisher period: two
transactions that the engine processes closer together than $\spub$
never appear as two packets closer together than that; they appear either as one packet
with more messages or, more often, as two packets at the publisher period
(Table~\ref{tab:two-clocks}), which is why so few packet gaps fall below the publisher period (Table~\ref{tab:tx-gaps}). Two consequences follow for the rest of this section.
The in-packet tail that a large packet produces is, for the most part, the same clustering
of transactions as the queueing tail, delivered inside a packet instead of across packets;
and the message count per packet is set by how closely transactions follow one another,
not by how large individual transactions are. The two-mechanism description below,
queueing across packets and decoding within them, stands as a description of where the
wait is paid; both originate in the transaction stream.

\findings{
    \item \emph{Service is not constant per packet}: each further message adds a fixed cost, from about a third of a microsecond to about one depending on the stream.
    \item \emph{Long packets are rare but expensive}: on ZN trade one packet in nine carries two thirds of the messages and most of the decode time.
    \item \emph{A long packet is usually several transactions packed together}: about three quarters of multi-message packets hold more than one transaction.
}

\subsection{The sweep rerun with span-dependent service}
\label{sec:crossval-spanrun}

The previous section found that a packet takes longer to process the more messages it
carries. This section reruns the whole simulation with that fact built in, to find out
what it changes. The processing time is modelled three ways: fixed per packet, as in
Part~II; growing with message count at the rate measured on the live system; and, as a
deliberately exaggerated upper bound, fully proportional to message count. All three do the same total work on average, so any difference in the slow tail comes
from how the work is spread across packets. The findings are collected at the end of the
section.

We re-extracted the per-packet span for all 3512 windows --- aligned element-wise against the
cached arrival arrays, with every window's span count checked against the packet count
already in the cache --- and re-ran the entire sweep with service that depends on it.

Three service models are compared. All three hold the \emph{mean} service per window at
$T$, so the utilisation $\rho$ is identical across them and only the distribution of work
across packets differs; any difference in the tail is attributable to the span
dependence.
\begin{align*}
\text{const} &: \quad S_i = T && \text{(the Part~II assumption)} \\
\text{span}  &: \quad S_i = A\big(1 + r\,(\sigma_i - 1)\big), \quad r = 0.0432
  && \text{(measured slope / decode floor)} \\
\text{prop}  &: \quad S_i = T\,\sigma_i / \bar{\sigma} && \text{(upper bound)}
\end{align*}
with $A$ set so that the window mean of $S_i$ is $T$. The \emph{span} model uses the ratio of the cost per extra message to the decode floor,
measured on NQ book in Section~\ref{sec:crossval-floor}.
The \emph{prop} model removes the per-packet fixed cost, making service proportional
to the message count; this overstates the dependence, since the measured decode floor exceeds
the slope by more than twenty to one, and it is included as an upper bound. As a correctness check the \emph{const} arm reproduces the published
constant-service grid of Table~\ref{tab:main-corpus} to zero difference in all 28 cells.

\paragraph{Message-weighted quantiles.} On this corpus almost every packet carries one
message, so a packet-weighted quantile is dominated by single-message packets and
changes little under any service model. Weighting by message, which counts
a $20$-message datagram twenty times, the in-packet position distribution is concentrated
at zero with a long upper tail:

\begin{center}
\begin{tabular}{@{}lrrrrrr@{}}
\toprule
message quantile & $p_{50}$ & $p_{90}$ & $p_{95}$ & $p_{99}$ & $p_{99.9}$ & max \\
\midrule
in-packet position & 0 & 0 & 1 & 4 & 19 & 84 \\
\bottomrule
\end{tabular}
\end{center}

Most messages, nearly nine in ten, are alone in their packet and pay no in-packet decode;
the rest make up the tail. The live measurement reports the same shape. The comparison
below is message-weighted.

\begin{table}[H]
\centering
\small
\begin{tabular}{@{}r rrr rrr rr rr@{}}
\toprule
 & \multicolumn{3}{c}{const} & \multicolumn{3}{c}{span} &
   \multicolumn{2}{c}{span/const} & \multicolumn{2}{c}{prop/const} \\
\cmidrule(lr){2-4} \cmidrule(lr){5-7} \cmidrule(lr){8-9} \cmidrule(lr){10-11}
$T$ & $p_{50}$ & $p_{99}$ & $p_{99.9}$ & $p_{50}$ & $p_{99}$ & $p_{99.9}$ &
 $p_{99}$ & $p_{99.9}$ & $p_{99}$ & $p_{99.9}$ \\
\midrule
  2 &    2.00 &     2.00 &      2.00 &    1.99 &     2.34 &      3.54 & 1.17 & 1.77 &  4.64 & 23.35 \\
  4 &    4.00 &     4.00 &      4.00 &    3.99 &     4.67 &      7.08 & 1.17 & 1.77 &  4.69 & 26.73 \\
  8 &    8.00 &     8.56 &     10.10 &    7.97 &     9.63 &     14.87 & 1.12 & 1.47 &  5.21 & 24.64 \\
 16 &   16.00 &    34.00 &     53.55 &   15.95 &    35.08 &     56.62 & 1.03 & 1.06 &  3.36 & 11.53 \\
 32 &   32.00 &   143.74 &    256.51 &   31.89 &   144.41 &    260.14 & 1.00 & 1.01 &  2.37 &  5.88 \\
 64 &   98.80 &   815.22 &   1557.72 &  100.02 &   820.05 &   1573.97 & 1.01 & 1.01 &  2.37 &  3.94 \\
128 &  366.78 &  5440.22 &  12844.51 &  371.91 &  5550.42 &  13159.71 & 1.02 & 1.02 &  2.51 &  3.10 \\
\bottomrule
\end{tabular}
\caption{Effect of the number of messages per packet (the packet's \emph{span}, not its byte
length) on the single-stage tail. Each row is one service time
$T$; every entry is the corpus median over 3512 windows of a message-weighted latency
quantile, in $\mu$s, under real (Hawkes) arrivals with a single stage ($N = 1$). The
\emph{const} block charges every packet the same service $T$, the Part~II assumption.
The \emph{span} block charges service that grows with the number of messages in the
packet, at the slope measured on the live system. Both do the same mean work per window,
so any difference between the two blocks is due to message count alone. The
\emph{span/const} columns give that difference directly as a ratio; a value near $1$
means message count adds nothing. The \emph{prop/const} columns repeat the ratio for a
deliberately exaggerated model in which service is fully proportional to message count,
as an upper bound. Read across: message count inflates the tail only at $T \le
8\,\mu\mathrm{s}$, at or just above the publisher period, where clustering builds little
or no queue, and is negligible above it. Figure~\ref{fig:bars-spancmp} shows the same numbers as bars.}
\label{tab:spancmp}
\end{table}

\begin{figure}[H]
\centering
\includegraphics[width=0.85\textwidth]{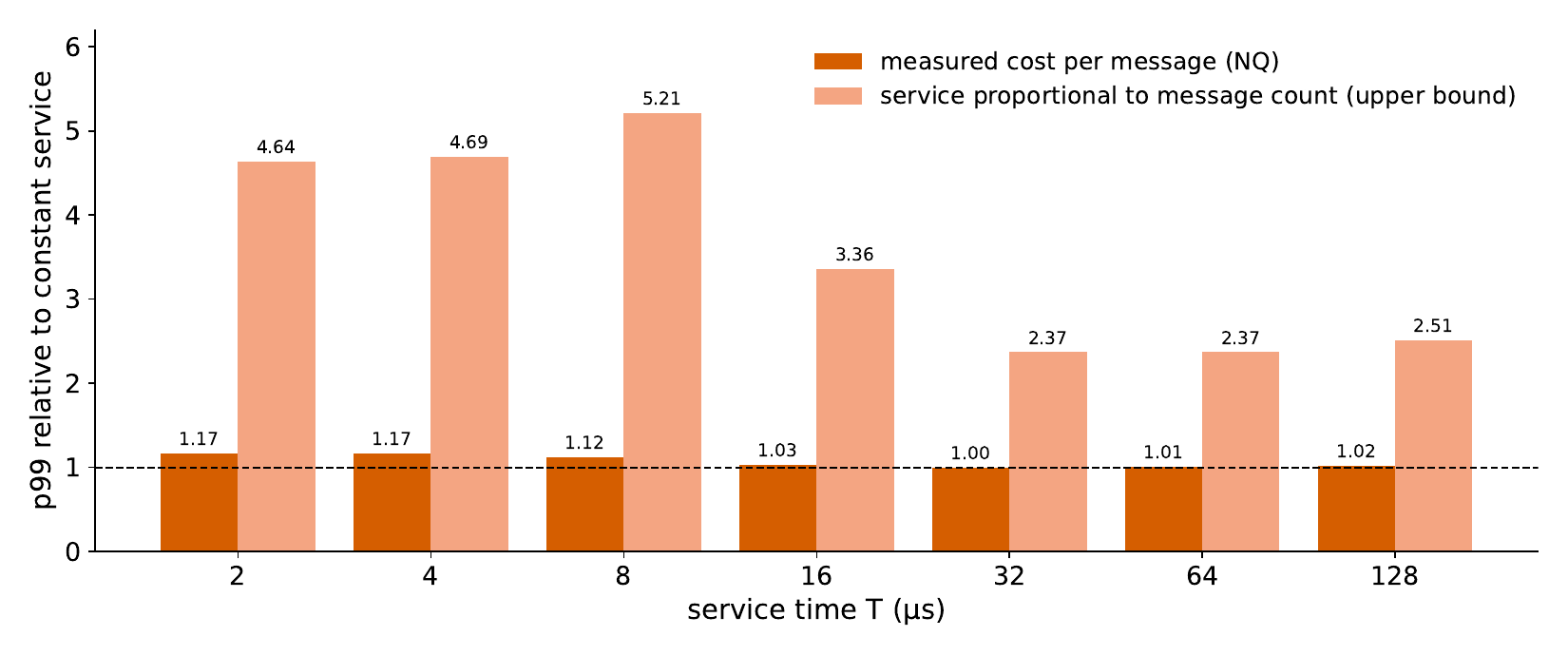}
\caption{Table~\ref{tab:spancmp} as bars: $p_{99}$ under service that grows with the message count,
relative to constant service with the same mean. With the measured cost per message
(vermillion) the tail is $12$--$17\%$ larger at $8\,\mu\mathrm{s}$ and below and within
$3\%$ from $16\,\mu\mathrm{s}$. The exaggerated model in which service is proportional to
message count (light) bounds the effect from above.}
\label{fig:bars-spancmp}
\end{figure}

\paragraph{Regimes of the two tail mechanisms.} The span/const ratio of
Table~\ref{tab:spancmp} is largest where the queueing tail is absent and smallest where it
dominates. At $2$ and $4\,\mu\mathrm{s}$ the constant-service model has no tail at all,
while the span-dependent model has a small one: below the publisher period there is a
tail, but it is not a queueing tail. From $32\,\mu\mathrm{s}$, where queueing across packets
makes the $p_{99}$ several times $T$, span adds at most a few percent.

\paragraph{Decomposition at the production operating point.} The live system's decode floor is about $7\,\mu\mathrm{s}$ (Section~\ref{sec:crossval-floor}), so the comparable row is the sweep point just above the publisher period, $T = 8\,\mu\mathrm{s}$. Decomposing the message-weighted $p_{99}$ excess over the median ($p_{50}$) into the part the
constant-service model already produced (queueing) and the remainder introduced by span-aware service (in-packet decode):

\begin{center}
\begin{tabular}{@{}rrrrr@{}}
\toprule
$T$ & $p_{50}$ & queueing excess & in-packet decode excess & its share \\
\midrule
  2 &   1.99 &    0.00 &   0.34 & $100\%$ \\
  4 &   3.99 &    0.00 &   0.69 & $100\%$ \\
  8 &   7.97 &    0.56 &   1.09 & $66\%$ \\
 16 &  15.95 &   18.00 &   1.14 & $6\%$ \\
 32 &  31.89 &  111.74 &   0.78 & $1\%$ \\
 64 & 100.02 &  716.42 &   3.61 & $1\%$ \\
128 & 371.91 & 5073.43 & 105.07 & $2\%$ \\
\bottomrule
\end{tabular}
\end{center}

At the sweep point just above the publisher period, decoding the messages ahead in the
same packet supplies about two thirds of the $p_{99}$ excess and queueing across packets
the rest. The live measurement reports the same ordering: the decode floor plus the cost
of the messages ahead in the packet reproduces about three quarters of the observed
$p_{99}$ on most streams, with the remainder left to queueing and to stalls. Applying that
definition to the simulator gives a higher share, because the only residual the model
contains is queueing; it has no stall population. Both measurements attribute the larger
share of the tail at the production service time to in-packet decode. Figure~\ref{fig:bars-span-share} shows the split at every service time.

\begin{figure}[H]
\centering
\includegraphics[width=0.78\textwidth]{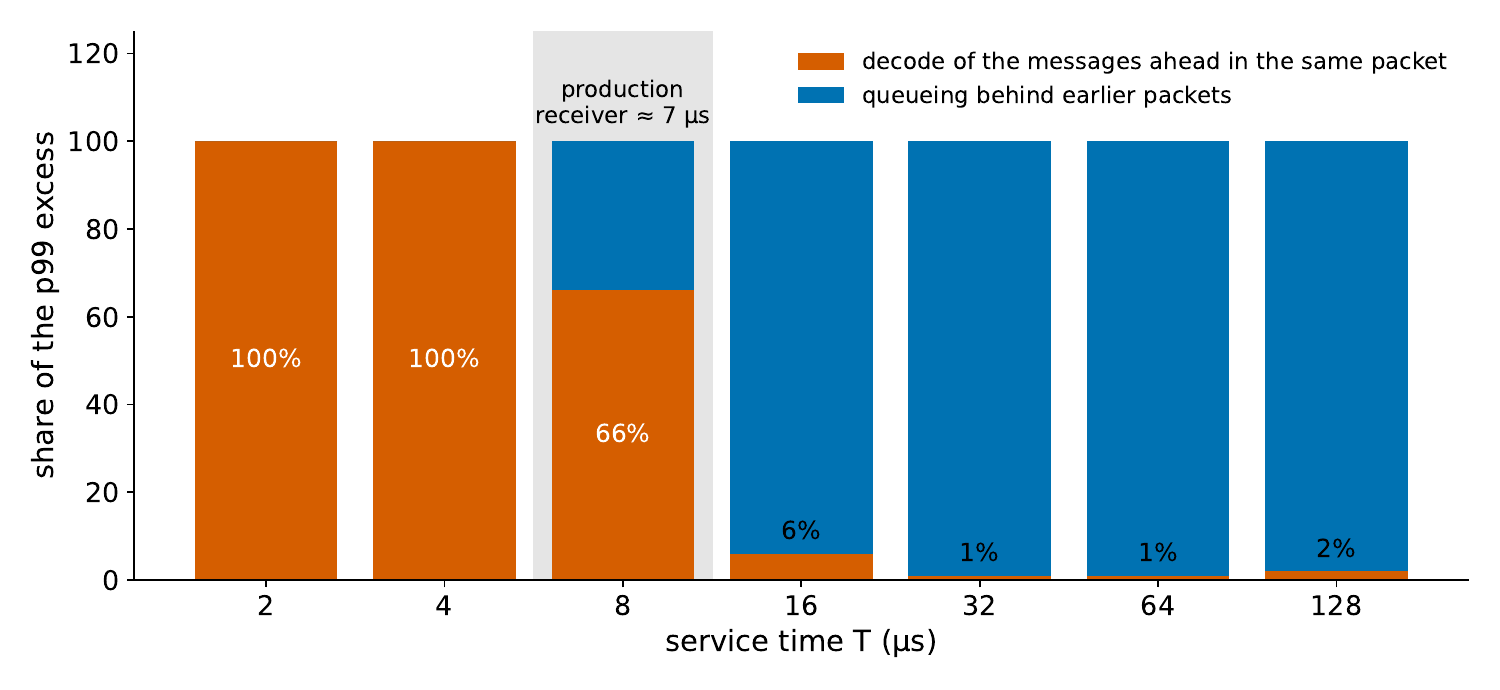}
\caption{The decomposition above as bars: of the simulated $p_{99}$ excess at each service time,
the part due to decoding the messages ahead in the same packet (vermillion) and the part
due to queueing behind earlier packets (blue). At and below the publisher period the decode of long
packets is the tail; the shaded bar, $T = 8\,\mu\mathrm{s}$, is the nearest simulated
service time to a production receiver's decode floor, and there it is two thirds.
From $16\,\mu\mathrm{s}$ queueing is almost all of it.}
\label{fig:bars-span-share}
\end{figure}

This is consistent with Part~II. The queueing claim stands: below about
$\spub$ of service, arrival clustering does not build a queue and splitting
the servicing chain removes nothing. The span-dependent run adds that a second mechanism, in-packet decode, is active in the
same regime. It is deterministic within the packet, and a stage cut does not shorten it.
The queue that a large packet opens behind itself is taken up in
Section~\ref{sec:fungibility-measured}. $N^\star$, the $p_{99}$-minimising stage count, is
unchanged at all seven service times when service is made span-dependent. The design rule of Section~\ref{sec:recommendations} is
unaffected.

\paragraph{The in-packet ladder.} Median latency by in-packet position at $N = 1$
reproduces the linear form of the live measurement. At the sweep point just above the publisher period the corpus-median latency grows
linearly with position in the packet, and the fitted cost per position is the cost per
message that was put into the model, which confirms that queueing contributes nothing at
the median at this service time. The live NQ-book ladder has the same form; the levels
differ only by the difference between the measured decode floor and the simulated service
time.

\paragraph{Compounding of the two channels on NQ.} There are two places
a message can be made slow. One is queueing: it waits behind a run of earlier packets.
The other is in-packet decode: it sits deep inside a large packet and waits for the
messages ahead of it in that same packet to be decoded first. Section~\ref{sec:crossval-span}
showed that a large packet is usually several transactions packed together by the publisher, so
both are deliveries of the same clustering. The question here is narrower: whether a
large packet also tends to arrive behind a tight gap, so that one message pays both
delays at once. If it did, the two would compound and the Part~II tables, which model
queueing alone, would understate the tail above the publisher period. If a large packet arrives
after a tight gap no more often than after a loose one, the tables stand. Two tests find
the second case. Both compare a packet's span with the gap that precedes it, so they
cannot see the coalescing itself, which removes a gap rather than tightening one.

The first is a control arm in which each window's span sequence is randomly permuted
across packets, preserving the span marginal and the arrival times exactly and destroying
only the association between them. If large packets arrived preferentially inside
clusters, breaking that association would lower the tail. It does not: the real and the
shuffled $p_{99}$ agree within a couple of percent at every service time, with no
systematic sign. Second, measuring the association directly, a packet's size is almost
uncorrelated with the gap before it, and packets arriving after the shortest gaps are, if
anything, slightly smaller than those after the longest. The association is weakly
negative.

The live measurement reports the same instrument dependence: mean span rises with queue
depth on ZN, where a quiet book is filled
only by genuine bursts, but \emph{falls} with queue depth on ES and NQ, whose rings back up
with many small packets. Part~II's corpus is NQ, and it behaves like the live NQ streams.
The transaction-level arms of Section~\ref{sec:results-transactions} give the same
answer a third way: decoupling transaction sizes from their timing (TS) and merging each
transaction into one packet (TM) leave the tail about as it was at every service time from $16\,\mu\mathrm{s}$ up (Table~\ref{tab:tx-arms}). Above the publisher period, then, in-packet decode adds a few percent to a tail that
queueing across packets produces, and the constant-service tables of Part~II measure that
queueing. At the publisher period, where the production receiver runs, in-packet decode
carries most of the tail. Since a long packet is several transactions packed together
(Section~\ref{sec:crossval-span}), the two channels are two deliveries of the same
clustering. On an instrument like ZN, where the far tail is a few
large packets, the second delivery dominates; Section~\ref{sec:fungibility-measured}
takes this up.

\findings{
    \item \emph{At the production service time, long packets carry most of the tail}: in the simulation at the sweep point just above the publisher period about two thirds of it, and on the live receiver in-packet position reproduces about three quarters of the $p_{99}$ on most streams.
    \item \emph{From $16\,\mu\mathrm{s}$ up, message count adds a few percent} to a tail set by queueing across packets.
    \item \emph{The best stage count does not change}: a stage cut does not shorten the decode of a long packet.
    \item \emph{On NQ long packets do not arrive preferentially after short gaps}; on ZN this is untested.
}


\subsection{The production handler's tail}
\label{sec:fungibility-measured}
\label{sec:crossval-production}

The live records also give the production handler's own tail, on the shipping serial
configuration: one thread decoding every packet inline. The question is the one the
paper opened with, put to a real handler. On the live feed, why is the slowest message
in a thousand so much slower than the typical one? Read with Sections~\ref{sec:results-transactions}
and~\ref{sec:crossval-span}, the answer is that the clustering of transactions reaches
this handler in two forms. Most transactions that the engine processes close together reach it as a train of
packets one publisher period apart (Table~\ref{tab:two-clocks}). Some are packed by the
publisher into one packet carrying several messages. The handler decodes those one after
another, so the packet holds the thread for many times the decode floor and everything
behind it queues. On the corpus, a multi-message packet is several transactions delivered together
far more often than it is one large transaction ($77\%$ of packets with two or more
messages carry more than one transaction, Section~\ref{sec:crossval-span}), so a large
packet is a transaction cluster in its own right. On the ZN channel the second form
dominates the far tail. What follows is argued from the fitted decode floor and slope of
Section~\ref{sec:crossval-floor} and the reported percentiles, not from a
message-by-message decomposition of the window.

\paragraph{Setup.} ZN, book stream, one \SI{900}{\second} window, recovery
windows excluded, the shipping serial configuration (\texttt{cme\_decode\_workers 0}).
Latency is the leg-1 socket-read-to-book-publish time of Section~\ref{sec:crossval-floor},
per message. ZN is the instrument on which Section~\ref{sec:crossval-spanrun} left the
coupling of span and bursts open. Two populations are reported: every message, and the
empty-queue messages of Section~\ref{sec:crossval-floor}, which arrived to an empty queue
and were first in their packet, so that neither queueing nor in-packet decode is in their
latency.

\begin{figure}[H]
\centering
\includegraphics[width=0.72\textwidth]{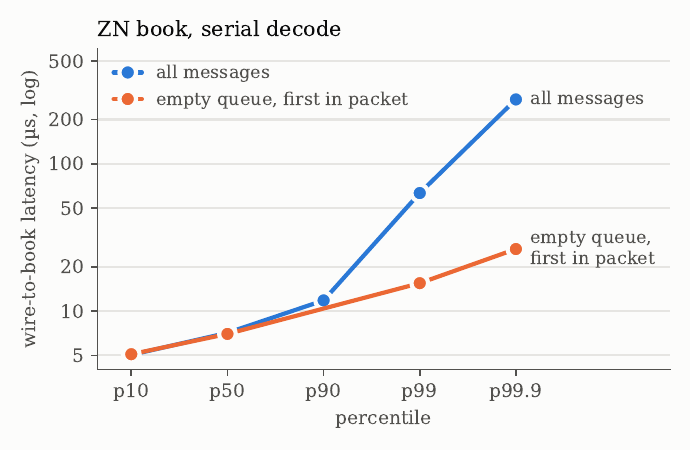}
\caption{Wire-to-book latency by percentile on the ZN book stream, serial decode, for every message and for the empty-queue messages alone (empty queue, first in packet; its $p_{90}$ was
not reported). Log latency axis. The two populations share a minimum near
$5$--$7\,\mu\mathrm{s}$ and separate only in the tail: the empty-queue messages reach
$26\,\mu\mathrm{s}$ at $p_{99.9}$, all messages $274\,\mu\mathrm{s}$.}
\label{fig:zn-serial-tail}
\end{figure}

\begin{table}[H]
\centering
\small
\begin{tabular}{@{}lrrrrrrr@{}}
\toprule
population & messages & $p_{10}$ & $p_{50}$ & $p_{90}$ & $p_{99}$ & $p_{99.9}$ & mean \\
\midrule
all messages                    & 157{,}699 & 5.04 & 7.11 & 11.81 & 63.30 & 273.69 & 9.95 \\
empty queue, first in packet    & 122{,}500 & 5.09 & 6.99 & ---   & 15.46 &  26.37 & --- \\
\bottomrule
\end{tabular}
\caption{The data behind Figure~\ref{fig:zn-serial-tail}, in \si{\micro\second}. The empty-queue messages are $78\%$ of all messages. The window maximum for all messages was
$3046\,\mu\mathrm{s}$; the report that records this measurement
(\texttt{tech\_reports/\allowbreak serial\_vs\_parallel\_decode.md} in the \kasparhft{} repository)
flags a few such isolated spikes as possibly residual recovery, so it is not used here.}
\label{tab:zn-serial-tail}
\end{table}

\paragraph{The spread from the median to $p_{99.9}$.} The slowest message in a thousand
takes $274\,\mu\mathrm{s}$, against $7\,\mu\mathrm{s}$ for the median, some forty times as
long. Restricting to messages that met an empty queue and were first in their packet
removes nine tenths of that far tail, down to $26\,\mu\mathrm{s}$, without changing the
fastest messages. The difference between the two is the contribution of queueing and of
position inside the packet. The restricted messages are not at the decode floor in their
own tail either: about twice the decode floor at $p_{99}$ and nearly four times at
$p_{99.9}$. That residual, present with neither queueing nor in-packet decode, is not
resolved by these records (Section~\ref{sec:open-questions}).

\paragraph{How the live tail forms.} The three findings of this paper meet here. Packets
reach the handler as trains one publisher period apart
(Section~\ref{sec:exchange-drain}). The handler's median service, about $7\,\mu\mathrm{s}$
on the book streams (Table~\ref{tab:tails-six}), is just under the publisher period, so a train of single-message packets served at the median builds no queue; this
is why its $p_{10}$ and $p_{50}$ sit at the decode floor. Two things push a service over the period. A packet with two or more messages costs the decode floor plus one per-message slope for
each further message, which crosses $\spub$ on every stream
(Section~\ref{sec:crossval-floor}). And a single-message packet sometimes takes far longer
than the median for reasons these records do not show: on ZN, messages with nothing queued
reach about twice the decode floor at $p_{99}$. Either way the packets still arriving behind it queue,
and the queue lasts until the handler catches up with the train. The large-packet
mechanism is a matter of arithmetic on the measured decode floor and slope; the slow-service
mechanism is measured only as the empty-queue percentiles above. How the ZN $p_{99.9}$
divides between the decode inside long packets, the queue behind them and
slow services was not decomposed, because the per-message records were not retained.
What the records do show is that the part of the tail not explained by queueing across
packets is large at this service time: the decode floor plus the decode cost of the messages
ahead in the same packet reproduces about three quarters of the observed $p_{99}$ on four of
the six streams (Section~\ref{sec:crossval-spanrun}).

\paragraph{Large packets and the queue behind them.} The message count of a packet is set
by how closely transactions followed one another, not by how large any one transaction
was: few transactions carry more than one message (Table~\ref{tab:tx-counts}), while a
packet of ten or more messages is usually ten or more transactions that the publisher
packed into one packet. Section~\ref{sec:crossval-floor} gives what such a packet costs this
handler on ZN: the decode floor of about $7\,\mu\mathrm{s}$ plus about $1\,\mu\mathrm{s}$ for
each further message. The largest packet seen, forty-five messages
(Section~\ref{sec:crossval-span}), therefore holds the thread for about $50\,\mu\mathrm{s}$,
seven times the decode floor. That is the first cost: the last message of that packet waits
over $40\,\mu\mathrm{s}$ before its own decode. Yet the in-packet decode accounts for only a
fifth of the $p_{99.9}$. The rest is the queue the packet opens behind it, the second cost.
At one message per packet the service is about $7\,\mu\mathrm{s}$, just under the publisher period measured on NQ (ZN's own period was not measured), and a queue almost never forms;
that is the empty-queue population of Table~\ref{tab:zn-serial-tail}. While a forty-five-message packet holds the thread, every packet that arrives waits
behind it, and each adds its own service to the wait of the ones behind it. The wait of
the $k$-th packet in the queue is the sum of the service times ahead of it, so it grows
with the number queued, not with any one packet's size. ZN packets of ten or more messages are fewer than
one in two hundred, but each is a burst of transactions that this handler pays for twice,
once inside the packet and once in the queue behind it, and the hundreds of microseconds at
$p_{99.9}$ are those queues.

\paragraph{Across the six streams.} The same measurement covers three instruments and,
on each, a book stream and a trade stream. Figure~\ref{fig:tails-six} puts the six
serial tails side by side and Table~\ref{tab:tails-six} adds, for each stream, its
packet rate, the size of its packets and the cost of each extra message. Three things
are visible at once. First, the tail does not follow the packet rate. NQ book is by far the busiest stream
and has the smallest tail; ZN book carries a third of its rate and a tail an order of
magnitude larger; and the trade streams, far quieter than the books, have larger tails
than the books on every instrument. Second,
on every instrument the trade stream carries more messages per packet than the book
stream and has the larger tail: a trade is a transaction that touches several resting
orders and is followed by the reaction to it, so it is where transactions arrive closest
together and are coalesced most. Third, ZN, with the largest trade packets and the highest cost per extra message, has
the largest tails on both streams. Its book packets are not large at the $99$th
percentile, but at three times NQ's cost per message the largest of them produce the
longest excursions of any book stream.
The clustering statistics are similar across the three instruments (the Fano-implied
branching ceilings of Section~\ref{sec:crossval-agree} span $0.85$--$0.97$ across all six
streams, and the NQ publisher period is assumed to hold on the others). What separates a twenty-fold spread of tails across streams with similar clustering
statistics is not the arrival rate and not those statistics. It is how much of the
clustering the publisher delivers inside a packet, that is the number of transactions
packed together, times the cost this handler pays for each, plus the queue that opens
behind the packet. On the live system
the far tail is the clustering of transactions delivered as span.

\begin{figure}[H]
\centering
\includegraphics[width=\textwidth]{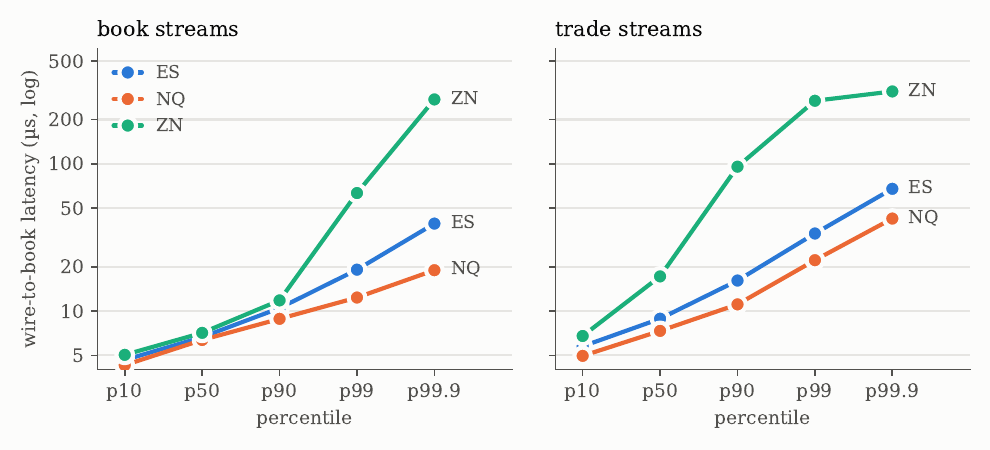}
\caption{Serial wire-to-book latency by percentile on the six live streams, one
\SI{900}{\second} window each, recovery excluded (book streams left, trade streams
right, log latency axis). The tails order by messages per packet and cost per
message, not by packet rate: the busiest stream (NQ book) has the smallest tail.}
\label{fig:tails-six}
\end{figure}

\begin{table}[H]
\centering
\small
\begin{tabular}{@{}lrrrrrr@{}}
\toprule
stream & packets & span $p_{99}$ & \si{\micro\second}/msg & $p_{50}$ & $p_{99}$ & $p_{99.9}$ \\
\midrule
NQ book  & 2{,}506{,}205 &  11 & 0.31 &  6.36 &  12.35 &  18.94 \\
ES book  & 2{,}123{,}042 &   7 & 0.57 &  6.65 &  19.10 &  39.25 \\
ZN book  &   824{,}536   &   6 & 0.97 &  7.11 &  63.30 & 273.69 \\
ES trade &    83{,}140   &  17 & 0.71 &  8.87 &  33.62 &  67.65 \\
NQ trade &    50{,}279   &  16 & 0.53 &  7.33 &  22.15 &  42.47 \\
ZN trade &    22{,}864   &  71 & 0.96 & 17.19 & 268.04 & 310.13 \\
\bottomrule
\end{tabular}
\caption{The six streams of Figure~\ref{fig:tails-six}, ordered by packet rate.
Packets and span $p_{99}$ are from the 53-minute capture of
Section~\ref{sec:crossval-span}; the cost per extra message is the fitted slope of
Section~\ref{sec:crossval-floor}; the latency quantiles, in \si{\micro\second}, are
from the \SI{900}{\second} serial windows. Rate falls down the table while the tail
rises; span $p_{99}$ and cost per message, taken together, track the tail.}
\label{tab:tails-six}
\end{table}

\paragraph{Consequences for the design rule.} The live handler's tail is the mechanism of
Sections~\ref{sec:results-transactions} and~\ref{sec:crossval-span}, seen from the
receiver's side. Section~\ref{sec:crossval-span} measured that a packet's decode time is
the decode floor plus one slope for each message beyond the first, with packets of dozens of messages on both book and trade streams, and that such packets
are usually several transactions packed together. Section~\ref{sec:crossval-spanrun} re-ran the sweep with that span-dependent service.
At service times at or just above the publisher period the message count supplies about
two thirds of the $p_{99}$ excess; well above it, where the queue across packets takes
over, it adds a few percent; and the best stage count is unchanged. The two are one
clustering delivered two ways, inside and across packets. The ZN records put the production handler in the first regime. At its decode floor the
typical packet queues behind nothing, and the verdict of Section~\ref{sec:kaspar}, not to
split at the publisher period, stands. The far tail is paid on the packed packets, whose
effective service is several times the decode floor. That is the regime in which the design rule of Section~\ref{sec:results-design} admits
a cut. A cut would shorten the queue such a packet opens behind it but not the decode
inside it. The span-aware sweep of Section~\ref{sec:crossval-spanrun}, calibrated on NQ's
cost per message, a third of ZN's, found the best stage count unchanged, so whether a cut
pays on ZN is for the measurement to decide.
The other lever is the per-message cost itself (Section~\ref{sec:recommendations}), which
scales the whole excursion. Which lever to pull is decided by the end-of-build
measurement of Section~\ref{sec:fungibility-workflow}.

\findings{
    \item \emph{On ZN the far tail is large}: about $270\,\mu\mathrm{s}$ at $p_{99.9}$, against about $26\,\mu\mathrm{s}$ for messages with nothing queued and first in their packet.
    \item \emph{Across the six live streams the tail follows packet size times cost per message, not packet rate}: the busiest stream, NQ book, has the smallest tail.
    \item \emph{Long packets and occasional slow services each start a queue in the publisher's train}; how the ZN tail divides between them was not measured.
    \item \emph{Messages with no queue still reach twice the decode floor at $p_{99}$}, for reasons the records do not show.
}


\section{Summary of Part II: findings and what they mean for HFT system design}
\label{sec:part2-summary}

Part II recorded the packet arrivals of NQ front-month on CME, ran a simulated receiver
over them at a range of service times and stage counts, followed the same transactions
upstream through the exchange's matching engine and publisher, and checked the results
against a live production receiver. This section collects what it found and states, for
each finding, what it means for the design of an HFT trading system. Part III turns these
consequences into a design procedure for a concrete system. All results are for one
instrument.

\paragraph{The arrival stream.} Packets do not arrive at random. They come in clusters
that a Hawkes process describes as near-critical, with a branching ratio of about $0.8$
(Section~\ref{sec:setup-hawkes}). The clustering belongs to the matching engine's
transactions, not to how the exchange packs them into packets
(Section~\ref{sec:results-transactions}). At the engine, one transaction in six follows the
previous one by less than $\spub$, often by a fraction of a microsecond. The
publisher does not pass that on: it is a queue that sends about one packet every
$\spub$, so a burst at the engine reaches the receiver as a train of packets one
period apart, mostly one message each (Section~\ref{sec:exchange}).
\emph{For design:} a receiver cannot be sized from the average packet rate, which is
hundreds of times too slow to matter. What it must survive is a train of packets at the
publisher period, and that period, not the rate, is the number to measure on a feed.

\paragraph{What makes the receiver's tail.} A single-threaded receiver queues only when
two packets arrive closer together than its service time. Below the publisher period it
has no queueing tail at all; above it, the tail grows much faster than the service time
(Section~\ref{sec:results-tail}). Up to about $32\,\mu\mathrm{s}$ what sets it is how many
gaps are shorter than the service time; beyond that it is how long the runs of such gaps
last (Sections~\ref{sec:results-nulls} and~\ref{sec:results-transactions}). The tail
relative to the service time does not change with market load or branching ratio in that
range, because a busier market brings more bursts, not tighter ones
(Section~\ref{sec:results-conditioning}). \emph{For design:} the question to ask of each
stage is whether its service time is above or below the publisher period. The answer does
not depend on the session being quiet or volatile, so it can be settled once per feed and
does not need a live estimate of market conditions.

\paragraph{The production operating point.} A production receiver runs at about
$7\,\mu\mathrm{s}$, just under the publisher period. There the queue across packets is
small, and most of the tail comes instead from long packets, whose messages are decoded one
after another while the packets behind them queue, and from service times that vary for
reasons the records do not show (Sections~\ref{sec:crossval-spanrun}
and~\ref{sec:fungibility-measured}). A one-message packet is served below the period and a
packet of two or more messages above it (Section~\ref{sec:crossval-floor}).
\emph{For design:} at this operating point the levers are the cost of each extra message
and the spread of service times, not the number of threads. A stage cut does not shorten
the decode of a long packet. What matters is the upper quantiles of per-packet service
under load, measured on the real feed, compared with the publisher period.

\paragraph{What reduces the tail.} Cutting the servicing chain into stages on separate
threads costs one hop per stage on every message and shrinks the queueing tail, because
only the slowest stage queues (Section~\ref{sec:results-design}). The tail appears between
$7$ and $8\,\mu\mathrm{s}$; a cut starts to pay between $8$ and $10\,\mu\mathrm{s}$, once the
tail is larger than the hop, and from $16\,\mu\mathrm{s}$ it pays on every window
(Sections~\ref{sec:results-threshold} and~\ref{sec:results-windows}). The best number of
stages grows with the service time. For a stage without state, such as decode, a pool of
cores that each take whole packets does as well as a cut at two cores and better from four,
all costs charged (Section~\ref{sec:results-equal-core}). A cheaper hop widens the range
where cutting pays but does not move the threshold (Section~\ref{sec:results-h}).
\emph{For design:} cut a chain whose service time is above the split threshold, at the
natural boundary that makes the slowest stage shortest; a cut that leaves the slowest
stage as it was only adds a hop. Merge short stages onto one thread. Give a stateless
stage a pool of cores instead of a cut when four or more cores are available; the order
book, which carries state, can only be cut.

\paragraph{The theory and the live system.} The two analytical results of Part~I hold on
every message of every window, and the Poisson stream behaves exactly as the theory
predicts (Section~\ref{sec:evaluation}). A live production receiver, on three instruments
and a different clock, shows the same clustering (Section~\ref{sec:crossval-agree}).
\emph{For design:} the design equation of Section~\ref{sec:results-design} can be applied
with confidence to a real pipeline: its terms are exact, and its inputs, the publisher period and the service time of each stage, can be measured directly.

\findings{
    \item \emph{The receiver's tail threshold is the publisher period $\spub$.} \emph{Design:} measure it once per feed and compare every stage with it.
    \item \emph{Above it, transaction timing sets the tail and only the slowest stage matters.} \emph{Design:} cut chains above the split threshold of about $10\,\mu\mathrm{s}$ at the boundary that shortens the slowest stage.
    \item \emph{At a production receiver's decode floor, long packets and variable service carry most of the tail.} \emph{Design:} reduce per-message cost and service spread; do not add threads.
    \item \emph{A pool of whole-packet servers beats a cut from four cores for a stage without state.} \emph{Design:} pool decode, cut the order book.
    \item \emph{The threshold does not move with market load.} \emph{Design:} no runtime estimate of market conditions is needed.
}

\paragraph{What Part II leaves open.} Only NQ front-month was measured, and only its
messages were counted in each packet; the full channel, ZN and other instruments are open
(Section~\ref{sec:open-questions}). Service time in the simulation is constant or grows
with message count; its spread on a live receiver is not modelled. Why orders reach the
matching engine within a microsecond of each other is a question about the market and is
not answered here (Section~\ref{sec:concl-cannot}). A designer applying these results to
another feed should measure the publisher period and the per-packet service there rather
than assume them.

\part{Best practices: implications for HFT systems design}

\section{The reference instance: \kasparhft{} and its \texttt{fast\_send} pipeline}
\label{sec:kaspar}

Part~III applies the design consequences of Part~II, collected in
Section~\ref{sec:part2-summary}, to a concrete system. This section introduces the system that the rest of Part~III works on, and applies the
design rule of Section~\ref{sec:results-design} to it. The rule is to cut a chain only
where the cut shortens its slowest stage, and never once that stage is already below the
publisher period, because below it there is no tail left to remove. It first says what \kasparhft{} is and names the two ways its
actors hand messages to one another, one synchronous and one queued, because the choice
between those two at each boundary is exactly what the design rule decides. It then lays
out how the production pipeline is actually wired: which boundaries are queued and why,
and which stages run as one unbroken chain on a single thread. Finally it runs the design
equation on that chain. On the chain as it stands, about seven microseconds long, the rule
says do not split it. On a longer chain, sixteen microseconds, it says make one balanced
cut and no more. The section after this one turns those verdicts into a procedure for
changing the real system one boundary at a time and measuring the result.

The recommendations in this Part are stated for HFT systems in general and worked out
on one system.
\kasparhft{} \citep{kaspar_repo,mayeski_fastsend} is an open-source C++20
trading system for CME futures built on a shared-nothing actor framework, available at
\url{https://github.com/vincent212/kaspar-hft}: every component (socket reader, feed
arbitrator, SBE decoder, order book, strategy, order manager, iLink session) is an actor
with private state and a mailbox, and components interact only by passing messages. It is
at once a production system (MDP3 multicast in, full order books reconstructed order by
order, authenticated iLink~3 sessions out) and, with the same actor code driven from recorded
packet captures, a deterministic replay simulator in its own right. Part~II does not run
on that simulator: its measurements are made on a standalone tandem-Lindley model of the
pipeline (Section~\ref{sec:setup-simulator}), driven by packet-arrival timestamps
extracted from the same class of capture. The framework supplies the architecture under
study and the calibration of $h$; it is not in the measurement loop. The earlier paper
\citep{mayeski_fastsend} documents the four extensions that make the actor model usable
at HFT latencies. \texttt{fast\_send}, the synchronous send, is a delivery in which the
sending thread runs the receiver's handler inline and takes the reply as a value. Actor
\texttt{Group}s co-schedule a set of actors on one thread behind a single mailbox. Mailbox
queue implementations are selectable per actor. A memory pool removes per-message
allocation and its tail. The defining property of \texttt{fast\_send} is
\emph{receiver transparency}: a handler is written once and has no way to tell whether it
was invoked synchronously or asynchronously, on which thread, or whether its sender is
blocked. Section~\ref{sec:implementation} builds on that property.

The same paper built and measured a production trading pipeline; that pipeline is the design this paper re-examines. It has three
sections joined by exactly two cross-thread hops. Two socket-reader actors, one per
redundant multicast channel, do nothing but drain their UDP sockets and hand datagrams to
a buffer actor by an asynchronous \texttt{send}; that hop is asynchronous because a reader
that blocked on its callee would drop packets under a burst. From the buffer, arbitration
between the A and B feeds, SBE decode, order-book construction, and signal generation run
as a single chain of \texttt{fast\_send}s on one thread, with no queue and no context
switch between stages. When the signal decides to act it hands the order to an
order-management actor by a second asynchronous \texttt{send}, because the iLink write
takes microseconds and blocking the decode thread on it would stall the market-data queue.
The measured cost of each delivery path is given in Section~\ref{sec:implementation}.
Against a servicing chain of $\sim 7\,\mu\mathrm{s}$ measured tick-to-book on live ES, NQ,
and ZN feeds, the framework's own per-message contribution is under $1\%$ of the decode floor,
and the same measurement attributed the latency tail to the arrival stream rather than to
the framework; the stream was strongly clustered, with a Fano-implied branching-ratio
ceiling of $0.85$--$0.97$. That
observation motivated the present paper.

Read against Part~II, this pipeline is the single-thread servicing chain of
Section~\ref{sec:intro-singlethread-priors} realised with actors: the whole $T$ sits on
one thread between two hops that were placed where blocking was unacceptable. The present paper adds that the same logic applies inside the chain once the chain is
long enough for the feed to queue behind it. The recommendations that follow are stated as
changes to this pipeline: where to cut it, in what order to place the stages, which
boundaries to leave on \texttt{fast\_send}, and how to decide each by measurement.

\paragraph{The design rule applied to this pipeline.} The rule of
Section~\ref{sec:results-design}, cut only where the cut shortens the slowest stage and
never below the publisher period, can be run on the reference chain directly. It needs one
input: the tail excess that a single stage of a given length $s$ would carry,
$\Delta_{\mathrm{single}}(s)$. Table~\ref{tab:delta} supplies it. Section~\ref{sec:results-design}
showed that the excess depends only on the per-stage service $T/N$, so every entry of that
table is the excess of one stage of length $T/N$, and read that way the table is a lookup:
zero for a stage shorter than the publisher period, and growing fast above it.

Whatever the split of the $\sim 7\,\mu\mathrm{s}$ chain across its natural boundaries
(arbitration, SBE decode, book application, signal), every partition has its slowest stage
at $7\,\mu\mathrm{s}$ or less, just under the publisher period, where Part~II finds no
queueing tail in almost every window (Section~\ref{sec:results-threshold}). The chain has
no queueing excess to remove, every cut returns exactly its hop tax, and $N^\star = 1$.
The tail it does carry comes from the few packets that carry many messages: their decode
one message after another, which a stage cut does not shorten, and the queue their long
service opens behind them (Section~\ref{sec:fungibility-measured}).

\paragraph{The design rule applied to a longer chain.} The rule applies once the chain
exceeds the publisher period, for instance when a strategy stage is added, or on a feed with a shorter
publisher period. Take a
$16\,\mu\mathrm{s}$ chain with the same four boundaries, decode and book now at
$6\,\mu\mathrm{s}$ each and the two small stages at $2\,\mu\mathrm{s}$:

\begin{center}
\begin{tabular}{@{}lrrrr@{}}
\toprule
partition of the $16\,\mu\mathrm{s}$ chain & $s_{\max}$ (\si{\micro\second}) & hops & $\Delta_{\mathrm{single}}(s_{\max})$ (\si{\micro\second}) & $p_{99}$ excess $+$ hop tax \\
\midrule
single thread ($16$)                        & 16 & 0 & 18.59 & 18.59 \\
$8 + 8$ (arb$+$decode\,$|$\,book$+$signal)  &  8 & 1 &  0.61 &  2.31 \\
$8 + 6 + 2$ (three stages)                  &  8 & 2 &  0.61 &  4.01 \\
$2 + 6 + 6 + 2$ (all four split)            &  6 & 3 &  0.00 &  5.10 \\
\bottomrule
\end{tabular}
\end{center}

Every entry is read off Table~\ref{tab:delta} and the refined sweep of
Section~\ref{sec:results-threshold}. Three points follow. The balanced two-way cut is the
best option: it removes almost all of the tail excess for one hop, about an eightfold
improvement in the objective on these illustrative stage times. Three stages do not lower $s_{\max}$, since the $8\,\mu\mathrm{s}$ stage
remains the bottleneck, so the second hop costs $1.7\,\mu\mathrm{s}$ and removes nothing
(Remark~\ref{rem:unequal}, third consequence). Splitting all four lowers $s_{\max}$ to $6$
and removes the small remaining excess at the cost of two more hops, and is worse. The recommendation for a chain of this length is one cut, placed so that the two
halves are as equal as the natural boundaries allow, with the larger half at ingress
(Remark~\ref{rem:dissipation}). The numbers are a prior for the end-of-build measurement
of Section~\ref{sec:fungibility-workflow}, not a substitute for it. The per-stage service
times are illustrative and are replaced by the operator's profile of the chain.

\section{Implementation on \kasparhft{}: setting the send mode of each boundary}
\label{sec:implementation}
\label{sec:fungibility}

The practical claim of this paper is that on Hawkes-clustered feeds the choice of
\texttt{fast\_send} versus asynchronous \texttt{send} at each actor boundary is a
late-stage optimisation decision, made per boundary from measured end-to-end median and
tail latency of the assembled pipeline on the target feed, rather than an early
architectural commitment. \kasparhft{} supports this workflow. This section states the
property that makes it possible and the procedure it enables.

The design equation maps onto framework choices in a specific way. The paper's model of an
$N$-stage tandem assumes each stage's actor runs on its own thread and sends its output to
the next stage through an asynchronous mailbox; the hop cost $h$ is the wall-clock
latency of that handoff.

\paragraph{Two hop paths in the shipping \kasparhft{} framework.} Of the delivery
mechanisms described in Section~\ref{sec:kaspar}, two matter for the design equation. The
default asynchronous \texttt{send} uses \texttt{BQueue}, a blocking mailbox built on
\texttt{std::mutex} $+$ \texttt{std::condition\_variable} $+$
\texttt{boost::circular\_buffer}; a producer thread enqueues under the mutex and a
consumer thread waits on the condition variable. This is the hop path the paper's
Section~\ref{sec:setup-simulator} model corresponds to, and halving its measured $\sim
3.4\,\mu\mathrm{s}$ round trip gives the one-way hop cost $h \approx 1.7\,\mu\mathrm{s}$
used throughout Part~II.

The \texttt{fast\_send} path is different: it is a \emph{synchronous} dispatch that
acquires a per-actor mutex, calls the receiver's handler inline on the sender's thread,
and returns the reply. There is no queue and no cross-thread wake-up. Its per-call cost is
much smaller than the BQueue path's, but it is not a pipelined hop: the sender blocks
until the receiver finishes, so a chain of \texttt{fast\_send}s serialises through all
stages on one thread and offers none of the throughput speedup of
Proposition~\ref{thm:burst}. The paper's design equation does not apply to
\texttt{fast\_send} chains.

\subsection{Receiver transparency of \texttt{fast\_send} and \texttt{send}}
\label{sec:fungibility-transparency}

Every actor in \kasparhft{} registers its handlers through the
\texttt{MESSAGE\_HANDLER(MsgType,} \texttt{handler\_method)} macro. A handler is a function of one
message argument; nothing in its body can access, observe, or condition on the delivery
mode by which it was invoked. This property is called \emph{receiver transparency} in
\citet{mayeski_fastsend} and is what makes the send-mode choice a purely sender-side
decision.

Concretely:
\begin{itemize}
  \item \textbf{Synchronous delivery:} \texttt{target->fast\_send(m,\ this)}. The sender's
thread acquires \texttt{target}'s per-actor mutex, calls the receiver's handler inline on
the sender's thread, and takes the reply as a return value. No queue, no cross-thread
wake-up. Round-trip cost $\sim 30$ ns per Table~4 of \citet{mayeski_fastsend}.
  \item \textbf{Asynchronous delivery:} \texttt{target->send(m,\ this)}. The message is
enqueued on \texttt{target}'s BQueue mailbox (\texttt{mutex} $+$
\texttt{condition\_variable} $+$ \texttt{boost::circular\_buffer}); \texttt{target}'s own
thread wakes up on the condition variable and runs the handler. Round-trip cost $\sim
3.4\,\mu\mathrm{s}$ across threads, or $\sim 90$ ns same-thread within a \texttt{Group},
per the same reference.
\end{itemize}
Both paths deliver the same handler with the same message. Neither the handler's body nor
any actor downstream of it can tell which path fired. Switching a boundary between the two
is a one-line change at the sender's call site --- \texttt{target->fast\_send(m,\ this)}
$\leftrightarrow$ \texttt{target->send(m,\ this)} --- with no code change on the receiver
side, no header change, no interface change, no test change.

\subsection{End-to-end measurement of the boundary choice}
\label{sec:fungibility-why-measure}

The end-to-end latency profile of an HFT pipeline is a joint property of (i)~which stages
exist and where they are cut, (ii)~which pairs of adjacent stages are joined by
\texttt{fast\_send} versus \texttt{send}, and (iii)~the arrival law of the input feed.
Proposition~\ref{thm:burst} and Theorem~\ref{thm:reduction} both fold across multiple
boundaries: the throughput speedup is $N$-fold and the tail contraction is $N^{-\gamma}$
with $\gamma \ge 1$, so the joint optimum is not deducible from local reasoning at a
single boundary. A designer who reasons one boundary at a time may underestimate the
aggregate tail benefit of adding hops; this is the reasoning behind the single-thread
convention of Section~\ref{sec:intro-singlethread-priors}.

Getting the joint optimum therefore requires end-to-end measurement of the assembled
pipeline against a representative feed, and iteration over per-boundary flips. The
observable objective is a two-tuple $(p_{50}, p_{99})$ of end-to-end latency under the
target feed, with an operator-chosen trade-off between the two
($\kappa$ of Section~\ref{sec:results-design}). The design equation of
Section~\ref{sec:results-design}, $N^\star = \arg\min_N [(N-1)h + \kappa\,\Delta(N)]$,
or in its partition form the cut with the smallest reachable $s_{\max}$ and then the
fewest hops, supplies a prior on where the optimum lies under a Hawkes feed; the
measurement decides.

\subsection{The measure-and-flip procedure for setting each boundary}
\label{sec:fungibility-workflow}

The procedure has five steps. Draft the pipeline as a chain of actors at the natural
cut points. Wire each boundary with whichever of \texttt{fast\_send} or \texttt{send} is
convenient during development. Once the pipeline runs end to end, measure the median and
$p_{99}$ on a representative tape as a baseline. Flip one boundary at a time, re-measure,
and accept the flip if the pair improves under the operator's trade-off. Iterate until no
single flip improves the objective. The full design flow,
including the steps that precede this, is given in Section~\ref{sec:recommendations}.

Under the arrival law of a CME MDP3 feed the design equation of
Section~\ref{sec:results-design} predicts what the iteration converges to. Asynchronous
\texttt{send} goes at every boundary whose removal would lengthen the largest stage.
\texttt{fast\_send} goes everywhere else: between short sub-tasks below the publisher
period, and between sub-tasks downstream of the bottleneck whose combined service stays at
or below the largest stage (Remark~\ref{rem:unequal}). The measurement decides; the prediction indicates where to expect it to converge.

\subsection{Thread mapping and send mode as deployment decisions}
\label{sec:fungibility-decoupling}

A consequence of Section~\ref{sec:fungibility-workflow}: in \kasparhft{} two distinct
topology decisions are not knowable at design time.
The first is the \emph{actor-to-thread mapping} --- which actors share a thread by being
placed in a common \texttt{Group}, and which run on their own thread. The second is the
\emph{per-boundary send-mode choice} --- whether each pair of adjacent actors is joined by
synchronous \texttt{fast\_send} or asynchronous \texttt{send}. Neither decision is
deducible from the source code alone. Both are functions of the target feed's arrival
statistics and the operator's median-versus-tail trade-off, both of which are properties of
the deployment environment. The optimum can only be measured from the assembled system
running against a representative feed.

Under receiver transparency this is not a limitation. Actor code,
message-type headers, and the full suite of unit and integration tests can be written and
run before either the actor-to-thread mapping or the send-mode choice has been decided.
Code construction and testing therefore proceed independently of the latency-optimisation
consideration --- both decisions are deferred to the end of the build, informed by
measurements that could not have been taken any earlier because the system did not yet
exist to measure. Correctness is established first; latency is optimised against the
specific feed the system will run on, second. Two developers building two actors on the
two sides of a boundary do not need to agree on whether their actors will ultimately live
on the same thread or on different threads, nor on whether their handoff will be
synchronous or asynchronous; they need to agree only on the message type. The message-type agreement is checked at compile time; the two topology decisions are
parameters set at deployment.

This decoupling of correctness from latency optimisation is the practical consequence of
receiver transparency, and distinguishes the design cycle from that of frameworks in
which either topology decision enters the receiver's interface
(Section~\ref{sec:fungibility-contrast}). Design proceeds without
a commitment to a topology; the commitment is made once, near deployment, on the basis of
measurements the code did not need to know about while it was being written.

\subsection{Frameworks in which the send mode is part of the receiver's type}
\label{sec:fungibility-contrast}

Most low-latency actor frameworks bake the sender's choice of delivery mode into the
receiver's type. An Akka actor's \texttt{tell} versus \texttt{ask}, a CAF actor's
\texttt{send} versus \texttt{request}, and an Erlang process's cast versus call each
expose different receiver-side surfaces, so flipping a boundary from sync to async is a
small refactor rather than a one-line change. Under those frameworks the design flow of
Section~\ref{sec:fungibility-workflow} is not tractable, and the operator's incentive is
therefore to lock in the send-mode choice early --- reasoning locally --- rather than to
defer it to an end-of-build measurement pass. This incentive is a plausible contributing cause of the single-thread convention: frameworks that penalise deferral push
designers toward the topology that is least sensitive to a wrong early guess, which is the
collapsed single-thread servicing chain.

\kasparhft{}'s joint mailbox API of \texttt{fast\_send} and \texttt{send}, with
receiver transparency, is what makes measurement-driven send-mode optimisation practical.
The design rule depends on this property operationally.


\section{Pre-spinning receivers and pre-warming on self-exciting feeds}
\label{sec:other-arch}

The paper's central claim is that the single-thread hot-path recommendation does not
hold above a service-time threshold on Hawkes-clustered feeds
(Section~\ref{sec:intro-singlethread-priors}, Section~\ref{sec:results}). Two adjacent
conventions, pre-spinning receivers and pre-warming pipelines, are also affected by the
arrival law, and this section examines them. Neither intervention is measured
here; both are future work. The argument is that the arrival law provides much of the
benefit these interventions are intended to deliver, on the tail quantile.

\paragraph{Pre-spinning receivers.} The busy-poll recommendation of
\citet{rigtorp_lowlatency}, \citet{belay2014ix}, and the Aeron/Chronicle lineage
(Section~\ref{sec:intro-singlethread-priors}) pins a spinning thread per receiver so the
wake path from a sleeping condition variable is never taken. On a Poisson feed, any
arrival can find the receiver asleep, and the wake latency is paid on average;
pre-spinning eliminates it. On a Hawkes feed with branching ratio $n \approx 0.8$,
sleep/wake transitions happen only at cluster boundaries. Every $p_{99}$ message is inside a cluster, since at these service times the Poisson
null produces no wait at all (Section~\ref{sec:results-tail}). By the time a tail-setting
message arrives, then, the receiver is busy. At short service times it is still serving
the predecessor the publisher sent one period earlier; at long ones it has been processing
a run for many multiples of the per-message service time
(Section~\ref{sec:results-nulls}). In either case it is effectively spinning
as a natural consequence of the load. Under the model used here, pre-spinning therefore saves little or
no wake latency at the tail; this is a prediction of the queueing model, not a measurement
of a production receiver's state, and should be evaluated directly. It saves wake latency
for the median and for the minimum latency (inter-cluster arrivals which do find the
receiver asleep after seconds-scale gaps), but the saving is at most the wake-path part of one hop per boundary (Section~\ref{sec:results-h}), and is much smaller than the tail win from splitting. There is also an operational cost. In the present author's production experience,
pinning spinning threads across the servicing chain has degraded the system as a whole:
steady core burn, cache thrash between the spin loop and the actor's own working set, and
pressure on shared resources such as the last-level cache and memory bandwidth, which
together offset the median-latency gains. This is anecdotal and not measured here.

\paragraph{Pre-warming.} Pre-warming is a subset of the wake-up cost discussed above. The $h = 1.7\,\mu\mathrm{s}$ one-way hop cost
that this paper calibrates already captures the entire process-wake-up path on the
receiver side: the kernel context switch, the reload of the receiver's instruction cache,
the L1/L2 misses on its actor state and message pool, the TLB refill, and the scheduler
putting the thread back on a core. To the extent the thread hop is dominated by
cluster-wait on Hawkes feeds --- because every $p_{99}$ arrival is intra-cluster and the
receiver is, under the model, already awake --- the cache-warmth components inside that
$h$ should follow the same pattern, since they are the same $h$ decomposed. This is a
consequence of the model rather than a measured result. Pre-warming interventions that
fire synthetic traffic to keep instruction caches, data caches, TLBs, and branch
predictors hot are therefore attacking the same $h$ that Section~\ref{sec:results} finds
is dominated by cluster-wait. What such interventions can meaningfully deliver is confined
to the median and the minimum --- the inter-cluster first-arrival case where the receiver
truly did cool down --- and even there the delivery is bounded by the wake-up cost itself
at the cost of continuous L1/L2 pollution from the warmup traffic. Under Hawkes the
tail is unaffected.

\paragraph{Measurement of both interventions.} Pre-spinning and pre-warming are first-order
tail-latency techniques when arrivals are uncorrelated. On self-exciting exchange feeds the arrival law itself provides the
receiver-awake and cache-warm state on the tail quantile, so both interventions are
second-order there and can be counterproductive once their steady costs are included. Both claims are
corollaries of the paper's main result rather than separate contributions, but each
deserves a dedicated empirical evaluation on Hawkes-clustered feeds, analogous to the
single-thread evaluation this paper carries out. The two open questions are:
\begin{itemize}
    \item What is the actual ROI of pinned busy-poll receivers on the Hawkes-load $p_{50},
p_{99}, \max$ triple, net of the steady core-burn cost and the cache pressure the spin
loop imposes on the actor's own working set?
    \item What is the actual ROI of periodic pre-warming traffic on the same triple, net
of the L1/L2 pollution it creates and the residual advantage it delivers only on
inter-cluster first-arrivals?
\end{itemize}
Both are direct extensions of the present paper's methodology (measure under real CME MDP3
tapes; use a matched Poisson null to isolate the arrival-law dependence) and both are
candidate follow-ons.


\section{Recommendations for HFT system designers}
\label{sec:recommendations}

Section~\ref{sec:part2-summary} states, finding by finding, what Part~II means for the
design of an HFT trading system. This section puts those consequences in the form of recommendations.
The paper's results qualify one architectural default (the single-thread hot path as an
unconditional choice) and question two adjacent ones (busy-poll receivers,
pre-warming). They do not contradict
per-message code optimisation. This section separates what the paper leaves unchanged
from what it argues against, and states the design flow the design equation implies.

\begin{center}
\fbox{\begin{minipage}{0.92\linewidth}
\textbf{Design principle (measure the split threshold on the target feed).} The split
threshold is the service time above which cutting a stage into two lowers its $p_{99}$.
On NQ it lies between $9$ and $10\,\mu\mathrm{s}$, just above the publisher period by the
amount the tail needs to exceed the hop cost (Section~\ref{sec:results-threshold}). It is
set by how many packets arrive closer together than the stage's service time, which on
CME depends on the publisher period and not on the packet rate or the fitted branching
ratio at HFT service times. On a feed that packs a large share of its gaps at its minimum
period, the same stage would be split (Appendix~\ref{app:synthetic}), so the threshold
cannot be read off fitted parameters. Measure it on the target feed, under the load the
system will see; the full rule is the design-rule box of
Section~\ref{sec:concl-summary}.
\end{minipage}}
\end{center}

\subsection{Recommendations left unchanged}

Per-message hot-path optimisation --- branch-free decoders, cache-line-aligned
actor state, atomic-free single-writer queues, SBE decode in place, memory pools,
\texttt{fast\_send} at $\sim 30$ ns, kernel-bypass NICs (Solarflare Onload, DPDK, VMA),
\texttt{isolcpus}, \texttt{SCHED\_FIFO}, huge pages, RTC counters instead of
\texttt{clock\_gettime}, non-blocking IO on the socket, avoidance of allocation on the hot
path, careful use of the branch predictor --- remains necessary. All of it is contained
in the service time $T$, which the paper takes as given.
Proposition~\ref{thm:burst} gives an $N$-factor throughput speedup at any $T$, and the
$1/N$ bound of Theorem~\ref{thm:reduction} applies to a single-stage wait $\Delta(1)$
that itself scales with $T$, so the absolute gain from splitting grows with $T$ while
the hop cost does not. The splitting rule is therefore complementary to per-message optimisation. A two-stage
tandem on a $20\,\mu\mathrm{s}$ servicing chain has lower latency at every quantile than
the same tandem on a $40\,\mu\mathrm{s}$ chain, and the difference is entirely in the
per-message layer. Per-message optimisation that brings a chain below the publisher period
removes the case for splitting it at all (Section~\ref{sec:results-threshold}).

\subsection{Recommendations the paper argues against}

Two conventions in the layer above per-message optimisation. First, the single-thread
servicing chain as the unconditional topology default (Section~\ref{sec:intro-singlethread-priors}): on
Hawkes-clustered feeds with $T$ above the publisher period of Section~\ref{sec:results-design}, the
tail is dominated by cluster wait and adding threads along the chain reduces it at a
median cost of $(N-1)h$. Second, pre-spinning receivers and pre-warming pipelines as
always-on interventions (Section~\ref{sec:other-arch}): on Hawkes-clustered feeds the
receiver is already busy and warm at the arrival of every $p_{99}$ message, so these
interventions affect the median and minimum, not the tail, at a steady CPU and cache cost.
The claim that a faster inner loop implies a single thread is preferable conflates two
claims: that a faster inner loop lowers latency, which the paper does not dispute, and
that a single thread is therefore the best topology, which the paper contradicts above
the publisher period.

\subsection{Recommended design flow}

The design flow that follows from the paper's results, and that we recommend as the
canonical HFT infrastructure workflow on self-exciting exchange feeds:
\begin{enumerate}
    \item \textbf{Retain per-message optimisation.} This is orthogonal to the paper's
recommendation and determines absolute latency. Basis: at a production receiver, just
under the publisher period, the tail that remains is in packets of several messages
and in variable service, so the cost per message and the spread of service are the
levers there (Sections~\ref{sec:crossval-spanrun} and~\ref{sec:fungibility-measured}).
    \item \textbf{Measure the target publisher period, and use it as the split/don't-split
test.} Take the corpus of packet arrivals, form the per-window gap distribution, and read
off a low quantile of it; we use the $1$st percentile. On NQ it is the publisher period
(Section~\ref{sec:results-conditioning}). Compare it with the upper quantiles of each
stage's per-packet service under load, not with the median. A stage below it will not
develop a queue and must not be split. A stage above it should be split, once the tail
exceeds the hop cost, which on this corpus happens a microsecond or two further along
(Section~\ref{sec:results-threshold}). Use the tight end of the gap distribution, never
the mean: the mean gap is some three hundred times longer, and a rule stated on it would
say never to split anything. On NQ the publisher period did not move between quiet and
volatile sessions, so it was measured once per feed; that is a finding about this feed's
publisher, to be checked on another feed rather than assumed.
    \item \textbf{Draft the pipeline as a chain of actors at natural cut points, then
balance it.} On a CME MDP3 handler: \texttt{SocketReader}, SBE decoder, order book,
strategy actor, order manager --- four or five stages, no artificial decomposition
(Section~\ref{sec:implementation}). The tail factor is $s_{\max}/T$
(Remark~\ref{rem:unequal}), so the goal of the decomposition is the smallest largest
stage, and the largest stage belongs at ingress so that all queueing sits in one mailbox
(Remark~\ref{rem:dissipation}). Basis: the tail excess on the corpus depends on the
service time only through the largest stage (Table~\ref{tab:delta}).
    \item \textbf{Divide the stages above the publisher period by state.} A stage that carries no
state from one packet to the next, such as SBE decode, can be served by a dispatch pool of
whole-packet servers; with four or more cores for it, a pool is better than cutting it
into the same number of stages, at two cores the two are level
(Table~\ref{tab:dispatch-costed}). A stage that carries state, the order book above all,
is served by cutting the chain.
    \item \textbf{Wire boundaries with whichever of \texttt{fast\_send} or \texttt{send}
is convenient during development.} Receiver transparency in \kasparhft{} makes this a
one-line-per-boundary decision that can be deferred (Section~\ref{sec:fungibility}).
    \item \textbf{Do not commit to an actor-to-thread mapping at design time.} Grouping
decisions and per-boundary send-mode decisions are both scalar dials at deployment, not
code-structure decisions (Section~\ref{sec:fungibility-decoupling}).
    \item \textbf{Measure the assembled pipeline against a representative feed.}
Corpus-median $p_{50}$ and $p_{99}$ of end-to-end latency, per operator-chosen trade-off
(Section~\ref{sec:fungibility-workflow}).
    \item \textbf{Flip one boundary at a time and re-measure.} Accept the flip if
$(p_{50}, p_{99})$ improves. Iterate until no single-boundary flip helps.
    \item \textbf{Prior on where step 7 converges.} On Hawkes-clustered feeds, async
\texttt{send} at every boundary whose removal would raise the largest stage;
\texttt{fast\_send} at every other boundary, in particular between sub-tasks downstream of
the bottleneck whose combined service stays at or below $s_{\max}$, and between short
sub-tasks below the publisher period. A hop that does not lower $s_{\max}$ costs $h$ at every
quantile and compresses nothing.
    \item \textbf{Do not pre-spin or pre-warm by default.} If pre-spinning or pre-warming
is used, measure the ROI on a Hawkes-representative tape, net of the steady core-burn and
cache-pollution costs. If the measurement improves the tail materially, keep the
intervention. The paper's prior is that on Hawkes feeds it will not
(Section~\ref{sec:other-arch}).
\end{enumerate}
The workflow prioritises measurement over local reasoning because the arrival law of
the target feed is a first-order determinant of the optimum and is not a property of the
code.


\clearpage
\part{Conclusions}

\section{Conclusion}
\label{sec:conclusion}

This paper measured a real exchange feed, over a year of NQ packets and matching-engine
transactions, from the matching engine through the publisher to a receiver, and derived
from what it found how the trading system that consumes it should be designed: what makes a
receiver's latency tail, and how to reduce it. It is not about why the market behaves as
it does. Section~\ref{sec:concl-summary} gives the answer in one place;
the remaining subsections state what the paper establishes about the cause of the tail,
what it does not, how to reduce the tail, what it revises in a prior design context, and
the caveats. All empirical results are for one instrument, NQ front-month on CME.

\subsection{Summary: what causes the receiver's tail and how to reduce it}
\label{sec:concl-summary}

\paragraph{What causes the tail.} The chain starts at the exchange. The matching engine
often processes one transaction less than a microsecond after the previous one
(Table~\ref{tab:engine-clock}). The market-data publisher cannot send that fast. It is a
queue with service time $\spub$ per packet, so a burst at the engine
leaves it as a train of packets one period apart, mostly one message each, with an
occasional larger packet after a longer gap (Tables~\ref{tab:two-clocks}
and~\ref{tab:drain-gap}). The receiver is the next queue in line. If it handles a packet
faster than the publisher sends them, it keeps up with any train and builds no queue from
the trains alone. If it is slower, it falls behind by the difference on every packet of the
train, and the messages at the end of the train wait. At service times a little above the
publisher period the wait is mostly behind the packet or two just ahead; at much longer
service times it is behind runs of transactions spread over a few hundred microseconds,
which is the self-excitation a Hawkes fit measures (Table~\ref{tab:nulls},
Section~\ref{sec:exchange-bursts}).

A production receiver runs just under the publisher period, at about $7\,\mu\mathrm{s}$,
and there two further mechanisms carry most of its tail. The first is long packets. Decode
cost grows with the number of messages, so a packet with two or more messages takes longer
than the publisher period; its later messages wait for the earlier ones, and the packets
arriving behind it queue while it holds the thread. In the simulation at that service time the decode inside packets supplies about two
thirds of the tail, and on the live receiver the decode cost of a message's position in
its packet reproduces most of the observed $p_{99}$ (Section~\ref{sec:crossval-spanrun}).
On ZN, where each extra message costs more, a few long packets set the far tail
(Section~\ref{sec:fungibility-measured}). The second is
service time that varies from packet to packet: on the live ZN receiver, messages with
nothing queued ahead of them and first in their packet still take about twice the decode
floor at $p_{99}$, and any such service in a train can start a queue. The simulation
models the first mechanism only through NQ's cost per message and the second not at all,
so its tails at the production service time are lower than a real receiver's. The packet
rate does not cause the tail at HFT service times: utilisation there is tiny
(Section~\ref{sec:results-conditioning}). Why orders reach the engine within a microsecond
of each other is a question about the market and is left open
(Section~\ref{sec:concl-cannot}).

\paragraph{How to reduce it: split serially.} Cut the servicing chain into stages on
separate threads, each passing its output to the next through a mailbox hop. Every message
pays one hop per boundary at the median, and the queueing excess shrinks by at least the
number of stages, because only the slowest stage queues: the tail is set by the service
time of the largest stage (Theorem~\ref{thm:reduction}). A stage below the publisher period does not queue at all. On NQ, one cut into two halves removes most of the tail in
every window at every service time well above the publisher period; the best number of stages grows
with the service time, limited by the natural cut points in the chain
(Section~\ref{sec:concl-mitigate}). Serial splitting is the only option for a stage that
carries state from one message to the next, such as the order book.

\paragraph{How to reduce it: split in parallel.} Give whole packets to a pool of cores,
each running the full stage, and put them back in order on the way out. With the hop in,
the hop out and the resequencing all charged, the median costs two hops whatever the pool
size, and almost no packet waits to be put back in order (Table~\ref{tab:dispatch-costed}).
A pool of two cores is level with a two-stage serial cut; a pool of four or more is better
than a serial cut into as many stages (three cores were not simulated). It works only for a
stage without state, such as SBE decode. A design that decodes in a pool and feeds one
order-book thread combines the two and has not been costed here
(Section~\ref{sec:future}).

\begin{center}
\fbox{\begin{minipage}{0.92\linewidth}
\textbf{Design rule (NQ front-month on CME).} Measure the service time of the servicing
chain per packet under real market load, including multi-message packets and bursts, and
compare its upper quantiles, not its median, with the publisher period (about
$7.5\,\mu\mathrm{s}$).
\begin{itemize}
    \item Upper quantiles below the period: keep one thread. Splitting only adds hop
cost.
    \item Median below the period but upper quantiles above it, the regime of a
production receiver: the tail comes from long packets and slow services. Reduce the
per-message decode cost and the spread of service times; a stage cut does not shorten
the decode of a long packet.
    \item Typical service above the period, stage carries state: split serially, at the
natural boundary that minimises the largest stage, with more stages the longer the
service time.
    \item Typical service above the period, stage carries no state, four or more cores
available: dispatch whole packets to a pool and resequence.
\end{itemize}
On another feed, measure its publisher period first; the threshold is that period.
\end{minipage}}
\end{center}

\subsection{What causes the tail: what the paper establishes}
\label{sec:concl-can}

The questions were what produces the queueing tail behind a single-threaded receiver ---
the packet rate, the clustering of arrivals, or the number of messages per packet --- and
are two queues better than one. The first is answered by controlled simulation on the
real arrival times: each counterfactual stream of Section~\ref{sec:setup-nulls} removes
one property of the stream and keeps the rest, so the change in the tail measures that
property's contribution. The following hold on NQ.

\begin{enumerate}
    \item \emph{The tail is produced by the timing of matching-engine transactions and, at
production service times, by the number of messages per packet.} Putting transactions at
random moments removes most of the tail (Table~\ref{tab:tx-arms}); the message count is
claim~5.
    \item \emph{Not by the rate.} Utilisation is tiny at every service time an HFT hot path
occupies, and the busiest windows have the same tail, relative to the service time, as
the quietest (Section~\ref{sec:results-conditioning}); the rate matters only at service
times far above the publisher period, where ordinary utilisation queueing begins. A
stream that keeps every second's packet count and places packets at random within the
second has no tail at HFT service times (Table~\ref{tab:nulls}).
    \item \emph{Not by the exchange's packetisation.} Almost every transaction fits in one
packet, the pieces of a split one arrive well apart, and the short gaps lie between
different transactions; merging each transaction into one packet or spreading its
packets leaves the tail about as it was (Section~\ref{sec:results-transactions}).
    \item \emph{Not by how sizes are ordered in time.} Decoupling transaction sizes from
their timing, or shuffling packet sizes across packets, leaves the tail about as it was
at service times above the publisher period (Table~\ref{tab:tx-arms},
Section~\ref{sec:crossval-spanrun}). This says the order in which long packets arrive does
not matter; it does not say their length does not matter, which is claim~5.
    \item \emph{At the service time production receivers run at, long packets carry most
of the tail.} Decode time grows with the number of messages, so a long packet delays its
own later messages and holds the thread while the packets behind it queue. At that
service time this is about two thirds of the simulated tail, and on the live receiver the
decode cost of a message's position in its packet reproduces most of the observed
$p_{99}$; it sets the live ZN far tail (Sections~\ref{sec:crossval-spanrun}
and~\ref{sec:fungibility-measured}). At service times well above the publisher period it adds a few
percent. Most multi-message packets hold several transactions packed together by the
publisher (Section~\ref{sec:crossval-span}).
    \item \emph{The timing acts through two channels whose weight depends on the service
time.} At short service times it acts through consecutive transactions sent back to back
at the publisher period, most of which the engine processed almost together
(Table~\ref{tab:two-clocks}): a stream with the real gaps in random order keeps most of
the tail there. At long service times it acts through runs of transactions, which the same
shuffle breaks up, keeping little of the tail (Table~\ref{tab:nulls}). The runs are the
self-exciting component of the transaction stream, which a Hawkes fit measures
(Table~\ref{tab:tx-fano}).
    \item \emph{The shortest gap the receiver sees, and the pile-up of gaps just above it,
are the exchange publisher's output.} On the matching engine's clock one consecutive
transaction in six follows the previous one within the publisher period, against about
one packet in a hundred on the publisher's clock (Table~\ref{tab:engine-clock}), and the
delay from engine to packet grows about tenfold during engine bursts
(Table~\ref{tab:publisher}). The publisher is a queue sending about one packet per
$\spub$, and the receiver is the second stage of a tandem that begins at the
exchange (Section~\ref{sec:exchange-tandem}).
    \item \emph{The results do not depend on the transaction grouping.} Regrouping by exact
\texttt{transactTime}, checked against the end-of-event flag, leaves them unchanged
(Section~\ref{sec:exchange-events}), and session-bootstrap intervals on the key shares are
a few points wide (Tables~\ref{tab:nulls} and~\ref{tab:tx-arms}).
    \item \emph{The arrival statistics are not an artefact of replay.} A live production
receiver, on other instruments and a different clock, gives the same clustering
statistics, and its decode floor coincides with the service time at which the simulated
tail appears (Section~\ref{sec:crossval}).
\end{enumerate}

\subsection{What causes the tail: what the paper does not establish}
\label{sec:concl-cannot}

\paragraph{Why the matching engine bursts.} The paper traces the receiver's short gaps
to the exchange's publisher draining a backlog, and the backlog to the matching engine
processing events a fraction of a microsecond apart (Section~\ref{sec:exchange}). It does
not show why orders reach the engine so close together. They do not reach the receiver
together: when the engine processes two transactions almost at once, the publisher nearly
always sends them in separate packets one period apart (Table~\ref{tab:two-clocks}). The
open question is about the engine's input. Three mechanisms remain open. (a)~Reaction:
a transaction causes other participants to cancel, replace or trade within microseconds,
each as a separate transaction. (b)~Common reaction: many participants respond to the
same public signal at once, their orders reach the matching engine together, and the
engine processes them one after another. (c)~Pacing or batching inside the exchange's
order entry, with no market information in the spacing. Separating them needs the time
each order \emph{entered} the exchange, not only when it was processed and published: (a)
predicts that the second order entered after the first transaction was published, (b)
that both entered before either was published. The public MDP3 feed carries processing
and publication times only, and the paper's own order-entry sessions record only its own
orders.

The nearest published evidence is \citet{aquilina2022arms}, who had message-level data
from the London Stock Exchange with participants' order-entry times and found that
\emph{races} --- several participants sending messages at the same price level in response
to the same public signal, within microseconds of each other --- are frequent and account
for a material share of trading volume. That is mechanism~(b). The sub-microsecond
engine-clock gaps are what (b) would produce, and \citet{noble2026realitygap} see the same
too-fast events on Nasdaq; it is likely that the same mechanism is at play on CME. The
paper has no data that test it, and states it as a hypothesis.

\paragraph{Whether the publisher's capacity is designed.} The publisher behaves as a queue
with a capacity of one packet per publisher period; whether that capacity is a
design choice or an engineering limit, and whether it acts as a speed limit on the
information side of the market, is stated as a conjecture in
Section~\ref{sec:exchange-conjecture}, with its qualifications.

\paragraph{Self-excitation at HFT service times.} It is not the cause there, and this is
now shown rather than argued. The fitted self-excitation acts over about
$160\,\mu\mathrm{s}$ and places only a few percent of its triggering within the publisher period (Section~\ref{sec:exchange-bursts}); it is the cause of the runs, which carry the
tail at long service times. At short service times the carrier is the publisher draining, at its own period, backlogs built by orders reaching the engine within a microsecond of
each other, whose origin is the open question above.

\paragraph{Events below the publisher period.} Two transactions that the engine processes
closer together than the publisher period reach the receiver one period apart or, less
often, in one packet (Table~\ref{tab:two-clocks}), so every gap statistic on the
publisher's clock is censored there. On the engine's clock they are visible
(Table~\ref{tab:engine-clock}), and on the publisher's clock neither the message count per
packet nor the share of packets carrying several transactions rises with the fitted
branching ratio (Section~\ref{sec:results-conditioning}), so no faster self-excitation
scaling with the branching ratio is hidden in the packets. Faster self-excitation of
constant strength would not be seen by that test.

\paragraph{Why a receiver's service time varies.} On the live ZN receiver, messages that
met an empty queue and were first in their packet, so that neither queueing nor in-packet
decode contributes, still take about twice the decode floor at $p_{99}$ and several times
it at $p_{99.9}$ (Table~\ref{tab:zn-serial-tail}). The records do not show why. Because
packets arrive at the publisher period, each such service can start a queue, so this
spread is a source of tail in its own right, and the simulation, whose service is constant
or grows with message count, does not contain it.

\paragraph{How the live far tail divides.} The ZN far tail is argued to be long packets
and the queues behind them, from the measured decode floor and cost per message
(Section~\ref{sec:fungibility-measured}). How it divides between the decode inside long
packets, the queue behind them, and slow services was not measured, because the live
per-message records were not retained.

\paragraph{Instruments other than NQ.} One instrument was measured. On ZN the live data
show the message count per packet rising with queue depth, where on NQ it falls, and
whether the two channels compound there has not been tested; no ZN corpus exists. The
paper also treats NQ front-month as a stream on its own, while the publisher serves the
whole channel. On NQ the other instruments are thin; on ZN the Treasury futures along the
yield curve are likely to excite one another, and where they share a channel the publisher
would carry those reactions in the same packets. Cross-excitation between instruments is
a candidate cause of ZN's large packets that the paper does not test
(Section~\ref{sec:open-questions}).

\subsection{How to reduce the tail}
\label{sec:concl-mitigate}

The second question was whether a servicing chain cut into two or more stages on separate
threads is better than one thread. Above a service-time threshold, yes; below it, no; and
the threshold is a property of the feed, measured here on NQ. Above the publisher period
the single-thread tail is many times the hop cost, so the hop is small against what it
lets the pipeline remove; below it the hop is cost only. Cutting a task into two halves
removes most of the tail for one hop on the median, on every window, at every service time well above the publisher period, and it loses at service times at or below it
(Table~\ref{tab:main-corpus}). A tandem of $N$ stages divides the tail by $N$ or better on
every sample path (Theorem~\ref{thm:reduction}), in practice by far more where a cut takes
each stage below the publisher period, while paying one hop per boundary at every
quantile; only its largest stage matters. The best stage count rises with the service
time, up to the number of natural cut points in the decode chain. These results do not
depend on what places transactions close together: the design rule needs the gaps the
receiver sees, not their cause.

The single-thread rule is correct whenever the service time of the chain is below the
feed's threshold, and on this corpus that includes the reference system's socket-to-book
path, at about $7\,\mu\mathrm{s}$, which should not be split (Section~\ref{sec:kaspar}).
The threshold is the service time of the exchange's publisher upstream of the receiver,
which is the tight end of the gap distribution the receiver sees, and it does not move
with packet rate or fitted branching ratio at HFT service times; at longer service times
both raise the tail (Section~\ref{sec:results-conditioning}).
Appendix~\ref{app:synthetic} constructs a feed on which both dependences appear at a
production service time and on which splitting pays where on CME it does not. On another
feed the threshold, and its dependence on load and on clustering, must be measured.

Spending the same cores on a dispatch pool that sends whole packets to identical servers
is the alternative for a stage without state, such as SBE decode; with every cost
charged it is level with a cut at two cores and better from four
(Table~\ref{tab:dispatch-costed}, Section~\ref{sec:concl-summary}). The order book
carries state and can only be cut.

The recommendation is that HFT designers measure end-to-end latency of the assembled
pipeline under actual market conditions, including peak load, and split the servicing
chain wherever the measurement shows a tail excess larger than the hop cost. Splitting the
hot path is a design option of equal standing with optimising it. For a chain above the threshold, one cut at a natural boundary reduces the tail by more
than any per-message optimisation is likely to. On this corpus a balanced cut of a chain
twice the publisher period removes almost all of its tail excess for one hop
(Section~\ref{sec:kaspar}), whereas per-message work moves the decode floor and the cost
per message by tenths of a microsecond. Below the threshold, for the typical packet, per-message optimisation is the
lever; the far tail that a few many-message packets open on a live channel
(Section~\ref{sec:fungibility-measured}) is a separate regime, whose remedy the
end-of-build measurement decides. Framework designers should measure busy-poll dequeue on pinned isolated cores against
the wait/notify mailbox on a representative tape rather than assume it is faster
(Section~\ref{sec:other-arch}), drive the mailbox mutex uncontended, and keep memory-pool
discipline. They should cut the decode pipeline at whichever of its natural boundaries
minimises the largest stage, put that stage at ingress, and merge everything downstream
up to its service time. The tail is set by the largest stage, and a hop that does not
lower it is pure cost. Strategy designers of both percentage-of-volume (POV)
and race-sensitive algorithms benefit for the same reason.

\subsection{Revision of a prior design context}
\label{sec:concl-revision}

An earlier paper by the present author \citep{mayeski_fastsend} did not itself argue for
the single-thread servicing chain; it took the collapsed single-thread servicing chain as
the widely accepted design context of the time and offered \texttt{fast\_send} as the
mechanism that makes actor composition inside it free of context-switch overhead. What the
present paper retracts is the design context, not the mechanism: on Hawkes-clustered feeds
the single-thread servicing chain should no longer be the unconditional default. The
corrected position is that \texttt{fast\_send} (synchronous, inline, sub-microsecond) and
asynchronous \texttt{BQueue} handoffs (a real pipelined hop at a few microseconds) are
\emph{complementary} tools, and the choice between them is a per-boundary design decision
that depends on the service time being chained and the arrival law of the feed the
pipeline sits on.

In shipping \kasparhft{} the \texttt{fast\_send} versus asynchronous-send choice at each
actor boundary is normally left open until the end of the build, and is resolved by
measuring the end-to-end median and tail latency of the assembled pipeline under the
target feed. A convenient property of the \kasparhft{} actor API is that the receiving
actor's \texttt{MESSAGE\_HANDLER} is agnostic to how it was invoked: swapping a synchronous
\texttt{fast\_send} at a call site for an asynchronous \texttt{send} (or vice versa) is a
one-line change on the sender side and requires no code change on the receiver side. The
choice therefore lives entirely at each individual call site and can be flipped per
boundary at end-of-build with minimal engineering cost. The paper's design equation now
supplies the arrival-law-aware default value of that end-of-build measurement: on
Hawkes-clustered feeds, tasks above the threshold should be split into shorter actors
joined by asynchronous sends, on the measured result that splitting reduces the tail by
more than the hop cost raises the median. The rule presupposes that the task being split
has a service time comfortably above the publisher period of
Section~\ref{sec:results-design}; a task no longer than a hop cannot be usefully split.
Short synchronous chains inside one actor's servicing of one message therefore continue
to use \texttt{fast\_send} unchanged. The end-of-build measurement decides any specific
boundary.

\subsection{Caveats}
\label{sec:concl-caveats}

The corpus is a single instrument, NQ front-month; a ZN recording is the first
replication needed, for the reason given in Section~\ref{sec:concl-cannot}, ahead of ES
and BTC. Transactions are grouped by exact \texttt{transactTime}, checked against the
end-of-event flag (Section~\ref{sec:exchange-events}). The live measurements of
Sections~\ref{sec:crossval} and~\ref{sec:fungibility-measured} come from one host on one
afternoon, without replication, and the corpus windows are nested in sessions, so
uncertainty on the aggregates is session-clustered; the headline survival shares carry
session-bootstrap intervals. Constant per-packet service is a simplification, and at the
production service time it is the one that matters most.
Section~\ref{sec:crossval-spanrun} removes it for message count, at NQ's cost per
message, which is lower than ZN's, and finds the stage count unchanged. Service that
varies for other reasons is not modelled, although the live receiver shows it at twice
the decode floor at $p_{99}$ with no queue. The simulated tails at the production service
time therefore understate a real receiver's. The cache-pollution cost of spinning receivers on many cores is bounded
but not measured. And Theorem~\ref{thm:reduction} is a bound whose slack depends on the
arrival law; the paper measures that slack rather than deriving it, and a closed form for
the slack on Hawkes input is open.

None of these caveats reverses the two answers on this corpus.


\section{Future work}
\label{sec:future}

The section first states the open questions the paper raises, then lists the remaining
work in two groups: questions about the market, which are outside the paper's scope, and
questions about the receiver, which are within it but were not answered with the data at
hand.

\subsection{Open questions this paper raises}
\label{sec:open-questions}

The paper answers its questions on NQ front-month. Measuring it raised others that
need data or scope beyond this paper, and each is better answered in a shorter, focused
study.

\begin{enumerate}
    \item \textbf{Does filtering the channel to NQ front-month change any result?}
Channel 318 carries other instruments, and the publisher packs their messages into the
same packets as NQ. Every packet statistic in the paper counts NQ front-month messages
only: span, transactions per packet, and the packet stream fed to the simulated receiver.
Two things could change on the full channel. First, packets are larger in bytes and
messages than the NQ span shows, which bears on the span-dependent service of
Section~\ref{sec:crossval-spanrun} and the tables of Section~\ref{sec:exchange-drain}.
Second, a receiver subscribed to the channel sees every packet, including those without
NQ, so its arrival stream is denser than the one simulated unless those packets are
discarded at negligible cost. The raw captures hold the full channel; the measurement
is packet bytes and message counts on every packet, joined to the NQ tapes on the
publisher's send time, and the sweep rerun with every channel packet as an arrival. The
publisher period is not in question: Table~\ref{tab:drain-gap} shows it on the channel's order
and trade packets.
    \item \textbf{Do instruments excite each other, above all on ZN?} The Treasury futures
along the yield curve (two-, five-, ten- and thirty-year and the ultra bond) are priced
off one another, so a move in one is likely to trigger quotes and trades in the others
within microseconds. Where they share ZN's channel, those reactions reach the publisher
together and leave in the same packets, which would make ZN's large packets a product of
the curve and not of ZN alone. The test is a multivariate Hawkes fit across the channel's
instruments on the engine clock and packet size conditioned on cross-triggering. It needs
a full-channel ZN recording.
    \item \textbf{Why do orders reach the matching engine within a microsecond of each
other?} The receiver sees them one publisher period apart; the
question is about the engine's input. Reaction,
common reaction to one signal (races), or pacing inside order entry
(Section~\ref{sec:concl-cannot}). It needs order-entry times for all participants.
    \item \textbf{How much of the engine's activity is racing?} Table~\ref{tab:tight-pairs}
shows the race signature in tight gaps that follow a trade, but rejected race orders do not reach
the public feed.
    \item \textbf{Is the publisher's capacity a design choice, and does it act as a speed
limit on information?} The conjecture of Section~\ref{sec:exchange-conjecture}.
    \item \textbf{Where does the spread in a receiver's service time come from?} On the
live ZN receiver the empty-queue messages reach twice the decode floor at $p_{99}$ with nothing queued and
nothing ahead in the packet. The measurement is the full distribution of per-packet
service for single-message packets, with the candidate causes (cache misses, interrupts,
rare message templates, book state) recorded alongside.
    \item \textbf{What does a random service time do to the tail and the design rule?} The
sweep rerun with service drawn from the measured live distribution instead of a constant,
and the span-aware sweep rerun at ZN's per-message cost, three times NQ's.
    \item \textbf{How does the live far tail divide?} Retaining the per-message live
records would allow the ZN $p_{99.9}$ to be split into in-packet decode, the queue behind
long packets, and slow services.
    \item \textbf{Can decoding one packet's messages in parallel shorten the long-packet
tail?} A stage cut does not shorten it. An unmerged experiment in the \kasparhft{} repository that dispatched every message to
parallel decoders was slower than serial at every book percentile. A later one that sent
only multi-message packets to parallel decoders lost at the median but improved
$p_{99.9}$ on most streams, on a build with recovery disabled and without validation of
its output. Whether it pays in a
production build is open.
    \item \textbf{Does the result replicate on ZN, ES and BTC?} The design rule is measured
on NQ; the threshold equals the upstream publisher's service time only if each channel
is paced the same way.
\end{enumerate}

The lists below give the remaining work in two groups.

\paragraph{Why orders reach the matching engine within microseconds of each other (questions about the market).}
\begin{itemize}
    \item What places two distinct transactions within a few microseconds of each other:
reaction to the previous transaction, common reaction of many participants to the same
signal, or pacing or batching inside the exchange's order entry (Section~\ref{sec:concl-cannot}).
Separating them needs order-entry times for all participants, the data
\citet{aquilina2022arms} had for the London Stock Exchange. Races of the kind they found
are the leading hypothesis. Their data were exchange message data, which, unlike the
order-book data of a public feed, include the attempts to trade or cancel that failed, so
both the winner and the losers of a race can be seen. On FTSE 100 stocks they found that latency-arbitrage races, in which several firms
react to the same public signal and try to take or cancel the same stale quote, are
frequent, about one per minute per symbol, and extremely fast, the typical race lasting
five to ten microseconds. Races account for about a fifth of trading volume, and the top
six firms account for most of the wins and losses. Races make up roughly a third of the
cost of liquidity as measured by price impact and the effective spread, and a market
design that removed them would lower that cost by about a sixth. The timescale of their races is the timescale of the engine-clock
gaps of Table~\ref{tab:engine-clock}, and the failed attempts that reveal a race are what
the CME public feed does not carry (Section~\ref{sec:exchange-bursts}).
    \item Which instruments' transactions sit on either side of a tight gap on the engine
clock, across the whole channel. Table~\ref{tab:tight-pairs} gives message type, side and price level for NQ
front-month; which other instruments' transactions sit within microseconds of an NQ
transaction is not measured.
    \item Cross-excitation between instruments, above all on ZN
(Section~\ref{sec:open-questions}, question 2).
    \item A power-law Hawkes kernel against the exponential, fitted on the engine clock.
    \item The publisher queue measured channel-wide rather than on NQ front-month only, and
on ES and ZN, so that its service interval and backlog are measured under the full load it
serves.
\end{itemize}

\paragraph{The queue in the receiver.}
\begin{itemize}
    \item A ZN recording (pcap and per-message records) and the full sweep on it, with
span-dependent service. On ZN the live data suggest that large packets arrive during
queues, so queueing across packets and in-packet decode may compound, which would add a span
term to the design rule. This is the first replication; ES and BTC follow.
    \item Dispatch under state: a design that dispatches the stateless decode stage and
feeds a single order-book stage, costed end to end.
    \item Closed form for the slack in Theorem~\ref{thm:reduction} under Hawkes input,
i.e.\ the exponent $\gamma - 1$ as a function of $(\rho, n, \beta T)$. By the reduction
this is a single-server question about $M^{[K]}/D/1$ with Borel batches at two service
levels, which is tractable; the paper measures $\gamma$ instead.
    \item Load-conditioned hop cost $h(\rho)$. This paper calibrates $h =
1.7\,\mu\mathrm{s}$ from the low-load wake-latency figure in \citet{mayeski_fastsend};
under sustained-load conditions inside a cluster the receiver never sleeps and the
effective $h$ collapses toward the mutex + cache-line-transfer cost of a few hundred
nanoseconds (Section~\ref{sec:results-h}). Measuring $h(\rho)$ directly and rerunning the
sweep against it would sharpen the reported tail reduction, which the single-$h$
simulation understates.
    \item Cache-pollution cost per spinning-receiver stage count, measured.
    \item Overnight-session regime.
    \item Own-capture pcap validation of the network sub-stage.
\end{itemize}


\section{Reproducibility}
\label{sec:reproducibility}

The code for this paper is in the \kasparhft{} repository \citep{kaspar_repo},
\url{https://github.com/vincent212/kaspar-hft}.

\bibliographystyle{plainnat}

\clearpage
\part*{Appendices}
\addcontentsline{toc}{part}{Appendices}
\appendix


\section{Point-process background}
\label{app:pp-background}

This appendix defines the terms the paper uses for arrival streams: the Poisson,
renewal and Hawkes processes, the conditional intensity, the branching ratio, the Fano
factor and the Hurst exponent, and what it means to fit a Hawkes process. It assumes no
prior knowledge of point processes. Readers who know what a conditional intensity is can
go straight to Section~\ref{app:pp-canon}. Two notational points: $\tau$ is a bin width
for counting arrivals, and $T$ is reserved, as in the rest of the paper, for the service
time of a receiver.

\subsection{The Poisson process}
\label{app:pp-warmup}

\paragraph{The setup.} Watch packets arrive from a market-data feed and write down the
time of each arrival:
\begin{verbatim}
09:30:00.213    <- 1st arrival
09:30:00.451    <- 2nd
09:30:00.802    <- 3rd
09:30:01.014    <- 4th
09:30:01.317    <- 5th
...
\end{verbatim}
That list of timestamps is the object being modelled. Nothing else --- no prices, no
sides, only when.

\paragraph{The Poisson idea.} Suppose that in every very small slice of time there is a
small, \emph{independent} chance of an arrival. Independent means that whether an arrival
happens now does not depend on when the previous one came, on how many came in the last
minute, or on anything else about the past. The \emph{rate} of the process is a single
number, written $\mu$, in arrivals per second. At a rate of $300$ per second, on average
$300$ arrivals occur in each second. Three consequences follow from the independence.

\paragraph{(1) The gap between consecutive arrivals is exponentially distributed with
mean $1/\mu$.} At $\mu = 300$ per second the mean gap is
$$\text{mean gap} = \frac{1}{\mu} = \frac{1}{300/\mathrm{s}} \approx 3.3\ \mathrm{ms}.$$
The gap distribution is memoryless: however long one has been waiting, the expected time
to the next arrival is still $3.3$\,ms. The probability that a gap exceeds $\Delta$ is
$$\Pr(\text{gap} > \Delta) = e^{-\mu\,\Delta},$$
so at this rate $\Pr(\text{gap} > 3.3\,\mathrm{ms}) \approx 36.8\%$,
$\Pr(\text{gap} > 6.6\,\mathrm{ms}) \approx 13.5\%$ and
$\Pr(\text{gap} > 10\,\mathrm{ms}) \approx 5.0\%$. The probability that a gap is shorter
than a tenth of the mean is
$$\Pr\!\big(\text{gap} < \tfrac{1}{10}\,\text{mean}\big) = 1 - e^{-0.1} \approx 9.52\%,$$
whatever the rate. This is the Poisson benchmark used in Section~\ref{sec:setup-hawkes}
for how often two arrivals come nearly back to back; on the corpus the share is $73\%$.

\begin{figure}[H]
    \centering
    \includegraphics[width=0.75\textwidth]{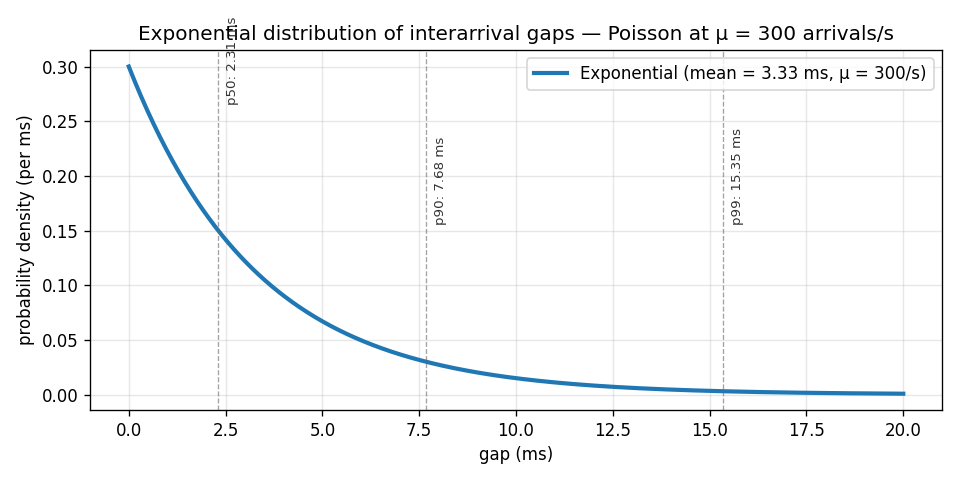}
    \caption{The exponential density of interarrival gaps for a Poisson process at $300$
    arrivals per second. Dashed vertical lines mark the median, $p_{90}$ and $p_{99}$.
    Half the gaps are shorter than the median of $2.3$\,ms and $10\%$ are longer than
    $7.7$\,ms. The CME packet stream has a different shape: far more mass near zero and a
    much heavier right tail (Figure~\ref{fig:tx-gaps}).}
    \label{fig:exponential-gaps}
\end{figure}

\paragraph{(2) The count in any bin of width $\tau$ is Poisson-distributed.} Let $C$ be
the number of arrivals in a bin of width $\tau$. Then
$$\mathrm{E}[C] = \mu \tau, \qquad \mathrm{Var}[C] = \mu \tau,$$
so the ratio of variance to mean is $1$ for every $\tau$. This ratio is the \emph{Fano
factor} (Section~\ref{app:pp-fano}); a value other than $1$ rejects Poisson.

\paragraph{(3) Counts in disjoint bins are independent.} The number of arrivals in
$[0, 1\,\mathrm{s}]$ says nothing about the number in $[1\,\mathrm{s}, 2\,\mathrm{s}]$.

\paragraph{Poisson as the null hypothesis.} Poisson is the stream with no memory, no
clustering and no regularity: arrivals are as uncoordinated as possible. A departure
from Poisson is structure in the arrival stream. Identifying where that structure comes
from is a separate question, which for this feed the paper takes up in
Sections~\ref{sec:results-transactions} and~\ref{sec:exchange}.

\paragraph{Two ways a real feed departs from Poisson.} The first is \emph{bunching}:
many gaps are far shorter than the exponential law predicts. On the corpus $73\%$ of
packet gaps are shorter than a tenth of the mean gap, against $9.52\%$ for Poisson. The
second is \emph{memory in the rate}: after a burst starts, more arrivals are likely to
follow. Modelling either needs a process whose rate can change with its own history,
which is a \emph{conditional-intensity point process}.

\paragraph{Notation.} Some authors write the Poisson rate as $\lambda$. Here $\mu$ is the
\emph{background} rate of a Hawkes process, which reduces to a Poisson process when there
is no excitation, and $\lambda(t)$ is the general, history-dependent intensity. For a
Poisson process $\lambda(t) = \mu$ at every $t$.

\subsection{Points, processes and marks}
\label{app:pp-pointprocess}

\begin{itemize}
    \item A \textbf{point} is a single instant at which something happens: in this paper,
the timestamp of one packet (or, in Section~\ref{sec:results-transactions}, of one
matching-engine transaction).
    \item A \textbf{process} is a random object that unfolds in time; here, a random list
of arrival times $t_1 < t_2 < t_3 < \cdots$.
    \item A \textbf{point process} is a random sequence of points on the time axis,
specified by \emph{when} events happen and nothing else.
\end{itemize}
If each event carries additional information (message count, side, price, size), the
process is a \textbf{marked point process}: the timestamps are the points and the extra
information is the \emph{mark}. The paper's arrival streams are marked --- the span of a
packet (Section~\ref{sec:crossval-span}) is a mark --- but the diagnostics of
Section~\ref{sec:setup-hawkes} use the timestamps alone. Other names in the literature
are \emph{counting process} (emphasising $N(t)$), \emph{temporal point process} (as
against spatial), \emph{renewal process} and \emph{self-exciting process}, the last two
being particular classes (Section~\ref{app:pp-canon}).

\paragraph{Illustrative example.} Four MBO messages in the first second of a session
might read
\begin{verbatim}
09:30:00.001234567   Add,    bid 21432.50, size 3
09:30:00.001234571   Add,    ask 21432.75, size 5
09:30:00.017891234   Cancel, order X, bid 21432.25
09:30:00.238491702   Trade,  ask 21432.75, size 2
\end{verbatim}
The point process is the four timestamps; message type, side, price and size are marks.

\subsection{The probability space}
\label{app:pp-probspace}

An arrival stream is modelled as a \textbf{stochastic process}, a family of random
variables indexed by time, on a probability space $(\Omega, \mathcal{F}, \Pr)$:
\begin{itemize}
    \item $\Omega$, the \emph{sample space}, is the set of all possible outcomes; here an
outcome is one complete realisation of a session's arrivals.
    \item $\mathcal{F}$, the \emph{$\sigma$-algebra}, is the collection of yes-or-no
questions about the outcome to which a probability can be assigned, such as ``did at
least one packet arrive between 09:30:00 and 09:30:01?''. It is closed under negation and
countable union; for timestamps it is the standard Borel $\sigma$-algebra.
    \item $\Pr : \mathcal{F} \to [0, 1]$ assigns a probability to each such event, with
$\Pr(\emptyset) = 0$, $\Pr(\Omega) = 1$ and countable additivity over disjoint events.
\end{itemize}

\subsection{Filtration: information accumulated over time}
\label{app:pp-filtration}

A \textbf{filtration} is an increasing family of $\sigma$-algebras
$$\{\mathcal{F}_t\}_{t \geq 0}, \qquad \mathcal{F}_s \subset \mathcal{F}_t \text{ for } s \leq t.$$
$\mathcal{F}_t$ is everything observable up to and including time $t$, and
$\mathcal{F}_{t^-}$ everything observable strictly before $t$. ``Conditional on
$\mathcal{F}_{t^-}$'' means ``given everything seen strictly before $t$''. A process is
\emph{adapted} to the filtration if its value at every $t$ is knowable from the history
up to $t$.

\subsection{The counting process $N(t)$}
\label{app:pp-counting}

The observable object is the random step function
$$N(t) = \#\{\, i : t_i \leq t \,\},$$
which starts at $N(0) = 0$, is non-decreasing and right-continuous, and jumps by one at
each arrival. The number of arrivals in the half-open interval $(t, t + \delta]$ is
$N(t + \delta) - N(t)$.

\subsection{Conditional intensity}
\label{app:pp-intensity}

At time $t$, ask for the probability that at least one arrival occurs in the next short
interval of length $\delta$, given everything seen so far. The \textbf{conditional
intensity} is that probability divided by $\delta$, in the limit of small $\delta$:
$$\lambda(t \mid \mathcal{F}_{t^-}) = \lim_{\delta \downarrow 0} \frac{1}{\delta}\;
\Pr\!\big(\, N(t + \delta) - N(t) \geq 1 \,\big|\, \mathcal{F}_{t^-} \big).$$
It is the instantaneous arrival rate given the history, in arrivals per second. At an
intensity of $300$ per second the probability of an arrival in the next millisecond is
about $\lambda(t)\,\delta = 300 \times 0.001 = 0.3$. For small $\delta$,
$$\Pr(\text{arrival in }(t, t+\delta]) \approx \lambda(t)\, \delta, \qquad
\mathrm{E}[N(b) - N(a)] = \mathrm{E}\!\int_a^b \lambda(s)\, ds.$$
Specifying $\lambda(\cdot)$ as a function of the history specifies the point process
completely. For a Poisson process $\lambda(t) = \mu$ and the history carries no
information. For a Hawkes process the intensity is the background $\mu$ plus a decaying
contribution $\phi(t - t_i)$ from each past arrival (Section~\ref{app:pp-canon}).

\subsection{Three canonical models: Poisson, renewal and Hawkes}
\label{app:pp-canon}

\paragraph{(a) Homogeneous Poisson.} $\lambda(t) = \mu > 0$. Gaps are independent and
exponential with mean $1/\mu$; the count in a bin of width $\tau$ is
$\mathrm{Poisson}(\mu \tau)$. Every summary statistic takes its reference value: the
coefficient of variation of the gaps is $1$, the Fano factor is $1$ at every $\tau$, the
Hurst exponent is $0.5$, and the gap autocorrelation is zero at every lag.

\paragraph{(b) Renewal process.} Gaps $\Delta_i = t_i - t_{i-1}$ are independent draws
from a common distribution $G$, not necessarily exponential. The intensity depends only on
the time since the last arrival, $\lambda(t) = r(t - t_{\mathrm{last}})$, with $r$ the
hazard function of $G$. A renewal process is Poisson only when $G$ is exponential. A
heavy-tailed $G$ produces many short gaps and a Fano factor well above $1$, but the Fano
factor levels off at large $\tau$ (at the squared coefficient of variation of $G$, when
that is finite) instead of growing, and the gaps are uncorrelated. Permuting the gap
sequence at random (a Fisher--Yates shuffle) leaves a renewal process statistically
unchanged. The gap shuffle $G$ of Section~\ref{sec:setup-nulls} is therefore a renewal
stream with the real gap distribution, and comparing a statistic on the real stream and on
its shuffle separates what the order of the gaps carries from what their lengths carry.

\paragraph{(c) Hawkes (self-exciting) process.} Each past arrival raises the intensity
for a while:
$$\lambda(t) = \mu + \sum_{t_i < t} \phi(t - t_i).$$
$\mu \geq 0$ is the \textbf{background rate}, the intensity with no prior arrivals.
$\phi : (0, \infty) \to [0, \infty)$ is the \textbf{kernel}: an arrival $u$ seconds ago
still contributes $\phi(u)$ to the intensity now. The kernel is the shape of one event's
influence on the future. Dropping several stones into a pond is the usual picture: the
disturbance at any moment is the sum of the ripples still spreading, each faded by the
time since its stone. Two conditions apply: $\phi$ is non-negative, since in a
self-exciting model a past event can only raise future intensity, and its integral
$n = \int \phi$ is finite, so that the past does not accumulate without bound.

\paragraph{Kernel families.} The \emph{exponential kernel}
$\phi(u) = \alpha\, e^{-\beta u}$ has two parameters: $\alpha \geq 0$, the jump in
intensity immediately after an arrival, and $\beta > 0$, the decay rate, whose inverse
$1/\beta$ is the kernel's decay time. It is the kernel fitted in the paper. It is
Markovian: the current intensity summarises everything needed from the past. Right after
an arrival the intensity rises by $\alpha$; between arrivals it relaxes toward $\mu$ as
$\mu + (\lambda - \mu) e^{-\beta \Delta t}$; so it can be updated in constant time per
arrival. The \emph{power-law kernel} $\phi(u) = a\,(u + c)^{-(1+p)}$, $p > 0$, decays more
slowly; \citet{hardimanbercotbouchaud2013} find power-law kernels on E-mini S\&P
mid-price changes over lags from seconds to days. A power-law fit on this corpus is
listed in Section~\ref{sec:future}.

\subsection{Branching ratio $n$}
\label{app:pp-branching}

$$n = \int_0^\infty \phi(u)\, du, \qquad n = \alpha/\beta \text{ for the exponential kernel}.$$
$n$ is the mean number of arrivals that one arrival triggers directly. The Hawkes process
is equivalently a \textbf{branching process} (a Galton--Watson process): every arrival is
either an immigrant, generated by the background rate $\mu$, or the child of an earlier
arrival, and each arrival has on average $n$ children. The expected size of the cluster
descended from one immigrant is the geometric series
$$1 + n + n^2 + n^3 + \cdots = \frac{1}{1 - n} \qquad (n < 1),$$
and the distribution of cluster sizes is the Borel distribution of
Appendix~\ref{app:borel}. For $n < 1$ the process is \emph{subcritical}: clusters are
finite and the long-run rate is $\mu / (1 - n)$. As $n \to 1$ it becomes
\emph{near-critical}: cluster sizes and durations grow without bound and long-range
dependence appears. For $n \geq 1$ it is \emph{critical or supercritical}: with positive
probability one arrival starts a cascade that never ends, and the rate is not stationary.

On this corpus the exponential-kernel fit gives $n = 0.798$ on packets and $0.795$ on
transactions, with a kernel decay time $1/\beta$ of about $167\,\mu\mathrm{s}$
(Table~\ref{tab:hawkes-diag}). \citet{filimonovsornette2012} and
\citet{hardimanbercotbouchaud2013} report values near $0.9$ and near $1$ on E-mini S\&P
mid-price changes; \citet{filimonovsornette2015} show that a fit with a constant
background over a window in which the background varies overstates $n$.

\paragraph{The critical boundary in practice.} At $n = 1$ the stationary rate
$\mu/(1-n)$ is infinite. On a finite window the process still fires finitely often, but
the variance of the count grows faster than linearly with the window. A maximum-likelihood
fit that lands at $n = 1$ on real data signals non-stationarity, a mis-specified kernel
or a numerical problem rather than a genuinely critical market. On this corpus one window
of 3512 is pinned at that boundary.

\subsection{Simulated Hawkes processes across rate and branching ratio}
\label{app:pp-visualise}

The mean rate $\bar\lambda = \mu/(1-n)$ and the branching ratio $n$ are the two
parameters by which Section~\ref{sec:results-conditioning} conditions the tail. They
look different in an event raster. Figure~\ref{fig:hawkes-grid} simulates an
exponential-kernel Hawkes process over $15$ seconds at each of nine combinations,
$\bar\lambda \in \{2, 10, 50\}$ events per second and $n \in \{0.30, 0.60, 0.90\}$, with a
kernel decay time of one second. At low rate and low branching ratio the stream is
nearly Poisson: ticks spread evenly, a straight count curve, a flat intensity. At high
rate and low branching ratio it is dense but still nearly Poisson. At low rate and high
branching ratio the mean is $2$ per second but arrivals come in unmistakable bursts, with
long empty stretches between them and the intensity reaching $20$--$30$ per second
inside a burst. At high rate and high branching ratio the intensity swings by an order of
magnitude within the window.

\begin{figure}[H]
    \centering
    \includegraphics[width=\textwidth,height=0.72\textheight,keepaspectratio]{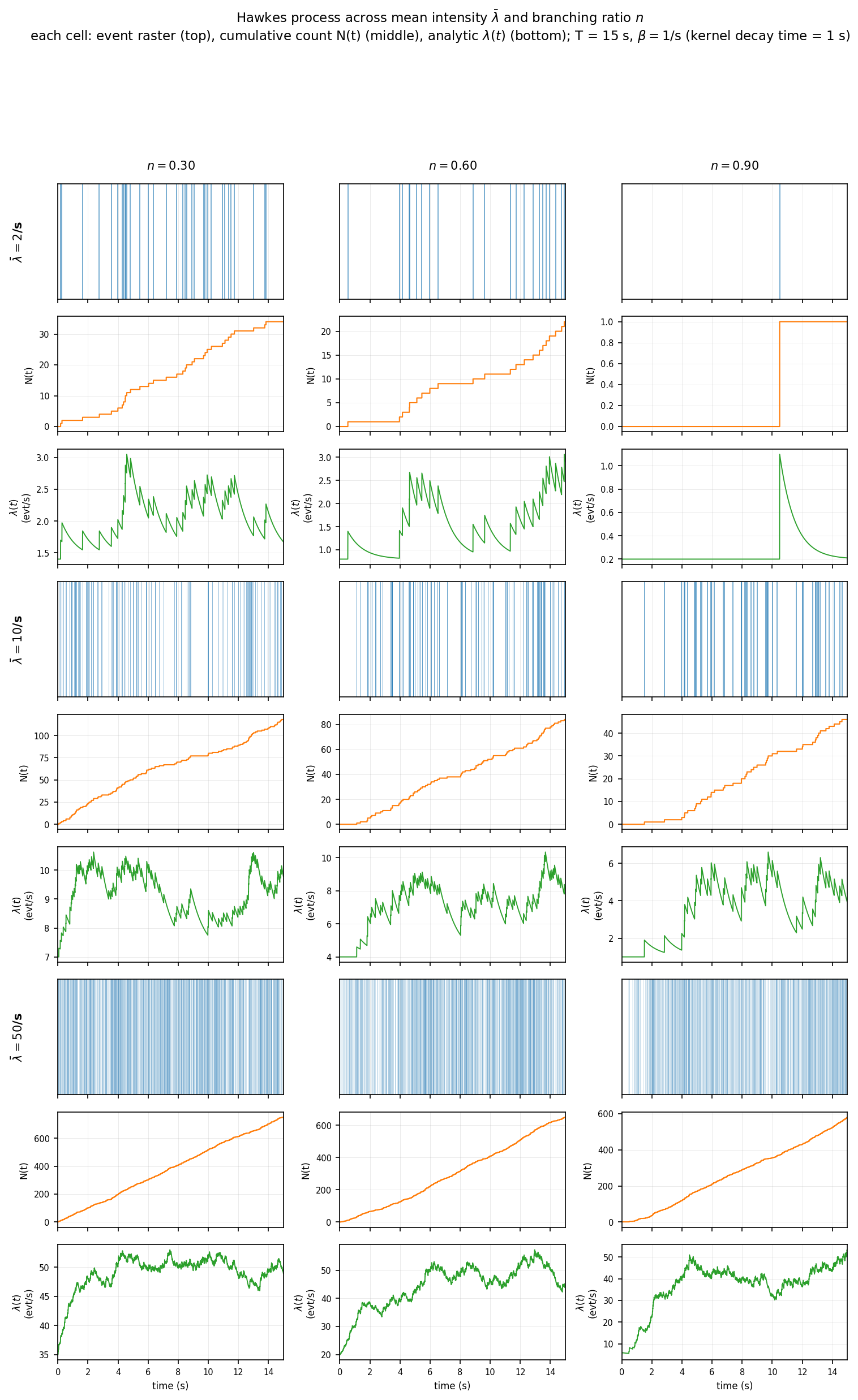}
    \caption{Simulated exponential-kernel Hawkes processes. Each cell stacks the event
    raster (top), the cumulative count $N(t)$ (middle) and the intensity $\lambda(t)$
    (bottom). Rows: mean rate $\bar\lambda \in \{2, 10, 50\}$ events per second. Columns:
    branching ratio $n \in \{0.30, 0.60, 0.90\}$. Ogata thinning over $15$\,s with
    $\beta = 1/\mathrm{s}$; generator \texttt{make\_hawkes\_grid.py}, seed fixed per
    cell. The time scale is chosen for legibility; the kernel fitted on the corpus decays
    about four orders of magnitude faster.}
    \label{fig:hawkes-grid}
\end{figure}

\subsection{Three signatures that separate Poisson, renewal and Hawkes processes}
\label{app:pp-why-hawkes}

\emph{(i) An excess of short gaps.} On the corpus $73\%$ of packet gaps are shorter than a
tenth of the mean gap; any exponential gives $9.52\%$. A Hawkes process produces such an
excess, but so does a renewal process with a heavy-tailed gap distribution, so this
signature rejects Poisson without identifying self-excitation. On this feed the gap
shuffle keeps the whole excess (Table~\ref{tab:hawkes-diag}), and
Section~\ref{sec:exchange} traces the tightest gaps to the publisher sending consecutive
transactions one period apart.

\emph{(ii) Correlated gaps.} Under self-excitation, short gaps tend to follow short gaps
and long gaps follow long ones, so the gap autocorrelation is positive. A renewal process
has independent gaps and zero autocorrelation at every lag. A rate that varies slowly
over the window also produces positive gap correlation; the one-second-binned null $B$ of
Section~\ref{sec:setup-nulls} separates that case.

\emph{(iii) Burstiness that grows with the time scale.} The Fano factor $F(\tau)$ rises
with $\tau$ on a Hawkes process and levels off on a Poisson or renewal process.
Clustering is then not a feature of one time scale but looks similar over decades of
$\tau$. A slowly varying rate can also make $F(\tau)$ rise, which is the effect
\citet{filimonovsornette2015} warn about.

\subsection{Fano factor}
\label{app:pp-fano}

For a bin width $\tau$, cut the observation interval into $K$ disjoint bins and let
$C_k = N(k\tau) - N((k-1)\tau)$ be the count in bin $k$. The Fano factor is
$$F(\tau) = \frac{\mathrm{Var}[C_k]}{\mathrm{E}[C_k]}.$$
For a Poisson process $F(\tau) = 1$ at every $\tau$. For a renewal process with
finite-variance gaps $F(\tau)$ tends to the squared coefficient of variation of the gaps
as $\tau$ grows, a constant. For a clustered process $F(\tau) > 1$ and, under
self-excitation or a varying rate, it keeps growing with $\tau$. The measure is named
after U.~Fano, who introduced it in 1947 for fluctuations in counts of ions. On the
corpus $F(\tau)$ grows from $52$ at $0.1$\,s to about a thousand at $60$\,s, and the
shuffled stream is flat at about $17$ (Table~\ref{tab:hawkes-diag}).

\begin{figure}[H]
    \centering
    \includegraphics[width=\textwidth]{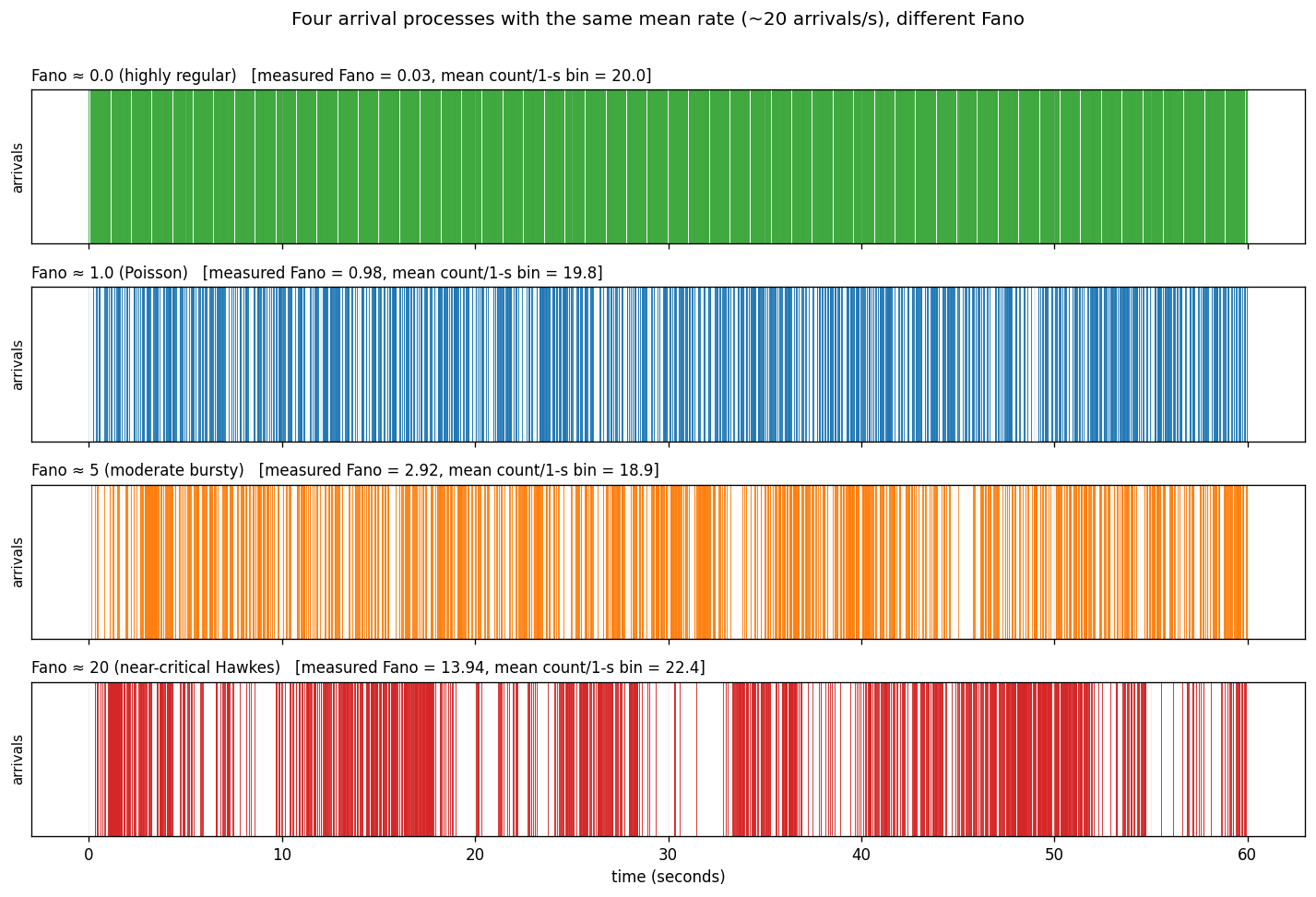}
    \caption{Four simulated processes at the same mean rate of about $20$ per second.
    Top row ($F \approx 0$, highly regular): near-lattice arrivals, nearly equal bin
    counts. Second row ($F \approx 1$, Poisson): the reference. Third row ($F \approx 3$,
    moderate Hawkes): visible clumping. Bottom row ($F \approx 14$, near-critical Hawkes):
    pronounced bursts and long quiet stretches, bin counts from near zero to over $50$.}
    \label{fig:rasters-fano}
\end{figure}

\begin{figure}[H]
    \centering
    \includegraphics[width=\textwidth]{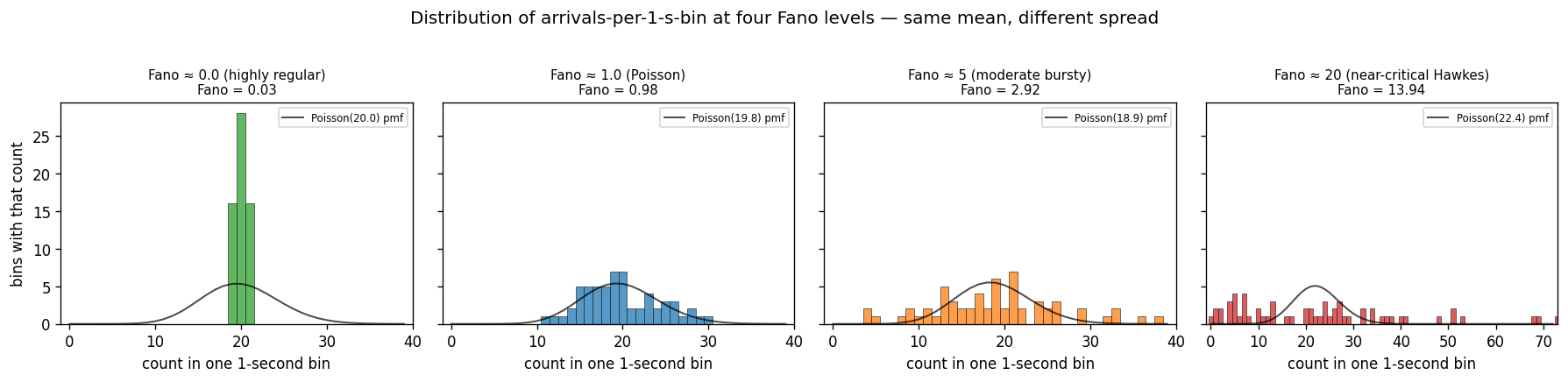}
    \caption{Distribution of arrivals per one-second bin (bars) for the four processes of
    Figure~\ref{fig:rasters-fano}, with the Poisson distribution at the same mean
    overlaid (curve). $F \approx 0$: narrower than Poisson. $F \approx 1$: matches
    Poisson. $F > 1$: wider than Poisson, with extra mass in both tails. The NQ packet
    stream of the corpus has $F(1\,\mathrm{s}) = 108$ (Table~\ref{tab:hawkes-diag}),
    far wider than any panel shown.}
    \label{fig:counthist-fano}
\end{figure}

\subsection{Hurst exponent from the scaling of the Fano factor}
\label{app:pp-hurst}

When the Fano factor grows as a power of the bin width,
$$F(\tau) \sim \tau^{2H - 1},$$
the exponent $H$ is the Hurst exponent. $H = 0.5$ means no long-range dependence (Poisson,
and renewal with finite-variance gaps); $0.5 < H < 1$ means long-range dependence, with
clustering that looks similar across time scales; $H$ near $1$ means very long memory. It
is estimated by least squares on the log-log plot,
$$\log F(\tau) = (2H - 1)\log \tau + c + \varepsilon,$$
so $H = (\text{slope} + 1)/2$. On the corpus $H = 0.74$ (Table~\ref{tab:hawkes-diag}).

\begin{figure}[H]
    \centering
    \includegraphics[width=0.75\textwidth]{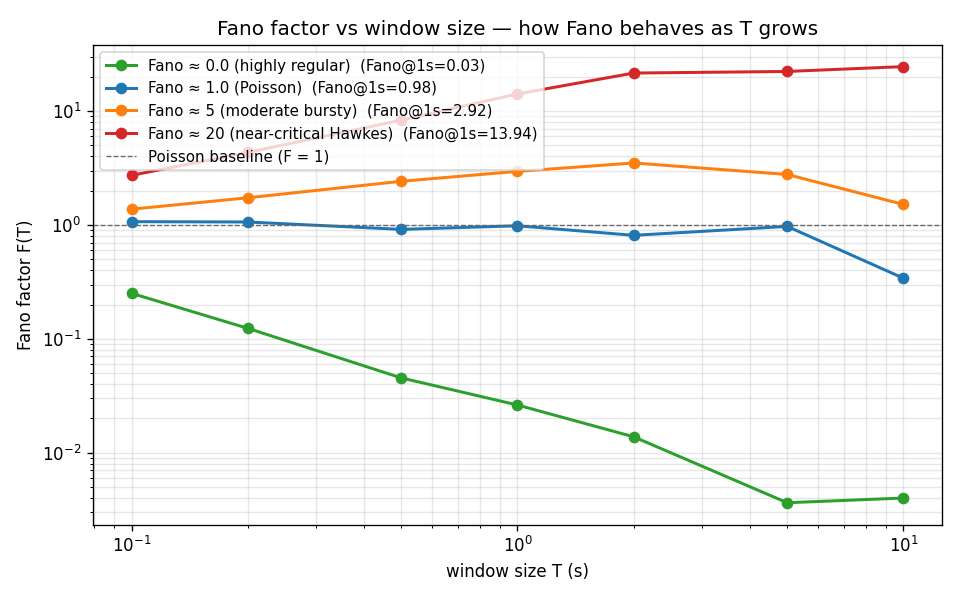}
    \caption{$F(\tau)$ against $\tau$ on logarithmic axes for four simulated processes.
    Regular: $F$ near zero. Poisson: flat at $1$. Moderate Hawkes: rising. Near-critical
    Hawkes: steeper rise, with slope about $0.5$, i.e.\ $H \approx 0.75$.}
    \label{fig:fano-scaling}
\end{figure}

Shuffling the gap sequence preserves the gap distribution exactly and destroys the
ordering, and with it any long-range dependence: on the shuffled stream $H \to 0.5$. On
the corpus the shuffle takes $H$ from $0.74$ to $0.50$, so the long-range dependence
lives in the order of the gaps (Section~\ref{sec:setup-hawkes}).

\subsection{Fitting an exponential-kernel Hawkes process}
\label{app:mle}

The input is a sorted list of arrival times $t_1 < \cdots < t_m$ over a window
$[0, t_{\mathrm{end}}]$, and there are three parameters:
\begin{itemize}
    \item $\mu$, the \textbf{background rate}: arrivals per second with no recent history;
    \item $\alpha$, the \textbf{jump}: the rise in intensity, in arrivals per second, at
each arrival;
    \item $\beta$, the \textbf{decay rate}: each jump fades as $e^{-\beta u}$, with decay
time $1/\beta$.
\end{itemize}
Fitting means choosing the parameters under which the observed timestamps are most
probable, i.e.\ maximising the log-likelihood \citep{ozaki1979}
\begin{equation}
\log L(\mu, \alpha, \beta) = \sum_{i=1}^m \log \lambda(t_i) - \int_0^{t_{\mathrm{end}}} \lambda(s)\, ds.
\label{eq:hawkes-loglik}
\end{equation}
The first term rewards high intensity at the times arrivals actually occurred; the second
penalises a model that predicts more arrivals than occurred. If $\alpha$ is too small the
model cannot explain the bursts; if $\alpha$ approaches $\beta$ it predicts more clustering
than there is; a wrong $\mu$ mis-explains the quiet periods.

For the exponential kernel both terms have closed forms computable in one pass. With
$$R_i = e^{-\beta(t_i - t_{i-1})}(1 + R_{i-1}), \qquad R_1 \equiv 0,$$
the intensity at each arrival is $\lambda(t_i) = \mu + \alpha R_i$, and the compensator is
$$\int_0^{t_{\mathrm{end}}} \lambda(s)\, ds = \mu\, t_{\mathrm{end}} + \frac{\alpha}{\beta}
\sum_{i=1}^m\!\big(1 - e^{-\beta(t_{\mathrm{end}} - t_i)}\big).$$
Timestamps are shifted to the start of the window before fitting, so that the
exponentials do not underflow. The paper maximises the log-likelihood per window with
Nelder--Mead over $(\mu, \alpha, \beta)$ and checks the optimum with a reparameterised
L-BFGS-B fit (Section~\ref{sec:setup-hawkes}). From the fitted parameters follow the
branching ratio $n = \alpha/\beta$, the kernel decay time $1/\beta$, and the long-run
rate $\bar\lambda = \mu/(1 - n)$: the background rate amplified by $1/(1-n)$ through
self-excitation.

\subsection{Notation}
\label{app:pp-notation}

\begin{center}
\begin{tabular}{@{}ll@{}}
\toprule
symbol & meaning \\
\midrule
$\Omega$, $\mathcal{F}$, $\Pr$ & sample space, $\sigma$-algebra, probability \\
$\mathcal{F}_t$, $\mathcal{F}_{t^-}$ & everything observable up to $t$ inclusive, strictly before $t$ \\
$t_1, t_2, \ldots$ & arrival times \\
$N(t)$ & counting process, $\#\{i : t_i \leq t\}$ \\
$\lambda(t \mid \mathcal{F}_{t^-})$ & conditional intensity, arrivals per unit time \\
$\mu$ & background rate of a Hawkes process \\
$\phi(u)$ & Hawkes kernel \\
$\alpha$, $\beta$ & exponential-kernel jump and decay rate; decay time $1/\beta$ \\
$n = \int \phi$ & branching ratio; $n = \alpha/\beta$ for the exponential kernel \\
$\bar\lambda = \mu/(1-n)$ & long-run mean rate \\
$\Delta_i = t_i - t_{i-1}$ & interarrival gap \\
$\tau$ & bin width for counts (the service time is $T$) \\
$C_k$ & arrivals in bin $k$ \\
$F(\tau)$ & Fano factor, $\mathrm{Var}[C_k] / \mathrm{E}[C_k]$ \\
$H$ & Hurst exponent, $F(\tau) \sim \tau^{2H-1}$ \\
\bottomrule
\end{tabular}
\end{center}

\section{The Borel batch-size tail at fixed branching ratio}
\label{app:borel}

This appendix records the calculation behind Equation~\eqref{eq:borel-asymp} and the
numerical check that the critical $k^{-1/2}$ survival tail of the Borel distribution is
confined to a finite window at every $n < 1$.

Stirling's formula $k! = \sqrt{2\pi k}\, k^k e^{-k}\,(1 + O(1/k))$ in
Equation~\eqref{eq:borel} gives
\[
\Pr(K = k) = \frac{(nk)^{k-1} e^{-nk}}{k!} = \frac{n^{k-1} k^{k-1} e^{-nk}}{\sqrt{2\pi
k}\, k^k e^{-k}}\,(1 + O(1/k)) = \frac{1}{n\sqrt{2\pi}}\, k^{-3/2}\, \big(n
e^{1-n}\big)^{k}\,(1 + O(1/k)).
\]
Writing $n = 1 - \epsilon$, $\log(n e^{1-n}) = \log(1-\epsilon) + \epsilon = -\epsilon^2/2
- \epsilon^3/3 - \cdots$, so the geometric factor is $\exp(-k\epsilon^2/2\,(1 +
O(\epsilon)))$. The pmf is therefore $\propto k^{-3/2}$ for $k \ll k^\star =
\epsilon^{-2}$ and geometric beyond; summing, $\Pr(K > k) \asymp k^{-1/2}$ on the same
window and $\asymp k^{-3/2} e^{-k\epsilon^2/2}$ beyond it. At $n = 1$ exactly, $k^\star =
\infty$, $\Pr(K > k) \sim \sqrt{2/\pi}\, k^{-1/2}$ (Borel--Tanner), and $\mathrm{E}[K] =
\infty$; that is the unstable case $\rho = \infty$, which no stationary queue attains.

Because the tail is geometric, $K$ has finite exponential moments $\mathrm{E}[e^{\theta
K}] < \infty$ for $\theta < \epsilon^2/2\,(1 + O(\epsilon))$ and is therefore outside the
subexponential class $\mathcal{S}$. Heavy-tail queueing results that assume subexponential
batch work, including the single-big-batch principle for $M^{[K]}/G/1$ and the
monotone-separable tail theorem of \citet{baccellifosslelarge2005}, do not apply, and the
stationary waiting time of the batch-arrival queue has a Cram\'er--Lundberg (exponential)
tail at every fixed $n < 1$.

Table~\ref{tab:borel-slope} reports the local log-log slope of the exact pmf, $\Delta \log
\Pr(K = k) / \Delta \log k$ between successive $k$, which a pure $k^{-3/2}$ law would hold
at $-1.5$.

\begin{table}[H]
\centering
\begin{tabular}{@{}rrrrrrr@{}}
\toprule
$n$ & $k^\star$ & $10 \to 30$ & $30 \to 100$ & $100 \to 300$ & $300 \to 1000$ & $1000 \to 3000$ \\
\midrule
0.80 & 25     & $-1.92$ & $-2.84$ & $-5.71$ & $-14.96$ & $-43.63$ \\
0.90 & 100    & $-1.59$ & $-1.81$ & $-2.48$ & $-4.62$  & $-11.26$ \\
0.95 & 400    & $-1.52$ & $-1.57$ & $-1.73$ & $-2.25$  & $-3.85$  \\
0.99 & 10000  & $-1.50$ & $-1.50$ & $-1.51$ & $-1.53$  & $-1.59$  \\
\bottomrule
\end{tabular}
\caption{Local log-log slope of the Borel pmf, Equation~\eqref{eq:borel}, between
successive cluster sizes $k$. The critical slope $-1.5$ holds while $k \ll k^\star =
(1-n)^{-2}$ and steepens geometrically beyond. Figure~\ref{fig:bars-borel} shows the same numbers as bars.}
\label{tab:borel-slope}
\end{table}

\begin{figure}[H]
\centering
\includegraphics[width=0.9\textwidth]{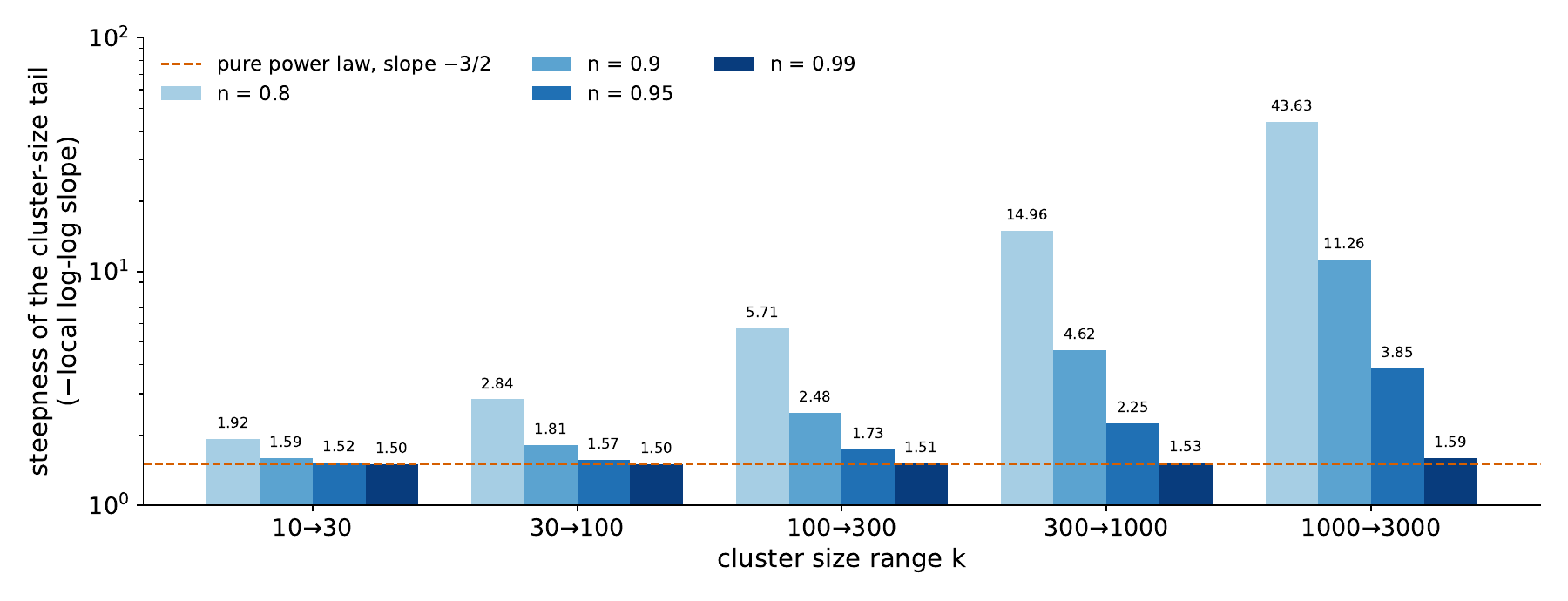}
\caption{Table~\ref{tab:borel-slope} as bars, log scale: how steeply the cluster-size distribution
falls over successive size ranges, for four branching ratios. Near criticality ($n = 0.99$)
it keeps the power-law slope of $3/2$ (dashed) over the whole range; at $n = 0.8$ it falls
off exponentially beyond about $25$, so very large clusters are rare.}
\label{fig:bars-borel}
\end{figure}

The practical reading is that at the branching ratios fitted on the corpus
($n_{\mathrm{pkt}} \approx 0.8$, Section~\ref{sec:setup-hawkes}), $k^\star \approx 25$:
clusters of a few times that size are the largest that occur with non-negligible
frequency, and the single-stage wait behind the largest clusters has a scale of about
twenty-five service times. The corpus-median $p_{99}$ tail excess of
Section~\ref{sec:results} is zero below the publisher period and grows to tens of service
times at the longest service times.
This is the object that Theorem~\ref{thm:reduction} divides by $N$, and the reason the $n
= 0.80$ row of Table~\ref{tab:borel-slope} is the relevant one.

\section{Numerical check of Theorem~\ref{thm:reduction}}
\label{app:check}

A direct simulation of the model of Theorem~\ref{thm:reduction} checks both the identity
Equation~\eqref{eq:reduction} and the bound Equation~\eqref{eq:bound}. Immigrants arrive
as a Poisson process at rate $\mu = 0.004\,\mu\mathrm{s}^{-1}$, each spawning a
Galton--Watson cluster with Poisson($0.9$) offspring, so $K \sim \mathrm{Borel}(0.9)$ and
$\mathrm{E}[K] \approx 10$; clusters are placed as instantaneous batches, giving $\rho =
\bar\lambda T \approx 0.40$ at $T = 10\,\mu\mathrm{s}$. The tandem has $N = 4$ equal
stages of $2.5\,\mu\mathrm{s}$ and $h = 1.7\,\mu\mathrm{s}$; $2 \times 10^{5}$ messages
are simulated with an explicit per-stage FIFO recursion (independent of the paper's sweep code). Over all messages, $\max_j |W_j^{(4)} - (w_j(T/4) + 3h)| <
10^{-9}\,\mu\mathrm{s}$ and Equation~\eqref{eq:bound} is violated by no message. The
quantiles are

\begin{center}
\begin{tabular}{@{}rrrrr@{}}
\toprule
$q$ & $W_q^{(1)}$ & $W_q^{(4)}$ & $W_q^{(1)}/4 + 3h$ & $(W_q^{(4)} - 3h)/W_q^{(1)}$ \\
\midrule
0.5   & 350.0  & 55.1   & 92.6   & 0.14 \\
0.9   & 2066.6 & 355.1  & 521.8  & 0.17 \\
0.99  & 5342.1 & 1127.6 & 1340.6 & 0.21 \\
0.999 & 7586.2 & 1585.1 & 1901.7 & 0.21 \\
\bottomrule
\end{tabular}
\end{center}

in microseconds. The tandem lands below the bound at every quantile, with the ratio in the
last column between $0.14$ and $0.21$ against the bound's $0.25$; the slack is the
$\rho/N$ utilisation effect described in the remark after Theorem~\ref{thm:reduction}.
Under the Poisson null at the same message rate ($\rho = 0.40$, which is far above the
corpus utilisation and is used here only to make the Poisson wait visible) the
single-stage wait quantiles at $q = 0.9, 0.99, 0.999$ are $10.3, 24.8,
38.5\,\mu\mathrm{s}$ and the tandem's are $5.1, 7.5, 9.2\,\mu\mathrm{s}$, again inside the
bound. At the corpus utilisation ($\rho = 0.0045$ at this appendix's $T =
10\,\mu\mathrm{s}$) the same two collapse to $0$ and $3h$ at $q = 0.9$ and $q = 0.99$,
which is Corollary~\ref{cor:poisson}; at $q = 0.999$ they do not, since $\rho$ there
exceeds $1 - q = 0.001$ and the corollary does not apply.

\section{Numerical check of the unequal-stage claims}
\label{app:unequal}

The same generator as Appendix~\ref{app:check} (Borel($0.9$) clusters, $\rho = 0.40$, $T =
10\,\mu\mathrm{s}$, $h = 1.7\,\mu\mathrm{s}$), with each cluster's members spread
uniformly over $3\,\mu\mathrm{s}$ instead of placed as an instantaneous batch and $4
\times 10^{5}$ messages, run through the partitions of Remark~\ref{rem:unequal} with an
explicit per-stage FIFO recursion. The last two columns are the tail-excess ratio
$(\Delta_{99}(P) - \mathrm{hops}\cdot h)/\Delta_{99}(\text{single})$ and the bound
$s_{\max}/T$ from Theorem~\ref{thm:reduction}.

\begin{center}
\begin{tabular}{@{}lrrrrrrr@{}}
\toprule
partition & $s_{\max}$ & hops & $p_{50}$ & $p_{99}$ & $p_{99.9}$ & ratio & bound \\
\midrule
$10$                 & 10.0 & 0 & 388.2 & 5719.0 & 8249.3 & 1.00 & 1.00 \\
$7 + 3$              &  7.0 & 1 & 213.6 & 3429.5 & 5297.1 & 0.60 & 0.70 \\
$3 + 7$              &  7.0 & 1 & 213.6 & 3429.5 & 5297.1 & 0.60 & 0.70 \\
$5 + 5$              &  5.0 & 1 & 134.9 & 2265.5 & 3574.2 & 0.39 & 0.50 \\
$7 + 2 + 1$          &  7.0 & 2 & 215.3 & 3431.2 & 5298.8 & 0.60 & 0.70 \\
$4 + 3 + 3$          &  4.0 & 2 & 104.6 & 1746.5 & 2714.6 & 0.30 & 0.40 \\
$4 + 3 + 2 + 1$      &  4.0 & 3 & 106.3 & 1748.2 & 2716.3 & 0.30 & 0.40 \\
$2.5 \times 4$       &  2.5 & 3 &  66.4 & 1049.8 & 1562.2 & 0.18 & 0.25 \\
\bottomrule
\end{tabular}
\end{center}

Latencies are end-to-end in microseconds. The three claims of Remark~\ref{rem:unequal}
read off directly: the tail excess depends on the partition only through $s_{\max}$ and
sits below the bound at every row; $7 + 3$ and $3 + 7$ are identical to the reported
precision; and $7 + 2 + 1$ differs from $7 + 3$, and $4 + 3 + 2 + 1$ from $4 + 3 + 3$, by
exactly one hop cost $h = 1.7\,\mu\mathrm{s}$ at every quantile. The $3\,\mu\mathrm{s}$
intra-cluster spread, which is above $s_{\max}$ for the last row, is why the equal-stage
ratio here ($0.18$) sits slightly below the batch-input value of Appendix~\ref{app:check}
($0.21$): a cluster that arrives spread out lets the faster tandem drain part of it before
the next member lands.

\section{A synthetic feed on which the tail at $8\,\mu\mathrm{s}$ depends on intensity and branching ratio}
\label{app:synthetic}

Section~\ref{sec:results-conditioning} finds the normalised tail on CME MDP3 invariant to
packet rate and branching ratio at HFT service times. This appendix constructs, with the
same simulator, a feed on which both dependences appear at $T = 8\,\mu\mathrm{s}$, and
identifies what distinguishes it from CME.

Arrivals are generated as a Hawkes process through its cluster representation:
immigrants arrive as a Poisson process of rate $\mu = \bar\lambda(1-n)$, and each spawns a
Galton--Watson tree with Poisson($n$) offspring at exponential delays of mean $1/\beta$.
Three conditions are run. In the first the kernel is fast, $1/\beta = 1\,\mu\mathrm{s}$,
and there is no minimum period. In the second the same fast kernel is followed by a
minimum period equal to the publisher period, $\spub = 7.5\,\mu\mathrm{s}$: each arrival is
delayed until at least $\spub$ after its predecessor, modelling the exchange's publisher
(Section~\ref{sec:exchange-publisher}). In the third the kernel is slow, $1/\beta =
80\,\mu\mathrm{s}$, with the same minimum period; this is the condition closest to the CME gap
structure, though the kernel fitted on the corpus is slower still, about $160\,\mu\mathrm{s}$
(Section~\ref{sec:setup-hawkes}). The grid is $n \in \{0.5, 0.7, 0.8, 0.9, 0.95\}$ and $\bar\lambda \in \{500,
2000, 8000, 32000, 64000\}$ packets per second, so that $\rho = \bar\lambda T$ at $T =
8\,\mu\mathrm{s}$ runs from $0.004$ to $0.51$; $2 \times 10^5$ events per cell, three
seeds, median reported. The last column of Table~\ref{tab:synthetic} is the fraction of interarrival gaps at
the minimum period, measured at $\bar\lambda = 8000$; on the CME corpus $2.15\%$ of packet
gaps are shorter than $7.9\,\mu\mathrm{s}$, just above the publisher period (Table~\ref{tab:tx-gaps}).

\begin{table}[H]
\centering
\small
\begin{tabular}{@{}r rrrrr r@{}}
\toprule
 & \multicolumn{5}{c}{$p_{99}/T$ at $T = 8\,\mu\mathrm{s}$, $N = 1$} & gaps at $\spub$ \\
\cmidrule(lr){2-6}
$n$ & $\bar\lambda{=}500$ & $2000$ & $8000$ & $32000$ & $64000$ & at $\bar\lambda{=}8000$ \\
\midrule
\multicolumn{7}{l}{\emph{fast kernel, no minimum period}} \\
0.50 &   12.2 &   12.5 &   13.2 &   15.5 &   22.6 & $52\%$ \\
0.70 &   38.9 &   40.7 &   41.2 &   47.3 &   66.6 & $71\%$ \\
0.80 &   91.4 &  100.0 &   97.1 &  105.0 &  140.7 & $80\%$ \\
0.90 &  421.3 &  429.4 &  411.3 &  413.7 &  620.4 & $90\%$ \\
0.95 & 1544.2 & 1482.6 & 1721.3 & 1802.4 & 2487.2 & $95\%$ \\
\midrule
\multicolumn{7}{l}{\emph{fast kernel, publisher period}} \\
0.50 &   1.75 &   1.75 &   1.88 &   2.44 &   4.38 & $53\%$ \\
0.70 &   3.50 &   3.62 &   3.88 &   5.44 &  11.87 & $72\%$ \\
0.80 &   6.81 &   7.38 &   7.56 &  10.88 &  23.12 & $81\%$ \\
0.90 &  28.06 &  28.50 &  30.44 &  41.25 & 100.19 & $91\%$ \\
0.95 &  98.06 &  94.75 & 116.36 & 193.96 & 382.18 & $95\%$ \\
\midrule
\multicolumn{7}{l}{\emph{slow kernel, publisher period}} \\
0.50 &   1.12 &   1.12 &   1.19 &   1.51 &   3.56 & $13\%$ \\
0.70 &   1.25 &   1.31 &   1.50 &   3.18 &  10.44 & $20\%$ \\
0.80 &   2.38 &   2.71 &   3.12 &   8.12 &  19.56 & $28\%$ \\
0.90 &  23.50 &  22.57 &  26.25 &  37.25 &  95.75 & $47\%$ \\
0.95 &  90.12 &  86.19 & 112.19 & 188.78 & 364.30 & $67\%$ \\
\midrule
\multicolumn{7}{l}{\emph{Poisson, matched rate}} \\
---  &   1.00 &   1.39 &   1.87 &   2.69 &   4.45 & --- \\
\midrule
\multicolumn{2}{l}{$\rho$ at $T = 8$} & $0.016$ & $0.064$ & $0.256$ & $0.512$ & \\
\bottomrule
\end{tabular}
\caption{Synthetic Hawkes feeds at the sweep point just above the publisher period,
$T = 8\,\mu\mathrm{s}$: $p_{99}/T$ by branching ratio and packet rate, for three kernel and
minimum-period conditions and a Poisson stream; the last column is the share of gaps at
the minimum period. The $\rho$ row gives the utilisation. What the table shows: on every
synthetic feed the tail grows steeply with the branching ratio, and the share of gaps at
the minimum period grows with it; the Poisson stream has a tail only at high utilisation.
What follows: on these feeds the branching ratio matters because it controls how many
arrivals are packed at the minimum period, which on CME the publisher fixes. Figure~\ref{fig:bars-synthetic} shows the same numbers as bars.}
\label{tab:synthetic}
\end{table}

\begin{figure}[H]
\centering
\includegraphics[width=0.95\textwidth]{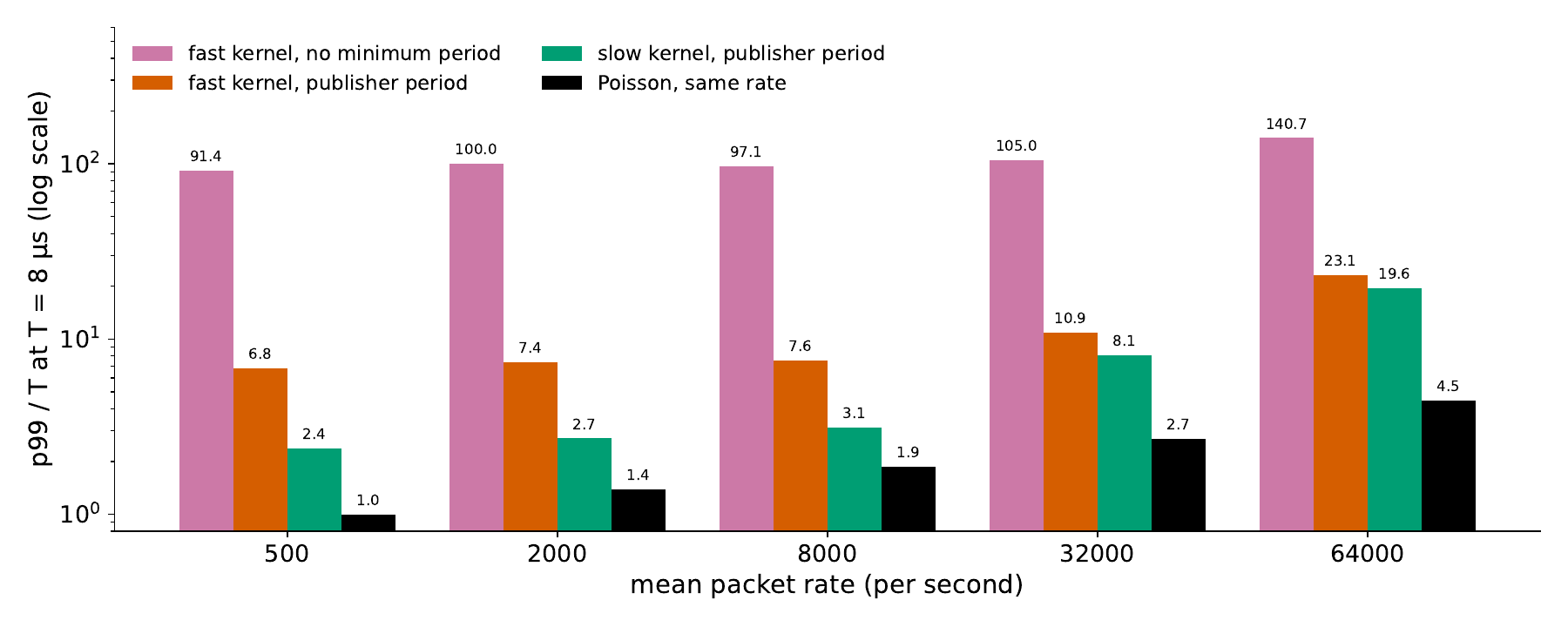}
\caption{Table~\ref{tab:synthetic} as bars, log scale, for branching ratio $0.8$: the normalised
tail $p_{99}/T$ at $T = 8\,\mu\mathrm{s}$ on synthetic feeds, by packet rate. A fast kernel
with no minimum period (pink) has a tail of about a hundred times $T$; adding a
minimum period equal to the publisher period (vermillion) cuts it more than tenfold; a slow kernel with the minimum period (green) is lower still, closer to the Poisson stream (black). What sets
the tail at $8\,\mu\mathrm{s}$ is how many gaps sit at the minimum period, which the kernel and the minimum period control and the branching ratio alone does not.}
\label{fig:bars-synthetic}
\end{figure}

Three observations.

\paragraph{Intensity.} In every synthetic Hawkes condition the tail does not change with
the packet rate while utilisation stays low, and it rises once utilisation reaches a few
percent, steeply by the time the server is half busy. The Poisson row rises earlier,
because for it there is nothing but utilisation. The CME corpus, at the sweep point just
above the publisher period, sits deep inside the flat region even in its busiest windows.
The invariance to rate found in Section~\ref{sec:results-conditioning} is therefore what a
Hawkes model predicts at that utilisation, and it ends where utilisation queueing begins,
as the CME sweep also finds at long service times.

\paragraph{Criticality.} The synthetic feeds depend strongly on the branching ratio in
every condition: the tail grows by about two orders of magnitude from the least to the
most self-exciting feed, and roughly doubles between the two values nearest the corpus's.
The CME corpus shows no dependence over its own, narrow, range. The last column of the
table explains the difference. A synthetic Hawkes feed at the corpus's branching ratio
places between a quarter and four fifths of its gaps at the minimum period, depending on
the kernel; on CME only about one gap in fifty is that short. On the synthetic feeds the
branching ratio controls how many arrivals are packed at the minimum period, and that
packing is what queues on a server at the production service time. On CME the packed
fraction is set by the publisher and does not vary with the fitted branching ratio; the
fitted ratio describes the ordering of arrivals at longer timescales, consistent with
Section~\ref{sec:setup-hawkes}, where the corpus is found to be self-exciting with
heavy-tailed gaps rather than a pure exponential-kernel Hawkes process.

\paragraph{Feed dependence of the design rule.} At the sweep point just above the
publisher period, at the corpus's branching ratio and a moderate packet rate, a two-stage
tandem has the lower $p_{99}$ on every synthetic condition: it halves the tail on the feed
with no minimum period, and on the two feeds with a minimum period it takes a tail of
several times the service time down to almost nothing. On CME at the same service time
one stage is best (Section~\ref{sec:results-threshold}). The service time at which
splitting begins to pay is not a constant; it is a property of the feed.

\paragraph{Summary.} Whether splitting a stage reduces its tail depends on how
many packets arrive closer together than the stage's service time, not on the packet rate
or the branching ratio as such. On CME about $2\%$ of packet gaps are shorter than
$7.9\,\mu\mathrm{s}$, just above the publisher period. That fraction does
not change when the market becomes busier, since a busier market produces more clusters
with the same spacing inside them, nor when the fitted branching ratio moves within its
observed range. At the $7$--$8\,\mu\mathrm{s}$ service time of a production decode path there is
therefore little to queue behind across packets, and one thread is the right design as far as
that queue goes; the tail from long packets and variable service at that service time is
separate (Sections~\ref{sec:crossval-spanrun} and~\ref{sec:fungibility-measured}). On a feed that
places a large share of its gaps at the minimum period, the result inverts: a Hawkes process at the
same $n = 0.8$ places between a quarter and four fifths of its gaps at the minimum period instead of about $2\%$, the tail
depends strongly on the branching ratio, splitting reduces the tail at $T =
8\,\mu\mathrm{s}$ where on CME it does not, and at high enough load the packet rate
matters as well. The Poisson null shows only the load effect, and only at high
utilisation, as a null should. The rule is therefore neither one thread nor two. The
split threshold, the service time above which cutting a stage lowers its $p_{99}$, is a
property of the feed's publisher period; it cannot be read off the fitted parameters, and
it must be measured under the load the system will see. On the CME feed a
$7\,\mu\mathrm{s}$ stage sits below the split threshold of $9$--$10\,\mu\mathrm{s}$; another exchange, or the same
exchange after a change to its publisher, may not. Section~\ref{sec:recommendations} states this as
a design principle.

\end{document}